\newif\ifJournal

\ifJournal

\else
\documentclass[11pt]{article}
\RequirePackage{amsthm,amsmath,amsfonts,amssymb}
\RequirePackage{graphicx}
\RequirePackage[numbers]{natbib}
\usepackage{mathtools}

\usepackage{authblk}
\usepackage[dvipsnames]{xcolor}
\usepackage[plainpages=false,pdfpagelabels,
colorlinks=true,linkcolor=RedOrange,
urlcolor=RedOrange,citecolor=MidnightBlue]{hyperref}
\providecommand{\keywords}[1]{\small \textbf{\textit{Keywords---}} #1}
\usepackage{fullpage}
\usepackage{tikz-cd}

\usepackage{my-macros}
\usepackage{marginnote} 

\usepackage{subfig}
\begin{document}

\title{Reversible Gromov-Monge Sampler for Simulation-Based Inference}

\author{YoonHaeng Hur}
\author{Wenxuan Guo}
\author{Tengyuan Liang
}
\affil{University of Chicago}

\maketitle
\begin{abstract}
	This paper introduces a new simulation-based inference procedure to model and sample from multi-dimensional probability distributions given access to i.i.d.\ samples, circumventing the usual approaches of explicitly modeling the density function or designing Markov chain Monte Carlo. Motivated by the seminal work on distance and isomorphism between metric measure spaces, we propose a new notion called the Reversible Gromov-Monge (RGM) distance and study how RGM can be used to design new transform samplers to perform simulation-based inference. Our RGM sampler can also estimate optimal alignments between two heterogeneous metric measure spaces $(\cX, \mu, c_{\cX})$ and $(\cY, \nu, c_{\cY})$ from empirical data sets, with estimated maps that approximately push forward one measure $\mu$ to the other $\nu$, and vice versa. We study the analytic properties of the RGM distance and derive that under mild conditions, RGM equals the classic Gromov-Wasserstein distance. Curiously, drawing a connection to Brenier's polar factorization, we show that the RGM sampler induces bias towards strong isomorphism with proper choices of $c_{\cX}$ and $c_{\cY}$. Statistical rate of convergence, representation, and optimization questions regarding the induced sampler are studied. Synthetic and real-world examples showcasing the effectiveness of the RGM sampler are also demonstrated.
\end{abstract}

\keywords{Gromov-Wasserstein metric, transform sampling, simulation-based inference, generative models, isomorphism, likelihood-free inference}  
\fi


\section{Introduction}
\label{sec:intro}
One of the central tasks in statistics is to model and sample from a multi-dimensional probability distribution. Classic statistics approaches this problem by fitting a model to the target distribution and then sampling from a fitted model via Markov Chain Monte Carlo (MCMC) techniques. Although such model-based methods are widely used, MCMC sampling often entails several technicalities. Beyond diagnosing whether the chain mixes, obtaining i.i.d.\ samples from MCMC methods is complex as one has to control correlations between successive samples or run parallel chains.

An alternative approach available in statistics, reserved for the one-dimensional case, is usually referred to as the (inverse) \textit{transform sampling}. Such an approach circumvents the calling for a parametric or nonparametric density and directly designs a sampler by transforming a simple uniform distribution. The idea is simple: one can transform a uniform measure $\mu = {\rm Unif}([0,1])$ to any one-dimensional target probability measure $\nu$ leveraging the following monotonic transformation $T \colon [0, 1] \rightarrow \R$ called the inverse Cumulative Distribution Function (CDF),
\begin{align}
	\label{eqn:inv-cdf}
	T(x)= \inf \{y \in \R ~:~ \nu((-\infty, y]) \ge x\} \;.
\end{align}
Define the pushforward measure $T_\#\mu$ by $T_\#\mu(S) = \mu(\{ x : T(x) \in S \})$ for any Borel set $S\subseteq \R$, then one can easily check that $T_{\#} \mu = \nu$; namely, with a draw from the one-dimensional uniform distribution $x\sim \mu$, the transformed sample $T(x)$ has the target probability distribution $\nu$.

The transform sampling idea can be extended to the multi-dimensional setting: given a target probability measure $\nu$ supported on $\cY$, one can specify a probability measure $\mu$ on $\cX$, which is easy to sample from such as a multivariate Gaussian, and then find a measurable map $T \colon \cX \to \cY$ such that $T_{\#} \mu = \nu$, where the pushforward measure $T_{\#} \mu$ is defined analogously to the one-dimensional case above. Such a map $T$, named as \textit{transport map} from $\mu$ to $\nu$, transforms i.i.d.\ samples from $\mu$ into i.i.d.\ samples from $\nu$. Over the past few years, the generative modeling literature has been actively employing such transform sampling ideas by identifying $T_{\#} \mu = \nu$ through the following minimization:
\begin{equation}
	\label{eqn:transport_minimization}
	\min_{T \in \cF} \cL(T_{\#} \mu, \nu) \;,
\end{equation}
where $\cF$ is a class of maps from $\cX$ to $\cY$ parametrized by neural networks and $\cL$ measures certain discrepancies between two distributions. Different choices of $\cL$ have led to various models such as the Jensen-Shannon divergence for Generative Adversarial Networks (GANs) \citep{goodfellow_2014}, the Wasserstein-$1$ distance for Wasserstein-GAN \citep{arjovsky_2017}, and the Maximum Mean Discrepancy (MMD) for MMD-GAN \citep{dziugaite_roy_ghahramani_2015, li_swersky_zemel_2015}. One caveat is that there can be infinitely many transport maps from $\mu$ to $\nu$; for instance, when $\mu = \nu = \mathrm{Unif}([0, 1])$, define $T \colon [0, 1] \to [0, 1]$ by $T(x) = |2 x - 1|$, then the $n$-fold compositions of $T$ are valid transport maps for all $n \in \N$. In other words, finding a map $T$ satisfying $T_{\#} \mu = \nu$ is an over-identified problem, where \eqref{eqn:transport_minimization} has infinitely many minimizers. Though all minimizers are equivalent in terms of transform sampling, not all are equally preferred in light of Occam's razor principle: one wishes to select simple, desirable transport maps among the over-identified set $\{ T: T_{\#} \mu = \nu \}$.

Inductive biases tackle the aforementioned over-identified problem by restricting the search to transport maps with desirable properties. In this context, there has been meaningful progress based on optimal transport (OT) theory \citep{taghvaei_2-wasserstein_2019,makkuva_optimal_2020}. The OT theory aims to identify an optimal transformation $T$, quantified by the transportation cost of moving mass from $\mu$ to $\nu$; for instance, when $\mu$ and $\nu$ lie in the same space $\R^d$, each transport map $T$ is associated with the transport cost $C(T) := \int_{\R^d} \|x - T(x)\|^2 \dd{\mu}(x)$. Brenier \citep{brenier1991} proved that, under mild regularity conditions, there exists a unique minimizer $T^{\star}$ of $C$ among all transport maps, namely, 
\begin{equation}
	\label{eqn:optimal_transport}
	T^\star = \argmin_{T_\# \mu = \nu} C(T) \;.
\end{equation}
More importantly, $T^{\star}$ is the gradient of some convex function. On the one hand, Brenier's result extends the one-dimensional (inverse) transform sampling to the multi-dimensional case. When $d = 1$ and $\mu = \mathrm{Unif}([0, 1])$, the inverse CDF map in \eqref{eqn:inv-cdf} turns out to be exactly $T^\star$; for $d > 1$, the multi-dimensional map $T^\star \colon \R^d \rightarrow \R^d$ is the gradient of a convex function, generalizing monotonic functions on the real line to multi-dimensions. On the other hand, Brenier's result naturally initiates an inductive bias in transform sampling: instead of searching any transport map, one may find $T^\star$, the optimal one with the smallest cost. To contrast this with the plain transform sampling \eqref{eqn:transport_minimization}, let us rewrite \eqref{eqn:optimal_transport} using a suitable Lagrangian multiplier $\lambda > 0$ to enforce the equality constraint $T_{\#} \mu = \nu$:
\begin{equation}
	\label{eqn:optimal_transport_Lagrangian}
	\min_{T \in \cF} C(T) + \lambda \cdot \cL(T_\# \mu, \nu) \;.
\end{equation}
Now, we can see that \eqref{eqn:optimal_transport_Lagrangian} incorporates an additional objective function of $T$---transport cost---in \eqref{eqn:transport_minimization}, thereby introducing an inductive bias towards the optimal transport map, the minimum of $C$.

Such an OT-based approach, however, can be inefficient in practice if the target $\nu$ is a high-dimensional embedding of some low-dimensional distribution. For instance, let $\nu$ be the distribution of handwritten digit images from the MNIST data set on $\R^{784}$.\footnote{Images are normalized and fit into a $28 \times 28$ pixel bounding box, hence defined on $\R^{28 \times 28} \equiv \R^{784}$ \citep{lecun1998gradient}.} To use the above OT-based approach, one must choose $\mu$ on $\R^{784}$ and find a map $T \colon \R^{784} \to \R^{784}$. However, the support of $\nu$ is intrinsically low-dimensional (roughly $\R^{15}$ as in \citep{facco2017estimating}); hence, other transform samplers with $\cX = \R^{15}$ yielding $T \colon \R^{15} \to \R^{784}$ are more efficient than the OT-based method in terms of estimating $T$ and computing $T(X)$ for $X \sim \mu$.

In this paper, we propose and study a transform sampler, combining the best of both worlds: it introduces beneficial inductive biases like the OT approach, while operating when $\cX$ and $\cY$ are heterogeneous spaces. The key to our approach is to utilize a notion of isomorphism and the Gromov-Wasserstein (GW) distance between $\mu$ and $\nu$. Given two cost functions $c_{\cX} \colon \cX \times \cX \rightarrow \R$ and $c_{\cY} \colon \cY \times \cY \rightarrow \R$, the GW distance \citep{memoli_2011,chowdhury_memoli_2019} is 
\begin{align}
	\mathrm{GW}(\mu, \nu) \coloneqq \inf_{\gamma \in \Pi(\mu, \nu)} \left( \int_{\cX \times \cY} \int_{\cX \times \cY} \big( c_{\cX}(x, x') - c_{\cY}(y, y') \big)^2 \dd{\gamma} (x, y) \dd{\gamma}(x', y') \right)^{1/2} \;,
\end{align}
where $\Pi(\mu, \nu)$ is the set of couplings between $\mu$ and $\nu$. GW aims to match the cost functions defined on two heterogeneous spaces, intending to identify an isomorphism, namely, a transport map $T$ such that $c_{\cX}(x, x') = c_{\cY}(T(x), T(x'))$ for all $x, x' \in \cX$. Inspired by these, one can define the following objective function of $T$ to replace the transport cost $C$:
\begin{equation*}
	Q(T) := \int_{\cX} \int_{\cX} (c_{\cX}(x, x') - c_{\cY}(T(x), T(x'))^2 \dd{\mu} (x) \dd{\mu}(x') \;.
\end{equation*}
Roughly speaking, the new sampler takes a form of \eqref{eqn:optimal_transport_Lagrangian}, but with $Q$ instead of $C$, thereby introducing an inductive bias towards isomorphisms (the target when $\min_{T} Q(T) = 0$). Remark that GW is a Quadratic Program (QP) in $\gamma$, which is known to be computationally hard \citep{cela1998QuadraticAssignment}; similarly, the objective function $Q$ is quadratic in $\mu$. More importantly, designing and analyzing the plug-in estimation of the quadratic objects $\mathrm{GW}(\mu, \nu)$ and $Q$ based on finite i.i.d.\ samples from $\mu$ and $\nu$ are not obvious.

\paragraph{Main contributions}
This paper considers computational and statistical questions regarding Gromov-Wasserstein outlined above, and aims to design a new transform sampler as an approach to model and sample from multi-dimensional probability distributions given access to i.i.d.\ samples, circumventing the usual ways of modeling the density function or MCMC. Our transform sampler can also estimate good alignments between two heterogeneous metric measure spaces $(\cX, \mu, c_{\cX})$ and $(\cY, \nu, c_{\cY})$ from empirical data sets, with estimated maps that approximately pushforward one measure $\mu$ to the other $\nu$, and vice versa. Towards reaching these goals, we made the following specific contributions.
\begin{itemize}
	\item We introduce a new notion, Reversible Gromov-Monge (RGM) distance, on metric measure spaces that majorizes the usual Gromov-Wasserstein distance. Moreover, we show several analytic properties possessed by GW naturally carry over to RGM; in particular, RGM induces a valid metric between metric measure spaces up to an isomorphism. Furthermore, under mild assumptions, we derive that RGM equals GW. Finally, we illustrate how RGM induces an inductive bias favoring strong isomorphisms through a new insight from Brenier's polar factorization \citep{brenier1991}.
	
	\item Our RGM formulation induces a transform sampler, modifying the usual GW formulation via decoupling and binding. Rather than solving a QP which is quadratic in the coupling $\gamma \in \Pi(\mu, \nu)$, we decouple the pair as $(\mathrm{Id}, F)_{\#} \mu$ and $(B, \mathrm{Id})_{\#} \nu$ with $F \colon \cX \rightarrow \cY$ and $B \colon \cY \rightarrow \cX$, respectively, and then bind them later via the constraint $(\mathrm{Id}, F)_{\#} \mu \approx (B, \mathrm{Id})_{\#} \nu$. Such a decoupling and binding idea will prove suitable for the statistical estimation problem based on finite i.i.d.\ samples. We will also show, from an operator viewpoint, such a decoupling and binding idea  can relax our RGM to an infinite-dimensional convex program in $F, B$ that admits a simple representation theorem, as opposed to the otherwise intractable infinite-dimensional QP in GW.
	
	\item We derive non-asymptotic rates of convergence for the proposed RGM sampler using tools from empirical processes, for generic classes modeling the measurable maps $F$ and $B$. Based on our non-asymptotic results, concrete upper bounds can be easily spelled out in the cases where $F$ and $B$ are parametrized by deep neural networks. As mentioned earlier, the RGM sampler also identifies good alignments between metric measure spaces, and learns approximate isomorphism when possible. We demonstrate such a point using numerical experiments on MNIST.
\end{itemize}

\paragraph{Organization}
The rest of the paper is organized as follows. First, we briefly review other related studies omitted in the discussion above. Then, in Section~\ref{sec:background}, preliminary background on optimal transport and Gromov-Wasserstein distance is outlined. Next, Section~\ref{sec:summary-of-results} summarizes the primary methodology and theory regarding our proposed Reversible Gromov-Monge sampler. Synthetic and real-world examples showcasing the effectiveness of the RGM sampler are demonstrated in Section~\ref{sec:numerical} as a proof of concept. The supplementary material collects details of the results in Sections \ref{sec:summary-of-results} and \ref{sec:numerical} along with extensive discussions.

\subsection{Related Literature}

Inferring the underlying probability distributions from data has been a central problem in statistics and unsupervised machine learning since the invention of histograms by Pearson a century ago. Classic mathematical statistics explicitly models the density function in a parametric or a nonparametric way \citep{silverman_1986}, and studies the minimax optimality of directly estimating such density functions \citep{stone_1982}. It is also unclear how to proceed to sample from a possibly improper\footnote{Here we mean that the estimated density is not always non-negative and integrates to one.} density estimator, even with an optimal estimator at hand. One may employ Markov Chain Monte Carlo (MCMC) techniques for sampling from specific models. However, on the computational front, it is highly non-trivial how to ensure the mixing properties of MCMC for a designed sampler \citep{robert_casella_1999}.

A recent trend in unsupervised machine learning is to learn complex, high-dimensional distributions via (deep) generative models, either explicitly by parametrizing the sufficient statistics of the exponential families \citep{doersch2016TutorialVariationalAutoencoders,kingma2013auto}, or implicitly by parametrizing the pushforward map transporting distributions \citep{dziugaite_roy_ghahramani_2015,goodfellow_2014}, with a focus on tractability in computation. Surprisingly, though lacking theoretical underpinning and optimality, the generative models' approach performs well empirically in large-scale applications where classical statistical procedures are destined to fail. There has been a growing literature on understanding distribution estimation with the implicit framework, with more general metrics and target distribution classes, to name a few, \cite{muandet_fukumizu_sriperumbudur_scholkopf_2017, li_swersky_zemel_2015, dziugaite_roy_ghahramani_2015} on MMDs, \cite{sriperumbudur2012empirical, liang2019EstimatingCertain} on integral probability metrics, and \cite{mroueh2017sobolev, arora2017generalization,liang2018HowWell,singh2018minimax,bai2018approximability, weed2019estimation, lei2019sgd, chen2020statistical} on generative adversarial networks. Last but not least, we emphasize that an alternative implicit distribution estimation approach using the simulated method of moments has been formulated in the econometrics literature since \cite{mcfadden1989MethodSimulated,pakes1989SimulationAsymptotics} and \cite{gourieroux1997SimulationbasedEconometric}. 

Originally introduced as a tool for comparing objects in computer graphics, analytic properties of the Gromov-Wasserstein distance have been studied extensively \citep{memoli_2011,sturm_2012}; the most important one is that it defines a distance between metric measure spaces, namely, metric spaces endowed with probability measures. Since many real-world data sets can be modeled as metric measure spaces, the GW distance has been utilized in various problems such as shape correspondence \citep{solomon_etal_2016}, graph matching \citep{xu2019scalable}, and protein comparison \citep{gellert2019substrate}. Certain statistical aspects of comparing metric measure spaces have been studied in \cite{brecheteau2019statistical,weitkamp2020gromov}.

Computation of the GW distance amounts to a relaxation of the quadratic assignment problem \citep{koopmans1957AssignmentProblems}; both are known to be NP-hard \citep{cela1998QuadraticAssignment} in the worst case. Several approaches have been proposed for the approximate computation of the GW distance. \cite{memoli_2011} studies lower bounds on the GW distance that are easier to compute. \cite{peyre_etal_2016} adds an entropic regularization term to the GW distance, which leads to a fast iterative algorithm; \cite{scetbon2021linear} further modifies this by imposing a low-rank constraint on couplings. \cite{titouan_etal_2019} proposes the Sliced Gromov-Wasserstein distance defined by integrating GW distances over one-dimensional projections. Last but not least, recent papers \citep{xu2019scalable,blumberg2020mrec,chowdhury2021quantized} study scalable partitioning schemes to approximately compute GW distances.

\section{Background}
\label{sec:background}
In this section, we provide background on the Optimal Transport (OT) theory and the Gromov-Wasserstein distance. First, we start with some notations. Let $\|A\|$ denote the Frobenius norm of a matrix $A$ and $\|x\|$ denote the Euclidean norm of a vector $x$. Given a set $\cX$ and a function $f \colon \cX \to \R$, let $\|f\|_\infty = \sup_{x \in \cX} |f(x)|$ denote the sup norm. For an integer $n \in \N$, we define $[n] = \{1, \ldots, n\}$. For a metric space $\cX$, we denote its metric as $d_{\cX}$ and write $\cP(\cX)$ to denote the collection of all Borel probability measures on $\cX$. We call a pair $(\cX, \mu)$ a Polish probability space if $\cX$ is a metric space that is complete and separable and $\mu \in \cP(\cX)$. Given two Polish probability spaces $(\cX, \mu)$ and $(\cY, \nu)$, the collection of all transport maps from $\mu$ to $\nu$ is denoted as $\cT(\mu, \nu) \coloneqq \{ T \colon \cX \rightarrow \cY ~|~ T_\# \mu = \nu \}$; we call $\gamma \in \cP(\cX \times \cY)$ a coupling between $\mu$ and $\nu$ if $\gamma(A \times \cY) = \mu(A)$ and $\gamma(\cX \times B) = \nu(B)$ for all Borel subsets $A \subset \cX$ and $B \subset \cY$, and we denote the collection of all such couplings as $\Pi(\mu, \nu)$. For a sequence of numbers $a(n), b(n) \in \R$, we use $a(n)\precsim b(n)$ to denote the relationship that $a(n)/b(n)\leq C, \forall n$ with some universal constant $C>0$.

\subsection{A Brief Overview of Optimal Transport Theory}
A major goal of OT is minimizing the cost associated with the transport map between two Polish probability spaces, say $(\cX, \mu)$ and $(\cY, \nu)$. Consider a measurable function $c \colon \cX \times \cY \to \R_{+}$; we view $c(x, y)$ as the cost associated with $x \in \cX$ and $y \in \cY$. For each transport map $T \in \cT(\mu, \nu)$, we interpret $c(x, T(x))$ as a unit cost incurred by mapping each $x \in \cX$ to $T(x) \in \cY$. We define the average cost incurred by the transport map $T$ as the integration of all the unit costs with respect to $\mu$, that is, $\int_{\cX} c(x, T(x)) \dd{\mu}(x)$. Minimizing the cost over $\cT(\mu, \nu)$ is referred to as the Monge problem named after Gaspard Monge. We call $T^\star$ an optimal transport map if $T^\star$ is minimizer, that is,
\begin{equation*}
	T^\star \in \argmin_{T \in \cT(\mu, \nu)} \int_{\cX} c(x, T(x)) \dd{\mu}(x)\;,
\end{equation*}

Another important OT problem is minimizing the cost given by couplings. We define the average cost incurred by a coupling $\gamma \in \Pi(\mu, \nu)$ as the integration of the cost $c$ with respect to $\gamma$, namely, $\int_{\cX \times \cY} c(x, y) \dd{\gamma}(x, y)$. Minimizing this cost over $\Pi(\mu, \nu)$ is called the Kantorovich problem credited to Leonid Kantorovich. We call $\gamma^\star$ an optimal coupling if
\begin{equation*}
	\gamma^\star \in \argmin_{\gamma \in \Pi(\mu, \nu)} \int_{\cX \times \cY} c(x, y) \dd{\gamma}(x, y) \;,
\end{equation*}

The two OT problems are closely related: the Kantorovich problem is a relaxation of the Monge problem. To see this, for each $T \in \cT(\mu, \nu)$, define a map $(\mathrm{Id}, T) \colon \cX \to \cX \times \cY$ by $(\mathrm{Id}, T)(x) = (x, T(x))$. One can verify $(\mathrm{Id}, T)_{\#} \mu \in \Pi(\mu, \nu)$. Therefore, if we define $\Pi_{\cT} \coloneqq \{(\mathrm{Id}, T)_{\#} \mu : T \in \cT(\mu, \nu)\}$, then $\Pi_{\cT} \subset \Pi(\mu, \nu)$ and thus
\begin{equation*}
	\inf_{T \in \cT(\mu, \nu)} \int_{\cX} c(x, T(x)) \dd{\mu}(x)
	= \inf_{\gamma \in \Pi_{\cT}} \int_{\cX \times \cY} c(x, y) \dd{\gamma}(x, y)
	\ge \inf_{\gamma \in \Pi(\mu, \nu)} \int_{\cX \times \cY} c(x, y) \dd{\gamma}(x, y)\;,
\end{equation*}
where the first equality follows from change-of-variables. In other words, two OT problems share the same objective function as a function of couplings; however, the Kantorovich problem has a larger constraint set.

Unlike the Monge problem, the Kantorovich problem has favorable properties. First, the objective function is linear in $\gamma$. Moreover, $\Pi(\mu, \nu)$ is compact in the weak topology of Borel probability measures defined on $\cX \times \cY$. This suggests that we can view the Kantorovich problem as an infinite-dimensional linear program.

Besides seeking optimal transport maps or couplings, another interesting aspect of OT problems is that the least possible cost can endow a metric structure among Polish probability spaces. If $\cX = \cY$ and $c = d_{\cX}^2$, the square root of the solution of the Kantorovich problem defines a distance between $\mu$ and $\nu$, known as the Wasserstein distance.

\begin{definition}
	Given a metric space $\cX$ that is complete and separable, we call
	\begin{equation*}
		W_2(\mu, \nu) = \inf_{\gamma \in \Pi(\mu, \nu)} \left(\int_{\cX \times \cX} d_{\cX}^2(x, y) \dd{\gamma}(x, y)\right)^{1/2}
	\end{equation*}
	the Wasserstein-2 distance\footnote{One can define the Wasserstein-$p$ distance by replacing the exponent $2$ above with $p \in [1, \infty]$.} between $\mu, \nu \in \cP(\cX)$.
\end{definition}

\subsection{Gromov-Wasserstein and Gromov-Monge Distances}
Although OT problems can be defined between arbitrary Polish probability spaces, in practice, it is unclear how to design a function $c \colon \cX \times \cY \to \R_+$ to represent meaningful cost associated with $x \in \cX$ and $y \in \cY$ in two heterogeneous spaces.
For instance, if $\cX = \R^p$ and $\cY = \R^q$ with $p \neq q$, there is no simple choice for a cost function $c$ over $\R^p \times \R^q$. As a result, classic OT theory (including Brenier's result) cannot be directly used for comparing heterogeneous Polish probability spaces.

M{\'e}moli's pioneering work \cite{memoli_2011} resolved this issue by considering a quadratic objective function of $\gamma$:
\begin{equation*}
	\int_{\cX \times \cY} c(x, y) \dd{\gamma}(x, y) \Rightarrow \int_{\cX \times \cY}\int_{\cX \times \cY} (c_{\cX}(x, x') - c_{\cY}(y, y'))^2 \dd{\gamma}(x, y) \dd{\gamma}(x', y') \;,
\end{equation*}
where $c_{\cX}$ and $c_{\cY}$ are defined over $\cX \times \cX$ and $\cY \times \cY$, respectively. For instance, one can specify $c_{\cX} = d_{\cX}$ and $c_{\cY} = d_{\cY}$. Rather than considering a unit cost corresponding to each pair $(x, y) \in \cX \times \cY$, we associate two pairs $(x, y), (x', y') \in \cX \times \cY$ with the discrepancy of intra-space quantities $c_{\cX}(x, x')$ and $c_{\cY}(y, y')$. In summary, by switching from the integration $\dd{\gamma}$ to the double integration $\dd{\gamma} \dd{\gamma}$, we no longer need an otherwise inter-space quantity $c \colon \cX \times \cY \to \R_+$. Therefore, we can always define this objective function whenever we have proper $c_{\cX}$ and $c_{\cY}$ in each individual space, leading to the following definition.

\begin{definition}
	\label{def:mms}
	A triple $(\cX, \mu, c_{\cX})$ is called a network space if $(\cX, \mu)$ is a Polish probability space such that $\mathrm{supp}(\mu) = \cX$ and $c_{\cX} \colon \cX \times \cX \to \R$ is measurable. The Gromov-Wasserstein distance between network spaces $(\cX, \mu, c_{\cX})$ and $(\cY, \nu, c_{\cY})$ is defined as
	\begin{equation*}
		\mathrm{GW}(\mu, \nu) = \inf_{\gamma \in \Pi(\mu, \nu)} \left( \int_{\cX \times \cY} \int_{\cX \times \cY} (c_{\cX}(x, x') - c_{\cY}(y, y'))^2 \dd{\gamma}(x, y) \dd{\gamma}(x', y') \right)^{1/2}\;.
	\end{equation*}
\end{definition}

\begin{remark} \rm
	We adopt the network space definition introduced in \cite{chowdhury_memoli_2019}. A network space $(\cX, \mu, c_{\cX})$ is called a metric measure space if $c_{\cX} = d_{\cX}$ as introduced in \cite{memoli_2011} and \cite{sturm_2012}. In short, a network space is a generalization of a metric measure space.
\end{remark}

Like the Wasserstein distance, the GW distance has metric properties; it satisfies symmetry and the triangle inequality, and $\mathrm{GW}(\mu, \nu) = 0$ if $(\cX, \mu, c_{\cX}) = (\cY, \nu, c_{\cY})$. However, the converse of this last statement does not hold in general: for its validity, a suitable equivalence relation needs to be defined on the collection of network spaces.

\begin{definition}
	\label{def:iso}
	Network spaces $(\cX, \mu, c_{\cX})$ and $(\cY, \nu, c_{\cY})$ are strongly isomorphic if there exists $T \in \cT(\mu, \nu)$ such that $T \colon \cX \to \cY$ is bijective and $c_{\cX}(x, x') = c_{\cY}(T(x), T(x'))$ for all $x, x' \in \cX$. In this case, we write $(\cX, \mu, c_{\cX}) \cong (\cY, \nu, c_{\cY})$ and such a transport map $T$ is called a strong isomorphism.
\end{definition}

One can easily check that $\cong$ is indeed an equivalence relation on the collection of network spaces. The following theorem states that the GW distance satisfies all metric axioms on the quotient space---under the equivalence relation $\cong$---of metric measure spaces.

\begin{theorem}[Lemma 1.10 of \cite{sturm_2012}]
	\label{thm:1}
	Let $\cM$ be the collection of all network spaces $(\cX, \mu, c_{\cX})$ such that $c_{\cX} = d_{\cX}$. Also, let $\cM/_{\cong}$ be the collection of all equivalence classes of $\cM$ induced by $\cong$. Then, GW satisfies the three metric axioms on $\cM/_{\cong}$.
\end{theorem}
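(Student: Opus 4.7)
The plan is to verify the three metric axioms (symmetry, triangle inequality, and the equivalence class version of identity of indiscernibles) separately, using the standard tools of optimal transport: weak compactness of $\Pi(\mu,\nu)$, the gluing lemma, and Minkowski's inequality in $L^2$.

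First I would dispatch symmetry and nonnegativity. Nonnegativity is immediate since the integrand in the definition of $\mathrm{GW}$ is a square. Symmetry follows from the involution $\sigma \colon \cX \times \cY \to \cY \times \cX$, $\sigma(x,y) = (y,x)$: the map $\gamma \mapsto \sigma_{\#}\gamma$ is a bijection between $\Pi(\mu,\nu)$ and $\Pi(\nu,\mu)$ that preserves the integrand since $(d_{\cX}(x,x') - d_{\cY}(y,y'))^2 = (d_{\cY}(y,y') - d_{\cX}(x,x'))^2$. Thus $\mathrm{GW}(\mu,\nu) = \mathrm{GW}(\nu,\mu)$.

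Next I would handle the triangle inequality via a gluing argument. Given three metric measure spaces $(\cX,\mu,d_{\cX})$, $(\cY,\nu,d_{\cY})$, $(\cZ,\omega,d_{\cZ})$ and optimal (or near-optimal) couplings $\gamma_1 \in \Pi(\mu,\nu)$ and $\gamma_2 \in \Pi(\nu,\omega)$, I invoke the gluing lemma to obtain $\pi \in \cP(\cX \times \cY \times \cZ)$ whose $(\cX,\cY)$-marginal is $\gamma_1$ and whose $(\cY,\cZ)$-marginal is $\gamma_2$. Denote by $\gamma_{13} \in \Pi(\mu,\omega)$ its $(\cX,\cZ)$-marginal. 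Viewing $(x,y,z) \mapsto d_{\cX}(x,x') - d_{\cZ}(z,z')$ as an element of $L^2(\pi \otimes \pi)$ and applying Minkowski's inequality with the triangle identity
\[
d_{\cX}(x,x') - d_{\cZ}(z,z') = \bigl(d_{\cX}(x,x') - d_{\cY}(y,y')\bigr) + \bigl(d_{\cY}(y,y') - d_{\cZ}(z,z')\bigr),
\]
I obtain a bound by the sum of the two GW objectives for $\gamma_1$ and $\gamma_2$. Taking infima (and letting approximation errors vanish) yields $\mathrm{GW}(\mu,\omega) \le \mathrm{GW}(\mu,\nu) + \mathrm{GW}(\nu,\omega)$.

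The main obstacle is the nontrivial direction of identity of indiscernibles: $\mathrm{GW}(\mu,\nu) = 0 \Rightarrow (\cX,\mu,d_{\cX}) \cong (\cY,\nu,d_{\cY})$. The reverse implication is easy: if $T$ is a strong isomorphism, then $\gamma = (\mathrm{Id},T)_{\#}\mu$ lies in $\Pi(\mu,\nu)$ and makes the integrand vanish identically. For the forward direction, I first argue existence of an optimal coupling $\gamma^{\star}$: the set $\Pi(\mu,\nu)$ is tight (since each marginal is tight on a Polish space) hence weakly compact by Prokhorov, and the GW integrand is continuous and nonnegative, so the functional is weakly lower semicontinuous and a minimizer exists. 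If $\mathrm{GW}(\mu,\nu) = 0$, then
\[
d_{\cX}(x,x') = d_{\cY}(y,y') \qquad \text{for } (\gamma^{\star} \otimes \gamma^{\star})\text{-a.e. } \bigl((x,y),(x',y')\bigr).
\]
From this I would extract a measurable map $T \colon \cX \to \cY$ defined on a full-$\mu$-measure set such that $(x,T(x))$ lies in the support of $\gamma^{\star}$ almost surely, and show that $T$ is an isometric embedding on its domain. The assumption $\mathrm{supp}(\mu) = \cX$ and $\mathrm{supp}(\nu) = \cY$ is crucial here: it lets me extend the a.e.\ isometry to a genuine bijective isometry on the full spaces by continuity, after which $T_{\#}\mu = \nu$ and $d_{\cY}(T(x),T(x')) = d_{\cX}(x,x')$ identically. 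This produces the required strong isomorphism, so $\mathrm{GW}$ descends to a genuine metric on $\cM/_{\cong}$.
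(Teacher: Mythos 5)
The paper does not prove this result: it imports it as Lemma~1.10 of Sturm (2012) and treats it as a black box (elsewhere relying on Theorem~2.2 of Chowdhury--M\'emoli for existence of an optimal coupling). Your outline reproduces the standard argument from that literature and is sound: symmetry via the flip map $\sigma$, the triangle inequality via gluing and Minkowski's inequality in $L^2(\pi\otimes\pi)$, and Prokhorov plus weak lower semicontinuity for existence of an optimal $\gamma^\star$ are all correct as stated. For the final step it is cleanest to note that the $(\gamma^\star\otimes\gamma^\star)$-a.e.\ identity $d_{\cX}(x,x')=d_{\cY}(y,y')$ upgrades, by continuity of both metrics, to an \emph{exact} identity on $\mathrm{supp}(\gamma^\star)\times\mathrm{supp}(\gamma^\star)$; specializing to $x=x'$ then forces $y=y'$, so $\mathrm{supp}(\gamma^\star)$ is already the graph of a map $T$ that is an isometry on its (dense, by full support) domain. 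Completeness of $\cY$ lets you extend $T$ continuously to all of $\cX$, and completeness of $\cX$ makes the image closed and hence all of $\cY$, giving the bijection and $T_\#\mu=\nu$ as you indicate. So your plan has no genuine gap; it just deserves to be organized around the observation that $\mathrm{supp}(\gamma^\star)$ is a graph rather than around an abstract measurable selection.
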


Recall that the Monge problem is a restricted version of the Kantorovich problem with an additional constraint that couplings are given by a transport map; replacing $\Pi(\mu, \nu)$ in the Kantorovich problem with $\Pi_{\cT}$ yields the Monge problem. Imposing the same constraint on the definition of GW leads to the Gromov-Monge distance.
\begin{definition}
	The Gromov-Monge distance between network spaces $(\cX, \mu, c_{\cX})$ and $(\cY, \nu, c_{\cY})$ is defined as
	\begin{equation*}
		\mathrm{GM}(\mu, \nu) = \inf_{T \in \cT(\mu, \nu)} \left( \int_{\cX} \int_{\cX} (c_{\cX}(x, x') - c_{\cY}(T(x), T(x')))^2 \dd{\mu}(x) \dd{\mu}(x') \right)^{1/2} \;.
	\end{equation*}
\end{definition}
Loosely speaking, computing GM amounts to finding a transport map $T$ such that $c_{\cX}(x, x')$ best matches $c_{\cY}(T(x), T(x'))$ on average; we can view such a map $T$ as a surrogate for an isomorphism. See Section 2.4 of \cite{memoli2018distance} for more details of GM.

\section{Summary of Results}
\label{sec:summary-of-results}
Inspired by the Gromov-Wasserstein and Gromov-Monge distances, we propose a new metric---the reversible Gromov-Monge distance---between network spaces in this paper. Our formulation seeks a pair of transport maps $F \in \cT(\mu, \nu)$ and $B \in \cT(\nu, \mu)$ best approximating isomorphic relations between network spaces. We propose a novel transform sampling method that uses $F$ as a push-forward map to obtain i.i.d.\ samples from a target distribution $\nu$. We present two optimization formulations solving for such a pair $(F, B)$ in order: a potentially non-convex formulation that employs the standard gradient descent method to optimize, and an infinite-dimensional convex formulation where global optima can be found efficiently. For the former, we analyze the statistical rate of convergence for generic classes $\cF \times \cB$ parametrizing $(F, B)$. For the latter, we derive a new representer theorem on a suitable reproducing kernel Hilbert space (RKHS).

\subsection{Metric Properties of Reversible Gromov-Monge}
Our formulation is based on the following observation: for a coupling $\gamma$ such that $\gamma = (\mathrm{Id}, F)_{\#}\mu = (B, \mathrm{Id})_{\#} \nu$, which presents a binding constraint, we can simplify the objective function of GW as
\begin{equation*}
	\int_{\cX \times \cY} (c_{\cX}(x, B(y)) - c_{\cY}(F(x), y))^2 \dd{\mu \otimes \nu} \;,
\end{equation*}
where $\dd{\mu \otimes \nu} \coloneqq \dd{\mu}(x) \dd{\nu}(y)$ denotes the product measure of $\mu$ and $\nu$.
Imposing the binding constraint on the definition of GW leads to the following definition.
\begin{definition}\label{def:RGM}
	For network spaces $(\cX, \mu, c_{\cX})$ and $(\cY, \nu, c_{\cY})$, we write $(F, B) \in \cI(\mu, \nu)$ if measurable maps $F \colon \cX \to \cY$ and $B \colon \cY \to \cX$ satisfy the binding constraint $(\mathrm{Id}, F)_{\#}\mu = (B, \mathrm{Id})_{\#} \nu$. We define the reversible Gromov-Monge (RGM) distance between $(\cX, \mu, c_{\cX})$ and $(\cY, \nu, c_{\cY})$ as
	\begin{equation}
		\label{eq:RGM}
		\mathrm{RGM}(\mu, \nu) \coloneqq \inf_{(F, B) \in \cI(\mu, \nu)} \left(\int_{\cX \times \cY} (c_{\cX}(x, B(y)) - c_{\cY}(F(x), y))^2 \dd{\mu \otimes \nu}\right)^{1/2}\;.
	\end{equation}
\end{definition}

\begin{remark}
	\rm
	A few remarks are in place for the binding constraint.
	If $(\mathrm{Id}, F)_{\#}\mu = (B, \mathrm{Id})_{\#} \nu$, then $F_{\#} \mu = \nu$ and $B_{\#} \nu = \mu$ follow due to marginal conditions. However, the converse is not true in general. To see this, let $\mu = \nu = \mathrm{Unif}([0, 1])$, then $F_{\#} \mu = \nu$ and $B_{\#} \nu = \mu$ hold for $F(x) = B(x) = |2 x - 1|$. However, $(\mathrm{Id}, F)_{\#}\mu \neq (B, \mathrm{Id})_{\#} \nu$ because $(\mathrm{Id}, F)_{\#}\mu$ is a uniform measure on $\{(x, |2x-1|): x \in [0, 1]\}$, whereas $(B, \mathrm{Id})_{\#} \nu$ is a uniform measure on $\{(|2y-1|, y): y \in [0, 1]\}$. Lastly, note that $\cI(\mu, \nu)$ might be empty, for instance, if $\mu$ and $\nu$ are discrete and their supports have different cardinality; say, $\mu = \delta_x$ and $\nu = (\delta_{y_1} + \delta_{y_2}) / 2$, namely, Dirac measures supported on $x \in \cX$ and $y_1, y_2 \in \cY$; in such a case, $\mathrm{RGM}(\mu, \nu) = \infty$.
\end{remark}

Roughly speaking, computing RGM consists in finding a pair $(F, B) \in \cI(\mu, \nu)$ such that $c_{\cX}(x, B(y))$ best matches $c_{\cY}(F(x), y)$ on average. Like a strong isomorphism, we can view such a pair as jointly capturing an isomorphic relation of $(\cX, \mu, c_{\cX})$ and $(\cY, \nu, c_{\cY})$. We will use this observation later to build a transform sampling method.

We will prove that RGM possesses metric properties similar to the Gromov-Wasserstein. Motivated by Theorem \ref{thm:1}, we derive the following result.
\begin{theorem}
	\label{thm:metric}
	Let $h \colon \R_+ \to \R$ be a continuous and strictly monotone function and $\cN^{h}$ be a collection of all network spaces $(\cX, \mu, c_{\cX})$ such that $c_{\cX} = h(d_{\cX})$. Then RGM satisfies the three metric axioms on $\cN^h/_{\cong}$, the collection of all equivalence classes of $\cN^h$ induced by $\cong$.
\end{theorem}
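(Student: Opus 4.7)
The plan is to verify the three metric axioms---symmetry, identity of indiscernibles, and the triangle inequality---separately on $\cN^h/_{\cong}$. The crucial structural fact I will exploit throughout is that the binding constraint $(\mathrm{Id}, F)_{\#}\mu = (B, \mathrm{Id})_{\#}\nu$ forces $F$ and $B$ to be mutual a.e.\ inverses with $F_{\#}\mu = \nu$ and $B_{\#}\nu = \mu$; this ``inverse pair'' structure is what supports both the reduction to Gromov-Wasserstein (for non-degeneracy) and the composition argument (for the triangle inequality). The hypothesis that $c_{\cX} = h(d_{\cX})$, $c_{\cY} = h(d_{\cY})$ with $h$ continuous and strictly monotone is used only for non-degeneracy; symmetry and the triangle inequality only require the costs to be symmetric in their two arguments.

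Symmetry follows from the involution $(F, B) \leftrightarrow (B, F)$, which is a bijection between $\cI(\mu, \nu)$ and $\cI(\nu, \mu)$ and leaves the RGM integrand invariant once we use the symmetry of the costs. For the identity of indiscernibles, the easy direction takes a strong isomorphism $T$ and sets $F = T$, $B = T^{-1}$: bijectivity of $T$ together with a change-of-variables $y = T(x)$ gives the binding constraint, and the identity $c_{\cX}(x, T^{-1}(y)) = c_{\cY}(T(x), y)$, which follows from $T$ being a $c$-isometry, makes the integrand vanish pointwise. For the converse, I will first establish the basic majorization $\mathrm{RGM}(\mu, \nu) \ge \mathrm{GW}(\mu, \nu)$: each $(F, B) \in \cI(\mu, \nu)$ supplies a coupling $\gamma = (\mathrm{Id}, F)_{\#}\mu = (B, \mathrm{Id})_{\#}\nu$ whose GW objective, computed by using the first representation of $\gamma$ in one integration variable and the second in the other, exactly reproduces the RGM integrand. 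Thus $\mathrm{RGM}(\mu, \nu) = 0$ forces $\mathrm{GW}(\mu, \nu) = 0$ with costs $h(d_{\cX}), h(d_{\cY})$; since $h$ is strictly monotone and hence injective, the vanishing of this GW distance is equivalent to the vanishing of the ordinary $d$-based GW distance at the same optimal coupling, and Theorem~\ref{thm:1} then supplies a measure-preserving bijective isometry $T$ that preserves $d_{\cX}, d_{\cY}$ and hence also $c_{\cX}, c_{\cY}$.

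For the triangle inequality across three network spaces $(\cX_i, \mu_i, c_i) \in \cN^h$, I fix arbitrary $(F_{12}, B_{12}) \in \cI(\mu_1, \mu_2)$ and $(F_{23}, B_{23}) \in \cI(\mu_2, \mu_3)$ and propose the composition $(F_{13}, B_{13}) \coloneqq (F_{23} \circ F_{12},\, B_{12} \circ B_{23})$ as a candidate in $\cI(\mu_1, \mu_3)$. The main obstacle is verifying the binding constraint for this composition, which does not follow from the pushforward identities $F_{\#}\mu = \nu$ and $B_{\#}\nu = \mu$ alone; it requires a direct test-function calculation showing that both $\int \phi(x_1, F_{13}(x_1)) \dd{\mu_1}$ and $\int \phi(B_{13}(x_3), x_3) \dd{\mu_3}$ reduce, by applying each level of binding in turn, to the common expression $\int \phi(B_{12}(x_2), F_{23}(x_2)) \dd{\mu_2}$. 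Once this is in hand, I insert the intermediate quantity $c_2(F_{12}(x_1), B_{23}(x_3))$ into the integrand to split $c_1(x_1, B_{13}(x_3)) - c_3(F_{13}(x_1), x_3)$ into two telescoping differences, apply Minkowski's inequality in $L^2(\mu_1 \otimes \mu_3)$, and finally use the pushforward identities $(B_{23})_{\#}\mu_3 = \mu_2$ and $(F_{12})_{\#}\mu_1 = \mu_2$ to change variables and convert each piece into the RGM integrand for $(F_{12}, B_{12})$ on $\mu_1 \otimes \mu_2$ and for $(F_{23}, B_{23})$ on $\mu_2 \otimes \mu_3$, respectively. Taking the infimum over both pairs yields the triangle inequality; the degenerate case $\cI(\mu_i, \mu_{i+1}) = \emptyset$ is absorbed by the convention $\mathrm{RGM} = \infty$.
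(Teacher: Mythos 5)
Your proposal is correct and follows essentially the same route as the paper: symmetry via the swap involution $(F,B)\leftrightarrow(B,F)$, the triangle inequality via the composition $(F_{23}\circ F_{12}, B_{12}\circ B_{23})$ with the binding constraint verified by chaining (your test-function calculation is the integral form of the paper's pushforward identities in Proposition~\ref{prop:2}), Minkowski with the intermediate term $c_2(F_{12}(x_1),B_{23}(x_3))$, and the change of variables via $(F_{12})_\#\mu_1=\mu_2$, $(B_{23})_\#\mu_3=\mu_2$; and non-degeneracy via $\mathrm{RGM}\ge\mathrm{GW}$ (exactly the ``use one representation of $\gamma$ per integration variable'' observation underlying Proposition~\ref{prop:1}), an optimal GW coupling, injectivity of $h$, and Theorem~\ref{thm:1}. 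The only (inconsequential) imprecision is the side remark that the triangle inequality requires cost symmetry---it does not, only Minkowski and the marginal pushforwards are used.
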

\begin{remark}
	\rm
	Suppose $\cX$ is a Euclidean space and $d_{\cX}$ is the standard Euclidean distance. If $h(x) = \exp(- \alpha x^2)$ with $\alpha > 0$, then $h(d_{\cX})$ is the radial basis function (RBF) kernel on $\cX$; we will use this in numerical experiments.
\end{remark}

Readers may wonder about the generic relations among three distances GW, GM, and RGM, which will be established in the next proposition.
\begin{proposition}\label{prop:1}
	For network spaces $(\cX, \mu, c_{\cX})$ and $(\cY, \nu, c_{\cY})$ as in Definition~\ref{def:mms},
	\begin{equation}
		\label{eq:GW<GM<RGM}
		\mathrm{GW}(\mu, \nu) \le \mathrm{GM}(\mu, \nu) \le \mathrm{RGM}(\mu, \nu)\;.
	\end{equation}
\end{proposition}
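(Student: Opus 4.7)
The plan is to prove the two inequalities separately, with the second being the substantive one.

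For the first inequality $\mathrm{GW}(\mu,\nu) \le \mathrm{GM}(\mu,\nu)$, I would mimic the Kantorovich-versus-Monge comparison already outlined in Section~\ref{sec:background}. For every $T \in \cT(\mu,\nu)$ the coupling $\gamma_T \coloneqq (\mathrm{Id}, T)_{\#}\mu$ lies in $\Pi(\mu,\nu)$, and a change of variables gives
\begin{equation*}
\int_{(\cX\times\cY)^2}\!\!(c_{\cX}(x,x') - c_{\cY}(y,y'))^2\,\dd{\gamma_T}(x,y)\dd{\gamma_T}(x',y') \;=\; \int_{\cX\times\cX}\!\!(c_{\cX}(x,x') - c_{\cY}(T(x),T(x')))^2 \dd{\mu}(x)\dd{\mu}(x')\;.
\end{equation*}
Taking the infimum over $T$ on the right while noting $\Pi_{\cT} \subset \Pi(\mu,\nu)$ on the left yields the claim.

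For the second inequality $\mathrm{GM}(\mu,\nu) \le \mathrm{RGM}(\mu,\nu)$, fix any $(F,B) \in \cI(\mu,\nu)$ and let $\gamma$ denote the common coupling $(\mathrm{Id},F)_{\#}\mu = (B,\mathrm{Id})_{\#}\nu$. The marginal conditions immediately give $F \in \cT(\mu,\nu)$ and $B \in \cT(\nu,\mu)$. The crux is the identity $B \circ F = \mathrm{Id}$ $\mu$-a.s., which I would establish as follows: under the representation $\gamma = (B,\mathrm{Id})_{\#}\nu$, the set $\{(x,y) : x = B(y)\}$ has full $\gamma$-measure; applying the same full-measure statement to the representation $\gamma = (\mathrm{Id},F)_{\#}\mu$, which enforces $y=F(x)$ $\gamma$-a.s., forces $x = B(F(x))$ on a $\gamma$-full (and hence $\mu$-full, by pushing forward along $\pi_{\cX}$) set.

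With this key identity at hand, I would change variables $y = F(x')$ in the RGM objective, legitimate because $F_{\#}\mu = \nu$:
\begin{equation*}
\int_{\cX\times\cY}(c_{\cX}(x,B(y)) - c_{\cY}(F(x),y))^2 \dd{\mu}(x)\dd{\nu}(y) \;=\; \int_{\cX\times\cX}(c_{\cX}(x,B(F(x'))) - c_{\cY}(F(x),F(x')))^2 \dd{\mu}(x)\dd{\mu}(x')\;.
\end{equation*}
Replacing $B \circ F$ by $\mathrm{Id}$ on a $\mu$-full set shows the right-hand side equals the Gromov-Monge objective evaluated at $T = F \in \cT(\mu,\nu)$, which is at least $\mathrm{GM}(\mu,\nu)^2$. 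Taking the infimum over $(F,B) \in \cI(\mu,\nu)$ completes the proof.

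The expected obstacle is entirely in the measure-theoretic manipulation of the binding constraint, namely justifying $B \circ F = \mathrm{Id}$ $\mu$-a.s. from $(\mathrm{Id},F)_{\#}\mu = (B,\mathrm{Id})_{\#}\nu$; the remark immediately following Definition~\ref{def:RGM} warns that the binding constraint is strictly stronger than the pair of marginal constraints $F_{\#}\mu=\nu$, $B_{\#}\nu=\mu$, so one must genuinely exploit the equality of couplings rather than only the marginals. Once that identity is established, everything reduces to a transparent change-of-variables argument.
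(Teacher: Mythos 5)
Your proof is correct, but it takes a genuinely different route from the paper's. The paper's proof is more uniform: it writes all three distances as $\inf_{\gamma} Q(\gamma)$ for the same quadratic functional $Q$ over nested constraint sets $\Pi' \subseteq \Pi_{\cT} \subseteq \Pi(\mu,\nu)$, where $\Pi'$ is the set of couplings realizable as both $(\mathrm{Id},F)_{\#}\mu$ and $(B,\mathrm{Id})_{\#}\nu$. The key change of variables there is a \emph{mixed} one — applying $(\mathrm{Id},F)$ to one pair of arguments and $(B,\mathrm{Id})$ to the other — which directly turns $Q(\gamma)$ into the RGM integrand, and both inequalities then fall out of set inclusion at once. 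Your proof instead isolates the structural fact that the binding constraint forces $B\circ F = \mathrm{Id}$ $\mu$-a.s.\ (the converse direction of the paper's Lemma~\ref{lem:FBcondition}), and then uses the single change of variables $y = F(x')$ to identify the RGM objective at $(F,B)$ with the GM objective at $T = F$. Your route is more constructive and makes the ``reversibility'' of the pair $(F,B)$ explicit — the a.s.\ left-inverse identity is a standalone observation worth recording — whereas the paper's route is shorter and makes the nesting $\Pi' \subseteq \Pi_{\cT} \subseteq \Pi$ the conceptual centerpiece. One small point to make airtight in your version: the sets $\{(x,y) : x = B(y)\}$ and $\{(x,y) : y = F(x)\}$ are the graphs of Borel functions between Polish spaces, hence Borel, so the full-measure intersection argument is legitimate; it is worth stating this since you are intersecting two a.s.\ events inside $\gamma$.
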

Interestingly, under mild conditions, the above inequalities in Proposition~\ref{prop:1} hold as equality, thus showing that RGM provides the exact metric as GW. The proof is inspired by a construction in \cite{brenier2003, brenier1991}.
\begin{theorem}
	\label{thm:equality}
	Let $(\cX, \mu, c_\cX)$ and $(\cY, \nu, c_\cY)$ be two network spaces. Assume that $c_\cX$ and $c_\cY$ are bounded and $\mu(\{x\}) = \nu(\{y\}) = 0$ for any $(x, y) \in \cX \times \cY$. Then, $\mathrm{GW}(\mu, \nu) = \mathrm{GM}(\mu, \nu) =  \mathrm{RGM}(\mu, \nu)$.
\end{theorem}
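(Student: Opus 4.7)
The plan is to establish the reverse inequality $\mathrm{RGM}(\mu,\nu) \le \mathrm{GW}(\mu,\nu)$; combined with Proposition~\ref{prop:1}, this yields $\mathrm{GW} = \mathrm{GM} = \mathrm{RGM}$.

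First I would observe that the RGM objective, when evaluated at any pair $(F, B) \in \cI(\mu,\nu)$, collapses to a Gromov-type objective along the graph of $F$. The binding constraint $(\mathrm{Id}, F)_\# \mu = (B, \mathrm{Id})_\# \nu$ forces $B \circ F = \mathrm{Id}_\cX$ $\mu$-a.e.\ and $F \circ B = \mathrm{Id}_\cY$ $\nu$-a.e., so changing variables $y = F(x')$ gives
\begin{equation*}
\int_{\cX \times \cY}(c_\cX(x, B(y)) - c_\cY(F(x), y))^2 \dd{(\mu \otimes \nu)} = \int_{\cX \times \cX}(c_\cX(x, x') - c_\cY(F(x), F(x')))^2 \dd{(\mu \otimes \mu)},
\end{equation*}
which coincides with $\int\int (c_\cX(x,x') - c_\cY(y,y'))^2 \dd{\gamma_F}\dd{\gamma_F}$ for $\gamma_F := (\mathrm{Id}, F)_\# \mu$. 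Hence proving $\mathrm{RGM} \le \mathrm{GW}$ reduces to approximating an $\epsilon$-optimal GW coupling $\gamma^{\ast}$ by couplings $\gamma_F$ with $F$ bijective, so that $B := F^{-1}$ supplies the pair.

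Next I would build such an approximation through a Brenier-inspired discretization. For each $n \in \N$, the atomless hypothesis permits equal-mass partitions $\cX = \bigsqcup_{i=1}^n A_i^{(n)}$ and $\cY = \bigsqcup_{j=1}^n B_j^{(n)}$ with $\mu(A_i^{(n)}) = \nu(B_j^{(n)}) = 1/n$. The matrix $\pi^{(n)}_{ij} := n\cdot \gamma^{\ast}(A_i^{(n)} \times B_j^{(n)})$ is doubly stochastic, so Birkhoff--von Neumann writes it as a convex combination of permutations; I would select a permutation $\sigma_n$ from this decomposition whose induced coupling best tracks $\gamma^{\ast}$. Since any two atomless Polish probability spaces are measurably isomorphic, each block $A_i^{(n)}$ admits a measure-preserving bijection onto $B_{\sigma_n(i)}^{(n)}$, and gluing these produces a bijective $F_n \in \cT(\mu,\nu)$ with $B_n := F_n^{-1} \in \cT(\nu,\mu)$ and $(F_n, B_n) \in \cI(\mu,\nu)$. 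This is the Brenier-spirited replacement of a general coupling by a measure-preserving bijection plus controlled error.

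Finally, I would pass to the limit so that $\gamma_{F_n} \Rightarrow \gamma^{\ast}$ weakly and the GW-type objective converges. Because $c_\cX, c_\cY$ are only bounded measurable, I would apply Lusin's theorem to approximate them uniformly on sets of large $\mu\otimes\mu$- and $\nu\otimes\nu$-mass by continuous bounded surrogates, pass the weak convergence of $\gamma_{F_n}\otimes\gamma_{F_n}$ through the continuous integrand, and absorb the residual on the exceptional sets using uniform boundedness. A diagonal argument in $n$ and $\epsilon$ then delivers $\mathrm{RGM}(\mu,\nu) \le \mathrm{GW}(\mu,\nu)$. The main obstacle I expect lies in the discretization step: Birkhoff--von Neumann only yields a convex combination of permutations, so extracting a single $\sigma_n$ whose coupling approximates $\gamma^{\ast}$ in a mode strong enough to survive bounded measurable integrands (even after Lusin regularization) requires either a careful rounding or a randomization over the Birkhoff decomposition. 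The atomless assumption plays a double role, enabling both the equal-mass partition and the block-wise bijections, and is exactly where the Brenier construction enters the picture.
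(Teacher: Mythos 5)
The proposal is in the same spirit as the paper's argument---both reduce to approximating an optimal GW coupling $\gamma^\ast$ weakly by couplings of the form $(\mathrm{Id},F)_\#\mu$ with $F$ a measure-preserving bijection, then pass the GW objective to the limit---but your key construction step has a genuine gap that the paper sidesteps by citation. You propose to discretize via equal-mass partitions, form the doubly stochastic block matrix $\pi^{(n)}$, and then extract a \emph{single} permutation $\sigma_n$ from its Birkhoff--von Neumann decomposition whose induced coupling ``best tracks'' $\gamma^\ast$. In general no such single permutation tracks $\gamma^\ast$: take $\gamma^\ast = \mu\otimes\nu$, so that $\pi^{(n)}$ is the uniform doubly stochastic matrix; every permutation is then equally far from $\pi^{(n)}$, and for any fixed choice the induced measure $\gamma_{F_n}$ concentrates on a union of $n$ small rectangles, which need not converge weakly to the product measure unless the $\sigma_n$ are chosen to equidistribute as $n\to\infty$. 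Achieving that equidistribution is precisely the nontrivial content of the density theorem (bijective measure-preserving maps are weakly dense in $\Pi(\lambda,\lambda)$), which the paper obtains directly by invoking Theorem~1.1 of \cite{brenier2003} after first pulling both network spaces back to $([0,1],\lambda)$ via the atomless-Polish isomorphism of Royden. So you are not wrong, but you are attempting to reprove the one hard ingredient from scratch; the ``careful rounding or randomization'' you flag as the main obstacle is exactly that ingredient, and without it the argument does not close.

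Two smaller points. First, your preliminary reduction---that the binding constraint forces $B\circ F=\mathrm{Id}$ $\mu$-a.e.\ and $F\circ B=\mathrm{Id}$ $\nu$-a.e., and that the RGM integrand then changes variables into $\int\!\!\int (c_\cX(x,x')-c_\cY(F(x),F(x')))^2\,\dd\mu\otimes\mu$---is correct, and is essentially the identification $\mathrm{RGM}(\mu,\nu)^2=\inf_{\gamma\in\Pi'}Q(\gamma)$ already established in the proof of Proposition~\ref{prop:1}. Second, for the limit step the paper cites Lemma~2.3 of \cite{chowdhury_memoli_2019} for weak continuity of the GW functional; your Lusin-theorem route is plausible for bounded measurable costs provided you control the exceptional set under $\gamma_{F_n}\otimes\gamma_{F_n}$ via the fact that its $(\cX\times\cX)$- and $(\cY\times\cY)$-marginals are $\mu\otimes\mu$ and $\nu\otimes\nu$, but this needs to be stated explicitly rather than absorbed. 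If you replace your discretization by a direct appeal to the Brenier--Gangbo density result (applied after the $([0,1],\lambda)$ reduction, as in the paper), the rest of your sketch can be made to go through.
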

We defer the proof details of Theorem \ref{thm:metric} and Proposition \ref{prop:1} to Section \ref{sec:metric-and-basic-properties}, and Theorem \ref{thm:equality} to Section \ref{sec:gw=rgm}. 

Finally, we conclude this section by pointing out a connection between inductive biases in RGM and Brenier's polar factorization \citep{brenier1991}. Given two Polish probability spaces $(\cX, \mu)$ and $(\cY, \nu)$, there exist cost functions $c_{\cX}$ and $c_{\cY}$ (that depend on $\mu, \nu$) such that the resulting network spaces $(\cX, \mu, c_{\cX})$ and $(\cY, \nu, c_{\cY})$ are strongly isomorphic. More importantly, among (possibly) infinitely many pairs $(F, B)$'s in $\cI(\mu, \nu)$, which are all valid for transform sampling, the optimal pair $(F^\star, B^\star)$ minimizing the RGM term \eqref{eq:RGM} achieves the strong isomorphism
\begin{align}
	\mathrm{RGM}(\mu, \nu) =  \int_{\cX \times \cY} (c_{\cX}(x, B^\star(y)) - c_{\cY}(F^\star(x), y))^2 \dd{\mu \otimes \nu} = 0 \;,
\end{align}
and thus $c_{\cX}(x, B^\star(y)) = c_{\cY}(F^\star(x), y)$ almost surely. In plain language, the RGM introduces an inductive bias favoring strong isomorphisms, in the same spirit as the Wasserstein-$2$ metric favors the transport map with the optimal cost seen in the introduction. 
The detailed discussions are deferred to Section \ref{sec:brenier-polar}.

\subsection{Transform Sampling via RGM}
\label{sec:RGM-sampler}
With the proposed notion of RGM, we design a transform sampling method in this section. The transform sampler is based on finding a minimizing pair $(F, B)$ of RGM, which can capture isomorphic relations between network spaces. To implement this method, we need to estimate $(F, B)$ using only i.i.d.\ samples from $\mu$ and $\nu$. Leveraging the Lagrangian form, we derive a minimization problem that can be implemented based on finite samples.

First, we rewrite the population minimization problem with the binding constraint as follows,
\begin{equation}		
	\label{eq:constrained_form}
	\begin{aligned}
		\min_{\substack{F \colon \cX \to \cY \\ B \colon \cY \to \cX}} \quad & \int_{\cX \times \cY} (c_{\cX}(x, B(y)) - c_{\cY}(F(x), y))^2 \dd{\mu \otimes \nu} \\
		\mathrm{s.t.} \quad & \cL_{\cX \times \cY}((\mathrm{Id}, F)_{\#} \mu, (B, \mathrm{Id})_{\#} \nu) = 0 \;. 
	\end{aligned}
\end{equation}
Here, $\cL_{\cX \times \cY}$ is a suitable discrepancy measure on $\cP(\cX \times \cY)$ so that the constraint of \eqref{eq:constrained_form} is a surrogate for the original constraint $(\mathrm{Id}, F)_{\#}\mu = (B, \mathrm{Id})_{\#} \nu$. In practice, we do not require that $\cL_{\cX \times \cY} = 0$ implies $(\mathrm{Id}, F)_{\#}\mu = (B, \mathrm{Id})_{\#} \nu$; in fact, the former constraint can be a relaxation of the latter. The choice of $\cL_{\cX \times \cY}$ will be specified later. To solve this minimization problem, we propose utilizing the Lagrangian:
\begin{equation*}
	\min_{\substack{F \colon \cX \to \cY \\ B \colon \cY \to \cX}} \int_{\cX \times \cY} (c_{\cX}(x, B(y)) - c_{\cY}(F(x), y))^2 \dd{\mu \otimes \nu} + \lambda \cdot \cL_{\cX \times \cY}((\mathrm{Id}, F)_{\#} \mu, (B, \mathrm{Id})_{\#} \nu)\;.
\end{equation*}

Given i.i.d.\ samples $\{x_i\}_{i=1}^{m}$ and $\{y_j\}_{j=1}^{n}$ from $\mu$ and $\nu$, respectively, we replace the population objective with its empirical estimates:
\begin{equation*}
	\min_{\substack{F \colon \cX \to \cY \\ B \colon \cY \to \cX}} \frac{1}{m n} \sum_{i = 1}^{m} \sum_{j = 1}^{n} (c_{\cX}(x_i, B(y_j)) - c_{\cY}(F(x_i), y_j))^2 + \lambda \cdot \cL_{\cX \times \cY}((\mathrm{Id}, F)_{\#} \widehat{\mu}_m, (B, \mathrm{Id})_{\#} \widehat{\nu}_n)\;,
\end{equation*}
where $\widehat{\mu}_m$ and $\widehat{\nu}_n$ are the empirical measures based on $\{x_i\}_{i=1}^{m}$ and $\{y_j\}_{j=1}^{n}$, respectively. Empirically, we find that adding the following extra terms often enhances empirical results:
\begin{equation*}
	\begin{split}
		\min_{\substack{F \colon \cX \to \cY \\ B \colon \cY \to \cX}} \quad & \frac{1}{m n} \sum_{i = 1}^{m} \sum_{j = 1}^{n} (c_{\cX}(x_i, B(y_j)) - c_{\cY}(F(x_i), y_j))^2 + \lambda_1 \cdot \cL_{\cX \times \cY}((\mathrm{Id}, F)_{\#} \widehat{\mu}_m, (B, \mathrm{Id})_{\#} \widehat{\nu}_n) \\
		& + \lambda_2 \cdot \cL_{\cX}(\widehat{\mu}_m, B_{\#} \widehat{\nu}_n) + \lambda_3 \cdot \cL_{\cY}(F_{\#} \widehat{\mu}_m, \widehat{\nu}_n)\;.
	\end{split}
\end{equation*}
Like $\cL_{\cX \times \cY}$, we utilize suitable discrepancy measures $\cL_{\cX}$ and $\cL_{\cY}$ so that these additional terms help matching the marginals of $(\mathrm{Id}, F)_{\#} \widehat{\mu}_m$ and $(B, \mathrm{Id})_{\#} \widehat{\nu}_n$.

Lastly, we discuss the choice of $\cL_{\cX}, \cL_{\cY}$, and $\cL_{\cX \times \cY}$. We use the square of Maximum Mean Discrepancy (MMD) as the leading example.\footnote{This is merely a proof of concept. One may use other quantities in practice, described in Section \ref{sec:numerical}.} MMD between two measures is a distance between their embeddings in some reproducing kernel Hilbert space (RKHS), which is indeed a metric under mild conditions \citep{muandet_fukumizu_sriperumbudur_scholkopf_2017}. Also, MMD is representable via the reproducing kernel of the RKHS, hence one may simply choose a kernel function to define it. Concretely, for any kernel $K_{\cX}$ on $\cX$, the square of MMD between $\widehat{\mu}_m$ and $B_{\#} \widehat{\nu}_n$ is
\begin{equation*}
	\frac{1}{m^2} \sum_{i, i'} K_{\cX}(x_i, x_{i'})	+ \frac{1}{n^2} \sum_{j, j'} K_{\cX}(B(y_j), B(y_{j'}))	- \frac{2}{mn} \sum_{i, j} K_{\cX}(x_i, B(y_j))\;.
\end{equation*}
To utilize such a convenient closed form, we specify $\cL_{\cX}, \cL_{\cY}, \cL_{\cX \times \cY}$ as the square of corresponding MMDs by choosing kernels $K_{\cX}, K_{\cY}, K_{\cX \times \cY}$ on $\cX, \cY, \cX \times \cY$. For the kernel $K_{\cX \times \cY}$ on the product space, we use the tensor product kernel $K_{\cX} \otimes K_{\cY}$ given as
\begin{equation*}
	K_{\cX} \otimes K_{\cY}((x, y), (x', y')) = K_{\cX}(x, x') K_{\cY}(y, y')\;.
\end{equation*}
The tensor product notation is employed since the kernel on the product space inherits the feature map as the tensor product of two individual feature maps w.r.t. $K_{\cX}$ and $K_{\cY}$.

Denoting the MMD associated with a kernel $K$ as $\mathrm{MMD}_{K}$, we obtain the following minimization problem:
\begin{equation}
	\label{eqn:1}
	\begin{split}
		\min_{\substack{F \colon \cX \to \cY \\ B \colon \cY \to \cX}} \quad & \frac{1}{m n} \sum_{i = 1}^{m} \sum_{j = 1}^{n} (c_{\cX}(x_i, B(y_j)) - c_{\cY}(F(x_i), y_j))^2 \\
		& + \lambda_1 \cdot \mathrm{MMD}_{K_{\cX} \otimes K_{\cY}}^2((\mathrm{Id}, F)_{\#} \widehat{\mu}_m, (B, \mathrm{Id})_{\#} \widehat{\nu}_n)\\
		& + \lambda_2 \cdot \mathrm{MMD}_{K_{\cX}}^2(\widehat{\mu}_m, B_{\#} \widehat{\nu}_n) + \lambda_3 \cdot \mathrm{MMD}_{K_{\cY}}^2(F_{\#} \widehat{\mu}_m, \widehat{\nu}_n) \;.
	\end{split}
\end{equation}

Once we solve the problem above, the solution $\widehat{F} \colon \cX\to\cY$ will serve as an approximate isomorphism and facilitate transform sampling of the target $\nu$ from a known distribution $\mu$. The map $\widehat{B}$ possesses similar properties as $\widehat{F}$, whereas the map $\widehat{F}$ is of our primary interest for sampling purposes. The reverse map $\widehat{B} \colon \cY \rightarrow \cX$ also embeds point clouds in $\cY$ into $\cX$, with approximate isomorphism properties in the sense of Gromov-Monge.

\subsection{Statistical Rate of Convergence}
\label{subsec:stat_rate}

Like other transform sampling approaches for generative models, we consider \eqref{eqn:1} using vector-valued function classes $\cF$ and $\cB$ parametrized by neural networks, and then optimize using a gradient descent algorithm. We emphasize this minimization problem is much simpler than adversarial formulations as in GANs: variational problems of GANs consist of minimization over a class of generators and maximization over a class of discriminators, which requires complex saddle-point dynamics \citep{daskalakis2017training, liang2018interaction}. In contrast, our RGM only solves a single minimization problem in network parameters. Although generally non-convex in nature, the parameter minimization problem in neural networks can often be efficiently optimized by stochastic gradient descent, and can even provably achieve the global optima if the loss satisfies certain Polyak-\L ojasiewicz conditions \citep{bassily_belkin_ma_2018}.

We investigate the statistical rate of convergence for this minimization problem, assuming the empirical problem \eqref{eqn:1} can be solved accurately. First, define
\begin{equation}
	\label{eqn:cost-def}
	\begin{split}
		C(\mu, \nu, F, B) \coloneqq & \int (c_{\cX}(x, B(y)) - c_{\cY}(F(x), y))^2 \dd{\mu \otimes \nu} \\
		& + \lambda_1 \cdot \mathrm{MMD}_{K_{\cX} \otimes K_{\cY}}^2((\mathrm{Id}, F)_{\#} \mu, (B, \mathrm{Id})_{\#} \nu)\\
		& + \lambda_2 \cdot \mathrm{MMD}_{K_{\cX}}^2(\mu, B_{\#} \nu) + \lambda_3 \cdot \mathrm{MMD}_{K_{\cY}}^2(F_{\#} \mu, \nu) \;.
	\end{split}
\end{equation}
Then, the objective function of \eqref{eqn:1} is a plug-in estimator $C(\widehat{\mu}_m, \widehat{\nu}_n, F, B)$. We consider solving \eqref{eqn:1} over the transformation class $\cF\times\cB$ given as follows, for which we will state our non-asymptotic results in full generality. From now on, let $\cX$ and $\cY$ be subsets of Euclidean spaces of dimensions $\mathrm{dim}(\cX)$ and $\mathrm{dim}(\cY)$, respectively. $\cF$ (resp.\ $\cB$) is a collection of vector-valued measurable functions from $\cX$ to $\cY$ (resp.\ from $\cY$ to $\cX$). For each $F \in \cF$ and $k \in [\mathrm{dim}(\cY)]$, we write $F_k(x)$ to denote the $k$-th coordinate of $F(x)$. Accordingly, we define $\cF_k =\{F_k: \cX \rightarrow \R ~|~ F \in \cF\}$, namely, a collection of real-valued measurable functions defined on $\cX$ that are given as the $k$-th coordinate of $F \in \cF$. For $\ell \in [\mathrm{dim}(\cX)]$, we define $B_\ell$ and $\cB_{\ell} = \{B_\ell: \cY \rightarrow \R ~|~ B \in \cB\}$ analogously.

Then, solving \eqref{eqn:1} over $\cF \times \cB$ is written as $\min_{(F, B) \in \cF \times \cB} C(\widehat{\mu}_m, \widehat{\nu}_n, F, B)$. We prove that the empirical solution leads to an approximate infimum of $(F, B) \mapsto C(\mu, \nu, F, B)$ evaluated with the population measures $\mu, \nu$, with sufficiently large sample sizes $m$ and $n$.

\paragraph{Overview of assumptions}
Before stating the next theorem, we present an overview of the assumptions. The complete statement of the assumptions and key definitions are designated to Sections~\ref{sec:statistical-theory}-\ref{sec:representation} in the supplementary material due to space constraints. Assumptions~\ref{a:bounded1} and \ref{a:lip_of_C} require the boundedness and Lipschitzness of the cost functions $c_{\cX}$ and $c_{\cY}$. Similarly, boundedness and Lipschitzness of the kernel functions $K_{\cX}, K_{\cY}$ corresponding to the MMD term are stated in Assumptions~\ref{a:bounded_kernels} and \ref{a:lip_kernels}, respectively. The last two assumptions are imposed on the set of transformations $F \colon \cX\rightarrow \cY$ and $B \colon \cY \rightarrow \cX$: Assumption~\ref{a:uniform_boundedness} requires the transformation class is bounded, and Assumption~\ref{a:separation} states that the classes should contain non-trivial maps. We shall employ a notion of combinatorial dimension to measure the complexity of real-valued function classes---the pseudo-dimension---formally stated in Definition~\ref{def:pdim}. 

\begin{theorem}
	\label{thm:stat}
	Let $(\widehat{F}, \widehat{B})$ be a solution to the empirical RGM problem 
	\begin{align*}
		& (\widehat{F}, \widehat{B}) \in \argmin_{(F, B) \in \cF \times \cB} C(\widehat{\mu}_m, \widehat{\nu}_n, F, B)\;,
	\end{align*}
	with $C \colon \cP(\cX) \times \cP(\cY) \times \cF \times \cB \rightarrow \R$ defined in \eqref{eqn:cost-def}. Under Assumptions \ref{a:bounded1}-\ref{a:separation}, the following inequality holds with probability $1- \delta$ on $\{x_i\}_{i = 1}^m$ and $\{y_j\}_{j = 1}^n$
	\begin{align}
		C(\mu, \nu, \widehat{F}, \widehat{B}) - \inf_{(F, B) \in \cF \times \cB} C(\mu, \nu, F, B)  \precsim \cM(\cF, \cB, m, n, \delta)\;.
	\end{align}
	Here, $\cM(\cF, \cB, m, n, \delta)$ denotes a complexity measure of $(\cF, \cB)$ given in terms of pseudo-dimensions (Pdim) of $\cF_k$ and $\cB_\ell$ defined in Definition~\ref{def:pdim}:
	\begin{equation*}
		\cM(\cF, \cB, m, n, \delta)
		\coloneqq
		\sqrt{\frac{\log(\tfrac{m \vee n}{\delta})}{m\wedge n}} + \sqrt{\frac{\log(m \vee n)}{m\wedge n} \left( \sum_{k =1}^{{\rm dim}(\cY)} {\rm Pdim}(\cF_{k}) + \sum_{\ell = 1}^{{\rm dim}(\cX)} {\rm Pdim}(\cB_{\ell}) \right)}\;.
	\end{equation*}
\end{theorem}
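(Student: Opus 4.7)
The plan is to bound the excess risk by a standard uniform-deviation argument followed by an empirical process analysis. By the optimality of $(\widehat{F}, \widehat{B})$ for the empirical objective together with the usual add-and-subtract inequality,
\begin{equation*}
C(\mu,\nu,\widehat{F},\widehat{B}) - \inf_{(F,B)\in\cF\times\cB} C(\mu,\nu,F,B) \;\leq\; 2 \sup_{(F,B)\in\cF\times\cB} |C(\mu,\nu,F,B) - C(\widehat{\mu}_m,\widehat{\nu}_n,F,B)|,
\end{equation*}
so the task reduces to a uniform concentration bound on each of the four summands of $C$ defined in \eqref{eqn:cost-def}.

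First, for the squared-cost term $\int (c_\cX(x,B(y))-c_\cY(F(x),y))^2 \dd{\mu\otimes\nu}$, Assumptions \ref{a:bounded1} and \ref{a:uniform_boundedness} render the integrand uniformly bounded. A two-sample bounded-differences argument concentrates the double empirical average around its expectation at rate $\sqrt{\log(1/\delta)/(m\wedge n)}$, and symmetrization bounds the expected supremum by the two-sample Rademacher complexity of the class $\{(x,y)\mapsto (c_\cX(x,B(y))-c_\cY(F(x),y))^2 : (F,B)\in\cF\times\cB\}$. The Lipschitz property of $c_\cX,c_\cY$ (Assumption \ref{a:lip_of_C}), the coordinate decomposition $F=(F_1,\ldots,F_{\mathrm{dim}(\cY)})$ and $B=(B_1,\ldots,B_{\mathrm{dim}(\cX)})$, and the vector-valued contraction principle reduce this Rademacher complexity to a sum of Rademacher complexities of the individual coordinate classes $\cF_k$ and $\cB_\ell$.

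Next, each squared-MMD term is handled by expanding
\begin{equation*}
\mathrm{MMD}_K^2(\alpha,\beta) = \int K(u,u')\dd{\alpha}(u)\dd{\alpha}(u') + \int K(v,v')\dd{\beta}(v)\dd{\beta}(v') - 2\int K(u,v)\dd{\alpha}(u)\dd{\beta}(v),
\end{equation*}
so that the empirical plug-in becomes a sum of two-sample V-statistics of degree two. Passing from V- to U-statistics at cost $O(1/(m\wedge n))$ and applying two-sample U-process concentration yields a uniform bound in terms of Rademacher complexities of composed classes such as $\{(y,y')\mapsto K_\cX(B(y),B(y')):B\in\cB\}$ and, for the binding term, $\{(x,y)\mapsto K_\cX(x,B(y))K_\cY(F(x),y) : (F,B)\in\cF\times\cB\}$ that arises from the tensor-product kernel. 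Boundedness and Lipschitzness of the kernels (Assumptions \ref{a:bounded_kernels}--\ref{a:lip_kernels}) combined with uniform boundedness of $\cF,\cB$ again allow a contraction reduction to the coordinate classes $\cF_k,\cB_\ell$.

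Finally, I would bound the Rademacher complexity of each coordinate class via Dudley's entropy integral together with the standard fact that a uniformly bounded real-valued class of pseudo-dimension $p$ (Definition \ref{def:pdim}) has metric entropy $O(p\log(1/\varepsilon))$, giving a Rademacher complexity of order $\sqrt{p\log(m\vee n)/(m\wedge n)}$. Summing over the four summands of $C$ and over the $\mathrm{dim}(\cX)+\mathrm{dim}(\cY)$ coordinate classes, and combining with the $\sqrt{\log(1/\delta)/(m\wedge n)}$ deviation term, produces exactly $\cM(\cF,\cB,m,n,\delta)$. I expect the main obstacle to be the V-statistic structure of the squared-MMD terms: the indexing class is a product $\cF\times\cB$ acting on both arguments of the kernel, and to avoid a sub-optimal rate one must carefully decouple the U-statistic by conditioning while keeping the Lipschitz-contraction constants uniform in $(F,B)$, which is precisely where Assumption \ref{a:separation} and the uniform boundedness of $\cF,\cB$ become indispensable.
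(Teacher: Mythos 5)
Your outline shares the paper's high-level structure (optimality plus add-and-subtract reduces the excess risk to uniform deviations of the four summands of $C$), but the technical route you sketch for the MMD terms is genuinely different from what the paper does, so let me compare the two.

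For the squared-cost term $C_0$, the paper does not use vector-valued contraction: it conditions on one sample, applies McDiarmid, symmetrizes, and then controls the supremum directly via uniform covering numbers $N_\infty(\epsilon,\cF_k,m)$ of the coordinate classes, using the Lipschitz structure of $c_\cX,c_\cY$ to show that the pseudometric on $\cF\times\cB$ induced by $h_{F,B}$ is dominated by the max over coordinate classes. Your contraction route should reach the same order, so this is a stylistic difference rather than a gap. For the MMD terms, the paper avoids the degree-two V-statistic structure entirely by exploiting the variational characterization $\mathrm{MMD}_K(\alpha,\beta)=\sup_{f\in\cH(1)}|\alpha(f)-\beta(f)|$: after a triangle-inequality step, the uniform deviation reduces to a degree-one empirical process for the composed class $\cH_{\cY}(1)\circ\cF$ (and analogues), which is then bounded by a Rademacher complexity and finally by a custom chaining bound for the quadratic Gaussian process $\E_g\sup_F g^\top A_F g$ with $(A_F)_{ij}=K_\cY(F(x_i),F(x_j))$ off-diagonal (Lemma \ref{lem:chaining}). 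Your V-to-U decoupling plus U-process entropy route is a plausible alternative; it is more general (it does not use that MMD comes from an RKHS), but it is also heavier machinery, and you would have to redo the metric-entropy reduction to coordinate pseudo-dimensions for the bivariate kernel classes rather than inheriting it from the composed-class Rademacher complexity.

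One concrete misattribution to flag: you say Assumption~\ref{a:separation} is ``indispensable'' for keeping Lipschitz-contraction constants uniform while decoupling the U-statistic. That is not what the assumption does. In the paper's proof it is used only to lower bound the diameter $\Delta=\sup_{F}\|A_F-A_{F_0}\|$ of the set $\cT=\{A_F\}$ in the quadratic chaining argument, so that the truncation level $\delta_J\asymp 2^{-J}\Delta$ can be matched with $\sum_k\mathrm{Pdim}(\cF_k)/m$ without the logarithm inside the chaining integral blowing up; its content (existence of constant and suitably separated non-constant maps in $\cF$ and $\cB$) has nothing to do with Lipschitz constants. Under your standard Dudley/entropy route for sub-Gaussian multipliers you would likely not need Assumption~\ref{a:separation} at all, so labeling it indispensable to your argument suggests a misunderstanding of its role; if your decoupled chaos bound does produce a sub-exponential contribution to the entropy integral, you would then have to check separately whether a diameter lower bound is required, exactly as the paper does.
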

We provide required assumptions and the full proof of Theorem \ref{thm:stat} in Section \ref{sec:statistical-theory} along with the definition of the pseudo-dimension (Definition~\ref{def:pdim}). When $\cF$ and $\cB$ are parametrized by neural network classes (the ones we will use for numerical demonstrations in Section~\ref{sec:numerical}), tight pseudo-dimension bounds established in \cite{anthony_bartlett_1999, harvey2017NearlytightVCdimension} can be plugged in Theorem \ref{thm:stat} for concrete non-asymptotic rates.

\subsection{Convex Formulation and Representer Theorem}
\label{subsec:cvx-representer}

As the last bit of our contributions, we study a convex formulation of solving \eqref{eqn:1} by relaxing and lifting it to an infinite-dimensional space. There are two reasons behind our convex formulation: first, as a computational alternative to the possibly non-convex optimization; second, to point out a connection with the Nadaraya-Watson estimator in classic nonparametric statistics. The crux lies in relaxing optimizing over the map $F \colon \cX \rightarrow \cY$ to optimizing over its induced (dual) linear operator $\bF \colon L^2_\cY \rightarrow L^2_\cX$ that maps functions on $\cY$ to functions on $\cX$, where $L^2_{\cX}$ is the collection of real-valued measurable functions $f$ defined on $\cX$ such that $\int_{\cX} f^2 \dd{\pi_{\cX}} < \infty$ given a Borel measure $\pi_\cX$ on $\cX$; similarly, define $L^2_{\cY}$ given a Borel measure $\pi_{\cY}$ on $\cY$. Then, for a measurable map $F \colon \cX \to \cY$, we can define $\bF \colon L^2_{\cY} \to L^2_{\cX}$ by letting $\bF(g) = g \circ F$ for all $g \in L^2_{\cY}$. Similarly, we define $\bB \colon L^2_{\cX} \to L^2_{\cY}$ for each measurable map $B \colon \cY \to \cX$. We will see $\bF$ and $\bB$ are well-defined bounded linear operators in Section \ref{sec:representation} under a mild assumption. 

To state the representer theorem, consider \eqref{eqn:1} with $c_{\cX} = K_{\cX}$ and $c_{\cY} = K_{\cY}$, same as kernel functions specified in MMD terms. We show that this problem can be reduced to a finite-dimensional convex optimization by proving a representer theorem. Since finite-dimensional convex optimization can be optimized globally with provable guarantees, such a formulation can be solved numerically in an efficient way.

Let us lay out more details to state the result. Due to Mercer's theorem, let $\{ \phi_k \in L^2_{\cX} \}_{k \in \N} $ and $\{ \psi_\ell \in L^2_{\cY} \}_{\ell \in \N}$ be countable orthonormal bases of $L^2_{\cX}$ and $L^2_{\cY}$ where the kernels admit the following spectral decompositions:
\begin{align}
	\label{eqn:spectral}
	K_{\cX}(x, x') = \sum_k \lambda_k \phi_k(x) \phi_k(x') \;, \quad   K_{\cY}(y, y') = \sum_{\ell} \gamma_\ell \psi_\ell(y) \psi_\ell(y') \;,
\end{align}
with positive eigenvalues $\lambda_k, \gamma_\ell >0$.
Since $\bF \colon L^2_{\cY} \rightarrow L^2_{\cX}$ defines a bounded linear operator, one can represent $\bF$ (correspondingly $\bB$) under the orthonormal bases
\begin{align}
	\bF[\psi_{\ell}] = \sum_{k = 1}^{\infty} \bF_{k \ell} \phi_k\;, \quad \bB[\phi_{k}] = \sum_{\ell = 1}^{\infty} \bB_{\ell k} \psi_\ell\;.
\end{align}
Here, $[\bF_{k \ell}]$ is a semi-infinite matrix with each column describing the $L^2_{\cX}$ representation of $\bF[\psi_{\ell}]$ under the basis $\{ \phi_k \in L^2_{\cX} \}_{k \in \N}$. With a slight abuse of notation, we will write $\bF$ and $\bB$ to denote these matrices $[\bF_{k \ell}]$ and $[\bB_{\ell k}]$. Then, we will prove in Section \ref{sec:representation} that the objective function in \eqref{eqn:1} with $c_{\cX} = K_{\cX}$ and $c_{\cY} = K_{\cY}$ is 
\begin{equation*}
	\begin{split}
		\Omega(\bF, \bB)
		&\coloneqq
		\frac{1}{mn} \sum_{i, j} (\Psi_{y_j}^\top \bB \Lambda \Phi_{x_i} - \Phi_{x_i}^\top \bF \Gamma \Psi_{y_j})^2 \\
		&+ \lambda_1 \cdot \Bigg(\frac{1}{m^2} \sum_{i, i'} \Phi_{x_i}^\top \Lambda \Phi_{x_i'} \Phi_{x_i}^\top \bF \Gamma \bF^\top \Phi_{x_{i'}} + \frac{1}{n^2} \sum_{j, j'} \Psi_{y_j}^\top \Gamma \Psi_{y_{j'}} \Psi_{y_j}^\top \bB \Lambda \bB^\top \Psi_{y_{j'}} \\
		&\hspace{200pt}- \frac{2}{m n} \sum_{i, j} \Psi_{y_j}^\top \bB \Lambda \Phi_{x_i} \Phi_{x_i}^\top \bF \Gamma \Psi_{y_j} \Bigg) \\
		&+ \lambda_2 \cdot \left(\frac{1}{m^2} \sum_{i, i'} \Phi_{x_i}^\top \Lambda \Phi_{x_i'} + \frac{1}{n^2} \sum_{j, j'} \Psi_{y_j}^\top \bB \Lambda \bB^\top \Psi_{y_{j'}} - \frac{2}{m n} \sum_{i, j} \Psi_{y_j}^\top \bB \Lambda \Phi_{x_i}\right) \\
		&+ \lambda_3 \cdot \left(\frac{1}{m^2} \sum_{i, i'} \Phi_{x_i}^\top \bF \Gamma \bF^\top \Phi_{x_{i'}} + \frac{1}{n^2} \sum_{j, j'} \Psi_{y_j}^\top \Gamma \Psi_{y_{j'}} - \frac{2}{m n} \sum_{i, j} \Phi_{x_i}^\top \bF \Gamma \Psi_{y_j}\right) \;.
	\end{split}
\end{equation*}
Here, $\bF$ and $\bB$ are the matrices denoting the operators induced by $F$ and $B$, respectively, $\Phi_x = [\cdots, \phi_k(x), \cdots ]^\top \in \R^\infty$ and $\Psi_y = [\cdots, \psi_\ell(y), \cdots]^\top \in \R^\infty$ for any $x \in \cX$ and $y \in \cY$, and $\Lambda = \mathrm{diag}(\lambda_1, \lambda_2, \dots)$ and $\Gamma = \mathrm{diag}(\gamma_1, \gamma_2, \dots)$ are diagonal matrices. Hence, \eqref{eqn:1} can be lifted to an infinite-dimensional optimization problem
\begin{equation}
	\label{eqn:op}
	\min_{(\bF, \bB) \in \cC} ~ \Omega(\bF, \bB)\;,
\end{equation}
where $\cC$ denotes the constraint set implying that $\bF$ and $\bB$ are matrices corresponding to bounded linear operators induced by some maps $F \colon \cX \to \cY$ and $B \colon \cY \to \cX$.

We will relax this problem by removing the constraint set $\cC$, namely, by considering all matrices in $\R^{\infty \times \infty}$ as the decision variables, 
\begin{equation}
	\label{eqn:relaxed}
	\min_{\bF, \bB \in \R^{\infty \times \infty}} ~ \Omega(\bF, \bB)\;.
\end{equation}
In other words, this relaxed problem minimizes $\Omega$ over any pair of infinite-dimensional matrices. The next result, which we refer to as the representer theorem, shows that \eqref{eqn:relaxed} boils down to a finite-dimensional convex program. 

\begin{theorem}
	\label{thm:representer}
	Consider the optimization \eqref{eqn:op} under the assumptions in Proposition \ref{prop:representer}. Then, for any minimizer $(\bF^\star, \bB^\star)$ to the relaxed problem \eqref{eqn:relaxed}, we can find finite-dimensional matrices $\mathsf{F}_{m,n}^{\star} \in \R^{m \times n}$ and $\mathsf{B}_{n, m}^{\star} \in \R^{n \times m}$ such that
	\begin{align*}
		\bF^{\star}  = \Lambda \Phi_m \mathsf{F}_{m,n}^{\star} \Psi_n^\top \;, \quad \bB^{\star}  = \Gamma \Psi_n \mathsf{B}_{n,m}^{\star} \Phi_m^\top \;,
	\end{align*}
	where $\Lambda = \mathrm{diag}(\lambda_1, \lambda_2, \dots)$, $\Gamma = \mathrm{diag}(\gamma_1, \gamma_2, \dots)$, and $\Phi_{m} \in \R^{\infty \times m}$ and $\Psi_{n} \in \R^{\infty \times n}$ are matrices whose elements are $\phi_k(x_i)$ and $\psi_{\ell}(y_j)$, as defined in \eqref{eqn:spectral}. In this case, $\Omega(\bF^\star, \bB^\star)$ can be rewritten as $\omega(\mathsf{F}_{m, n}^{\star}, \mathsf{B}_{n, m}^{\star})$ for some convex function $\omega$ defined over $\R^{m \times n} \times \R^{n \times m}$. Hence, by minimizing $\omega$ over $\R^{m \times n} \times \R^{n \times m}$, we obtain a relaxation of \eqref{eqn:relaxed}, that is, 
	\begin{equation*}
		\min_{\bF, \bB \in \R^{\infty \times \infty}} \Omega(\bF, \bB)
		\ge
		\min_{\substack{\mathsf{F}_{m, n} \in \R^{m \times n} \\ \mathsf{B}_{n, m} \in \R^{n \times m}}} \omega(\mathsf{F}_{m, n}, \mathsf{B}_{n, m}) \;.
	\end{equation*}
	In particular, the RHS is a finite-dimensional convex optimization. Lastly, this relaxation is tight, that is,  
	\begin{equation*}
		\min_{\bF, \bB \in \R^{\infty \times \infty}} \Omega(\bF, \bB)
		=
		\min_{\substack{\mathsf{F}_{m, n} \in \R^{m \times n} \\ \mathsf{B}_{n, m} \in \R^{n \times m}}} \omega(\mathsf{F}_{m, n}, \mathsf{B}_{n, m}) \;,
	\end{equation*}
	if kernel matrices $\bK_{\cX}$ and $\bK_{\cY}$ whose elements are $K_{\cX}(x_i, x_{i'})$ and $K_{\cY}(y_j, y_{j'})$, are positive definite. 
\end{theorem}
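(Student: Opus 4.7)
The plan is an infinite-dimensional representer argument in the spirit of the classical Kimeldorf--Wahba theorem, exploiting the fact that the objective $\Omega(\bF,\bB)$ depends on $\bF$ only through two kinds of evaluations. First I would observe that $\bF$ enters $\Omega$ solely through the bilinear slots $\Phi_{x_i}^\top \bF \Gamma \Psi_{y_j}$ and the quadratic slots $\Phi_{x_i}^\top \bF \Gamma \bF^\top \Phi_{x_{i'}}$, and symmetrically $\bB$ only through $\Psi_{y_j}^\top \bB \Lambda \Phi_{x_i}$ and $\Psi_{y_j}^\top \bB \Lambda \bB^\top \Psi_{y_{j'}}$. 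Bundling these, set $Z_{\bF} := \Phi_m^\top \bF \Gamma^{1/2} \in \R^{m \times \infty}$ with $W := \Gamma^{1/2}\Psi_n$; the two slots become $(Z_{\bF} W)_{ij}$ and $(Z_{\bF} Z_{\bF}^\top)_{ii'}$, and $\Omega$ is entirely a function of $Z_{\bF}$ and its analog $Z_{\bB} := \Psi_n^\top \bB \Lambda^{1/2}$.

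The second step is an orthogonal decomposition. Let $P_W = W \bK_{\cY}^{-1} W^\top$ be the projection onto the column span of $W$, well-defined because $W^\top W = \bK_{\cY}$ is positive definite by hypothesis. Splitting $Z_{\bF} = Z_{\bF} P_W + Z_{\bF}(I - P_W)$, the bilinear slot $Z_{\bF} W$ is unchanged while $Z_{\bF} Z_{\bF}^\top = Z_{\bF} P_W Z_{\bF}^\top + Z_{\bF}(I - P_W) Z_{\bF}^\top$ decomposes additively. Aggregating every appearance of the quadratic slot in $\Omega$ across the $\lambda_1$ and $\lambda_3$ MMD expansions, the coefficient matrix contracting $Z_{\bF} Z_{\bF}^\top$ equals $\tfrac{\lambda_1}{m^2}\bK_{\cX} + \tfrac{\lambda_3}{m^2}\mathbf{1}\mathbf{1}^\top$, which is positive semidefinite. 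Hence the contribution of $Z_{\bF}(I-P_W)$ is nonnegative and can be driven to zero without increasing the objective; a symmetric argument with $V := \Lambda^{1/2}\Phi_m$ and coefficient $\tfrac{\lambda_1}{n^2}\bK_{\cY} + \tfrac{\lambda_2}{n^2}\mathbf{1}\mathbf{1}^\top$ disposes of the perpendicular part of $Z_{\bB}$. Translating back, one obtains a minimizer satisfying $\Phi_m^\top \bF = A\Psi_n^\top$ and $\Psi_n^\top \bB = A' \Phi_m^\top$ for some $A, A'$, and using positive definiteness of $\bK_{\cX}, \bK_{\cY}$ one sets $\mathsf{F}_{m,n}^\star = \bK_{\cX}^{-1} A$ and $\mathsf{B}_{n,m}^\star = \bK_{\cY}^{-1} A'$ to produce the representers $\bF^\star = \Lambda \Phi_m \mathsf{F}_{m,n}^\star \Psi_n^\top$ and $\bB^\star = \Gamma \Psi_n \mathsf{B}_{n,m}^\star \Phi_m^\top$.

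I would then derive $\omega$ by substitution. Using $\Phi_m^\top \Lambda \Phi_m = \bK_{\cX}$ and $\Psi_n^\top \Gamma \Psi_n = \bK_{\cY}$, each slot collapses into kernel matrices: $\Phi_{x_i}^\top \bF \Gamma \Psi_{y_j} \mapsto (\bK_{\cX} \mathsf{F}_{m,n} \bK_{\cY})_{ij}$, $\Phi_{x_i}^\top \bF \Gamma \bF^\top \Phi_{x_{i'}} \mapsto (\bK_{\cX} \mathsf{F}_{m,n} \bK_{\cY} \mathsf{F}_{m,n}^\top \bK_{\cX})_{ii'}$, and analogously for the $\bB$-slots. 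The resulting $\omega(\mathsf{F}_{m,n}, \mathsf{B}_{n,m})$ on $\R^{m \times n} \times \R^{n \times m}$ is finite-dimensional and depends only on $\bK_{\cX}, \bK_{\cY}$ and the parameter matrices. Convexity is verified term by term: the squared residual is a sum of squares of affine functionals of $(\mathsf{F}_{m,n}, \mathsf{B}_{n,m})$, and each MMD$^2$ contribution is a squared norm of a kernel mean embedding that is linear in $(\mathsf{F}_{m,n}, \mathsf{B}_{n,m})$, hence a convex quadratic form. Tightness $\min_{\bF,\bB} \Omega = \min_{\mathsf{F},\mathsf{B}} \omega$ then follows in both directions: the projection argument yields $\ge$, while evaluating $\Omega$ on any ansatz yields $\le$.

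The principal obstacle is the bookkeeping in the projection step: one must carefully aggregate every appearance of the quadratic slots across the three MMD expansions, verify that the aggregated coefficient matrix is positive semidefinite, and separately confirm that the cross-coupling term $\Psi_{y_j}^\top \bB \Lambda \Phi_{x_i} \Phi_{x_i}^\top \bF \Gamma \Psi_{y_j}$ in the $\lambda_1$ MMD only touches the bilinear slots, so that it does not reintroduce dependence on the perpendicular components of $Z_{\bF}, Z_{\bB}$. A secondary subtlety is that the constraint $Z_{\bF} = Z_{\bF} P_W$ determines only $\Phi_m^\top \bF$ and not $\bF$ itself; choosing the specific representative $\bF^\star = \Lambda \Phi_m \mathsf{F}_{m,n}^\star \Psi_n^\top$ requires inverting $\bK_{\cX}$ (and symmetrically $\bK_{\cY}$), which is exactly where the positive-definiteness hypothesis is invoked for the tightness conclusion.
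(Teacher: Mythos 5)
Your proposal is essentially the same argument as the paper's proof: both proceed by projecting the infinite-dimensional operators onto the finite-dimensional data spans, verifying via a positive-semidefiniteness argument that the objective does not increase, and then reading off the finite-dimensional convex program. The paper works with the rescaled operator $\bar{\bF} = \Lambda^{-1/2}\bF\Gamma^{1/2}$ and applies the two-sided projection $P_{U_m}\bar{\bF}P_{V_n}$, then checks the four terms of the objective one by one; your version folds the left projection into the definition $Z_\bF = \Phi_m^\top \bF\Gamma^{1/2}$ and does a single aggregated check that the coefficient matrix $\tfrac{\lambda_1}{m^2}\bK_{\cX}+\tfrac{\lambda_3}{m^2}\mathbf{1}\mathbf{1}^\top$ contracting the quadratic slot is PSD, which is a slightly more compact way to organize the same calculation. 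One minor point worth tightening: you write $P_W = W\bK_{\cY}^{-1}W^\top$, which presupposes $\bK_{\cY}$ invertible, but the inequality $\min\Omega \ge \min\omega$ in the theorem is claimed unconditionally; to match the paper's scope you should define $P_W$ as the orthogonal projection onto $\mathrm{col}(W)$ (or use a pseudo-inverse), reserving the positive-definiteness hypothesis for the tightness direction only, as the paper does when it notes $\mathsf{U}_m$ and $\mathsf{V}_n$ are full rank precisely when $\bK_{\cX}, \bK_{\cY}\succ 0$.
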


\begin{remark}
	\rm
	Looking inside the proof of Theorem \ref{thm:representer}, we know the solution to the infinite-dimensional optimization is an operator taking form of $\bF^\star = \Lambda \Phi_m \mathsf{F}_{m,n}^\star \Psi_n^\top$, with a finite-dimensional matrix $\mathsf{F}_{m,n}^\star\in \R^{m \times n}$. Therefore, for any $g \in L^2_\cY$, we can deduce
	\begin{align}
		\label{eqn:nw-connection}
		\bF^\star[g](x) =  \underbrace{K_\cX(x, X_m)}_{1\times m} \underbrace{\mathsf{F}_{m,n}^\star}_{m\times n} \underbrace{g(Y_n)}_{n\times 1} \;,
	\end{align}
	where $K_{\cX}(x, X_m)$ maps each $x \in \cX$ to a row vector whose $i$-th element is $K_{\cX}(x, x_i)$ and $g(Y_n)$ denotes a column vector whose $j$-th element is $g(y_j)$. 
	
	Now let's draw a connection between the classic Nadaraya-Watson estimator and \eqref{eqn:nw-connection}. For now consider a special case: $(x_i, y_i)$'s are paired with $m = n$. In such a case, Nadaraya-Watson estimator takes the form
	\begin{align}
		\sum_{i,j} K_{\cX}(x, x_i) \cdot \tfrac{1}{m} \delta_{i=j} \cdot g(y_j) \; ;
	\end{align}
	Namely, for a new point $x$, the corresponding function value $g(y)$ evaluated on its coupled $y = F(x)$ is a weighted average of $g(y_j)$'s according to the affinity $K_{\cX}(x, x_i)$. Our solution \eqref{eqn:nw-connection} extends the above nonparametric smoothing idea to the decoupled data case, where the coupling weights $\mathsf{F}_{m,n}^\star$ is based on a solution to a convex program, with
	\begin{align}
		\eqref{eqn:nw-connection} = \sum_{i,j} K_{\cX}(x, x_i) \cdot \mathsf{F}_{m,n}^\star[i,j] \cdot g(y_j) \;.
	\end{align}
	
	Lastly, we draw another connection to the Monte-Carlo integration. One downstream task after learning the distribution $\nu$ is to perform numerical integration of $g \in L^2_\cY$ under the measure $\nu \in \cP(\cY)$. In our transform sampling framework, this amounts to evaluating $\E_{y \sim F^\star_\# \mu}[ g(y)] = \E_{x \sim \mu}[ g\circ F^\star (x)]$. The integration, casted in the induced operator form, has the expression
	\begin{align}
		\label{eqn:weights-MC}
		\E_{x \sim \mu} \big[ \bF^\star[g](x) \big] = \E_{x \sim \mu}  \big[ \underbrace{K_\cX(x, X_m) \mathsf{F}_{m,n}^\star}_{=: W(x) \in \R^n } g(Y_n) \big] = \E_{x \sim \mu} \big[ \sum_{j=1}^n W_j(x) g(y_j) \big]
	\end{align}
	where $W(x)$ can be interpreted as the importance weights in the Monte-Carlo integration. We conclude with one more remark: if plug in instead $x \sim \widehat{\mu}_m$ in \eqref{eqn:weights-MC}, one can verify that under mild conditions, 
	\begin{align}
		\E_{x \sim \widehat{\mu}_m} \big[ \bF^\star[g](x) \big] = \frac{1}{n} \sum_{j=1}^n  g(y_j) \;.
	\end{align}
	That is, with the empirical measure as input, \eqref{eqn:weights-MC} outputs the simple sample average. 
\end{remark}

\section{Experiments}
\label{sec:numerical}

This section examines the empirical performance of the reversible Gromov-Monge sampler. Following Section \ref{subsec:stat_rate}, we find a minimum $(\widehat{F}, \widehat{B})$ of \eqref{eqn:1} over a suitable class $\cF \times \cB$ via gradient descent; we inspect the quality of transform sampling $(\widehat{F}_{\#} \mu \approx \nu$) and space isomorphism. Complete technical details of the experiments are deferred to Section~\ref{sec:appendix2}.

\paragraph{Gaussian distributions}	
Consider two strongly isomorphic Gaussian distributions on $\cX = \cY = \R^2$: the base measure $\mu = N(0, I_2)$ and the target distribution $\nu = N(0, \Sigma)$, where $I_2$ is the identity matrix and the entries of $\Sigma$ are $\Sigma_{1 1} = \Sigma_{2 2} = 1$ and $\Sigma_{1 2} = \Sigma_{2 1} = 0.7$. We let $c_{\cX}(x, x') =  x^\top x'$ and $c_\cY(y, y') =  y^\top \Sigma^{-1} y'$, then two network spaces are strongly isomorphic by design; indeed, any pair $(F, B)$ given by $F(x) = \Sigma^{1/2} Q x$ and $B(y) = Q^{\top} \Sigma^{-1/2} y$ for $Q \in O(2)$, where $O(2)$ is the orthogonal group, yields $c_{\cX}(x, B(y)) = c_{\cY}(F(x), y)$ for all $x, y \in \R^2$, hence $F$ and $B$ are strong isomorphisms. We aim at obtaining such a pair of (linear) isomorphisms by letting $\cF = \cB = \{x \mapsto W x : W \in \R^{2 \times 2}\}$, that is, the collection of all linear maps from $\R^2$ to $\R^2$. We set $K_{\cX} = K_{\cY}$ as a degree-2 polynomial kernel that maps $(x, y)$ to $(x^\top y + 1)^2$; the resulting MMD compares distributions by matching the first two moments, which is sufficient to distinguish Gaussian distributions. The resulting linear maps are given by $\widehat{F}(x) = \mathbf{F} x$ and $\widehat{B}(y) = \mathbf{B} y$ for some $\mathbf{F}, \mathbf{B} \in \R^{2 \times 2}$ satisfying
\begin{equation*}
	\mathbf{F} \mathbf{F}^\top
	= 
	\begin{pmatrix}
		1.035 & 0.751 \\
		0.751 & 1.094	
	\end{pmatrix}\;,
	\quad \mathbf{B} \Sigma \mathbf{B}^\top 
	=
	\begin{pmatrix}
		0.940 & 0.001 \\
		0.001 & 0.944
	\end{pmatrix} \; ,
	\quad \mathbf{F} \mathbf{B}
	=
	\begin{pmatrix}
		0.966 & 0.029 \\
		-0.020 & 1.029
	\end{pmatrix} \; .
\end{equation*}
Since $\mathbf{F} \mathbf{F}^\top \approx \Sigma$, $\mathbf{B} \Sigma \mathbf{B}^\top \approx I_2$, and $\mathbf{F} \mathbf{B} \approx I_2$, the pair $(\widehat{F}, \widehat{B})$ can be seen as an instance of the pair of strong isomorphisms described above. Figure \ref{fig:Gaussian} illustrates that $\widehat{F}$ is a strong isomorphism (Definition \ref{def:iso}): (a) shows that $\widehat{F}_{\#} \mu \approx \nu$, that is, $\widehat{F}$ is roughly a transport map, and (b) implies that $c_{\cX}(x, x') \approx c_{\cY}(\widehat{F}(x), \widehat{F}(x'))$ holds.
\begin{figure}[ht]
    \centering
    \subfloat[\centering]{{\includegraphics[width=8cm]{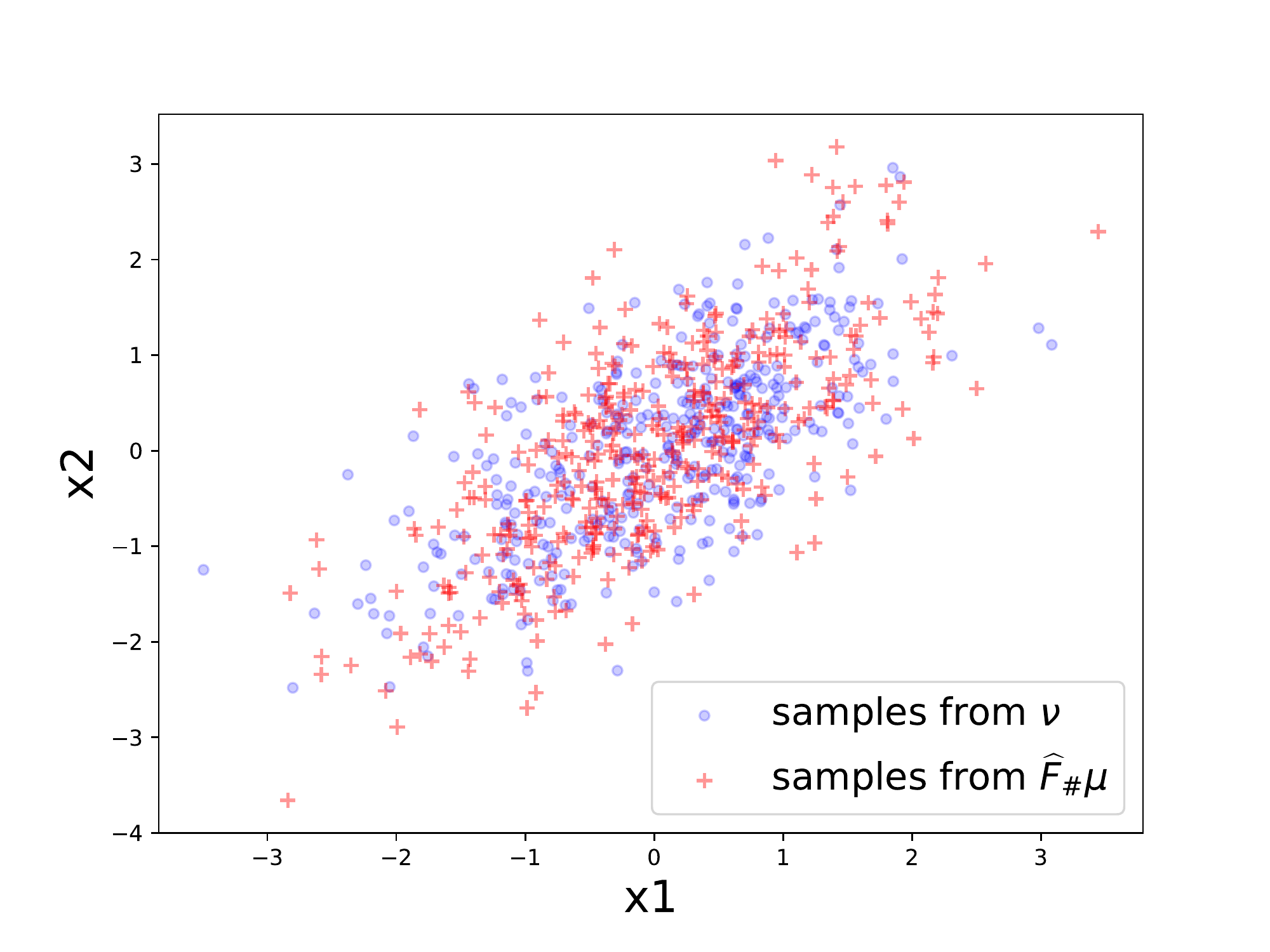}}}
    \subfloat[\centering]{{\includegraphics[width=8cm]{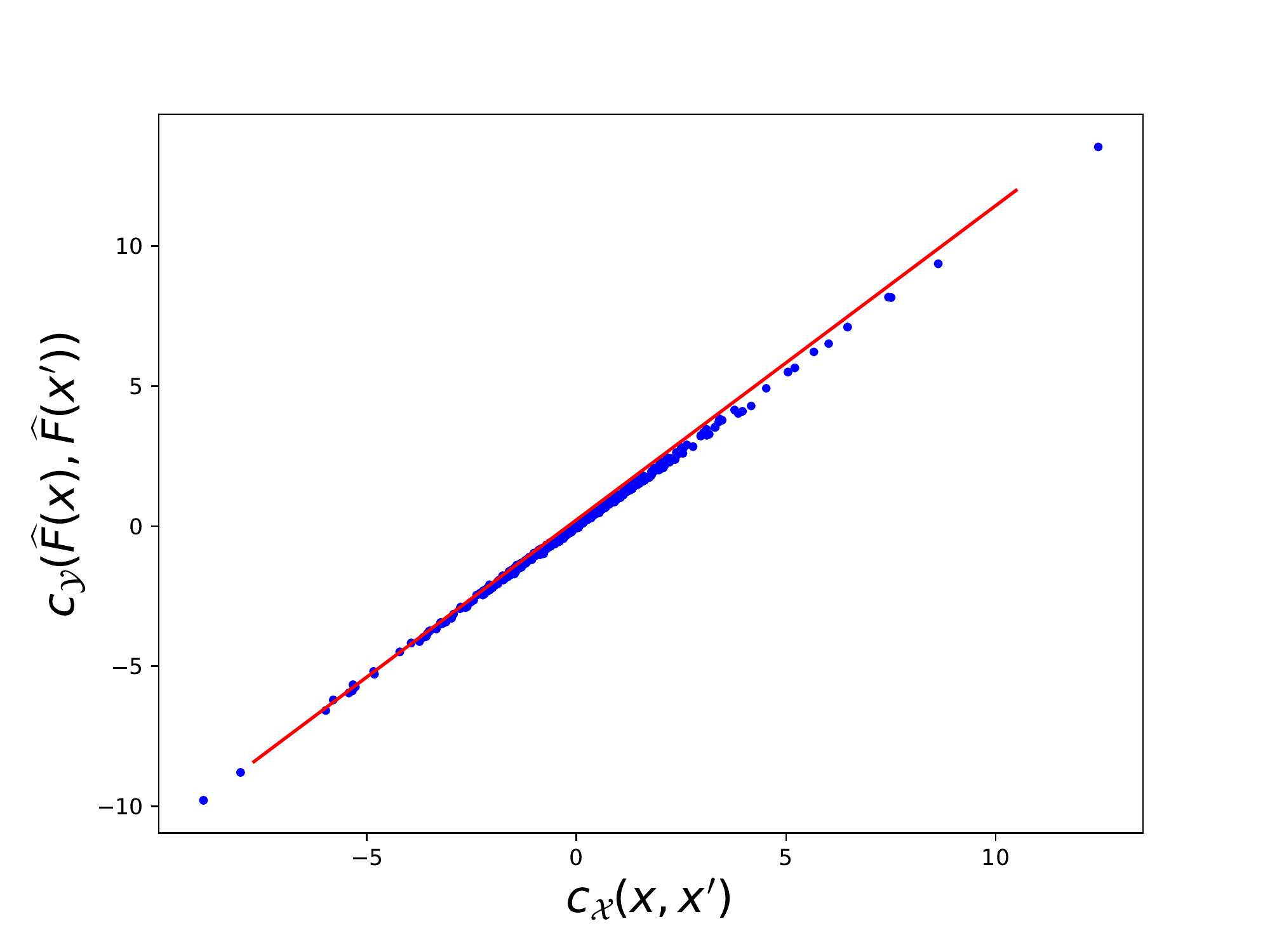}}}
    \caption{Gaussian experiment: $m = n = 1000$ and $\lambda_1 = \lambda_2 = \lambda_3 = 1$. (a) shows $\{\tilde{y}_j\}_{j = 1}^{400}$ versus $\{\widehat{F}(\tilde{x}_i)\}_{i = 1}^{400}$, where $\{\tilde{y}_j\}_{j = 1}^{400}$ and $\{\tilde{x}_i\}_{i = 1}^{400}$ are i.i.d.\ from $\nu = N(0, \Sigma)$ and $\mu = N(0, I_2)$, respectively; they are new samples independent from $\{y_j\}_{j = 1}^{1000}$ and $\{x_i\}_{i = 1}^{1000}$ used in \eqref{eqn:1}. (b) shows the points $\{(c_{\cX}(\tilde{x}_i, \tilde{x}_{i'}), c_{\cY}(\widehat{F}(\tilde{x}_i), \widehat{F}(\tilde{x}_{i'})))\}_{i, i' = 1}^{40}$ and a straight line $y = x$.}
    \label{fig:Gaussian}
\end{figure}
\paragraph{MNIST}
Next, let $\nu$ be a distribution of images corresponding to four digits (2, 4, 6, 7) from the MNIST data set, which is supported on $\R^{784}$. Recall from Section \ref{sec:intro} that the support $\cY$ of $\nu$ is low-dimensional \citep{facco2017estimating}; hence, choosing $\cX = \R^{d}$ with $d \ll 784$ is reasonable. Here, for visualization, we try an extreme embedding task with $d = 2$ and $\mu = N(0, I_2)$, that is, generate MNIST images by transforming two-dimensional Gaussian samples. 

Unlike the Gaussian example where we design the cost functions in advance to make the two spaces strongly isomorphic, specifying them can be more complicated in general cases, which might affect the quality of the RGM sampler. Here, we briefly discuss some of the most commonly used cost functions: given a fixed exponent $p \in \N$ or constant $\alpha > 0$,
\begin{equation*}
	(x, y) \quad \mapsto \quad
	\underbrace{\|x - y\|^p}_{\text{distance-based}}
	\quad \text{or} \quad \underbrace{\exp(-\alpha \|x - y\|^2)}_{\text{RBF kernel}} \;.
\end{equation*}
Clearly, $\|x - y\|^p$ is the most straightforward choice in Euclidean cases; $p = 1$ and $p = 2$ are indeed widely used in the literature \citep{peyre_etal_2016}. The RBF kernel, also referred to as the heat kernel, is a common choice in the object matching literature \citep{solomon_etal_2016}. In this MNIST example, we have found that these cost functions provide reasonable performance once they are scaled properly. Here, we will present the results based on the RBF kernel. Concretely, first define the RBF kernel $K_d(x, y) = \exp(-\|x - y\|^2 / d)$ for $d \in \N$ and $x, y \in \R^d$; here, the constant $(1 /d)$ serves as a scaling factor. Then, we define the cost functions as $c_\cX = (K_{2} - m_\cX) / \mathrm{sd}_\cX$ and $c_\cY = (K_{784} - m_\cY) / \mathrm{sd}_\cY$, where $m_\cX$ and $\mathrm{sd}_\cX$ are the median and the standard error of $\{K_\cX(x_i, x_{i'})\}_{i, i' = 1}^m$, respectively; $m_\cY$ and $\mathrm{sd}_\cY$ are defined analogously. This additional standardization process helps aligning the cost functions.

In the same vein, $K_{\cX}$ and $K_{\cY}$ must be properly specified; comparing the first two moments using the degree-2 polynomial kernel is no longer sufficient as the target distribution is non-Gaussian. We suggest using RBF kernels for the MMD terms as well; let $K_\cX = K_2$ and $K_\cY = K_{784}$. The MMD induced by the RBF kernel indeed defines a metric between distributions under mild assumptions \citep{muandet_fukumizu_sriperumbudur_scholkopf_2017}, which allows the resulting MMD terms to represent the original constraint of the RGM distance as mentioned in Section \ref{sec:RGM-sampler}. 

For the function classes $\cF$ and $\cB$, we need richer classes instead of the linear maps used in the Gaussian case. To this end, we will use fully connected neural networks with three hidden layers, each of which consists of 50 neurons. Lastly, we let $m = n = 20000$ and $\lambda_1 = \lambda_2 = \lambda_3 = 100$. Figure \ref{fig:MNIST-images}(a) shows the images generated by applying the resulting map $\widehat{F}$ to new i.i.d.\ samples from $\mu = N(0, I_2)$. Though not perfect, we see that recognizable images can be generated by transforming two-dimensional Gaussian samples, efficient in computation.\footnote{Computational cost for obtaining $\widehat{F} \colon \R^2 \to \R^{784}$ and computing $\widehat{F}(X)$ from $X \sim \mu$ is far less than that of the OT-based sampler as explained in Section \ref{sec:intro}.}
\begin{figure}[ht]
	\centering
	\subfloat[MMD ($\R^2$)]{{\includegraphics[width=5cm]{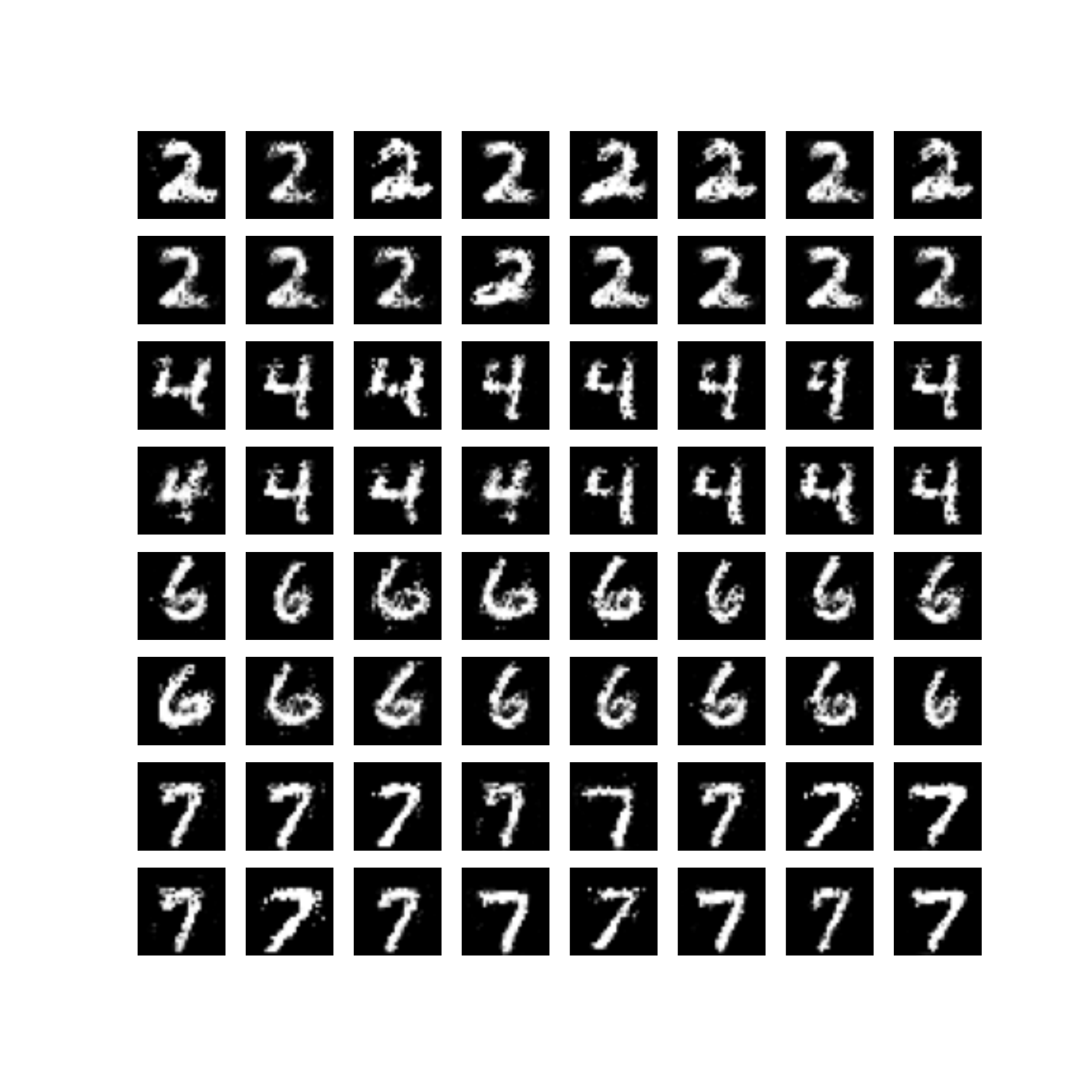}}}%
    \subfloat[Sinkhorn ($\R^4$)]{{\includegraphics[width=5cm]{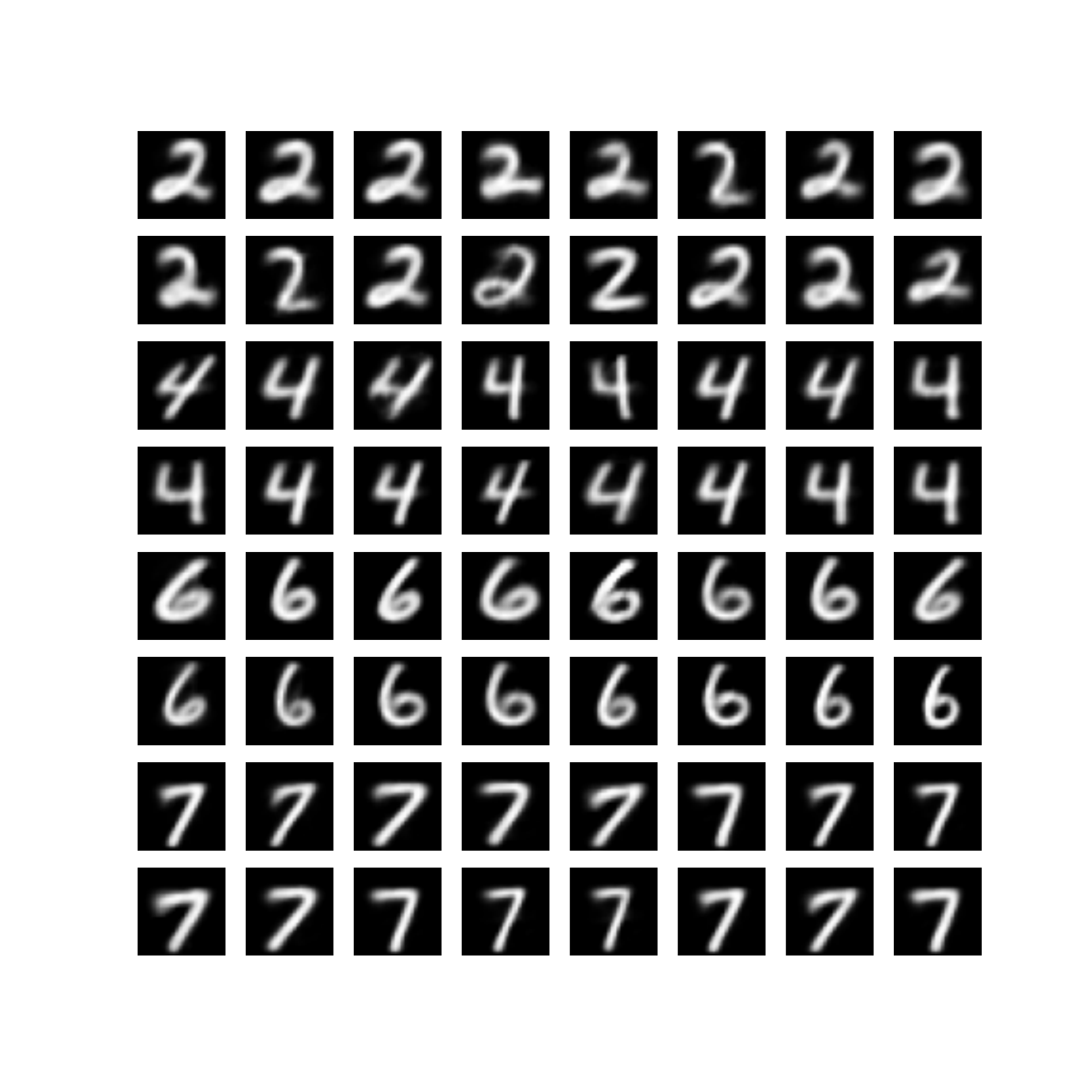}}}%
    \subfloat[Original]{{\includegraphics[width=5cm]{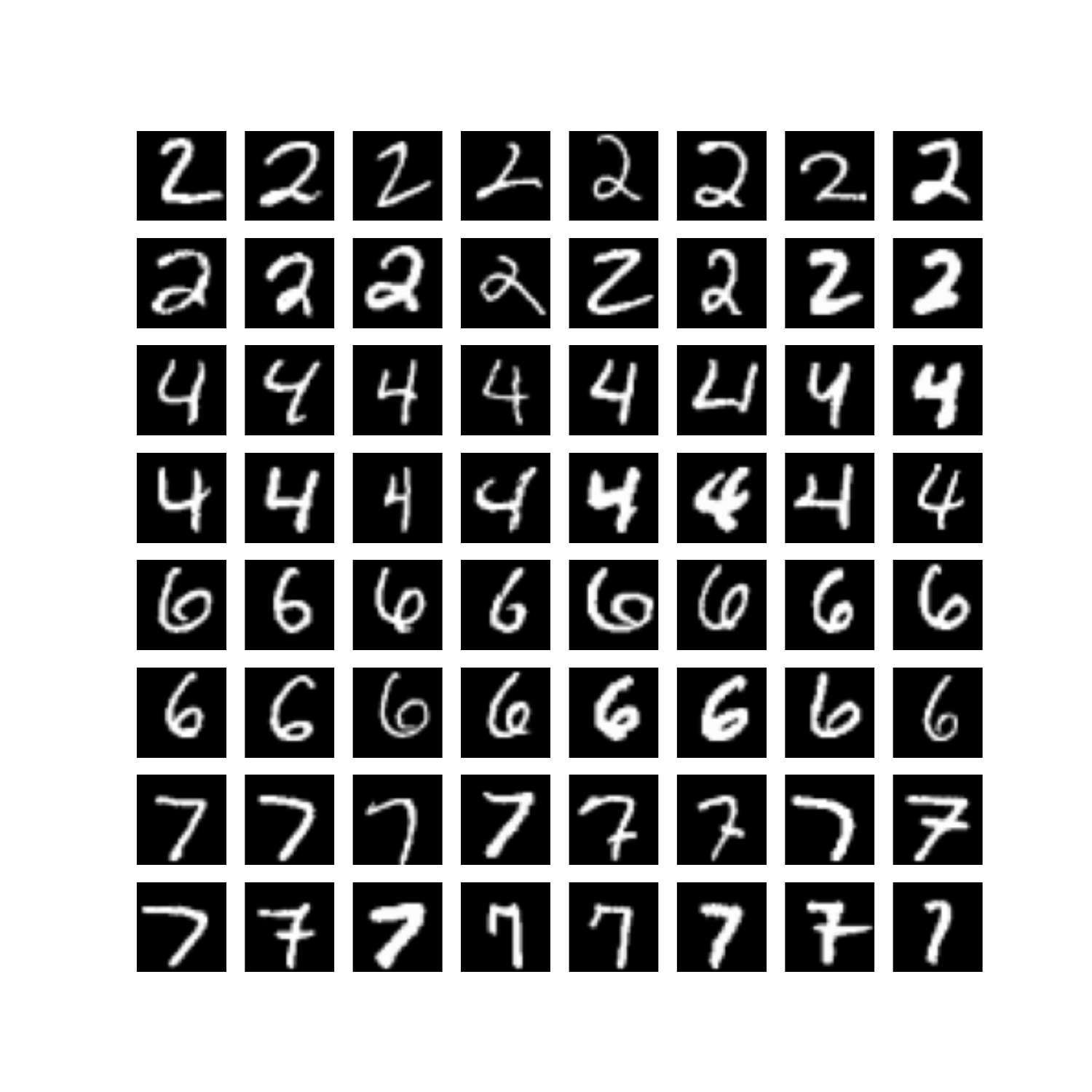}}}%
    \caption{(a) and (b) are generated by transforming new i.i.d.\ samples from $\mu$ using $\widehat{F}$: (a) from $\mu = N(0, I_2)$ with MMDs and (b) from $\mu = N(0, I_4)$ with Sinkhorn divergences. (c) shows real MNIST images.}
    \label{fig:MNIST-images}
\end{figure}
Meanwhile, the map $\widehat{B}$ shows how the MNIST images can be embedded in $\R^2$. Figure \ref{fig:MNIST-backward}(a) shows $\{\widehat{B}(\tilde{y}_j)\}_{j = 1}^{500}$, where $\{\tilde{y}_j\}_{j = 1}^{500}$ are i.i.d.\ from $\nu$ (125 $\times$ 4 digits), independent from $\{y_j\}_{j = 1}^{20000}$ used in \eqref{eqn:1}. We see that each digit forms a local cluster in $\R^2$, each of which is roughly representable according to the range of the angular coordinate. Lastly, though not perfect as in Figure \ref{fig:Gaussian}(b) (strongly isomorphic case), Figure \ref{fig:MNIST-backward}(b) shows that $\widehat{B}$ leads to a reasonable alignment of $c_\cX(\widehat{B}(y), \widehat{B}(y'))$ versus $c_\cY(y, y')$.

\begin{figure}[ht]
	\centering
    \subfloat[\centering]{{\includegraphics[height=5cm]{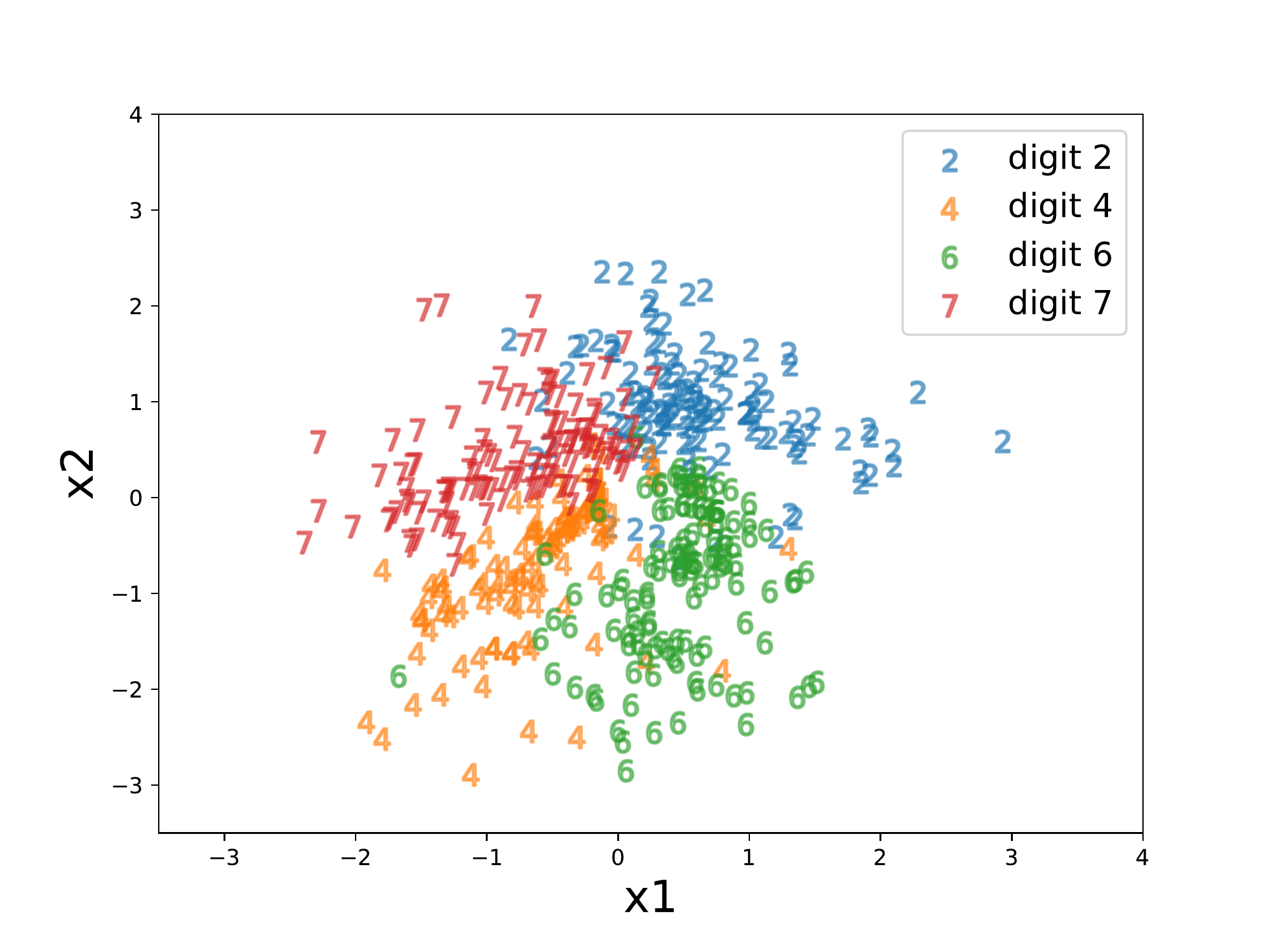}}}
    \subfloat[\centering]{{\includegraphics[height=5cm]{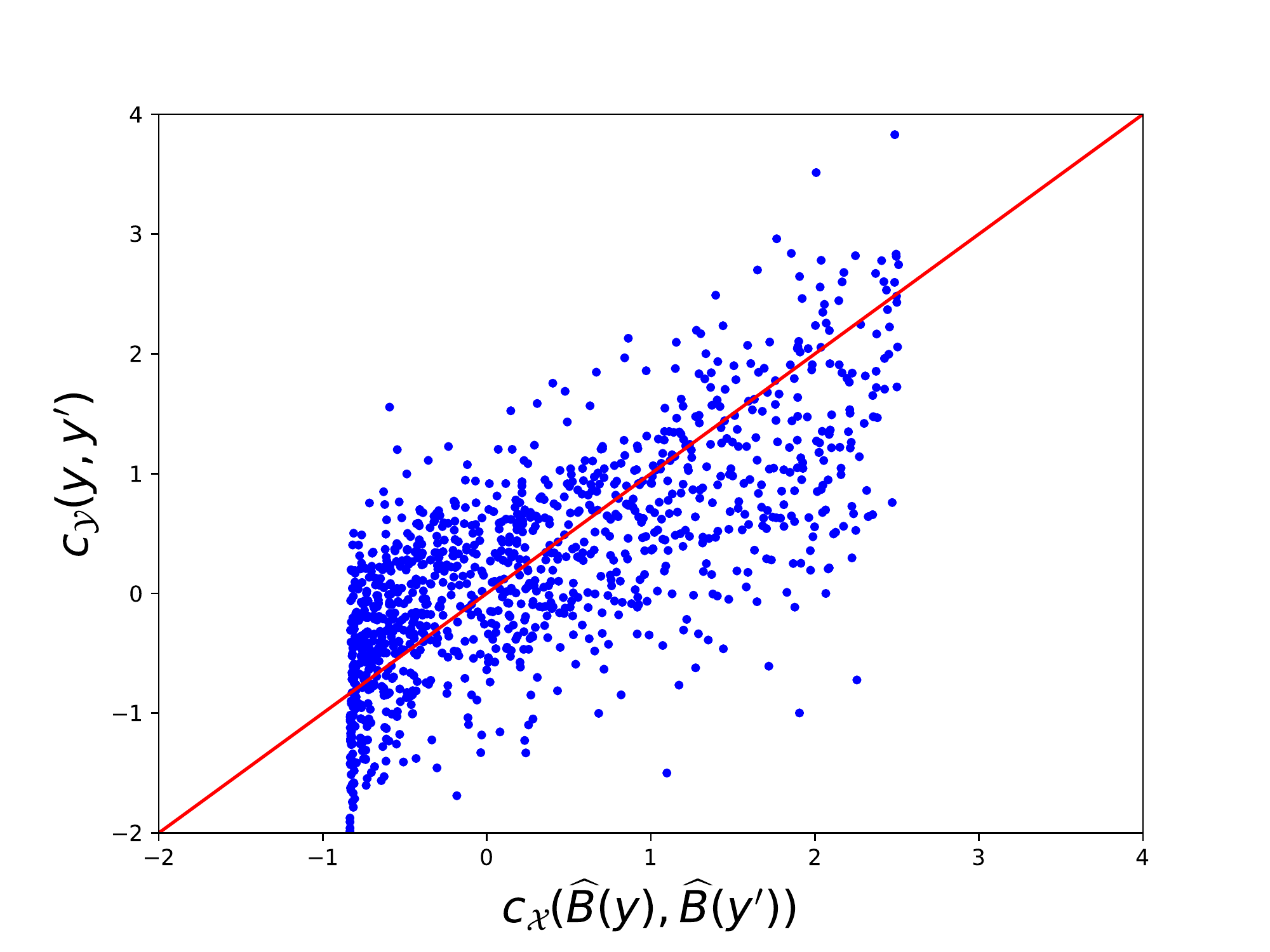}}}
    \caption{(a) is generated by applying $\widehat{B}$ to $500$ out-of-sample MNIST images, i.i.d.\ $\{\tilde{y}_j\}_{j = 1}^{500}$ from $\nu$. (b) shows the points $\{(c_\cX(\widehat{B}(\tilde{y}_{j}), \widehat{B}(\tilde{y}_{j'})), c_\cY(\tilde{y}_j, \tilde{y}_{j'}))\}_{j, j' = 1}^{50}$ and a straight line $y = x$.}
    \label{fig:MNIST-backward}
\end{figure}
We clarify that the current experiment with $\mu = N(0, I_2)$ is a proof of concept. Suppose one aims to obtain images comparable to those from dedicated MNIST generators. In that case, exhaustive tests should be done for tuning each component of the RGM sampler, which is beyond the scope of this paper. Instead, we highlight that the RGM sampler with a simple modification can indeed generate significantly improved images seen in Figure \ref{fig:MNIST-images}(b). These images are generated from the following settings that are fully introduced in Section~\ref{sec:appendix2}: $(\cX, \mu, c_{\cX}) = (\R^4, N(0, I_4), K_4)$ and MMD terms in \eqref{eqn:1} are replaced by Sinkhorn divergences \citep{genevay2018learning}. As such, the RGM sampler is amendable to other more general choices of $\cL_{\cX \times \cY}$ in its practical implementation.

\section{Discussions}
\label{sec:discussions}
In this work, we proposed the Reversible Gromov-Monge (RGM) sampler, a new variant of transform sampling based on the RGM distance operating between distributions on heterogeneous spaces. We discuss the following important aspects of the RGM sampler.

\paragraph{Inductive bias} 
Inductive bias alleviates the over-identified issue in transform sampling, as seen in Section~\ref{sec:intro}.
RGM sampler induces a bias towards finding approximate isomorphisms: namely among all possible $(F, B) \in \cI(\mu, \nu)$, RGM sampler favors the ones that $c_{\cX}(x, B(y))$ best matches $c_{\cY}(F(x), y)$ on average. In particular, if the user of the RGM sampler specifies suitable $c_\cX, c_\cY$ such that the resulting network spaces are strongly isomorphic, the RGM sampler will identify a strong isomorphism, providing a beneficial inductive bias to transform sampling. Intriguingly, borrowing insights from Brenier's polar factorization \citep{brenier1991}, we can always find $c_\cX, c_\cY$ making two complete and separable metric spaces strongly isomorphic under mild conditions. Details are provided in Section \ref{sec:brenier-polar}.

\paragraph{Optimization} We have introduced two optimization methods for computing RGM given finite samples and proved their properties. As explained in Section \ref{subsec:stat_rate}, minimization via the gradient descent algorithm is easy to implement and less complicated than the min-max optimization of GAN-type samplers. Nevertheless, it is unclear whether one can derive global convergence results from this minimization. In general, different optimization algorithms of transform samplers might output maps that favor distinctive inductive biases. Such a phenomenon is an important research direction in the optimization literature.

\section*{Acknowledgments}
Liang acknowledges the generous support from the NSF Career award (DMS-2042473), and the William Ladany faculty fellowship at the University of Chicago Booth School of Business. Liang wishes to thank Maxim Raginsky and Chris Hansen for discussions on simulation-based inference.

\bibliography{ref-minimal}
\bibliographystyle{apalike}
\newpage
\appendix

\bigskip
\begin{center}
{\large\textbf{SUPPLEMENTARY MATERIAL}}
\end{center}
This supplementary material collects details of Sections \ref{sec:summary-of-results} and \ref{sec:numerical} along with relevant discussions and technicalities. 
\begin{itemize}
	\item Section~\ref{sec:analytic-properties} studies analytic properties of the RGM distance along with the details of the inductive bias of the RGM sampler based on Brenier's polar factorization.
	\item Section~\ref{sec:statistical-theory} derives the non-asymptotic rate of convergence by analyzing the statistical properties of the RGM sampler.
	\item Section~\ref{sec:representation} discusses a further relaxation of the RGM into an infinite-dimensional convex program that relies on a new representer theorem.
	\item Section~\ref{sec:appendix} contains the proofs of the results in Section \ref{sec:statistical-theory} and auxiliary lemmas.	
	\item Section \ref{sec:comparison-with-GW} discusses computational aspects of the RGM distance.
	\item Section \ref{sec:appendix2} provides the implementation details of the experiments in Section \ref{sec:numerical}.
\end{itemize}

\section{Analytic Properties and Inductive Biases Based on Polar Factorization}
\label{sec:analytic-properties}

\subsection{Metric Properties and Some Basic Properties}
\label{sec:metric-and-basic-properties}

In this section, we derive metric properties of the proposed RGM distance. First, observe that our RGM is symmetric while the original GM is not. Next, we prove a triangle inequality using a gluing technique, as in OT.

\begin{proposition}
	\label{prop:2}
	RGM satisfies the triangle inequality, that is,
	\begin{equation*}
		\mathrm{RGM}(\mu_{\cX}, \mu_{\cZ}) \le \mathrm{RGM}(\mu_{\cX}, \mu_{\cY}) + \mathrm{RGM}(\mu_{\cY}, \mu_{\cZ})
	\end{equation*}
	holds for three network spaces $(\cX, \mu_{\cX}, c_{\cX})$, $(\cY, \mu_{\cY}, c_{\cY})$, and $(\cZ, \mu_{\cZ}, c_{\cZ})$.
\end{proposition}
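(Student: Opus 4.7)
The plan is to establish the triangle inequality by a gluing/composition argument. Given any two pairs $(F_1, B_1) \in \cI(\mu_\cX, \mu_\cY)$ and $(F_2, B_2) \in \cI(\mu_\cY, \mu_\cZ)$, I will construct a candidate $(F, B) \in \cI(\mu_\cX, \mu_\cZ)$ by composition, namely $F := F_2 \circ F_1$ and $B := B_1 \circ B_2$, and then estimate the RGM cost of this candidate by Minkowski's inequality. (If either $\cI(\mu_\cX, \mu_\cY)$ or $\cI(\mu_\cY, \mu_\cZ)$ is empty, the right-hand side is $+\infty$ and the inequality is vacuous.)

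First I would verify that $(F, B) \in \cI(\mu_\cX, \mu_\cZ)$, i.e.\ the binding constraint $(\mathrm{Id}, F_2\circ F_1)_\# \mu_\cX = (B_1\circ B_2, \mathrm{Id})_\# \mu_\cZ$. Pushing the identity $(\mathrm{Id}, F_1)_\# \mu_\cX = (B_1, \mathrm{Id})_\# \mu_\cY$ through the map $(\mathrm{Id}_\cX, F_2) \colon \cX \times \cY \to \cX \times \cZ$ yields $(\mathrm{Id}, F_2\circ F_1)_\# \mu_\cX = (B_1, F_2)_\# \mu_\cY$. Next, pushing the identity $(\mathrm{Id}, F_2)_\# \mu_\cY = (B_2, \mathrm{Id})_\# \mu_\cZ$ through the map $(B_1, \mathrm{Id}_\cZ)\colon \cY \times \cZ \to \cX \times \cZ$ yields $(B_1, F_2)_\# \mu_\cY = (B_1 \circ B_2, \mathrm{Id})_\# \mu_\cZ$. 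Chaining these gives exactly the binding constraint; as a byproduct, the marginals give $(F_1)_\# \mu_\cX = \mu_\cY$ and $(B_2)_\# \mu_\cZ = \mu_\cY$, which will be used below.

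Next, apply Minkowski's inequality in $L^2(\mu_\cX \otimes \mu_\cZ)$ to the telescoping decomposition
\begin{equation*}
c_\cX(x, B_1 B_2(z)) - c_\cZ(F_2 F_1(x), z) = \bigl[c_\cX(x, B_1 B_2(z)) - c_\cY(F_1(x), B_2(z))\bigr] + \bigl[c_\cY(F_1(x), B_2(z)) - c_\cZ(F_2 F_1(x), z)\bigr].
\end{equation*}
For the first summand, use the change of variables $y = B_2(z)$ together with $(B_2)_\# \mu_\cZ = \mu_\cY$ and the product structure of $\mu_\cX \otimes \mu_\cZ$ to obtain
\begin{equation*}
\int_{\cX\times\cZ}\!\!\bigl(c_\cX(x, B_1 B_2(z)) - c_\cY(F_1(x), B_2(z))\bigr)^2 \dd{\mu_\cX\otimes\mu_\cZ} = \int_{\cX\times\cY}\!\!\bigl(c_\cX(x, B_1(y)) - c_\cY(F_1(x), y)\bigr)^2 \dd{\mu_\cX\otimes\mu_\cY}.
\end{equation*}
Symmetrically, for the second summand, use $y = F_1(x)$ with $(F_1)_\# \mu_\cX = \mu_\cY$ to rewrite it as the integrand defining the RGM cost between $\mu_\cY$ and $\mu_\cZ$ associated to $(F_2, B_2)$.

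Combining these three steps, the candidate $(F, B)$ has RGM cost bounded by the sum of the RGM costs of $(F_1, B_1)$ and $(F_2, B_2)$. Since $(F, B) \in \cI(\mu_\cX, \mu_\cZ)$, this upper bounds $\mathrm{RGM}(\mu_\cX, \mu_\cZ)$; taking the infimum over $(F_1, B_1)$ and independently over $(F_2, B_2)$ on the right-hand side yields the desired triangle inequality. The only delicate point is the careful verification that composition preserves the binding constraint, since $\cI(\mu, \nu)$ is more restrictive than requiring $F_\# \mu = \nu$ and $B_\# \nu = \mu$ separately (as pointed out in the remark following Definition~\ref{def:RGM}); everything else reduces to Minkowski's inequality and the marginal properties that the binding constraint automatically provides.
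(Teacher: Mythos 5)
Your proof is correct and follows essentially the same gluing/composition argument as the paper: compose the forward and backward maps, verify the binding constraint is preserved via pushforward calculus, insert the middle term $c_\cY(F_1(x), B_2(z))$ and apply Minkowski's inequality in $L^2(\mu_\cX\otimes\mu_\cZ)$, then change variables using the marginal conditions before taking infima. The only differences are cosmetic (notation and the order in which you chain the pushforward identities), so there is nothing substantive to flag.
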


\begin{proof}[Proof of Proposition~\ref{prop:2}]
	Recall that
	\begin{equation*}
		\mathrm{RGM}(\mu_{\cX}, \mu_{\cZ}) = \inf_{(F, B) \in \cI(\mu_{\cX}, \mu_{\cZ})} C_{\cX \cZ}(F, B)\;,
		\end{equation*}
	where
	\begin{equation*}
		C_{\cX \cZ}(F, B)
		= \left(\int (c_{\cX}(x, B(z)) - c_{\cZ}(F(x), z))^2 \dd{\mu_{\cX} \otimes \mu_{\cZ}}(x, z)\right)^{1/2}\;.
	\end{equation*}
	First, $(F_{\cZ} \circ F_{\cY}, B_{\cX} \circ B_{\cY}) \in \cI(\mu_{\cX}, \mu_{\cZ})$ holds for $(F_{\cY}, B_{\cX}) \in \cI(\mu_{\cX}, \mu_{\cY})$ and $(F_{\cZ}, B_{\cY}) \in \cI(\mu_{\cY}, \mu_{\cZ})$ since
	\begin{equation*}
		\begin{split}
			(\mathrm{Id}, F_{\cZ} \circ F_{\cY})_{\#}\mu_{\cX} &= (\mathrm{Id}, F_{\cZ})_{\#} (\mathrm{Id}, F_{\cY})_{\#} \mu_{\cX} \\
			&= (\mathrm{Id}, F_{\cZ})_{\#} (B_{\cX}, \mathrm{Id})_{\#} \mu_{\cY} \quad (\because (F_{\cY}, B_{\cX}) \in \cI(\mu_{\cX}, \mu_{\cY}))\\
			&= (B_{\cX}, \mathrm{Id})_{\#} (\mathrm{Id}, F_{\cZ})_{\#} \mu_{\cY} \\
			& = (B_{\cX}, \mathrm{Id})_{\#} (B_{\cY}, \mathrm{Id})_{\#} \mu_{\cZ} \quad (\because (F_{\cZ}, B_{\cY}) \in \cI(\mu_{\cY}, \mu_{\cZ})) \\
			& = (B_{\cX} \circ B_{\cY}, \mathrm{Id})_{\#} \mu_{\cZ}\;.
		\end{split}
	\end{equation*}
	\begin{equation*}
		\begin{tikzcd}[row sep = huge]
			\cX \arrow[rr, "F", shift left] \arrow[dr, "F_{\cY}", swap, shift right]
			&&
			\cZ	\arrow[ll, "B", shift left] \arrow[dl, "B_{\cY}", swap, shift right]\\
			&
			\cY \arrow[ur, "F_{\cZ}", swap, shift right] \arrow[ul, "B_{\cX}", swap, shift right]
		\end{tikzcd}
	\end{equation*}

	Moreover, in this case, we have
	\begin{equation*}
		C_{\cX \cZ}(F_{\cZ} \circ F_{\cY}, B_{\cX} \circ B_{\cY})
		\le
		C_{\cX \cY}(F_{\cY}, B_{\cX})
		+
		C_{\cY \cZ}(F_{\cZ}, B_{\cY})
	\end{equation*}
	since
	\begin{equation*}
		\begin{split}
			& C_{\cX \cZ}(F_{\cZ} \circ F_{\cY}, B_{\cX} \circ B_{\cY}) \\
			&= \left(\int \left[c_{\cX}(x, B_{\cX} \circ B_{\cY}(z)) - c_{\cZ}(F_{\cZ} \circ F_{\cY}(x), z)\right]^2 \dd{\mu_{\cX} \otimes \mu_{\cZ}}(x, z)\right)^{1/2} \\
			&\le
			\left(\int \int \left[c_{\cY}(x, B_{\cX} \circ B_{\cY}(z)) - c_{\cY}(F_{\cY}(x), B_{\cY}(z))\right]^2 \dd{\mu_{\cZ}}(z) \dd{\mu_{\cX}}(x)\right)^{1/2} \\
			&\quad  +
			\left(\int \int \left[c_{\cY}(F_{\cY}(x), B_{\cY}(z)) - c_{\cZ}(F_{\cZ} \circ F_{\cY}(x), z)\right]^2 \dd{\mu_{\cX}}(x) \dd{\mu_{\cZ}}(z)\right)^{1/2} \\
			& =
			\left(\int \int \left[c_{\cY}(x, B_{\cX}(y)) - c_{\cY}(F_{\cY}(x), y)\right]^2 \dd{\mu_{\cY}}(y) \dd{\mu_{\cX}}(x)\right)^{1/2} \, (\because (B_{\cY})_{\#} \mu_{\cZ} = \mu_{\cY})\\
			&\quad  +
			\left(\int \int \left[c_{\cY}(y, B_{\cY}(z)) - c_{\cZ}(F_{\cZ}(y), z)\right]^2 \dd{\mu_{\cY}}(y) \dd{\mu_{\cZ}}(z)\right)^{1/2} \, (\because (F_{\cY})_{\#}\mu_{\cX} = \mu_{\cY}) \\
			& =
			C_{\cX \cY}(F_{\cY}, B_{\cX})
			+
			C_{\cY \cZ}(F_{\cZ}, B_{\cY})\;.
		\end{split}
	\end{equation*}
	Hence,
	\begin{equation*}
		\begin{split}
			\mathrm{RGM}(\mu_{\cX}, \mu_{\cZ})
			&= \inf_{(F, B) \in \cI(\mu_{\cX}, \mu_{\cZ})} C_{\cX \cZ}(F, B) \\
			&\le \inf_{\substack{(F_{\cY}, B_{\cX}) \in \cI(\mu_{\cX}, \mu_{\cY}) \\ (F_{\cZ}, B_{\cY}) \in \cI(\mu_{\cY}, \mu_{\cZ})}} C_{\cX \cZ}(F_{\cZ} \circ F_{\cY}, B_{\cX} \circ B_{\cY}) \\
			&\le \inf_{(F_{\cY}, B_{\cX}) \in \cI(\mu_{\cX}, \mu_{\cY})} C_{\cX \cY}(F_{\cY}, B_{\cX}) + \inf_{(F_{\cZ}, B_{\cY}) \in \cI(\mu_{\cY}, \mu_{\cZ})} C_{\cY \cZ}(F_{\cZ}, B_{\cY}) \\
			& = \mathrm{RGM}(\mu_{\cX}, \mu_{\cY}) + \mathrm{RGM}(\mu_{\cY}, \mu_{\cZ})\;.
		\end{split}
	\end{equation*}
\end{proof}

Next, we study whether $\mathrm{RGM}(\mu, \nu) = 0$ holds if and only if $(\cX, \mu, c_{\cX}) \cong (\cY, \nu, c_{\cY})$. Here the equivalence relation induced by $\cong$ can be read from Definition~\ref{def:iso}. As in the Gromov-Wasserstein distance, in general, we can only assert the if part without further conditions. The following proposition states that $\mathrm{RGM}(\mu, \nu) = 0$ if and only if $(\cX, \mu, c_{\cX}) \cong (\cY, \nu, c_{\cY})$ under some additional conditions on $c_{\cX}$ and $c_{\cY}$, thereby implying Theorem \ref{thm:metric}.

\begin{proposition}
	\label{prop:rgm}
	Let $(\cX, \mu, c_{\cX})$ and $(\cY, \nu, c_{\cY})$ be two network spaces. If $(\cX, \mu, c_{\cX}) \cong (\cY, \nu, c_{\cY})$, then $\mathrm{RGM}(\mu, \nu) = 0$. The converse is true if there exists a continuous and strictly monotone function $h \colon \R_+ \to \R$ such that $c_{\cX} = h(d_{\cX})$ and $c_{\cY} = h(d_{\cY})$.
\end{proposition}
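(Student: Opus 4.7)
The plan is to treat the two implications separately. For the forward direction, given a strong isomorphism $T$ per Definition~\ref{def:iso}, the natural candidate is $(F, B) = (T, T^{-1})$; measurability of $T^{-1}$ follows from the fact that $T$ is a Borel bijection between standard Borel (Polish) spaces. Checking the binding constraint $(\mathrm{Id}, T)_{\#}\mu = (T^{-1}, \mathrm{Id})_{\#}\nu$ reduces to agreement on measurable rectangles $A \times C \subset \cX \times \cY$: both sides evaluate to $\mu(A \cap T^{-1}(C))$, using bijectivity of $T$ together with $T_{\#}\mu = \nu$. Once $(T, T^{-1}) \in \cI(\mu, \nu)$ is in hand, I would substitute $y = T(x')$ in the RGM integrand, with $x' \sim \mu$ independent of $x$; the integrand collapses to $(c_{\cX}(x, x') - c_{\cY}(T(x), T(x')))^2$, which vanishes identically by the isomorphism condition on $T$. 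This shows $\mathrm{RGM}(\mu, \nu) = 0$.

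For the converse, assume $\mathrm{RGM}(\mu, \nu) = 0$ with $c_{\cX} = h(d_{\cX})$ and $c_{\cY} = h(d_{\cY})$. Proposition~\ref{prop:1} immediately gives $\mathrm{GW}(\mu, \nu) = 0$. I would first upgrade this infimum to an attained minimum by extracting a weakly convergent subsequence from a minimizing sequence of couplings in $\Pi(\mu, \nu)$: tightness is automatic on Polish spaces via Prokhorov's theorem, and continuity of $h$ together with non-negativity of the integrand yields lower semi-continuity of the GW functional in the weak topology. The limiting coupling $\gamma^\star$ then satisfies $c_{\cX}(x, x') = c_{\cY}(y, y')$ for $\gamma^\star \otimes \gamma^\star$-a.e.\ $((x, y), (x', y'))$. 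Strict monotonicity of $h$ gives injectivity, so this identity lifts to $d_{\cX}(x, x') = d_{\cY}(y, y')$ on the same full-measure set. The metric-based GW distance between the metric measure spaces $(\cX, \mu, d_{\cX})$ and $(\cY, \nu, d_{\cY})$ therefore also vanishes, and Sturm's Theorem~\ref{thm:1}---applicable thanks to the full-support condition built into Definition~\ref{def:mms}---produces a bijective $T \in \cT(\mu, \nu)$ with $d_{\cX}(x, x') = d_{\cY}(T(x), T(x'))$ for all $x, x' \in \cX$. Applying $h$ to both sides restores $c_{\cX}(x, x') = c_{\cY}(T(x), T(x'))$, giving a strong isomorphism of the original network spaces.

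The delicate step is the converse: passing from a coupling-level almost-everywhere identity in the surrogate cost $c$ to an everywhere identity for a genuine bijection. Two ingredients combine to make this work. Continuity of $h$ supplies lower semi-continuity of the GW functional, which is what converts a minimizing sequence of couplings into an optimal $\gamma^\star$; strict monotonicity of $h$ is what lets us transfer the identity from $c$ back to $d$ before invoking Sturm's theorem. The full-support assumption $\mathrm{supp}(\mu) = \cX$ from Definition~\ref{def:mms} is indispensable at the very last step, since Sturm's construction of the strong isomorphism requires the optimal coupling to be concentrated on a graph whose projections exhaust $\cX$ and $\cY$.
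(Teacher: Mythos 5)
Your proof is correct and follows essentially the same route as the paper's. For the forward direction the paper uses the same candidate $(T, T^{-1})$ and verifies the binding constraint by the one-line pushforward identity $(\mathrm{Id}, T)_\# \mu = (T^{-1}, \mathrm{Id})_\# T_\# \mu = (T^{-1}, \mathrm{Id})_\# \nu$; your check on measurable rectangles is an equivalent, slightly more hands-on version, and the change of variables $y = T(x')$ reproduces the paper's direct observation $c_\cX(x, T^{-1}(y)) = c_\cY(T(x), y)$. For the converse, the paper also passes through $\mathrm{GW}(\mu, \nu) = 0$ via Proposition~\ref{prop:1}, obtains an optimal coupling $\gamma^\star$, uses strict monotonicity of $h$ to upgrade from $c$-agreement to $d$-agreement $\gamma^\star \otimes \gamma^\star$-a.e., and then invokes Theorem~\ref{thm:1}. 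The one genuine difference is that where you re-derive existence of $\gamma^\star$ from Prokhorov tightness and lower semicontinuity of the quadratic GW functional, the paper simply cites Theorem~2.2 of \cite{chowdhury_memoli_2019}. Your derivation is sound (with $\gamma_n \otimes \gamma_n \rightharpoonup \gamma \otimes \gamma$ and a nonnegative lsc integrand giving lsc of $Q$), so this is a matter of self-containment versus economy rather than a substantive divergence; citing the existence result is cleaner but your sketch would serve if one wanted to avoid the external reference.
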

\begin{proof}[Proof of Proposition~\ref{prop:rgm}]
	Suppose $\mathrm{RGM}(\mu, \nu) = 0$. Due to the inequality $\mathrm{GW}(\mu, \nu) \le \mathrm{RGM}(\mu, \nu)$, we have $\mathrm{GW}(\mu, \nu) = 0$, that is, 
	\begin{equation*}
		\inf_{\gamma \in \Pi(\mu, \nu)} \left( \int_{\cX \times \cY} \int_{\cX \times \cY} (h(d_{\cX}(x, x')) - h(d_{\cY}(y, y')))^2 \dd{\gamma}(x, y) \dd{\gamma}(x', y') \right)^{1/2} = 0 \;.
	\end{equation*}
	Since there exists a coupling $\gamma^\star$ that achieves the minimum of GW due to Theorem 2.2 of \cite{chowdhury_memoli_2019}, we conclude
	\begin{equation*}
		h(d_{\cX}(x, x')) = h(d_{\cY}(y, y'))
	\end{equation*}
	holds $\gamma^\star \otimes \gamma^\star$ almost surely on $(\cX \times \cY)^2$. Since $h$ is strictly monotone, this means
	\begin{equation*}
		d_{\cX}(x, x') = d_{\cY}(y, y')
	\end{equation*}
	holds $\gamma^\star \otimes \gamma^\star$ almost surely on $(\cX \times \cY)^2$. Therefore, 
	\begin{equation*}
		\begin{split}
			& \inf_{\gamma \in \Pi(\mu, \nu)} \left( \int_{\cX \times \cY} \int_{\cX \times \cY} (d_{\cX}(x, x') - d_{\cY}(y, y'))^2 \dd{\gamma}(x, y) \dd{\gamma}(x', y') \right)^{1/2} \\
			& \ge \left( \int_{\cX \times \cY} \int_{\cX \times \cY} (d_{\cX}(x, x') - d_{\cY}(y, y'))^2 \dd{\gamma^\star}(x, y) \dd{\gamma^\star}(x', y') \right)^{1/2} \\
			& = 0 \;.
		\end{split}
	\end{equation*}
	Theorem \ref{thm:1} implies that metric measure spaces $(\cX, \mu, d_{\cX})$ and $(\cY, \nu, d_{\cY})$ are strongly isomorphic. Since $c_{\cX} = h(d_{\cX})$ and $c_{\cY} = h(d_{\cY})$, it follows easily that $(\cX, \mu, c_{\cX})$ and $(\cY, \nu, c_{\cY})$ are strongly isomorphic as well.

	To prove the if part, suppose $(\cX, \mu, c_{\cX})$ and $(\cY, \nu, c_{\cY})$ are strongly isomorphic and consider a strong isomorphism $T$. Then, $(T, T^{-1}) \in \cI(\mu, \nu)$ holds since $(\mathrm{Id}, T)_{\#} \mu = (T^{-1}, \mathrm{Id})_{\#} T_{\#} \mu = (T^{-1}, \mathrm{Id})_{\#} \nu$. Also, by definition of $T$, we have $c_{\cX}(x, T^{-1}(y)) = c_{\cY}(T(x), T \circ T^{-1}(y)) = c_{\cY}(T(x), y)$ for all $(x, y) \in \cX \times \cY$, thus
	\begin{equation*}
		\mathrm{RGM}(\mu, \nu) \le \int_{\cX \times \cY} (c_{\cX}(x, T^{-1}(y)) -  c_{\cY}(T(x), y))^2 \dd{\mu \otimes \nu} = 0\;.
	\end{equation*}
\end{proof}

We conclude this section with a few more properties and examples. We first complete the proof of Proposition~\ref{prop:1}, which characterizes the relations among three distances: GW, GM, and RGM.
\begin{proof}[Proof of Proposition~\ref{prop:1}]
	Define
	\begin{equation}\label{eq:objective}
		Q(\gamma) = \int_{\cX \times \cY} \int_{\cX \times \cY} (c_{\cX}(x, x') - c_{\cY}(y, y'))^2 \dd{\gamma}(x, y) \dd{\gamma}(x', y')
	\end{equation}
	for all $\gamma \in \Pi(\mu, \nu)$ so that
	\begin{equation*}
		\mathrm{GW}(\mu, \nu)^2 = \inf_{\gamma \in \Pi(\mu, \nu)} Q(\gamma) \;.
	\end{equation*}
	Recall that $\Pi_{\cT} = \{(\mathrm{Id}, T)_{\#} \mu : T \in \cT(\mu, \nu)\} \subset \Pi(\mu, \nu)$. Hence, as noted in Section \ref{sec:background}, 
	\begin{equation*}
		\mathrm{GM}(\mu, \nu)^2 = \inf_{\gamma \in \Pi_{\cT}} Q(\gamma) \;.
	\end{equation*}
	Define $\Pi' = \{\gamma \in \Pi(\mu, \nu) : \gamma = (\mathrm{Id}, F)_{\#} \mu = (B, \mathrm{Id})_{\#} \nu ~ \exists (F, B) \in \cI(\nu, \mu) \}$, then one can check
	\begin{equation*}
		\mathrm{RGM}(\mu, \nu)^2 = \inf_{\gamma \in \Pi'} Q(\gamma)
	\end{equation*}
	using change-of-variables. Note that $\Pi'$ may be rewritten as $\{\gamma \in \Pi_{\cT} : \gamma = (B, \mathrm{Id})_{\#} \nu ~ \exists B \in \cT(\nu, \mu) \}$. Hence, $\Pi' \subseteq \Pi_{\cT} \subseteq \Pi(\mu, \nu)$, thus we conclude $\mathrm{GW}(\mu, \nu) \le \mathrm{GM}(\mu, \nu) \le \mathrm{RGM}(\mu, \nu)$.
\end{proof}
In short, Proposition \ref{prop:1} shows that RGM, GM, and GW are minimization problems of a common objective function $Q$ over different constraint sets of couplings, namely, $\Pi' \subseteq \Pi_{\cT} \subseteq \Pi(\mu, \nu)$, respectively.

Next, we discuss the binding constraint $(\mathrm{Id}, F)_{\#} \mu = (B, \mathrm{Id})_{\#} \nu$, or equivalently, the feasible set $\cI(\mu, \nu)$. One should notice that $\cI(\mu, \nu)$ might be empty, for instance, if $\mu$ and $\nu$ are discrete and their supports have different cardinality; say, $\mu = \delta_x$ and $\nu = (\delta_{y_1} + \delta_{y_2}) / 2$, namely, Dirac measures supported on $x \in \cX$ and $y_1, y_2 \in \cY$, then even $\cT(\mu, \nu)$ is empty, meaning that the Monge problem is infeasible and so is RGM. On the flip side, the following lemma gives a sufficient condition for $(F, B) \in \cI(\mu, \nu)$ which can be useful in practice.

\begin{lemma}\label{lem:FBcondition}
	Let $(F, B) \in \cT(\mu, \nu) \times \cT(\nu, \mu)$. If $F \circ B = \mathrm{Id}$ or $B \circ F = \mathrm{Id}$ holds, then $(F, B) \in \cI(\mu, \nu)$.
\end{lemma}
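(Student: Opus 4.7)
The plan is to unpack the binding constraint $(\mathrm{Id},F)_\#\mu = (B,\mathrm{Id})_\#\nu$ and check equality of these two measures on the product $\sigma$-algebra of $\cX\times\cY$. Since rectangles $A\times C$ with $A\subset\cX$, $C\subset\cY$ Borel generate the product $\sigma$-algebra and form a $\pi$-system, by Dynkin's $\pi$-$\lambda$ theorem it suffices to verify
\begin{equation*}
(\mathrm{Id},F)_\#\mu(A\times C) \;=\; (B,\mathrm{Id})_\#\nu(A\times C) \qquad \text{for all Borel } A\subset\cX,\; C\subset\cY.
\end{equation*}
A direct calculation gives $(\mathrm{Id},F)_\#\mu(A\times C)=\mu(A\cap F^{-1}(C))$ and $(B,\mathrm{Id})_\#\nu(A\times C)=\nu(B^{-1}(A)\cap C)$, so the task reduces to showing that these two numbers agree under either hypothesis.

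Suppose first $B\circ F=\mathrm{Id}$. Then for every $x\in\cX$ we have $x\in A$ iff $F(x)\in B^{-1}(A)$, so
\begin{equation*}
\{x\in\cX : x\in A,\; F(x)\in C\} \;=\; \{x\in\cX : F(x)\in B^{-1}(A)\cap C\} \;=\; F^{-1}\bigl(B^{-1}(A)\cap C\bigr).
\end{equation*}
Taking $\mu$-measure of both sides and using $F_\#\mu=\nu$ (which holds because $F\in\cT(\mu,\nu)$) gives
\begin{equation*}
\mu(A\cap F^{-1}(C)) \;=\; \mu\bigl(F^{-1}(B^{-1}(A)\cap C)\bigr) \;=\; \nu(B^{-1}(A)\cap C),
\end{equation*}
which is exactly the desired equality. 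The case $F\circ B=\mathrm{Id}$ is symmetric: the identity $y\in C \iff B(y)\in F^{-1}(C)$ lets one rewrite $\{y: B(y)\in A,\, y\in C\}=B^{-1}(A\cap F^{-1}(C))$, and applying $\nu$ together with $B_\#\nu=\mu$ yields the matching equality.

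There is no serious obstacle; the only subtlety is keeping track of which pushforward identity ($F_\#\mu=\nu$ versus $B_\#\nu=\mu$) pairs with which one-sided inverse. Note that the argument only uses the hypothesis $B\circ F=\mathrm{Id}$ (resp.\ $F\circ B=\mathrm{Id}$) to hold $\mu$-almost everywhere (resp.\ $\nu$-almost everywhere), since the set-equalities above may be relaxed to hold on a full-measure set without changing the measures of the relevant events. This yields a slight strengthening of the stated lemma, useful in practice when constructing candidate pairs $(F,B)$ that are only approximate inverses on the support of the respective measures.
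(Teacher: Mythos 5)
Your argument is correct, but it takes a more elementary route than the paper's. The paper proves the lemma in a single chain of pushforward identities: assuming $B\circ F=\mathrm{Id}$, it writes $(\mathrm{Id},F)_\#\mu=(B\circ F, F)_\#\mu$, factors the map $(B\circ F,F)$ as $(B,\mathrm{Id})\circ F$, and concludes $(B,\mathrm{Id})_\#(F_\#\mu)=(B,\mathrm{Id})_\#\nu$. You instead verify equality of the two probability measures directly on the $\pi$-system of measurable rectangles $A\times C$ and invoke Dynkin's $\pi$-$\lambda$ theorem, which unfolds the pushforward algebra into explicit preimage computations; your set identity $\{x: x\in A,\,F(x)\in C\}=F^{-1}(B^{-1}(A)\cap C)$ is precisely what the paper's composition step encapsulates. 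The paper's version is shorter and cleaner; your rectangle-level argument makes one fact visible that the paper leaves implicit, namely that the conclusion survives when $B\circ F=\mathrm{Id}$ holds only $\mu$-a.e.\ (resp.\ $F\circ B=\mathrm{Id}$ only $\nu$-a.e.), since the set identities need hold only on a full-measure set. That observation --- applying in particular to maps defined only on $\mathrm{supp}(\mu)$ or $\mathrm{supp}(\nu)$ --- is a genuine, if mild, strengthening. You might also note that under $B\circ F=\mathrm{Id}$ and $F_\#\mu=\nu$ the hypothesis $B_\#\nu=\mu$ is automatic, so each branch of the proof uses only half of the stated hypothesis $(F,B)\in\cT(\mu,\nu)\times\cT(\nu,\mu)$.
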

\begin{proof}
	Without loss of generality, assume $B \circ F = \mathrm{Id}$. Then,
	\begin{equation*}
		(\mathrm{Id}, F)_{\#} \mu = (B \circ F, F)_{\#} \mu = (B, \mathrm{Id})_{\#} (F_{\#} \mu) = (B, \mathrm{Id})_{\#} \nu\;.
	\end{equation*}
	Hence, $(F, B) \in \cI(\mu, \nu)$.
\end{proof}

The following example illustrates that this condition can be used to find a pair $(F, B) \in \cI(\mu, \nu)$ when $\mu$ and $\nu$ are Gaussian distributions.
\begin{example}
	Given $p < q$, suppose $\mu = N(0, I_{p})$ and $\nu = N(0, \Sigma)$, where $I_{p} \in \R^{p \times p}$ is the identity matrix and $\Sigma \in \R^{q \times q}$ is of rank $p$. Then, we can find a rank-$p$ matrix $A \in \R^{q \times p}$ such that $\Sigma = A A^\top$. Let $F(x) = A x$ and $B(y) = A^\dagger y$, then one can easily check $F_{\#} \mu = \nu$, $B_{\#} \nu = \mu$, and $B \circ F = \mathrm{Id}$. Hence, $(F, B) \in \cI(\mu, \nu)$.
\end{example}

Lastly, we provide a simple example that shows that properly chosen cost functions give a strong isomorphism between two Gaussian distributions.

\begin{example}\label{ex:gauss}
	Consider two Gaussian distributions on $\R^d$, say $\mu = N(0, \Sigma_1)$ and $\nu = N(0, \Sigma_2)$. Assume $\Sigma_1$ and $\Sigma_2$ are invertible. Then two network spaces $(\R^d, \mu, c_{\cX})$ and $(\R^d, \nu, c_{\cY})$ are strongly isomorphic if $c_{\cX}$ and $c_{\cY}$ are Mahalanobis distances, that is, 
	\begin{equation*}
		c_{\cX}(x, x') = \sqrt{(x - x')^\top \Sigma_1^{- 1} (x - x')} \;, \quad c_{\cY}(y, y') = \sqrt{(y - y')^\top \Sigma_2^{- 1} (y - y')} \;.
	\end{equation*}
	To see this, let $T = \Sigma_2^{1 / 2} \Sigma_1^{- 1 / 2}$, where $\Sigma_1^{1 / 2}$ and $\Sigma_2^{1 / 2}$ are the square roots of $\Sigma_1$ and $\Sigma_2$, respectively. Obviously, a linear map $T$ satisfies $T \in \cT(\mu, \nu)$ and $c_{\cX}(x, x') = c_{\cY}(T x, T x')$ for all $x, x' \in \R^d$. According to Definition \ref{def:iso}, a linear map $T$ is a strong isomorphism. Proposition \ref{prop:rgm} implies $\mathrm{RGM}(\mu, \nu) = 0$. Notice that the same results hold for $c_{\cX}(x, x') = x^\top \Sigma_1^{-1} x'$ and $c_{\cY}(y, y') = y^\top \Sigma_2^{-1} y'$ as well.
\end{example}

\subsection{Analysis of GW = RGM in Non-atomic Cases}
\label{sec:gw=rgm}
We have seen in Proposition \ref{prop:1} that GW $\le$ GM $\le$ RGM holds in general. This section proves Theorem \ref{thm:equality}, which states that these inequalities become equalities under mild conditions.

\begin{proof}[Proof of Theorem~\ref{thm:equality}]
	First, we invoke Theorem 16 in Chapter 15 of \cite{royden1988realanalysis}, which states that any Polish probability space that has no atoms is equivalent to $([0, 1], \lambda)$, where $\lambda$ is the Lebesgue measure on $[0, 1]$. More precisely, $(\cX, \mu)$ is equivalent to $([0, 1], \lambda)$ in the following sense: there exists a bijection $\phi \colon [0, 1] \to \cX$ such that $\phi, \phi^{-1}$ are measurable, $\phi_\# \lambda = \mu$, and $(\phi^{-1})_\# \mu = \lambda$. Then, we can define the following network space $([0, 1], \lambda, \tilde{c}_{\cX})$, where 
	\begin{equation*}
		\tilde{c}_{\cX}(u, v) := c_{\cX}(\phi(u), \phi(v)) \quad \forall u, v \in [0, 1] \;.
	\end{equation*}
	By construction, $([0, 1], \lambda, \tilde{c}_{\cX})$ is strongly isomorphic to $(\cX, \mu, c_{\cX})$. Similarly, we can find a bijection $\psi \colon [0, 1] \to \cY$ such that $\psi, \psi^{-1}$ are measurable, $\psi_\# \lambda = \nu$, and $(\psi^{-1})_\# \nu = \lambda$. Then, we can also define a network space $([0, 1], \lambda, \tilde{c}_{\cY})$ that is strongly isomorphic to $(\cY, \nu, c_{\cY})$ by letting 
	\begin{equation*}
		\tilde{c}_{\cY}(u, v) := c_{\cY}(\psi(u), \psi(v)) \quad \forall u, v \in [0, 1] \;.
	\end{equation*}

	Next, we show that the RGM distance between $(\cX, \mu, c_{\cX})$ and $(\cY, \nu, c_{\cY})$ is the same as the RGM distance between $([0, 1], \lambda, \tilde{c}_{\cX})$ and $([0, 1], \lambda, \tilde{c}_{\cY})$, namely, 
	\begin{equation}
		\label{eq:RGM_equiv}
		\mathrm{RGM}(\mu, \nu) = \mathrm{RGM}((\lambda, \tilde{c}_{\cX}), (\lambda, \tilde{c}_{\cY})) \;,
	\end{equation}
	where $(\lambda, \tilde{c}_{\cX})$ and $(\lambda, \tilde{c}_{\cY})$ are placed to distinguish the two network spaces $([0, 1], \lambda, \tilde{c}_{\cX})$ and $([0, 1], \lambda, \tilde{c}_{\cY})$ defined on the same space $([0, 1], \lambda)$ with different cost functions. To see this, we use the triangle inequality of RGM (Proposition \ref{prop:2}):
	\begin{equation*}
		|\mathrm{RGM}(\mu, \nu) - \mathrm{RGM}((\lambda, \tilde{c}_{\cX}), (\lambda, \tilde{c}_{\cY}))| \le \mathrm{RGM}(\mu, (\lambda, \tilde{c}_{\cX})) + \mathrm{RGM}(\nu, (\lambda, \tilde{c}_{\cY})) \;.
	\end{equation*}
	By Proposition \ref{prop:rgm}, we have $\mathrm{RGM}(\mu, (\lambda, \tilde{c}_{\cX})) = \mathrm{RGM}(\nu, (\lambda, \tilde{c}_{\cY})) = 0$, hence we have \eqref{eq:RGM_equiv}. Similarly, one can verify that $\mathrm{GW}(\mu, \nu) = \mathrm{GW}((\lambda, \tilde{c}_{\cX}), (\lambda, \tilde{c}_{\cY}))$. 

	Now, to show $\mathrm{GW}(\mu, \nu) = \mathrm{RGM}(\mu, \nu)$, it suffices to prove 
	\begin{equation*}
		\mathrm{GW}((\lambda, \tilde{c}_{\cX}), (\lambda, \tilde{c}_{\cY})) = \mathrm{RGM}((\lambda, \tilde{c}_{\cX}), (\lambda, \tilde{c}_{\cY})) \;.
	\end{equation*}
	Equivalently, as in the proof of Proposition \ref{prop:1}, we show
	\begin{equation*}
		\inf_{\gamma \in \Pi(\lambda, \lambda)} \tilde{Q}(\gamma) = \inf_{\gamma \in \Pi'} \tilde{Q}(\gamma) \;,
	\end{equation*}
	where $\Pi' = \{\gamma \in \Pi(\lambda, \lambda) : \gamma = (\mathrm{Id}, F)_{\#} \lambda = (B, \mathrm{Id})_{\#} \lambda, ~ ~ \exists (F, B) \in \cI(\lambda, \lambda) \}$ and
	\begin{equation*}
		\tilde{Q}(\gamma) = \int_{[0, 1] \times [0, 1]} \int_{[0, 1] \times [0, 1]} (\tilde{c}_{\cX}(u, u') - \tilde{c}_{\cY}(v, v'))^2 \dd{\gamma}(u, v) \dd{\gamma}(u', v') \quad \forall \gamma \in \Pi(\lambda, \lambda) \;.
	\end{equation*}
	By Theorem 2.2 of \cite{chowdhury_memoli_2019}, there exists an optimal coupling $\gamma^\ast \in \Pi(\lambda, \lambda)$ such that $\mathrm{GW}((\lambda, \tilde{c}_{\cX}), (\lambda, \tilde{c}_{\cY})) = \tilde{Q}(\gamma^\ast)$. Next, we use the following fact from Theorem 1.1 of \cite{brenier2003}: there exists a sequence $(p_n)_{n \in \N}$ of bijective maps from $[0, 1]$ to $[0, 1]$ such that both $p_n, p_n^{-1}$ are measure-preserving, namely, $(p_n)_\# \lambda = (p_n^{-1})_\# \lambda = \lambda$ for all $n \in \N$, and $\gamma_n := (\mathrm{Id}, p_n)_{\#}\lambda$ converges weakly to $\gamma^\ast$.\footnote{In fact, Theorem 1.1 of \cite{brenier2003} is stated for the case where $\lambda$ is the Lebesgue measure restricted on $[0, 1]^d$ for $d \ge 2$. However, one can verify that the proof still applies to $d = 1$.} One can verify that 
	\begin{equation*}
		\gamma_n = (\mathrm{Id}, p_n)_\# \lambda = (\mathrm{Id}, p_n)_\# (p_n^{-1})_\#\lambda =  (p_n^{-1}, \mathrm{Id})_\# \lambda \;,
	\end{equation*}
	which shows that $\gamma_n \in \Pi'$. Accordingly, $(\gamma_n)_{n \in \N}$ is a sequence in $\Pi'$ converging weakly to $\gamma^\ast$.  As Lemma 2.3 of \cite{chowdhury_memoli_2019} shows that $\tilde{Q}$ is continuous on $\Pi(\lambda, \lambda)$ with respect to the weak topology,
	\begin{equation*}
		\mathrm{GW}((\lambda, \tilde{c}_{\cX}), (\lambda, \tilde{c}_{\cY})) = \tilde{Q}(\gamma^\ast) = \lim_{n \to \infty} \tilde{Q}(\gamma_n) \ge \inf_{\gamma \in \Pi'} \tilde{Q}(\gamma) = \mathrm{RGM}((\lambda, \tilde{c}_{\cX}), (\lambda, \tilde{c}_{\cY})) \;.
	\end{equation*}
	As $\mathrm{GW}((\lambda, \tilde{c}_{\cX}) \le \mathrm{RGM}((\lambda, \tilde{c}_{\cX}), (\lambda, \tilde{c}_{\cY}))$ by Proposition \ref{prop:1}, we have $\mathrm{GW}((\lambda, \tilde{c}_{\cX}) = \mathrm{RGM}((\lambda, \tilde{c}_{\cX}), (\lambda, \tilde{c}_{\cY}))$. Therefore, we have $\mathrm{GW}(\mu, \nu) = \mathrm{RGM}(\mu, \nu)$.
\end{proof}

\subsection{Inductive Bias of RGM and Brenier's Polar Factorization}
\label{sec:brenier-polar}

As mentioned in Section \ref{sec:discussions}, given two Polish probability spaces $(\cX, \mu)$ and $(\cY, \nu)$, we prove that there is a pair of cost functions $c_\cX$ and $c_\cY$ such that the resulting network spaces $(\cX, \mu, c_{\cX})$ and $(\cY, \nu, c_{\cY})$ are strongly isomorphic. More importantly, we discuss how these costs give rise to a specific strong isomorphism based on Brenier's polar factorization, providing a deeper insight into the inductive bias of the RGM sampler.

\paragraph{Preliminaries} Throughout this section, we focus on Borel probability measures on $\R^d$ that are absolutely continuous with respect to the $d$-dimensional Lebesgue measure and have finite second moments; $\cP_2^{ac}(\R^d)$ denotes the collection of such measures. We can always find a unique optimal transport map---given as the gradient of a convex function---between any two elements in $\cP_2^{ac}(\R^d)$; this is Brenier's theorem briefly mentioned in Section \ref{sec:intro}, which we formally state as follows.

\begin{theorem}[Brenier's Theorem]
	\label{thm:brenier}
	Given $\lambda, \rho \in \cP_2^{ac}(\R^d)$, we can find a convex function $\psi \colon \R^d \to \R$ such that $\nabla \psi$ and $(\nabla \psi)^{-1}$ are unique optimal transport maps between them:
	\begin{equation*}
		\nabla \psi = \argmin_{T_\# \lambda = \rho} \int_{\R^d} \|x - T(x)\|^2 \dd{\lambda(x)}
		\quad \text{and} \quad (\nabla \psi)^{-1} = \argmin_{T_\# \rho = \lambda} \int_{\R^d} \|x - T(x)\|^2 \dd{\rho(x)} \;.
	\end{equation*}
\end{theorem}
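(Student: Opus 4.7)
The plan is to derive Brenier's theorem via Kantorovich duality applied to the quadratic cost. First I would observe that, since $\|x-y\|^2 = \|x\|^2 + \|y\|^2 - 2\langle x, y\rangle$ and both $\lambda$ and $\rho$ have finite second moments, the two marginal terms integrate to constants independent of $\gamma \in \Pi(\lambda, \rho)$. Therefore, minimizing $\int \|x-y\|^2 \dd{\gamma}$ is equivalent to maximizing $\int \langle x, y\rangle \dd{\gamma}$ over $\Pi(\lambda,\rho)$. This is the step that opens the door to convex analysis, because the linear cost $(x,y) \mapsto \langle x,y\rangle$ interacts nicely with Legendre duality.

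Second, I would invoke the Kantorovich dual for this linear maximization: it reads $\inf \{\int \phi \dd{\lambda} + \int \psi \dd{\rho} : \phi(x) + \psi(y) \ge \langle x, y\rangle\}$ with $(\phi,\psi)$ a pair of upper semicontinuous potentials. By a standard $c$-conjugation argument, I can replace the pair with a pair of Legendre--Fenchel conjugates, i.e.\ $\phi = \psi^\ast$ with $\psi^\ast(x) = \sup_y(\langle x,y\rangle - \psi(y))$. This forces $\phi$ to be convex and lower semicontinuous, which is exactly the structural property we want. Existence of optimal potentials follows from a weak compactness argument on sublevel sets plus lower semicontinuity of the dual functional, or one simply cites the general Kantorovich duality theorem.

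Third, I would extract the transport map by complementary slackness: any optimal coupling $\gamma^\star$ must be concentrated on $\{(x,y) : \phi(x) + \psi(y) = \langle x, y\rangle\}$, and this equality is precisely the characterization $y \in \partial \phi(x)$. Here absolute continuity of $\lambda$ enters in a crucial way: a convex function is locally Lipschitz, hence differentiable Lebesgue-a.e.\ (Rademacher), and since $\lambda \ll \mathrm{Leb}$, $\phi$ is differentiable $\lambda$-a.e., so $\partial\phi(x) = \{\nabla\phi(x)\}$ for $\lambda$-a.e.\ $x$. This forces $\gamma^\star = (\mathrm{Id}, \nabla\phi)_\#\lambda$, which in turn yields $(\nabla\phi)_\#\lambda = \rho$ and identifies $\nabla\phi$ as the Monge minimizer. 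To handle the reverse direction I would exploit the symmetry: the Legendre conjugate $\phi^\ast$ is also convex, $\nabla\phi^\ast$ serves as the $\rho$-a.e.\ inverse of $\nabla\phi$, and running the same argument with the roles of $\lambda,\rho$ swapped identifies $\nabla\phi^\ast = (\nabla\phi)^{-1}$ as the optimal map from $\rho$ to $\lambda$. Uniqueness follows because any two dual-optimal convex potentials must share the same gradient $\lambda$-a.e., hence induce the same transport map.

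The main obstacles I anticipate are two technical points. The first is establishing the existence and regularity of the dual potentials in a setting where $\cX = \R^d$ is unbounded: one needs growth estimates (quadratic control via the second moment hypothesis) to rule out pathological optimizing sequences, and one must be careful that the $c$-conjugation indeed produces genuinely convex, lower semicontinuous functions rather than just abstract measurable ones. The second is the a.e.\ differentiability transfer: although convex functions are a.e.\ differentiable in the Lebesgue sense, one needs the absolute continuity hypothesis on $\lambda$ to lift this to $\lambda$-a.e.\ differentiability, and this is the single step where the assumption $\lambda \in \cP_2^{ac}(\R^d)$ is indispensable. Everything else (strict convexity of the cost giving uniqueness, symmetry producing the inverse) is comparatively mechanical.
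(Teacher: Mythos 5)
The paper does not actually prove Brenier's theorem; it cites \citep{brenier1991} and, in the supplementary material, offers only a Gaussian special case as intuition, explicitly remarking that the full proof ``requires intricate convex analysis techniques.'' So there is no internal proof to compare against. Your sketch is the standard modern derivation (via Kantorovich--Knott--Smith--McCann rather than Brenier's original polar-factorization route) and it is essentially correct: the reduction of $\|x-y\|^2$ to the linear cost $\langle x,y\rangle$ using finiteness of second moments, the passage to dual potentials by double $c$-conjugation (which produces a convex, lower semicontinuous $\phi$ with $\phi = \psi^\ast$), the complementary-slackness identification of the support with $\{(x,y):y\in\partial\phi(x)\}$, and the use of absolute continuity of $\lambda$ to upgrade Lebesgue-a.e.\ differentiability of the convex potential to $\lambda$-a.e.\ differentiability are all the right moves. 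You also correctly flag the two genuine technical hurdles --- existence of dual optimizers on the unbounded domain $\R^d$, and the a.e.-differentiability transfer --- and correctly identify that the $\cP_2^{ac}$ hypotheses are exactly what make both go through. One small point worth sharpening: uniqueness of the optimal map does not literally follow from uniqueness of the dual potentials (those are only determined up to constants and on $\mathrm{supp}(\lambda)$); the cleaner argument is that any optimal plan must be supported on the graph of a $\lambda$-a.e.\ single-valued subdifferential, and two such graphs with the same pushforward must agree $\lambda$-a.e.\ by a monotonicity/cyclical-monotonicity argument. Likewise, when invoking local Lipschitzness of $\phi$, you implicitly use that $\phi$ is finite on a set of full $\lambda$-measure (guaranteed because $\int\phi\,\dd\lambda<\infty$), so its effective domain has full measure and the Rademacher argument applies on the interior. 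These are refinements, not gaps; the overall structure of the argument is sound.
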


\begin{remark}\label{ex:brenier_gauss}
	The proof of Brenier's theorem requires intricate convex analysis techniques. It is helpful to consider a simple case---where both $\lambda$ and $\rho$ are Gaussian---to better understand the main message of Brenier's theorem. Suppose $\lambda = \cN(0, I_d)$ and $\rho = \cN(0, \Sigma)$---assuming $\Sigma$ is invertible---and focus on linear transport maps, that is, $T \in \R^{d \times d}$ such that $T T^\top = \Sigma$, for which the transport cost boils down to
	\begin{equation*}
		\E_{x \sim \cN(0, I_d)} \|(I_d - T)x\|^2 = \mathrm{tr}((I_d - T) (I_d - T)^\top) = \mathrm{tr}(I_d) - 2 \mathrm{tr}(T) + \mathrm{tr}(\Sigma) \;.
	\end{equation*}
	Therefore, the transport cost is minimized by a linear transport map $T$ that maximizes its trace under the constraint $T T^\top = \Sigma$. Using linear algebra techniques, one can verify that $\Sigma^{1/2}$, the unique square root of $\Sigma$, is the optimal linear transport map;\footnote{Though this argument is designed to show that $\Sigma^{1 / 2}$ is optimal among linear transport maps to get insights, it can be shown that $\Sigma^{1 / 2}$ is, in fact, optimal among all admissible transport maps.} we can indeed see that it is the gradient of a convex function $x \mapsto \frac{1}{2} x^\top \Sigma^{1 / 2} x$. Last but not least, notice that any admissible linear transport map $T \in \R^{d \times d}$, that satisfies $T T^\top = \Sigma$, can be decomposed as $T = \Sigma^{1 / 2} S$ for some $S \in O(d)$, where $O(d)$ is the orthogonal group in dimension $d$. The last fact is called the matrix factorization theorem, which will be elaborated in \ref{sec:polar-strong-iso} to shed light on Brenier's polar factorization, as done here to example Brenier's theorem.
\end{remark}

\subsubsection{Designing Cost Functions for Strong Isomorphism}

We have already seen in Example \ref{ex:gauss} that Mahalanobis distances make two Gaussian distributions strongly isomorphic. Though this constructive example seems to be a special case, it indicates a fundamental principle that carries over to general cases. We will elaborate on the general constructive principle in this section by referring to a common space Polish probability space $(\cZ, \lambda)$ to design cost functions. Figure \ref{fig:diagrams_cost}(a) visualizes Example \ref{ex:gauss} along with an additional base measure $\lambda := \cN(0, I_d)$;  as we have remarked after Theorem \ref{thm:brenier}, linear maps $\Sigma_1^{1 / 2}$ and $\Sigma_2^{1 / 2}$ are the optimal transport maps from $\lambda$ to $\mu$ and $\nu$, respectively.
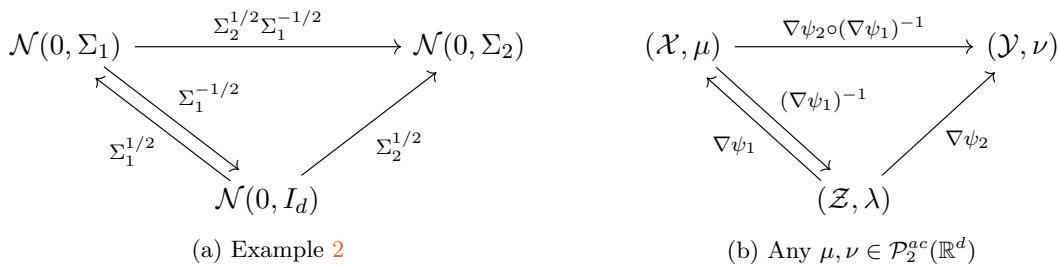
\begin{figure}[ht]
	\centering
	\tikzset{column sep=small, ampersand replacement=\&}
	\subfloat[Example \ref{ex:gauss}]{
		\begin{tikzcd}[row sep = huge]
			\cN(0, \Sigma_1) \arrow[rr, "\Sigma_2^{1 / 2} \Sigma_1^{-1 / 2}"] \arrow[dr, "\Sigma_1^{-1 / 2}", shift left]
			\&\&
			\cN(0, \Sigma_2) \\
			\&
			\cN(0, I_d) \arrow[ur, "\Sigma_2^{1 / 2}", swap] \arrow[ul, "\Sigma_1^{1 / 2}", shift left]
		\end{tikzcd}
	} \hfil
    \subfloat[Any $\mu, \nu \in \cP_2^{ac}(\R^d)$]{
		\begin{tikzcd}[row sep = huge]
			(\cX, \mu) \arrow[rr, "\nabla \psi_2 \circ (\nabla \psi_1)^{-1}"] \arrow[dr, "(\nabla \psi_1)^{-1}", shift left]
			\&\&
			(\cY, \nu) \\
			\&
			(\cZ, \lambda) \arrow[ur, "\nabla \psi_2", swap] \arrow[ul, "\nabla \psi_1", shift left]
		\end{tikzcd}
	}
	\caption{Optimal transport maps by Brenier's theorem}\label{fig:diagrams_cost}
\end{figure}

Letting $c_\cZ$ be the Euclidean distance, notice that we may rewrite the Mahalanobis distances in Example \ref{ex:gauss} as 
\begin{align*}
	c_{\cX}(x, x') = c_\cZ(\Sigma_1^{-1 / 2} x, \Sigma_1^{-1 / 2} x') \;, \\
	c_{\cY}(y, y') = c_\cZ(\Sigma_2^{-1 / 2} y, \Sigma_2^{-1 / 2} y') \;. 
\end{align*}

This shows that $c_{\cX}$ and $c_{\cY}$ are derived by properly combining the base distance $c_\cZ$ with the optimal transport maps $\Sigma_1^{-1 / 2}$ and $\Sigma_2^{-1 / 2}$, respectively. Also, in this case, $\Sigma_2^{1 / 2} \Sigma_1^{-1 / 2}$, a composition of the two optimal transport maps, gives rise to a strong isomorphism under these cost functions.

The aforementioned procedure is indeed applicable to any $\mu, \nu \in \cP_2^{ac}(\R^d)$. As visualized in Figure \ref{fig:diagrams_cost}(b), simply replace the linear maps $\Sigma_1^{1 / 2}$ and $\Sigma_2^{1 / 2}$ with the optimal transport maps $\nabla \psi_1$ and $\nabla \psi_2$ from any suitable base measure $\lambda \in \cP_2^{ac}(\R^d)$, respectively, by Brenier's theorem. Then, for any cost function $c_{\cZ}$ on $\cZ$, define
\begin{equation}
	\label{eq:cost_functions}
	\begin{aligned}
		c_{\cX}(x, x') = c_\cZ((\nabla \psi_1)^{-1}(x), (\nabla \psi_1)^{-1}(x')) \;, \\
		c_{\cY}(y, y') = c_\cZ((\nabla \psi_2)^{-1}(y), (\nabla \psi_2)^{-1}(y')) \;.
	\end{aligned}
\end{equation}
Then, the network spaces $(\cX, \mu, c_{\cX})$ and $(\cY, \nu, c_{\cY})$, where $\cX = \mathrm{supp}(\mu)$ and $\cY = \mathrm{supp}(\nu)$, are strongly isomorphic; also, $\nabla \psi_2 \circ (\nabla \psi_1)^{-1}$ is a strong isomorphism. Notice that we have derived this fundamental result by simply rethinking Example \ref{ex:gauss} via Brenier's theorem and generalizing the diagram in Figure \ref{fig:diagrams_cost}. We reiterate that the principle behind the isomorphic Gaussian example is fundamental and generalizable to generic measures in $\cP_2^{ac}(\R^d)$; it is certainly not just a toy example.

\subsubsection{Identifying Strong Isomorphism}
\label{sec:polar-strong-iso}
In the previous section, we have seen how to define suitable cost functions $c_{\cX}$ and $c_{\cY}$ that make $\mu, \nu \in \cP_2^{ac}(\R^d)$ strongly isomorphic. As pointed out in Section \ref{sec:discussions}, the RGM sampler brings in an inductive bias towards a strong isomorphism; indeed, we have shown that $\nabla \psi_2 \circ (\nabla \psi_1)^{-1}$ is a strong isomorphism under the cost functions in \eqref{eq:cost_functions}, leading to $\mathrm{RGM}(\mu, \nu) = 0$. In this section, we look at these results from a different angle using an insight from Brenier's polar factorization, highlighting unseen aspects of the inductive bias.

We start from Figure \ref{fig:diagrams_cost}(b): fix $\mu, \nu \in \cP_2^{ac}(\R^d)$ and let $\cX = \mathrm{supp}(\mu)$ and $\cY = \mathrm{supp}(\nu)$. Also, let $\lambda$ be the Lebesgue measure on $\cZ = [0, 1]^d$. Recall that the key ingredients of the diagram were optimal transport maps $\nabla \psi_1$ and $\nabla \psi_2$ from the base measure. It turns out that we may replace them with other transport maps---possibly not optimal---from the base measure. This observation comes from Brenier's polar factorization \citep{brenier1991}, which we paraphrase as follows:
\begin{theorem}[Brenier's Polar Factorization]\label{thm:brenier_polar}
	For any transport map $T_1$ from $\lambda$ to $\mu$, we can find a unique map $s_1 \colon \cZ \to \cZ$ such that $(s_1)_\# \lambda = \lambda$ and $T_1 = \nabla \psi_1 \circ s_1$.
\end{theorem}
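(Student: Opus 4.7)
The plan is to leverage the invertibility guaranteed by Brenier's theorem (Theorem~\ref{thm:brenier}): applied to $(\lambda, \mu)$, it produces not only the convex potential $\psi_1$ with $(\nabla \psi_1)_{\#} \lambda = \mu$, but also a well-defined inverse $(\nabla \psi_1)^{-1}$, defined $\mu$-a.e., that is itself the optimal transport map pushing $\mu$ back to $\lambda$. Given this two-sided a.e.\ inverse, the relation $T_1 = \nabla \psi_1 \circ s_1$ essentially forces the candidate $s_1 = (\nabla \psi_1)^{-1} \circ T_1$, so both existence and uniqueness reduce to verifying that this candidate works and is the only choice $\lambda$-a.e.

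For existence, define $s_1 \coloneqq (\nabla \psi_1)^{-1} \circ T_1$. The measure-preserving property is checked by a two-step pushforward:
\[ (s_1)_{\#} \lambda = \bigl((\nabla \psi_1)^{-1}\bigr)_{\#} (T_1)_{\#} \lambda = \bigl((\nabla \psi_1)^{-1}\bigr)_{\#} \mu = \lambda, \]
where the middle equality uses the hypothesis $(T_1)_{\#} \lambda = \mu$ and the last uses Brenier's theorem. The desired factorization $\nabla \psi_1 \circ s_1 = T_1$ then holds $\lambda$-a.e., because $T_1(z) \in \mathrm{supp}(\mu)$ for $\lambda$-a.e.\ $z \in \cZ$, and on $\mathrm{supp}(\mu)$ we have $\nabla \psi_1 \circ (\nabla \psi_1)^{-1} = \mathrm{Id}$ $\mu$-a.e.

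For uniqueness, suppose $s_1$ and $s_1'$ both satisfy $(s_1)_{\#} \lambda = (s_1')_{\#} \lambda = \lambda$ and $\nabla \psi_1 \circ s_1 = \nabla \psi_1 \circ s_1' = T_1$ $\lambda$-a.e. Composing both sides with $(\nabla \psi_1)^{-1}$ (which is well-defined $\mu$-a.e., hence $\lambda$-a.e.\ along the images of $s_1$ and $s_1'$, since these images distribute as $\lambda$ whose optimal pushforward is $\mu$) yields $s_1 = s_1'$ $\lambda$-a.e. The main obstacle---the very existence and a.e.\ invertibility of the Brenier map---is entirely deferred to Theorem~\ref{thm:brenier} and relies on the absolute continuity assumption $\mu \in \cP_2^{ac}(\R^d)$ through Rademacher's theorem on differentiability of convex functions. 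Once Brenier's theorem is invoked, the polar factorization is a short algebraic consequence, with the only subtlety being careful bookkeeping of the a.e.\ qualifiers to ensure the composition $(\nabla \psi_1)^{-1} \circ T_1$ is unambiguously defined on a set of full $\lambda$-measure.
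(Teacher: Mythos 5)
The paper does not prove this statement: it is presented as a paraphrase of Brenier's result \citep{brenier1991} and used as a cited black box, so there is no internal proof to compare against. Your derivation from Theorem~\ref{thm:brenier} is correct and is the standard way this corollary is obtained once the existence and a.e.\ two-sided invertibility of the Brenier map are available: setting $s_1 = (\nabla\psi_1)^{-1}\circ T_1$, the pushforward, composition, and uniqueness checks all go through with the a.e.\ bookkeeping you describe (in particular using $(s_1)_\#\lambda = \lambda$ to transfer the full-measure identity $(\nabla\psi_1)^{-1}\circ\nabla\psi_1 = \mathrm{Id}$ along $s_1$ in the uniqueness step). It is worth being aware that the paper's paraphrase is strictly weaker than Brenier's original polar factorization theorem, which applies to any $L^2$ map with absolutely continuous image measure and produces the convex potential $\psi$ as part of the conclusion; the paper's version fixes $\psi_1$ in advance via a separately invoked optimal transport result, and it is precisely this that makes your short algebraic reduction possible. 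One small imprecision: the a.e.\ single-valuedness of $\nabla\psi_1$ is not literally an instance of Rademacher's theorem, which concerns Lipschitz maps---one first uses that a finite convex function is locally Lipschitz on the interior of its domain, and then applies Rademacher; since you defer the analytical content to Theorem~\ref{thm:brenier} anyway, this has no bearing on the validity of your argument.
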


In other words, any transport map from $\lambda$ to $\mu$ is factorized into a composition of the unique optimal transport map $\nabla \psi_1$ and a (Lebesgue) measure-preserving map $s_1 \colon \cZ \to \cZ$. Let $S(\cZ)$ be the collection of all measure-preserving maps from $\cZ$ to $\cZ$, then Brenier's polar factorization essentially shows a one-to-one correspondence between $\cT(\lambda, \mu)$ and $S(\cZ)$. Analogously, this implies a one-to-one correspondence between $\cT(\lambda, \nu)$ and $S(\cZ)$.

\begin{remark}[Matrix polar factorization]
	To get insights into this sophisticated result, let us pause and go back to the remark below Theorem \ref{thm:brenier}, where we have mentioned that in the Gaussian case, any linear transport map is decomposed as $T = \Sigma^{1 / 2} S$, the multiplication of the optimal transport map $\Sigma^{1/2}$ and an orthogonal matrix $S \in O(d)$,\footnote{Note that $S \in O(d)$ transports $\cN(0, I_d)$ to itself.} which exactly correspond to $\nabla \psi_1$ and $s_1$ in Theorem \ref{thm:brenier_polar}, respectively. In other words, Brenier's Polar Factorization is a generalization of the matrix factorization that we have discussed earlier by restricting to the Gaussian case and linear transport maps.
\end{remark}

As mentioned earlier, we now replace the two arrows---the optimal transport maps $\nabla \psi_1$ and $\nabla \psi_2$ from the base measure---in Figure \ref{fig:diagrams_cost}(b) with any transport maps, namely, $\nabla \psi_1 \circ s_1$ and $\nabla \psi_2 \circ s_2$, respectively, for any $s_1, s_2 \in S(\cZ)$. One technicality here is that we restrict our focus to bijective $s_1$ and $s_2$ to use invertibility to reverse the arrows.\footnote{Elements of $S(\cZ)$ may not be invertible in general. That said, a subset of $S(\cZ)$ consisting of bijective measure-preserving maps is dense in $S(\cZ)$ as discussed in \cite{brenier2003}.} Then, we can also define a transport map $\nabla \psi_2 \circ s_2 \circ (\nabla \psi_1 \circ s_1)^{-1}$ from $\cX$ to $\cY$ by chaining the transport maps $\mu \to \lambda$ and $\lambda \to \nu$ as in Figure \ref{fig:diagrams_cost}(b). These changes are shown in the new diagram Figure \ref{fig:diagrams}.

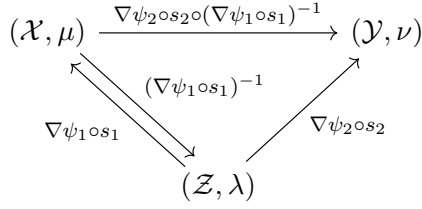
\begin{figure}[ht]
	\centering
	\tikzset{column sep=huge, ampersand replacement=\&}
	\begin{tikzcd}[row sep = huge]
		(\cX, \mu) \arrow[rr, "\nabla \psi_2 \circ s_2 \circ (\nabla \psi_1 \circ s_1)^{-1}"] \arrow[dr, "(\nabla \psi_1 \circ s_1)^{-1}", shift left]
		\&\&
		(\cY, \nu) \\
		\&
		(\cZ, \lambda) \arrow[ur, "\nabla \psi_2 \circ s_2", swap] \arrow[ul, "\nabla \psi_1 \circ s_1", shift left]
	\end{tikzcd}
	\caption{Generalization of Figure \ref{fig:diagrams_cost}(b) via Brenier's polar factorization.}\label{fig:diagrams}
\end{figure}

Now, let us go back to transform sampling. The arrow from $\mu$ to $\nu$ in Figure \ref{fig:diagrams} in fact shows that there are many transport maps from $\mu$ to $\nu$, that is,
\begin{equation*}
	\cF = \{\nabla \psi_2 \circ s_2 \circ (\nabla \psi_1 \circ s_1)^{-1} : \text{bijective} ~ s_1, s_2 \in S(\cZ)\} \subset \cT(\mu, \nu) \;.
\end{equation*}
Though $\cF$ is a collection of transport maps constructed in a certain way, that is, chaining $\mu \to \lambda$ and $\lambda \to \nu$, it still consists of infinitely many transport maps, reiterating that transform sampling is indeed over-identified. We show how the RGM sampler induces an inductive bias in this case, thereby choosing a strong isomorphism from the collection $\cF$.

Consider the cost functions $c_\cX$ and $c_\cY$ defined by \eqref{eq:cost_functions} with a cost $c_{\cZ}$. As transport maps in $\cF$ are invertible, by Lemma \ref{lem:FBcondition}, 
\begin{equation*}
	\cI' := \{(F, F^{-1}) : F \in \cF\} \subset \cI(\mu, \nu).
\end{equation*}
Recall that computing the RGM distance amounts to finding $(F, B)$ such that $c_{\cX}(x, B(y))$ best matches $c_{\cY}(F(x), y)$ on average. For $(F, B) \in \cI'$, by \eqref{eq:cost_functions}, we have
\begin{align*}
	c_\cX(x, B(y)) & = c_\cZ((\nabla \psi_1)^{-1}(x), s_1 \circ s_2^{-1} \circ (\nabla \psi_2)^{-1}(y)) \;, \\
	c_\cY(F(x), y) & = c_\cZ(s_2 \circ s_1^{-1} \circ (\nabla \psi_1)^{-1}(x), (\nabla \psi_2)^{-1}(y)) \;.
\end{align*}
Therefore, $c_\cX(x, B(y)) = c_\cY(F(x), y)$ indeed holds provided $s_1 = s_2$, which amounts to a pair $(F, B) = (\nabla \psi_2 \circ (\nabla \psi_1)^{-1}, \nabla \psi_1 \circ (\nabla \psi_2)^{-1}) \in \cI'$. In other words, among infinitely many possible elements in $\cF$ characterized by $s_1$ and $s_2$, the RGM sampler with cost functions in \eqref{eq:cost_functions} favors $\nabla \psi_2 \circ (\nabla \psi_1)^{-1}$, the strong isomorphism under these cost functions as visualized in Figure \ref{fig:diagrams_cost}(b). In summary, 
\begin{itemize}
	\item we can characterize a collection $\cF$ of transport maps by pairs of bijective measure-preserving maps $(s_1, s_2)$ via Brenier's polar factorization,
	\item the RGM sampler with cost functions in \eqref{eq:cost_functions} finds the one in $\cF$ satisfying the rearrangement correspondance $s_1 = s_2$, which is exactly the strong isomorphism (inductive bias).
\end{itemize}

\section{Statistical Theory}
\label{sec:statistical-theory}
This section serves to prove Theorem \ref{thm:stat}. Without loss of generality, we assume $\lambda_1 = \lambda_2 = \lambda_3 = 1$ in $C(\mu, \nu, F, B)$ since the proof is essentially identical with any constants $\lambda_i, 1\leq i \leq 3$. For convenience, we denote
\begin{align*}
	C_0(F, B) & = \int (c_{\cX}(x, B(y)) - c_{\cY}(F(x), y))^2 \dd{\mu \otimes \nu} \;, \\
	M(F, B) & = \mathrm{MMD}_{K_{\cY}}^2(F_{\#}\mu, \nu) + \mathrm{MMD}_{K_{\cX}}^2(\mu, B_{\#} \nu) + \mathrm{MMD}^2_{K_{\cX} \otimes K_{\cY}} ((\mathrm{Id}, F)_{\#}\mu, (B, \mathrm{Id})_{\#} \nu)
\end{align*}
and therefore $C(\mu, \nu, F, B) = C_0(F, B) + M(F, B)$. Similarly, define the empirical counterparts as
\begin{align*}
	\widehat{C}_0(F, B) & = \frac{1}{m n} \sum_{i = 1}^{m} \sum_{j = 1}^{n} (c_{\cX}(x_i, B(y_j)) - c_{\cY}(F(x_i), y_j))^2 \;, \\
	\widehat{M}(F, B) & = \mathrm{MMD}_{K_{\cY}}^2(F_{\#}\widehat{\mu}_m, \widehat{\nu}_n) + \mathrm{MMD}_{K_{\cX}}^2(\widehat{\mu}_m, B_{\#} \widehat{\nu}_n) + \mathrm{MMD}^2_{K_{\cX} \otimes K_{\cY}} ((\mathrm{Id}, F)_{\#} \widehat{\mu}_m, (B, \mathrm{Id})_{\#} \widehat{\nu}_n)
\end{align*}
and thus $C(\widehat{\mu}_m, \widehat{\nu}_n, F, B) = \widehat{C}_0(F, B) + \widehat{M}(F, B)$.

Our goal is to give an upper bound on $C(\mu, \nu, \widehat{F}, \widehat{B}) - \inf_{(F, B) \in \cF \times \cB} C(\mu, \nu, F, B) $. To this end, first recall that
\begin{equation*}
	C(\widehat{\mu}_m, \widehat{\nu}_n, \widehat{F}, \widehat{B}) \le C(\widehat{\mu}_m, \widehat{\nu}_n, F, B)
\end{equation*}
holds for any $F \in \cF$ and $B \in \cB$ by definition of $\widehat{F}$ and $\widehat{B}$ given in Theorem \ref{thm:stat}. Therefore,
\begin{equation*}
	C(\mu, \nu, \widehat{F}, \widehat{B}) - C(\mu, \nu, F, B)
	\le C(\mu, \nu, \widehat{F}, \widehat{B}) - C(\widehat{\mu}_m, \widehat{\nu}_n, \widehat{F}, \widehat{B}) + C(\widehat{\mu}_m, \widehat{\nu}_n, F, B) - C(\mu, \nu, F, B)\;.
\end{equation*}
The RHS can be decomposed as
\begin{equation*}
	C_0(\widehat{F}, \widehat{B}) - \widehat{C}_0(\widehat{F}, \widehat{B}) + M(\widehat{F}, \widehat{B}) - \widehat{M}(\widehat{F}, \widehat{B}) + \widehat{C}_0(F, B) - C_0(F, B) + \widehat{M}(F, B) - M(F, B)\;.
\end{equation*}
To further control the expression, we will first derive probabilistic bounds on $|\widehat{C}_0(F, B) - C_0(F, B)|$ and $|\widehat{M}(F, B) - M(F, B)|$ that hold for a fixed $(F, B) \in \cF \times \cB$ via standard concentration inequalities. Later, we will establish uniform probabilistic bounds on $\sup_{(F,B) \in \cF \times \cB} |\widehat{C}_0(F, B) - C_0(F, B)|$ and $\sup_{(F,B) \in \cF \times \cB} |\widehat{M}(F, B) - M(F, B)|$, using tools from empirical process theory.

\subsection{Concentration Inequalities}
We utilize the McDiarmid's inequality to derive bounds on $|\widehat{C}_0(F, B) - C_0(F, B)|$ and $|\widehat{M}(F, B) - M(F, B)|$. To give a bound on the former, we make the following boundedness assumption. 

\begin{assumption}
	\label{a:bounded1}
	$c_{\cX}(\cdot, \cdot), c_{\cY}(\cdot, \cdot)$ is uniformly bounded, that is, there exists a constant $H > 0$ such that
	\begin{equation*}
		\sup_{(x, x') \in \cX \times \cX} c_{\cX}(x, x'), \sup_{(y, y') \in \cY \times \cY} c_{\cY}(y, y') \le \sqrt{\frac{H}{4}} \;.
	\end{equation*}
\end{assumption}

\begin{proposition}
	\label{prop:4}
	Under Assumption \ref{a:bounded1}, for any pair $(F, B) \in \cF \times \cB$ and $\delta > 0$, 
	\begin{equation*}
		|\widehat{C}_0(F, B) - C_0(F, B)|
		\precsim 
		\sqrt{\frac{\log(\tfrac{m \vee n}{\delta})}{m \wedge n}}
	\end{equation*}
	holds with probability at least $1 - 4 \delta$.
\end{proposition}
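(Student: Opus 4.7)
The plan is to recognize $\widehat{C}_0(F,B)$ as a two-sample $V$-statistic in the bounded kernel $g(x,y) := (c_{\cX}(x, B(y)) - c_{\cY}(F(x), y))^2$. By Assumption~\ref{a:bounded1} and the triangle inequality, $|g(x,y)| \le (|c_{\cX}(x, B(y))| + |c_{\cY}(F(x), y)|)^2 \le (2\sqrt{H/4})^2 = H$ everywhere, and Fubini's theorem gives $\E[\widehat{C}_0(F,B)] = C_0(F,B)$. So the proposition reduces to concentration of $\widehat{C}_0(F,B)$ around its mean, for which standard bounded-differences machinery should suffice.

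The main step I would take is McDiarmid's inequality applied to the function $(x_1, \dots, x_m, y_1, \dots, y_n) \mapsto \widehat{C}_0(F,B)$ of independent random variables. Replacing a single $x_i$ perturbs only the $i$th row of the double sum, whose total contribution is $\frac{1}{mn}\sum_{j}g(x_i, y_j) \in [0, H/m]$; similarly each $y_j$ enters with bounded-differences constant $H/n$. McDiarmid's inequality then yields
\begin{align*}
\Pr\bigl(|\widehat{C}_0(F,B) - C_0(F,B)| \ge t\bigr) \le 2 \exp\left( -\frac{2 m n t^2}{H^2(m+n)}\right),
\end{align*}
and inverting this using $(m+n)/(mn) \le 2/(m \wedge n)$ produces the desired $\sqrt{\log(1/\delta)/(m \wedge n)}$ rate up to an absolute constant depending on $H$.

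As an alternative that more transparently matches the stated $1 - 4\delta$ failure probability, I would tower-condition and decompose $\widehat{C}_0 - C_0 = A + B$ with $A = \widehat{C}_0 - \frac{1}{n}\sum_{j} \E_{x \sim \mu}[g(x, y_j)]$ and $B = \frac{1}{n}\sum_{j} \E_{x \sim \mu}[g(x, y_j)] - C_0$. Conditional on $\{y_j\}$, the summands of $A$ are i.i.d.\ and bounded by $H$, so a conditional Hoeffding bound applies; $B$ is a sum of i.i.d.\ bounded variables marginally, so Hoeffding applies directly. A two-sided tail bound on each piece followed by a union bound over the four resulting tail events yields the factor of $4\delta$, with the $\log(m\vee n)$ in the stated rate absorbed as a harmless looseness compatible with the supremum bounds to follow.

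The argument is essentially routine and I do not anticipate a serious obstacle. The only point requiring care is the computation of the bounded-differences constants---they scale like $H/m$ and $H/n$ rather than $H$, because each row and column of the double sum is pre-normalized by $(mn)^{-1}$---which is exactly what produces the $m \wedge n$ scaling in the denominator instead of a naive $(m+n)$ or $\sqrt{mn}$. Everything else is a direct application of standard concentration inequalities.
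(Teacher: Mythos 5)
Your main argument --- applying McDiarmid directly to $\widehat{C}_0(F,B)$ as a function of all $m+n$ independent inputs with bounded-differences constants $H/m$ (for each $x_i$) and $H/n$ (for each $y_j$) --- is correct and is in fact a genuinely different route from the paper's. The paper instead decomposes $\widehat{C}_0 - C_0$ into an ``inner'' piece (conditional fluctuation over $\{y_j\}$ for each fixed $x_i$) and an ``outer'' piece (fluctuation of $\frac{1}{m}\sum_i \E_y h(x_i,y)$ over $\{x_i\}$), applies McDiarmid to each inner sum separately and then union-bounds over the $m$ indices $i$, which is where the $\log(m\vee n)$ inside the square root comes from. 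Your direct bivariate McDiarmid sidesteps that union bound entirely: with $\sum c^2 = H^2/m + H^2/n \le 2H^2/(m\wedge n)$, you get $|\widehat{C}_0 - C_0| \precsim \sqrt{\log(1/\delta)/(m\wedge n)}$ with probability $1-\delta$, which is strictly sharper than the stated conclusion in both the rate (no $\log(m\vee n)$) and the failure probability (no factor $4$), and therefore implies it. Your ``alternative'' decomposition is closer to the paper's in structure, but differs in the key respect that you apply Hoeffding to the conditional \emph{average} $A = \frac{1}{n}\sum_j\bigl[\frac{1}{m}\sum_i g(x_i,y_j) - \E_x g(x,y_j)\bigr]$ (viewed as an average of bounded i.i.d.\ terms given $\{y_j\}$) rather than bounding each of $m$ or $n$ terms separately and union-bounding; that again avoids the $\log(m\vee n)$ that the paper pays. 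The only thing I would tighten is the claim that $|g|\le H$: this uses $|c_\cX|,|c_\cY| \le \sqrt{H/4}$, whereas Assumption~\ref{a:bounded1} as written only controls the supremum of $c_\cX, c_\cY$ without an absolute value --- the paper's own proof implicitly makes the same reading, so this is an inherited (and harmless) imprecision rather than a gap in your argument.
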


To derive a similar bound on $|\widehat{M}(F, B) - M(F, B)|$, we assume that kernels are bounded.

\begin{assumption}
	\label{a:bounded_kernels}
	There exists $K > 0$ such that 
	\begin{equation*}
		\sup_{x \in \cX} |K_{\cX}(x, x)| \;, \; \sup_{y \in \cY} |K_{\cY}(y, y)| \le K \;.
	\end{equation*}
\end{assumption}

\begin{proposition}
	\label{prop:mmd_fixed}
	Under Assumption \ref{a:bounded_kernels}, for any pair $(F, B) \in \cF \times \cB$ and $\delta > 0$, 
	\begin{equation*}
		|\widehat{M}(F, B) - M(F, B)|
		\precsim \sqrt{\frac{\log(1 / \delta)}{m}} + \sqrt{\frac{\log(1 / \delta)}{n}}
	\end{equation*}
	holds with probability at least $1 - 6 \delta$.
\end{proposition}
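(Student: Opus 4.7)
The plan is to apply the triangle inequality to split $|\widehat{M}(F, B) - M(F, B)|$ into three squared-MMD differences, one per summand of $M$, and then control each by a bounded-differences (McDiarmid) argument. Fix any one of the three components, say the tensor-product term $\mathrm{MMD}^2_{K_\cX \otimes K_\cY}((\mathrm{Id}, F)_\# \mu, (B, \mathrm{Id})_\# \nu)$, and view its empirical counterpart as a function $\Phi(x_1, \dots, x_m, y_1, \dots, y_n)$ of the $m + n$ independent samples. Expanding the squared MMD as a V-statistic, $\Phi$ decomposes into three pieces of the form $\frac{1}{m^2}\sum_{i, i'}(\cdot)$, $\frac{1}{n^2}\sum_{j, j'}(\cdot)$, and $\frac{2}{m n}\sum_{i, j}(\cdot)$, with integrands bounded by $K^2$ under Assumption~\ref{a:bounded_kernels} (since $K_\cX \otimes K_\cY$ is bounded by $K^2$). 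A routine computation then shows that replacing any one $x_i$ (resp.\ $y_j$) changes $\Phi$ by at most $cK^2/m$ (resp.\ $cK^2/n$) for a universal constant $c$; the analogous statement holds for the two single-space MMD summands with bound $K$ instead of $K^2$.

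With these bounded-differences constants in hand, McDiarmid's inequality yields
\begin{equation*}
    \Pr\bigl(|\Phi - \mathbb{E}\Phi| > t\bigr) \le 2\exp\Bigl(-\frac{t^2}{c' K^4 (1/m + 1/n)}\Bigr),
\end{equation*}
so choosing $t$ to make the right-hand side equal to $2\delta$ gives a deviation of order $\sqrt{\log(1/\delta)/m} + \sqrt{\log(1/\delta)/n}$ with probability at least $1 - 2\delta$. I would then bridge the gap between $\mathbb{E}\Phi$ and the population squared MMD: because each V-statistic contains diagonal self-interaction terms, $\mathbb{E}\Phi$ differs from the true $\mathrm{MMD}^2(\cdot, \cdot)$ by a deterministic bias of order $K^2/m + K^2/n$, which is dominated by the concentration bound above and hence absorbed into the final rate. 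A union bound over the three MMD summands (each costing probability $2\delta$) yields the stated $1 - 6\delta$ probability.

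The main (modest) obstacle is the bookkeeping of the bounded-differences constant for the tensor-product term, where $F$ and $B$ appear coupled inside the $\mu$-$\mu$, $\nu$-$\nu$, and cross summands simultaneously; one must verify carefully that the uniform boundedness of $K_\cX \otimes K_\cY$ by $K^2$, together with the fact that all three pieces contain each $x_i$ (or $y_j$) in at most $O(m)$ (or $O(n)$) of the summed terms, really does yield $O(K^2/m)$ and $O(K^2/n)$ increments with no hidden dependence on $F$ or $B$. Beyond this bookkeeping, the argument is a standard application of McDiarmid's inequality paired with the triangle inequality and a V-statistic bias calculation.
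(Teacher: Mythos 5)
Your proposal is correct, but it takes a genuinely different route from the paper's. Both proofs begin identically, splitting $|\widehat{M} - M|$ via the triangle inequality into the three squared-MMD differences. From there the paper argues at the level of (unsquared) MMD: it uses boundedness of the kernel to write $|\mathrm{MMD}^2 - \mathrm{MMD}^2| \le 4\sqrt{K}\,|\mathrm{MMD} - \mathrm{MMD}|$, then applies the triangle inequality of the MMD metric itself to split, e.g., $|\mathrm{MMD}_{K_\cY}(F_\#\widehat\mu_m, \widehat\nu_n) - \mathrm{MMD}_{K_\cY}(F_\#\mu, \nu)| \le \mathrm{MMD}_{K_\cY}(F_\#\widehat\mu_m, F_\#\mu) + \mathrm{MMD}_{K_\cY}(\widehat\nu_n, \nu)$, and finally invokes the known one-sample concentration bound for MMD (Theorem 3.4 of \cite{muandet_fukumizu_sriperumbudur_scholkopf_2017}), once per one-sample term, for $6$ invocations total and hence probability $1-6\delta$. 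You instead work directly with the V-statistic representation of each empirical $\mathrm{MMD}^2$ and apply McDiarmid to it as a whole, then account separately for the deterministic $O(1/m + 1/n)$ bias between the V-statistic's expectation and the population $\mathrm{MMD}^2$. Your approach is more self-contained (it does not route through an external lemma), at the cost of the bias-term bookkeeping; the paper's approach avoids the bias issue entirely because it reduces to two-sided concentration of one-sample MMDs around their zero target, but it needs the MMD triangle inequality and an imported result. Note also that the "$6\delta$" in the two arguments arises differently: the paper pays $\delta$ six times (two one-sample events per squared-MMD term), whereas you pay $2\delta$ three times (two-sided McDiarmid per term). Both yield the stated rate, and your observation that the $O(1/m + 1/n)$ bias is dominated by the $\sqrt{\log(1/\delta)/m} + \sqrt{\log(1/\delta)/n}$ concentration term is sound for the relevant regime $\delta < 1/e$.
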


\subsection{Uniform Deviations}
We now derive uniform deviation bounds for 
\begin{equation*}
	\sup_{(F,B) \in \cF \times \cB} |\widehat{C}_0(F, B) - C_0(F, B)| \;, \; \sup_{(F,B) \in \cF \times \cB} |\widehat{M}(F, B) - M(F, B)| \;.
\end{equation*}
For the former, we use the notion of uniform covering numbers defined below. 

\begin{definition}[Uniform Covering Number]
	Let $\cG$ be a collection of real-valued functions defined on a set $\cZ$. Given $m$ points $z_1, \ldots, z_m \in \cZ$ and any $\delta > 0$, we define $N_\infty(\delta, \cG, \{z_i\}_{i = 1}^{m})$ to be the $\delta$-covering number of $\cG$ under the pseudometric $d$ induced by points $z_1, \ldots, z_m$: 
	\begin{equation*}
		d(g, g') \coloneqq \max_{i \in [m]} |g(z_i) - g'(z_i)| \;.
	\end{equation*}
	Also, we define the uniform $\delta$-covering number of $\cG$ as follows:
	\begin{equation*}
		N_\infty(\delta, \cG, m) \coloneqq \sup\left\{N_\infty(\delta, \cG, \{z_i\}_{i = 1}^{m}) : z_1, \ldots, z_m \in \cZ\right\} \;.
	\end{equation*}
	Here, the supremum is taken over all possible combinations of $m$ points in $\cZ$.
\end{definition}

Also, we make the following assumptions.
\begin{assumption}
	\label{a:uniform_boundedness}
	$\cF_k$ and $\cB_\ell$ (see Section \ref{subsec:stat_rate}) consist of uniformly bounded functions, that is, there exists a constant $b > 0$ such that 
	\begin{equation*}
		\max_{k \in [\mathrm{dim}(\cY)]} \sup_{F_k \in \cF_k} \|F_k\|_\infty \;, \; \max_{\ell \in [\mathrm{dim}(\cX)]} \sup_{B_\ell \in \cB_\ell} \|B_\ell\|_\infty \le b \;.
	\end{equation*}
\end{assumption}

\begin{assumption}
	\label{a:lip_of_C}
	There exists a constant $L > 0$ such that 
	\begin{equation*}
		|c_{\cX}(x, x_1) - c_{\cX}(x, x_2)| \le L \|x_1 - x_2\| \;, \; |c_{\cY}(y_1, y) - c_{\cY}(y_2, y)| \le L \|y_1 - y_2\| \;.
	\end{equation*}
\end{assumption}
This Lipschitzness assumption ensures the smoothness of a map $(F, B) \mapsto |\widehat{C}_0(F, B) - C_0(F, B)|$ over $\cF \times \cB$, which allows us to utilize the uniform covering numbers.

\begin{proposition}
	\label{prop:union-bound-GW-term}
	Under Assumptions \ref{a:bounded1}, \ref{a:uniform_boundedness}, \ref{a:lip_of_C}, for any $\epsilon > 0$ and $\delta > 0$, 
	\begin{equation*}
		\begin{split}
			& \sup_{(F, B) \in \cF \times \cB} |\widehat{C}_0(F, B) - C_0(F, B)| \\
			\precsim
			& \sqrt{\frac{\log(\tfrac{m \vee n}{\delta})}{m \wedge n}} + \epsilon + \sqrt{\frac{\sum_{k = 1}^{\mathrm{dim}(\cY)} \log N_{\infty}(\epsilon, \cF_k, m) + \sum_{\ell = 1}^{\mathrm{dim}(\cX)} \log N_{\infty}(\epsilon, \cB_\ell, n)}{m \wedge n}}
		\end{split}
	\end{equation*}
	holds with probability at least $1 - 2 \delta$.
\end{proposition}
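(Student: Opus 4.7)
The overall plan is a classical three-term decomposition: I would write
\begin{align*}
\sup_{(F,B)} |\widehat{C}_0(F,B) - C_0(F,B)|
&\le \underbrace{\sup_{(F,B)} |\widehat{C}_0(F,B) - \widehat{C}_0(\tilde{F},\tilde{B})|}_{\text{empirical discretization}}
+ \underbrace{\max_{\text{net}} |\widehat{C}_0(\tilde{F},\tilde{B}) - C_0(\tilde{F},\tilde{B})|}_{\text{net concentration}} \\
&\quad + \underbrace{\sup_{(F,B)} |C_0(\tilde{F},\tilde{B}) - C_0(F,B)|}_{\text{population discretization}}
\end{align*}
where $(\tilde{F},\tilde{B})$ is the nearest element of a finite $\epsilon$-net in $\cF \times \cB$. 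The net itself is built coordinate-wise: since each $F \in \cF$ is a vector of component functions in $\cF_1,\dots,\cF_{\mathrm{dim}(\cY)}$ (and similarly for $B$), an $\epsilon$-cover of $\cF$ in the sample-$L^\infty$ pseudometric at the points $\{x_i\}$ can be obtained as a product of $\epsilon$-covers of the $\cF_k$'s, giving cardinality at most $\prod_{k} N_\infty(\epsilon,\cF_k,m)$, and analogously $\prod_\ell N_\infty(\epsilon,\cB_\ell,n)$ for $\cB$.

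For the empirical discretization term, Assumption~\ref{a:bounded1} bounds the inner difference $c_\cX(x_i,B(y_j)) - c_\cY(F(x_i),y_j)$ uniformly by $\sqrt{H}$, so the identity $|u^2 - v^2| \le (|u|+|v|)|u-v|$ converts squared differences into absolute differences; then Assumption~\ref{a:lip_of_C} yields
\[
|\widehat{C}_0(F,B) - \widehat{C}_0(\tilde{F},\tilde{B})| \lesssim \sqrt{H}\,L\,(\sqrt{\dim(\cX)} + \sqrt{\dim(\cY)})\,\epsilon,
\]
which gives the $\epsilon$ contribution to the final bound. For the net concentration term, I apply Proposition~\ref{prop:4} at each of the at most $N := \prod_k N_\infty(\epsilon,\cF_k,m) \cdot \prod_\ell N_\infty(\epsilon,\cB_\ell,n)$ net points with failure probability $\delta' = \delta/N$ and take a union bound; by the form of the bound in Proposition~\ref{prop:4}, this replaces $\log(1/\delta)$ by $\log(N/\delta) = \log(1/\delta) + \sum_k \log N_\infty(\epsilon,\cF_k,m) + \sum_\ell \log N_\infty(\epsilon,\cB_\ell,n)$ inside the square root, producing exactly the last term of the stated bound along with the leading concentration term.

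The main obstacle is the third (population) discretization term: the net is tailored to the observed samples $\{x_i\},\{y_j\}$ via the sample-$L^\infty$ pseudometric, but to control $|C_0(F,B) - C_0(\tilde F,\tilde B)|$ I need closeness of $F$ and $\tilde F$ in $L^1(\mu)$ (and similarly for $B,\nu$), not on the sample. I would resolve this by the standard ghost-sample device of empirical process theory: introduce independent copies $\{x_i'\}_{i=1}^m$, $\{y_j'\}_{j=1}^n$ of the samples, apply the uniform covering number at the enlarged points $\{x_i\} \cup \{x_i'\}$ (which is again of size $m$ in the worst case by the sup-over-samples definition, up to a doubling of $m$ that is absorbed into constants), and invoke symmetrization to replace $C_0$ by its average over ghost samples. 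This way the Lipschitz bound from Assumption~\ref{a:lip_of_C} can be applied uniformly to both the empirical and the ghost-sample evaluations, producing the same $O(\epsilon)$ discretization control on the population side. Combining the three pieces and collecting constants gives the claimed bound; choosing $\epsilon$ optimally (in the proof of Theorem~\ref{thm:stat}) later trades off the $\epsilon$ and $\sqrt{\log N_\infty/(m \wedge n)}$ terms.
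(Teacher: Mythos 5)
You have correctly diagnosed the central obstacle of your own plan: the $\epsilon$-net is built in the sample $L^\infty$ pseudometric, so closeness of $(F,B)$ and $(\tilde F,\tilde B)$ at the observed points $\{x_i\},\{y_j\}$ does \emph{not} control the population discretization term $|C_0(\tilde F,\tilde B)-C_0(F,B)|$. The ghost-sample remedy you sketch is in the right spirit, but it does not cleanly repair the three-term decomposition as written: once you ``invoke symmetrization to replace $C_0$ by its average over ghost samples'' and cover at the pooled points, you have effectively abandoned the decomposition and started doing what the paper does, which is to \emph{symmetrize first, then discretize}. That ordering is the crux, and your proposal has it backwards.

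The paper's proof never encounters a population discretization term. It first splits the two-sample (U-statistic) object by conditioning on the $x_i$'s, reducing the deviation to two single-index empirical processes; each is then symmetrized via Lemma~\ref{lem:bound} (Wainwright Theorem 4.10), which gives both the $\sqrt{\log(1/\delta)/m}$ concentration piece and a Rademacher complexity, and the ghost-sample trick (reusing $y_1,\dots,y_m$ to absorb the inner $\E_y$) is deployed \emph{inside} the Rademacher expectation, where it is a one-line Jensen step. After symmetrization, the process $X_{F,B}=\left|\tfrac{1}{m}\sum_i\epsilon_i h_{F,B}(x_i,y_i)\right|$ is a conditionally sub-Gaussian process evaluated only at sample points, so a sample-$L^\infty$ net of $\cF_k\times\cB_\ell$ controls it directly via the maximal inequality; no separate population term arises. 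Your ``net concentration'' and ``empirical discretization'' pieces are fine (the union bound over $N$ net points applied to Proposition~\ref{prop:4} and the Lipschitz-plus-boundedness calculation $|u^2-v^2|\le(|u|+|v|)|u-v|$ are exactly the computations the paper performs), but the third piece cannot be closed without reorganizing the proof into the symmetrize-then-discretize form, at which point the three-term decomposition is no longer the right skeleton. Also note that your treatment elides the double-sum $\sum_{i,j}$ structure of $\widehat{C}_0$, which Proposition~\ref{prop:4} and the paper's two-step conditional decomposition handle explicitly; bundling this into a single ghost-sample symmetrization step needs more care than you give it.
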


Now, the remaining task is to choose $\epsilon$ carefully in Proposition \ref{prop:union-bound-GW-term} for a concrete upper bound. To this end, we utilize the pseudo-dimension defined below.

\begin{definition}[Pseudo-Dimension]
	\label{def:pdim}
	Let $\cG$ be a collection of real-valued functions defined on a set $\cZ$. Given a subset $S\coloneqq\{z_1, \ldots, z_m\} \subset \cZ$, we say $S$ is pseudo-shattered by $\cG$ if there are $r_1, \ldots, r_m \in \R$ such that for each $b \in \{0, 1\}^m$ we can find $g_b \in \cG$ satisfying $\mathrm{sign}(g_b(z_i) - r_i) = b_i$ for all $i \in [m]$. We define the pseudo-dimension of $\cG$, denoted as $\mathrm{Pdim}(\cG)$, as the maximum cardinality of a subset $S \subset \cZ$ that is pseudo-shattered by $\cG$.
\end{definition}

Using a well-established relation of the uniform covering number and the pseudo-dimension (Lemma \ref{lem:uniform_covering_to_pdim}), we can simplify Proposition \ref{prop:union-bound-GW-term} as follows.

\begin{corollary}
	\label{cor:union-bound-GW-term-pdim}
	Under Assumptions \ref{a:bounded1}, \ref{a:uniform_boundedness}, \ref{a:lip_of_C}, for any $\delta > 0$, 
	\begin{equation*}
		\begin{split}
			& \sup_{(F, B) \in \cF \times \cB} |\widehat{C}_0(F, B) - C_0(F, B)| \\
			\precsim
			& \sqrt{\frac{\log(\tfrac{m \vee n}{\delta})}{m \wedge n}} + \sqrt{\frac{\log(m \vee n)}{m\wedge n} \left( \sum_{k =1}^{{\rm dim}(\cY)} {\rm Pdim}(\cF_{k}) + \sum_{\ell = 1}^{{\rm dim}(\cX)} {\rm Pdim}(\cB_{\ell}) \right)}
		\end{split}
	\end{equation*}
	holds with probability at least $1 - 2 \delta$.
\end{corollary}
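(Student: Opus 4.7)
The plan is to deduce the corollary directly from Proposition \ref{prop:union-bound-GW-term} by (i) converting each uniform covering number into a pseudo-dimension term via the referenced Lemma \ref{lem:uniform_covering_to_pdim}, and (ii) choosing $\epsilon$ to balance the resulting expression. Since Assumption \ref{a:uniform_boundedness} guarantees that every $\cF_k$ and $\cB_\ell$ is a class of uniformly bounded real-valued functions, the classical Haussler-type bound (which is precisely what Lemma \ref{lem:uniform_covering_to_pdim} is expected to state) gives
\begin{equation*}
\log N_\infty(\epsilon, \cF_k, m) \precsim \mathrm{Pdim}(\cF_k)\,\log(m/\epsilon), \qquad \log N_\infty(\epsilon, \cB_\ell, n) \precsim \mathrm{Pdim}(\cB_\ell)\,\log(n/\epsilon),
\end{equation*}
up to constants depending only on the uniform bound $b$.

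Next I would plug these into the inequality of Proposition \ref{prop:union-bound-GW-term}. Writing $D \coloneqq \sum_{k=1}^{\mathrm{dim}(\cY)} \mathrm{Pdim}(\cF_k) + \sum_{\ell=1}^{\mathrm{dim}(\cX)} \mathrm{Pdim}(\cB_\ell)$, the uniform deviation is bounded with probability at least $1-2\delta$ by
\begin{equation*}
\sqrt{\frac{\log(\tfrac{m\vee n}{\delta})}{m\wedge n}} + \epsilon + \sqrt{\frac{D\,\log((m\vee n)/\epsilon)}{m \wedge n}}.
\end{equation*}
Here I used $\log(m/\epsilon)\vee\log(n/\epsilon) \le \log((m\vee n)/\epsilon)$ to collapse the two logarithmic factors into a single one.

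Finally, I choose $\epsilon = 1/(m \wedge n)$ so that the middle term $\epsilon$ is dominated by $\sqrt{1/(m\wedge n)}$, which in turn is absorbed into the first term of the bound. With this choice, $\log((m\vee n)/\epsilon) = \log(m\vee n) + \log(m\wedge n) \le 2\log(m\vee n)$, so the third term becomes $\sqrt{D\,\log(m\vee n)/(m\wedge n)}$ up to a constant. Combining yields exactly the stated bound. The main (mild) obstacle is merely bookkeeping: verifying that $\epsilon$ can be taken to depend on $m,n$ (not on the specific covering), which is immediate because Proposition \ref{prop:union-bound-GW-term} holds for every fixed $\epsilon>0$; there is no additional probabilistic union bound to pay.
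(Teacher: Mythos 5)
Your proposal is correct and follows the same route as the paper: apply Lemma~\ref{lem:uniform_covering_to_pdim} together with Assumption~\ref{a:uniform_boundedness} to convert the uniform covering numbers in Proposition~\ref{prop:union-bound-GW-term} into pseudo-dimension terms, collapse the two logarithms into $\log(m\vee n)$, and choose $\epsilon$ to vanish at a polynomial rate. The only cosmetic difference is your choice $\epsilon = 1/(m\wedge n)$ versus the paper's $\epsilon = (m\wedge n)^{-1/2}$; both are absorbed into the first term and yield the same $\log(m\vee n)$ factor up to constants.
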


To derive an upper bound on $\sup_{(F,B) \in \cF \times \cB} |\widehat{M}(F, B) - M(F, B)|$, we first introduce Rademacher complexities defined below.

\begin{definition}[Rademacher Complexity]
	Let $(\cZ, \rho)$ be a probability space and $\cG$ be a collection of measurable functions defined on $\cZ$. We define the Rademacher complexity of $\cG$ with respect to $m$ samples from $\rho$ as follows:
		\begin{equation*}
			R_m(\cG, \rho) =  \E_{z_i \overset{\mathrm{iid}}{\sim} \rho} \E_{\epsilon_i} \sup_{g \in \cG} \left|\frac{1}{m} \sum_{i=1}^{m} \epsilon_i g(z_i)\right|\;,
		\end{equation*}
		Here, $z_1, \ldots, z_m$ are i.i.d.\ samples from $\rho$ and $\epsilon_1, \ldots, \epsilon_m$ are i.i.d.\ Rademacher random variables such that $(z_1, \ldots, z_m)$ and $(\epsilon_1, \ldots, \epsilon_m)$ are independent.
\end{definition}

\begin{proposition}
	\label{prop:uniform-deviation-mmd-term}
	Denote a closed unit ball of any RKHS $\cH$ as $\cH(1)$. Also, let $(\mathrm{Id}, \cF) \coloneqq \{(\mathrm{Id}, F) : F \in \cF\}$ and $(\cB, \mathrm{Id}) \coloneqq \{(B, \mathrm{Id}) : B \in \cB\}$; hence, they are classes of maps from $\cX$ to $\cX \times \cY$ and from $\cY$ to $\cX \times \cY$, respectively. Under Assumption \ref{a:bounded_kernels}, for any $\delta > 0$,
	\begin{equation*}
		\begin{split}
			\sup_{(F,B) \in \cF \times \cB} |\widehat{M}(F, B) - M(F, B)|
			& \precsim \sqrt{\frac{\log(1 / \delta)}{m}} + \sqrt{\frac{\log(1 / \delta)}{n}} + R_m(\cH_{\cY}(1) \circ \cF, \mu) + R_n(\cH_{\cX}(1) \circ \cB, \nu) \\
			& \quad + R_m(\cH_{\cX \times \cY}(1) \circ (\mathrm{Id}, \cF), \mu) + R_n(\cH_{\cX \times \cY}(1) \circ (\cB, \mathrm{Id}), \nu)
		\end{split}
	\end{equation*}
	holds with probability at least $1 - 6 \delta$. Here, $\cF \circ \cG = \{f \circ g : f \in \cF, g \in \cG\}$ for any function classes $\cF$ and $\cG$ with matching input and output space.
\end{proposition}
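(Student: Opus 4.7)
The plan is to handle each of the three squared-MMD terms in $M(F,B)$ separately, reduce from $\mathrm{MMD}^2$ to $\mathrm{MMD}$ via boundedness, and then uniformly control each MMD deviation using symmetrization and a bounded-differences concentration inequality; a union bound across the six resulting tail events produces the final statement.

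First, under Assumption~\ref{a:bounded_kernels} we have $\sup_z K(z,z) \le K$ for each of the three kernels $K_\cX$, $K_\cY$, and $K_\cX \otimes K_\cY$, which implies $\mathrm{MMD}_K(P,Q) \le 2\sqrt K$ and hence $|a^2 - b^2| \le 4\sqrt K\, |a-b|$ whenever $a,b$ are MMDs with respect to such a kernel. Therefore, up to a universal constant depending only on $K$, it suffices to bound the uniform deviation of the (unsquared) MMDs. Next, invoking the dual representation $\mathrm{MMD}_K(P,Q) = \sup_{f \in \cH(1)} (\int f\,\dd{P} - \int f\,\dd{Q})$ together with the elementary inequality $|\sup_f A_f - \sup_f B_f| \le \sup_f |A_f - B_f|$ gives, for example for the $K_\cY$-term,
\[
\bigl|\mathrm{MMD}_{K_\cY}(F_\# \widehat\mu_m, \widehat\nu_n) - \mathrm{MMD}_{K_\cY}(F_\# \mu, \nu)\bigr|
\le \sup_{f \in \cH_\cY(1)} \bigl|(\widehat\mu_m - \mu)(f\circ F) - (\widehat\nu_n - \nu)(f)\bigr|.
\]
Taking the further supremum over $F \in \cF$ and applying the triangle inequality splits this into $\sup_{g \in \cH_\cY(1)\circ\cF} |(\widehat\mu_m - \mu)(g)|$ and $\sup_{f \in \cH_\cY(1)} |(\widehat\nu_n - \nu)(f)|$.

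Second, each such empirical-process supremum is controlled by standard tools. Under Assumption~\ref{a:bounded_kernels}, all functions appearing in $\cH_\cY(1)\circ\cF$, $\cH_\cX(1)\circ\cB$, $\cH_{\cX\times\cY}(1)\circ(\mathrm{Id},\cF)$, and $\cH_{\cX\times\cY}(1)\circ(\cB,\mathrm{Id})$ are uniformly bounded by $\sqrt K$, so McDiarmid's inequality gives concentration of each supremum around its expectation at the rate $\sqrt{\log(1/\delta)/m}$ or $\sqrt{\log(1/\delta)/n}$. Then a standard symmetrization argument bounds the expected supremum by twice the corresponding Rademacher complexity, producing the four Rademacher terms on the right-hand side. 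The two ``plain'' RKHS-unit-ball terms $\sup_f|(\widehat\mu_m-\mu)(f)|$ and $\sup_f|(\widehat\nu_n-\nu)(f)|$ that arise from the $K_\cX$- and $K_\cY$-pieces have classical Rademacher complexity of order $\sqrt{K/m}$ and $\sqrt{K/n}$, which is absorbed into the concentration terms.

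Third, apply the same routine to the $K_\cX$-term (giving $R_n(\cH_\cX(1)\circ\cB,\nu)$) and to the tensor-product term, where $(\mathrm{Id},F)_\#\mu$ against $(B,\mathrm{Id})_\#\nu$ produces two asymmetric empirical-process pieces: one over the function class $\cH_{\cX\times\cY}(1)\circ(\mathrm{Id},\cF)$ evaluated at the $\mu$-sample, and another over $\cH_{\cX\times\cY}(1)\circ(\cB,\mathrm{Id})$ evaluated at the $\nu$-sample. A union bound over the six tail events (two per MMD term) then yields the stated inequality with total failure probability $6\delta$.

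The main subtlety is bookkeeping in the tensor-product term: because $\mu$ is pushed through $(\mathrm{Id},F)$ while $\nu$ is pushed through $(B,\mathrm{Id})$, the two empirical-process pieces live on different sample spaces and are indexed by different classes ($\cF$ versus $\cB$); each must be symmetrized separately with its own Rademacher complexity. Beyond this and the $\mathrm{MMD}^2 \to \mathrm{MMD}$ reduction, the argument is a careful aggregation of standard ingredients (sup-of-sup decomposition, bounded-differences concentration, and symmetrization), with no fundamentally new obstacle.
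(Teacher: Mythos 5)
Your proposal is correct and follows essentially the same route as the paper: reduce each $\mathrm{MMD}^2$ difference to the corresponding $\mathrm{MMD}$ difference via the $2\sqrt{K}$ bound, split each into a $\mu$-sample piece (indexed by the composite class $\cH(1)\circ\cF$ or $\cH(1)\circ(\mathrm{Id},\cF)$, etc.) plus a $\nu$-sample piece, and control each by McDiarmid plus symmetrization (the paper packages this as Lemma~\ref{lem:bound}, citing Theorem~4.10 of \cite{wainwright_2019}) and a direct MMD concentration bound. The only cosmetic difference is that the paper reaches the split $\mathrm{MMD}(F_\#\widehat{\mu}_m, F_\#\mu) + \mathrm{MMD}(\widehat{\nu}_n,\nu)$ via the two-sided metric triangle inequality while you go through the dual ``sup-of-sup'' argument, but these are the same bound; everything else, including the asymmetric handling of the tensor-product term and the $6\delta$ union-bound accounting, matches.
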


Now, the only remaining task is to bound four Rademacher complexities. We will derive upper bounds using the chaining technique. To illustrate the main idea, let us consider $\cH_{\cY}(1) \circ \cF$. Recall that
\begin{equation*}
	R_m(\cH_{\cY}(1) \circ \cF, \mu) = \E_{x_i \overset{\mathrm{iid}}{\sim} \mu} R_m(\cH_{\cY}(1) \circ \cF, \{x_i\}_{i=1}^{m})\;,
\end{equation*}
where $R_m(\cH_{\cY}(1) \circ \cF, \{x_i\}_{i=1}^{m})$ is the empirical Rademacher complexity of $\cH_{\cY}(1) \circ F$ associated with $\{x_i\}_{i=1}^{m}$:
\begin{equation*}
	R_m(\cH_{\cY}(1) \circ \cF, \{x_i\}_{i=1}^{m}) = \E_{\epsilon_i} \sup_{h \in \cH_{\cY}(1), F \in \cF} \left|\frac{1}{m} \sum_{i=1}^{m} \epsilon_i h(F(x_i))\right| = \E_{\epsilon_i} \sup_{h \in \cH_{\cY}(1), F \in \cF} \frac{1}{m} \sum_{i=1}^{m} \epsilon_i h(F(x_i))\;.
\end{equation*}
Notice that we may remove the absolute value since $\cH_{\cY}(1) = - \cH_{\cY}(1)$. Now, considering $\{x_i\}_{i=1}^{m}$ as fixed, we will first bound the empirical Rademacher complexity by replacing the Rademacher random variables with Gaussian random variables. Let $g_i$ be i.i.d.\ standard Gaussian random variables, then it is well known that
\begin{equation*}
	R_m(\cH_{\cY}(1)\circ \cF, \{x_i\}_{i=1}^m)
	\le \sqrt{\frac{\pi}{2}} \E_{g_i} \sup_{h \in \cH_{\cY}(1), F \in \cF} \frac{1}{m} \sum_{i=1}^m  g_i h(F(x_i)) =: \sqrt{\frac{\pi}{2}} \cG_m(\cH_{\cY}(1)\circ \cF, \{x_i\}_{i=1}^m)\;.
\end{equation*}
Also, under the assumption that $K_{\cY}$ is bounded by $K$, the reproducing property and the Cauchy-Schwarz inequality imply
\begin{equation*}
	\begin{split}
		& \sup_{h \in \cH_{\cY}(1), F \in \cF} \sum_{i=1}^{m} g_i h(F(x_i)) \\
		& = \sup_{h \in \cH_{\cY}(1), F \in \cF} \left\langle h, \sum_{i=1}^{m} g_i K_{\cY}(\cdot, F(x_i)) \right\rangle_{\cH_{\cY}} \\
		& \le \sup_{h \in \cH_{\cY}(1), F \in \cF} \|h\|_{\cH_{\cY}} \left[\sum_{i = 1}^{m} g_i^2 K_{\cY}( F(x_i), F(x_i)) + \sum_{i\neq j} g_i g_j K_{\cY}(F(x_i), F(x_j)) \right]^{1/2} \\
		& \le \sup_{F \in \cF} \left[\sum_{i = 1}^{m} g_i^2 K + \sum_{i\neq j} g_i g_j K_{\cY}(F(x_i), F(x_j)) \right]^{1/2} \\
		& \le \left[\sum_{i = 1}^{m} g_i^2 K + \sup_{F \in \cF} \sum_{i\neq j} g_i g_j K_{\cY}(F(x_i), F(x_j)) \right]^{1/2} \;.
	\end{split}
\end{equation*}
Here, $\langle \cdot, \cdot \rangle_{\cH_{\cY}}$ denotes the inner product on $\cH_{\cY}$. Hence, 
\begin{align*}
	\cG_m(\cH_{\cY}(1)\circ \cF, \{x_i\}_{i=1}^m)
	& \le \frac{1}{m} \E_{g_i} \left[\sum_{i = 1}^{m} g_i^2 K + \sup_{F \in \cF} \sum_{i\neq j} g_i g_j K_{\cY}(F(x_i), F(x_j)) \right]^{1/2} \\
	& \le \frac{1}{m} \left[ m K + \E_{g_i} \sup_{F \in \cF} \sum_{i\neq j} g_i g_j K_{\cY}(F(x_i), F(x_j)) \right]^{1/2} \;,
\end{align*}
where the second inequality follows from the Jensen's inequality and $\E g_i^2 = 1$.

For any $F \colon \cX \to \cY$, let $A_F \in \R^{m \times m}$ be a matrix whose diagonal elements are zero and $(i, j)$-th element is $K_{\cY}(F(x_i), F(x_j))$ for $i \neq j$. Then, the last term amounts to the supremum of a quadratic process
\begin{align*}
	\E_{g} \sup_{F \in \cF} g^\top A_F g \;,
\end{align*}
where $g \coloneqq [g_1, \ldots, g_m]^\top \sim N(0, I_m)$.

We rely on the following chaining bound for the quadratic processes, derived in Section \ref{sec:appendix}.
\begin{lemma}[Chaining Bound]
	\label{lem:chaining}
	Let $\mathbb{S}_0^{m \times m}$ be the collection of all symmetric matrices $A$ whose diagonal elements are zero. Endow $\mathbb{S}_0^{m\times m}$ with a metric $d$ given by $d(A, A') \coloneqq \| A - A'\|$. Given $\cT \subset \mathbb{S}_0^{m \times m}$ and a fixed $A_0 \in \cT$, define $\Delta = \sup_{A \in \cT} d(A, A_0)$. Let $N(\delta, \cT)$ be the covering number of $\cT$ under the metric $d(\cdot, \cdot)$, then
	\begin{equation}
		\label{eqn:chain}
		\E_{g} \sup_{A \in \cT} g^\top A g 
		\leq \inf_{J \in \N} \left\{m \delta_J +  12 \int_{\delta_J / 2}^{\Delta/2} \sqrt{2 \log N(\delta, \cT)} \dd{\delta} + 24 \int_{\delta_J / 2}^{\Delta/2} \log N(\delta, \cT) \dd{\delta} \right\} \;,
	\end{equation}
	where for any integer $J \geq 0$, we define $\delta_J = 2^{-J} \Delta$.	
\end{lemma}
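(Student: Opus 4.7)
}
The plan is a generic chaining argument for the Gaussian chaos process $A \mapsto g^\top A g$, exploiting the Bernstein-type (mixed sub-Gaussian / sub-exponential) tail supplied by the Hanson--Wright inequality. Since every $A \in \cT$ has vanishing diagonal, $\E[g^\top A g] = \mathrm{tr}(A) = 0$, so in particular $\E[g^\top A_0 g] = 0$ and
\[
\E \sup_{A \in \cT} g^\top A g \;=\; \E\bigl[g^\top A_0 g + \sup_{A \in \cT} g^\top(A - A_0) g\bigr] \;=\; \E \sup_{A \in \cT} g^\top (A - A_0) g,
\]
so it suffices to control the centered supremum on the right. The first step is to invoke Hanson--Wright: for any symmetric $B$ with zero trace,
\[
\Pr\bigl(|g^\top B g| > t\bigr) \;\le\; 2 \exp\!\Bigl(-c \min\bigl(t^2/\|B\|^2,\; t/\|B\|_{\rm op}\bigr)\Bigr),
\]
which, by a standard union-bound argument over a finite family $\{B_1,\dots,B_N\}$, yields the maximal inequality $\E \max_i |g^\top B_i g| \lesssim \max_i \|B_i\| \sqrt{\log N} + \max_i \|B_i\|_{\rm op} \log N$. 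Using $\|B\|_{\rm op} \le \|B\|$ throughout lets me state the bound entirely in Frobenius norm.

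Next, I set up the chaining over the metric space $(\cT, d)$. For each integer $j \ge 0$ take a minimal $\delta_j$-net $\cT_j \subset \cT$ of cardinality $N(\delta_j, \cT)$, and let $\pi_j(A)$ denote a nearest element of $\cT_j$ to $A$; since $\Delta = \sup_{A \in \cT} d(A, A_0)$, we may take $\cT_0 = \{A_0\}$ and $\pi_0 \equiv A_0$. Telescoping,
\[
g^\top (A - A_0) g \;=\; g^\top\bigl(A - \pi_J(A)\bigr) g \;+\; \sum_{j=0}^{J-1} g^\top\bigl(\pi_{j+1}(A) - \pi_j(A)\bigr) g.
\]
For the residual I use the crude inequality $|g^\top C g| \le \|g\|^2 \|C\|$, giving $\E \sup_{A \in \cT} |g^\top(A - \pi_J(A)) g| \le \delta_J \cdot \E\|g\|^2 = m \delta_J$. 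For the $j$-th increment, the differences $\pi_{j+1}(A) - \pi_j(A)$ range over at most $N(\delta_{j+1},\cT)\cdot N(\delta_j,\cT) \le N(\delta_{j+1},\cT)^2$ matrices, each of Frobenius norm at most $\delta_{j+1} + \delta_j \le 2\delta_j$. Applying the maximal inequality termwise,
\[
\E\sup_{A \in \cT}\bigl|g^\top(\pi_{j+1}(A) - \pi_j(A)) g\bigr| \;\lesssim\; \delta_j \sqrt{2\log N(\delta_{j+1},\cT)} + \delta_j \log N(\delta_{j+1},\cT).
\]

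The final step converts the dyadic sum to the stated integrals. Since $\delta_j = 2(\delta_j - \delta_{j+1})$ and $\delta \mapsto \log N(\delta,\cT)$ is non-increasing, a standard comparison shows
\[
\sum_{j=0}^{J-1} \delta_j \,\sqrt{\log N(\delta_{j+1},\cT)} \;\le\; C \int_{\delta_J/2}^{\Delta/2} \sqrt{\log N(\delta, \cT)} \, \dd{\delta},
\]
and similarly for $\log N$ without the square root; absorbing the constants from Hanson--Wright and from $\log(N^2) = 2\log N$ into the overall prefactors recovers the numerical constants $12$ and $24$ in \eqref{eqn:chain}. Taking the infimum over $J \in \N$ completes the argument. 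I expect the principal obstacle to be accurate bookkeeping of the two regimes in the Hanson--Wright tail: one must avoid paying an extra $\sqrt{m}$ or $\log m$ factor when converting the sub-exponential part, which requires carefully exploiting $\|B\|_{\rm op} \le \|B\|$ uniformly in the chain and using the dyadic $\delta_{j+1} = \delta_j / 2$ schedule to keep the resulting series geometric.
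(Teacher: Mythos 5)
Your proposal is correct and follows essentially the same route as the paper: dyadic nets with $\cT_0=\{A_0\}$, a telescoping chain, an $m\delta_J$ residual bound via Cauchy--Schwarz, a mixed sub-Gaussian/sub-exponential maximal inequality for the per-level increments (the paper derives it from the sub-Gamma mgf bound of Boucheron--Lugosi--Massart, you invoke Hanson--Wright, which is the same tail), and the standard dyadic-sum-to-integral comparison. The one cosmetic gap is that Hanson--Wright carries an unspecified universal constant, so your "absorbing the constants recovers 12 and 24" is optimistic — to land on exactly $12$ and $24$ you would need the explicit sub-Gamma maximal inequality (Corollary 2.6 of Boucheron--Lugosi--Massart applied to Example 2.12) as the paper does — but since the argument is otherwise identical this is a matter of constant-tracking, not of approach.
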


With the above chaining bound, we can directly upper bound the Rademacher complexities of the compositional classes such as $R_m(\cH_{\cY}(1)\circ \cF, \mu)$ and $R_m(\cH_{\cX \times \cY}(1) \circ (\mathrm{Id}, \cF), \mu)$. More specifically, for the former class, we will apply this chaining bound to $\cT \coloneqq \{ A_F : F \in \cF \}$. Then, to further bound the RHS of \eqref{eqn:chain}, we make the following assumptions.

\begin{assumption}
	\label{a:lip_kernels}
	Suppose $K_{\cX}$ and $K_{\cY}$ are Lipschitz: there exists $L > 0$ such that
	\begin{align*}
		|K_{\cX}(x_1, x') - K_{\cX}(x_2, x')| \le L \|x_1 - x_2\| \;, 
		\quad |K_{\cY}(y_1, y') - K_{\cY}(y_2, y')| \le L \|y_1 - y_2\| \;.
	\end{align*}
\end{assumption}
This plays a similar role as Assumption \ref{a:lip_of_C}: we can derive an upper bound on $d(A_F, A_{F'})$ via closeness of $F$ and $F'$ in $\cF$. As a result, we will see that the covering number $N(\delta, \cT)$ can be bounded by the complexity of $\cF$.

\begin{assumption}
	\label{a:separation}
	There exist $y_0$ and $y_0'$ in $\cY$ with $K_{\cY}(y_0, y_0') \neq K_{\cY}(y_0, y_0)$ such that 
	\begin{itemize}
		\item $\cF$ contains a constant map $F$ satisfying $F(x) = y_0$ for all $x \in \cX$, 
		\item whenever we have $x \neq x' \in \cX$, we can find a non-constant map $F \in \cF$ such that $F(x) = y_0$ and $F(x') = y_0'$. 
	\end{itemize}
	Similarly, there exist $x_0$ and $x_0'$ in $\cX$ with $K_{\cX}(x_0, x_0') \neq K_{\cX}(x_0, x_0)$ such that 
	\begin{itemize}
		\item $\cB$ contains a constant map $B$ such that $B(y) = x_0$ for all $y \in \cY$, 
		\item whenever we have $y \neq y' \in \cY$, we can find a non-constant map $B \in \cB$ such that $B(y) = x_0$ and $B(y') = x_0'$. 
	\end{itemize}
\end{assumption}
The main purpose of this assumption is to exclude overly restrictive $\cF$ and $\cB$, and is minimal: $\cF$ and $\cB$ should contain constant maps, as well as non-constant maps. With these assumptions, we can derive the following result.

\begin{proposition}\label{prop:Rademacher-upper-bound-by-chaining}
	Under Assumptions \ref{a:bounded_kernels}, \ref{a:uniform_boundedness}, \ref{a:lip_kernels}, \ref{a:separation},
	\begin{align*}
		R_m(\cH_{\cY}(1)\circ \cF, \mu)\;, \; R_m(\cH_{\cX \times \cY}(1) \circ (\mathrm{Id}, \cF), \mu) \precsim \sqrt{\frac{\log m}{m} \sum_{k =1}^{{\rm dim}(\cY)} {\rm Pdim}(\cF_{k})}\;, \\
		R_n(\cH_{\cX}(1)\circ \cB, \mu)\;, \; R_n(\cH_{\cX \times \cY}(1) \circ (\cB, \mathrm{Id}), \nu) \precsim \sqrt{\frac{\log n}{n} \sum_{k =1}^{{\rm dim}(\cX)} {\rm Pdim}(\cB_{k})}\;.
	\end{align*}
\end{proposition}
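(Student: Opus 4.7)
The preamble to Proposition \ref{prop:Rademacher-upper-bound-by-chaining} already reduces the first Rademacher complexity to controlling a quadratic Gaussian process $\E_g \sup_{F \in \cF} g^\top A_F g$ with $A_F \in \mathbb{S}_0^{m \times m}$ whose off-diagonal entries are $K_\cY(F(x_i), F(x_j))$. The plan is to invoke Lemma \ref{lem:chaining} on the image set $\cT := \{A_F : F \in \cF\} \subset \mathbb{S}_0^{m \times m}$, control its diameter $\Delta$ and $\delta$-covering numbers $N(\delta, \cT)$ via the assumptions, execute the entropy integral, and then recombine with the preamble bound $\cG_m(\cH_\cY(1) \circ \cF, \{x_i\}) \le m^{-1}\sqrt{mK + \E_g \sup_F g^\top A_F g}$ before taking expectation over $\{x_i\}$.

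For the geometric ingredients, boundedness of $K_\cY$ (Assumption \ref{a:bounded_kernels}) yields $|A_F[i,j]| \le K$, whence $\Delta \le 2 m K$. For the covering, I will combine the Lipschitzness of $K_\cY$ (Assumption \ref{a:lip_kernels}) with a triangle inequality on each coordinate to obtain
\begin{equation*}
|K_\cY(F(x_i), F(x_j)) - K_\cY(F'(x_i), F'(x_j))| \le L \big( \|F(x_i) - F'(x_i)\| + \|F(x_j) - F'(x_j)\| \big)\;,
\end{equation*}
then sum the squares over the $m(m-1)$ off-diagonal pairs to get $\|A_F - A_{F'}\|^2 \precsim L^2 m \sum_{i} \|F(x_i) - F'(x_i)\|^2$. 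Thus an $\epsilon$-cover of $\cF$ in the empirical sup pseudo-metric $\max_{k,i}|F_k(x_i) - F'_k(x_i)|$ lifts to an $O(\sqrt{m}\,\epsilon)$-scale cover of $\cT$ in Frobenius norm, and applying Lemma \ref{lem:uniform_covering_to_pdim} coordinate-by-coordinate together with the uniform-boundedness Assumption \ref{a:uniform_boundedness} yields
\begin{equation*}
\log N(\delta, \cT) \precsim \sum_{k=1}^{\mathrm{dim}(\cY)} \mathrm{Pdim}(\cF_k) \cdot \log\!\Big(\tfrac{Cm}{\delta}\Big)
\end{equation*}
for a constant $C$ depending on $K, L, b$.

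Substituting into \eqref{eqn:chain} and choosing $\delta_J$ of order $1/m$, the Dudley-type $\sqrt{\log N}$ integral dominates and evaluates to $\sqrt{m \log m \cdot \sum_k \mathrm{Pdim}(\cF_k)}$, with the linear-in-entropy remainder of strictly smaller order. Plugging back through the preamble and taking expectation over $\{x_i\}$ gives the stated bound for $R_m(\cH_\cY(1) \circ \cF, \mu)$. The other three cases follow from the same template: for $\cH_{\cX \times \cY}(1) \circ (\mathrm{Id}, \cF)$ the matrix entries become $K_\cX(x_i, x_j)\,K_\cY(F(x_i), F(x_j))$, whose $F$-dependence is still Lipschitz with constant $K \cdot L$ because $K_\cX$ is bounded and $K_\cY$ is Lipschitz, so the entropy estimate and chaining step are unchanged; the bounds involving $\cB$ follow by the same argument after swapping $(\cX, m, \cF) \leftrightarrow (\cY, n, \cB)$.

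The main obstacle I expect is the careful bookkeeping in the covering-number step: converting a coordinate-wise sup-metric cover of the classes $\cF_k$ (whose complexity is governed by $\mathrm{Pdim}(\cF_k)$) into a Frobenius-norm cover of the matrix family $\cT$, while ensuring that all factors of $m$ arising from summation and the tensor-product kernel end up inside the logarithm so that they are absorbed into the final $\log m$ rate. A secondary subtlety is the role of Assumption \ref{a:separation}: if $\cT$ degenerated to a point the chaining bound would be delivered in a vacuous form, and separation is precisely what guarantees that $\mathrm{Pdim}(\cF_k)$ is the right measure of complexity entering the covering estimate. Once that entropy estimate is established, the remainder is routine Dudley chaining.
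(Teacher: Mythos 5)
Your overall architecture matches the paper's: apply the chaining lemma to $\cT = \{A_F : F \in \cF\}$, control the Frobenius diameter $\Delta$ by boundedness of the kernel, bound $N(\delta, \cT)$ by lifting sup-metric covers of the coordinate classes $\cF_k$ through the kernel's Lipschitzness, convert to pseudo-dimension via Lemma~\ref{lem:uniform_covering_to_pdim}, run the entropy integrals, and plug back through the preamble $R_m \le \frac{1}{m}\sqrt{mK + \E_g\sup_F g^\top A_F g}$ before taking expectation. The extension to the tensor-product kernel via the bound--times--Lipschitz observation is also right.

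However, the evaluation of the chaining bound contains a substantive error. You claim that the Dudley-type integral $\int \sqrt{\log N}\,\dd\delta$ dominates and ``evaluates to $\sqrt{m\log m\cdot\sum_k \mathrm{Pdim}(\cF_k)}$.'' Neither half is correct. With $\log N(\delta,\cT)\precsim P\log(Cm^2/\delta)$ where $P\coloneqq\sum_k\mathrm{Pdim}(\cF_k)$, and integrating over $\delta\in[\delta_J/2,\Delta/2]$ with $\Delta\asymp m$, one gets $\int\sqrt{\log N}\,\dd\delta\precsim m\sqrt{P\log m}$ while $\int\log N\,\dd\delta\precsim mP\log m$. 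The \emph{linear-in-entropy} term therefore dominates, and the correct conclusion from Lemma~\ref{lem:chaining} is $\E_g\sup_F g^\top A_F g\precsim mP\log m$. Plugging this into the preamble gives $R_m\precsim\frac{1}{m}\sqrt{mK+mP\log m}\precsim\sqrt{P\log m/m}$, which is how the $\log m$ factor actually enters. With your claimed $\sqrt{mP\log m}$ instead, the $mK$ term inside the square root would dominate and you would only obtain $R_m\precsim\sqrt{K/m}$, which has no $\log m$ or complexity dependence; it merely happens to be $\precsim\sqrt{P\log m/m}$ asymptotically, but that is a coincidence rather than a derivation of the stated bound. A minor related slip: an $\epsilon$-cover of $\cF$ in the coordinate sup-metric lifts to an $O(m\epsilon)$-scale (not $O(\sqrt{m}\,\epsilon)$) cover of $\cT$ in Frobenius norm; this does not change the rate because $m$ only enters inside a logarithm.

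Finally, your account of the role of Assumption~\ref{a:separation} is off target. It is not about ensuring $\mathrm{Pdim}$ is ``the right complexity measure''; the paper uses separation to establish a \emph{lower} bound $\Delta\gtrsim 1$ on the diameter of $\cT$. This is needed because the chosen radius $\delta_J\asymp(P/m)\Delta$ appears in a denominator inside the logarithm in $\log N(\delta_J,\cT)$, and without a lower bound on $\Delta$ (and hence $\delta_J$), that logarithm cannot be cleanly bounded by $\log m$.
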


In summary, Propositions \ref{prop:4}, \ref{prop:mmd_fixed}, \ref{prop:uniform-deviation-mmd-term}, \ref{prop:Rademacher-upper-bound-by-chaining} and Corollary \ref{cor:union-bound-GW-term-pdim} directly imply Theorem \ref{thm:stat}.

\section{Representer Theorem and Convex Formulation}
\label{sec:representation}

This section provides details of the results presented in Section \ref{subsec:cvx-representer}. Again, without loss of generality, we only consider $\lambda_1 = \lambda_2 = \lambda_3 = 1$ in \eqref{eqn:1}.

First, we clarify how measurable maps correspond to bounded linear operators between $L^2$ spaces.
\begin{proposition}
	Let $F \colon \cX \to \cY$ be a measurable map such that $\|\dd{F_\# \pi_\cX} / \dd{\pi_\cY}\|_\infty < \infty$. If we define
	\begin{equation*}
		\bF(g) = g \circ F
	\end{equation*}
	for all $g \in L^2_{\cY}$, then $\bF \colon L^2_{\cY} \to L^2_{\cX}$ is a bounded linear operator. Similarly, a measurable map $B \colon \cY \to \cX$ satisfying $\|\dd{B_\# \pi_\cY} / \dd{\pi_\cX}\|_\infty < \infty$ induces a bounded linear operator $\bB \colon L^2_{\cX} \to L^2_{\cY}$ such that $\bB(g) = g \circ B$ for all $g \in L^2_{\cX}$.
\end{proposition}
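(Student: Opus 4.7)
The plan is to verify three properties of $\bF$: well-definedness as a map on $L^2_{\cY}$ equivalence classes into $L^2_{\cX}$, linearity, and boundedness. The analogous statement for $\bB$ follows by symmetry, swapping the roles of $\cX$ and $\cY$ (and of $\pi_\cX$ and $\pi_\cY$), so I would only write out the argument for $\bF$.

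Linearity is trivially inherited from the pointwise linear structure of $g \mapsto g \circ F$, so I would dispense with it in one line. The core step is a change-of-variables estimate. For any measurable $g \colon \cY \to \R$,
\begin{equation*}
\int_{\cX} (g \circ F)^2 \dd{\pi_{\cX}} = \int_{\cY} g^2 \dd{(F_\# \pi_{\cX})} = \int_{\cY} g^2 \cdot \frac{\dd{F_\# \pi_{\cX}}}{\dd{\pi_{\cY}}} \dd{\pi_{\cY}} \leq \left\|\frac{\dd{F_\# \pi_{\cX}}}{\dd{\pi_{\cY}}}\right\|_\infty \|g\|_{L^2_{\cY}}^2,
\end{equation*}
where the first equality is the defining property of the pushforward measure, the second uses that the finiteness hypothesis forces $F_\# \pi_{\cX} \ll \pi_{\cY}$ so the Radon-Nikodym derivative exists, and the third bound pulls out the essential supremum. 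This single computation simultaneously shows $g \circ F \in L^2_{\cX}$ (so $\bF$ actually lands in $L^2_{\cX}$) and delivers the operator norm bound $\|\bF\|_{\mathrm{op}} \leq \sqrt{\|\dd{F_\# \pi_{\cX}}/\dd{\pi_{\cY}}\|_\infty}$.

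The only subtlety worth treating carefully is that $L^2_{\cY}$ consists of $\pi_{\cY}$-almost-everywhere equivalence classes, so $\bF$ must respect this identification before the above norm bound can be interpreted as an operator bound. If $g = g'$ $\pi_{\cY}$-a.e., setting $E = \{g \neq g'\}$ gives $\pi_{\cY}(E) = 0$; then $F_\# \pi_{\cX}(E) = \int_E \dd{F_\# \pi_{\cX}}/\dd{\pi_{\cY}} \dd{\pi_{\cY}} = 0$, hence $\pi_{\cX}(F^{-1}(E)) = 0$, so $g \circ F = g' \circ F$ holds $\pi_{\cX}$-a.e. No step here presents a genuine obstacle: the absolute continuity of $F_\# \pi_{\cX}$ with respect to $\pi_{\cY}$, implicit in the bounded-Radon-Nikodym hypothesis, is doing all of the work, and the argument for $\bB$ is verbatim with the roles reversed.
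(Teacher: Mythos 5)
Your argument is correct and matches the paper's proof: the same change-of-variables computation, pulling out the essential supremum of the Radon–Nikodym derivative, yields both membership of $g \circ F$ in $L^2_{\cX}$ and the operator norm bound. The extra paragraph you include on well-definedness across $\pi_{\cY}$-a.e.\ equivalence classes is a detail the paper leaves implicit, but it is a correct and worthwhile clarification rather than a change of method.
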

\begin{proof}[Proof]
	Linearity of $\bF$ is obvious. Since
	\begin{equation*}
		\int_{\cX} g(F(x))^2 \dd{\pi_{\cX}} = \int_{\cY} g(y)^2 \dd{F_{\#} \pi_{\cX}}(y) = \int_{\cY} g(y)^2 \frac{\dd{F_{\#} \pi_\cX}}{\dd{\pi_\cY}}(y) \dd{\pi_{\cY}}(y) \le \|g\|_{L^2(\pi_{\cY})}^2 \left\|\frac{\dd{F_{\#} \pi_\cX}}{\dd{\pi_\cY}} \right\|_\infty\;,
	\end{equation*}
	we can see $\bF(g) = g \circ F \in L^2_{\cX}$ and thus $\bF \colon L^2_{\cY} \to L^2_{\cX}$. From this inequality, the operator norm of $\bF$ is bounded by $\|\dd{F_\# \pi_\cX} / \dd{\pi_\cY}\|_\infty^{1 / 2}$; hence, boundedness of $\bF$ follows. The same argument applies to $\bB$.
\end{proof}

Next, we prove that \eqref{eqn:1} can be written in terms of $\bF$ and $\bB$ if $K_{\cX}$ and $K_{\cY}$ are given by the Mercer's representation:
\begin{align}
	\label{eqn:k_x}
	K_{\cX}(x, x') = \sum_{k = 1}^{\infty} \lambda_k \phi_k(x) \phi_k(x')\;, \\
	\label{eqn:k_y}
	K_{\cY}(y, y') = \sum_{\ell = 1}^{\infty} \gamma_\ell \psi_\ell(y) \psi_\ell(y')\;.
\end{align}
Let $\Phi_x = [\cdots, \phi_k(x), \cdots ]^\top \in \R^\infty$ and $\Psi_y = [\cdots, \psi_\ell(y), \cdots]^\top \in \R^\infty$. Then, $K_{\cX}(x, x') = \Phi_{x}^\top \Lambda \Phi_{x'}$ and $K_{\cY}(y, y') = \Psi_{y}^\top \Gamma \Psi_{y'}$. Also, 
\begin{align*}
	K_{\cX}(x, B(y)) &= \sum_{k} \lambda_k \phi_k(x) [\phi_k \circ B] (y) \\
	& =\sum_{k} \lambda_k \phi_k(x) \bB[\phi_k] (y) \\
	& = \sum_{k, \ell} \lambda_k \phi_k(x) \bB_{\ell k} \psi_\ell(y) \\
	& = \Psi_{y}^\top \bB \Lambda \Phi_{x} \;.
\end{align*}
Analogously, we can obtain 
\begin{align*}
	& K_{\cY}(F(x), y) = \Phi_{x}^\top \bF \Gamma \Psi_{y} \;, \\
	& K_{\cX}(B(y), B(y')) = \Psi_{y}^\top \bB \Lambda \bB^\top \Psi_{y'} \;, \\
	& K_{\cY}(F(x), F(x')) = \Phi_{x}^\top \bF \Gamma \bF^\top \Phi_{x'} \;. \\
\end{align*}

Using this, we have
\begin{equation}
	\label{eqn:rgm}
	\frac{1}{m n} \sum_{i = 1}^{m} \sum_{j = 1}^{n} (K_{\cX}(x_i, B(y_j)) - K_{\cY}(F(x_i), y_j))^2 = \frac{1}{mn} \sum_{i, j} (\Psi_{y_j}^\top \bB \Lambda \Phi_{x_i} - \Phi_{x_i}^\top \bF \Gamma \Psi_{y_j})^2\;.
\end{equation}
Also, 
\begin{equation} 
	\label{eqn:mmd-x}
	\begin{split}
		\mathrm{MMD}_{K_{\cX}}^2(\widehat{\mu}_m, B_{\#} \widehat{\nu}_n) 
		& = 
		\frac{1}{m^2} \sum_{i, i'} K_{\cX}(x_i, x_{i'})	+ \frac{1}{n^2} \sum_{j, j'} K_{\cX}(B(y_j), B(y_{j'}))	- \frac{2}{mn} \sum_{i, j} K_{\cX}(x_i, B(y_j)) \\
		& = \frac{1}{m^2} \sum_{i, i'} \Phi_{x_i}^\top \Lambda \Phi_{x_i'} + \frac{1}{n^2} \sum_{j, j'} \Psi_{y_j}^\top \bB \Lambda \bB^\top \Psi_{y_{j'}} - \frac{2}{m n} \sum_{i, j} \Psi_{y_j}^\top \bB \Lambda \Phi_{x_i} \;.
	\end{split}
\end{equation}
Similarly, we have
\begin{equation}
	\label{eqn:mmd-y}
	\mathrm{MMD}_{K_{\cY}}^2(F_{\#} \widehat{\mu}_m, \widehat{\nu}_n) 
	= 
	\frac{1}{m^2} \sum_{i, i'} \Phi_{x_i}^\top \bF \Gamma \bF^\top \Phi_{x_{i'}} + \frac{1}{n^2} \sum_{j, j'} \Psi_{y_j}^\top \Gamma \Psi_{y_{j'}} - \frac{2}{m n} \sum_{i, j} \Phi_{x_i}^\top \bF \Gamma \Psi_{y_j} \;
\end{equation}
and 
\begin{equation}
	\label{eqn:mmd-xy}
	\begin{split}
		& \mathrm{MMD}_{K_{\cX} \otimes K_{\cY}}^2((\mathrm{Id}, F)_{\#} \widehat{\mu}_m, (B, \mathrm{Id})_{\#} \widehat{\nu}_n) \\
		& = \frac{1}{m^2} \sum_{i, i'} \Phi_{x_i}^\top \Lambda \Phi_{x_i'} \Phi_{x_i}^\top \bF \Gamma \bF^\top \Phi_{x_{i'}} + \frac{1}{n^2} \sum_{j, j'} \Psi_{y_j}^\top \Gamma \Psi_{y_{j'}} \Psi_{y_j}^\top \bB \Lambda \bB^\top \Psi_{y_{j'}} \\
		& - \frac{2}{m n} \sum_{i, j} \Psi_{y_j}^\top \bB \Lambda \Phi_{x_i} \Phi_{x_i}^\top \bF \Gamma \Psi_{y_j} \;.
	\end{split}
\end{equation}

The following proposition summarizes the discussion so far. 
\begin{proposition}
	\label{prop:representer}
	Given Borel measures $\pi_{\cX}$ and $\pi_{\cY}$ over $\cX$ and $\cY$, respectively, suppose their corresponding $L^2$ spaces $L^2_{\cX}$ and $L^2_{\cY}$ have countable orthonormal bases: $\{\phi_k\}_{k \in \N}$ and $\{\psi_\ell\}_{\ell \in \N}$. Also, assume $K_{\cX}$ and $K_{\cY}$ are given by the Mercer's representation \eqref{eqn:k_x} and \eqref{eqn:k_y}. Let $\cF_o$ and $\cB_o$ be collections of all $F \colon \cX \to \cY$ and $B \colon \cY \to \cX$ such that $\|\dd{F_\# \pi_\cX} / \dd{\pi_\cY}\|_\infty < \infty$ and $\|\dd{B_\# \pi_\cY} / \dd{\pi_\cX}\|_\infty < \infty$, respectively. Then, solving \eqref{eqn:1} over $\cF_o \times \cB_o$ is equivalent to \eqref{eqn:op}, where $\cC$ denotes the collection of all pairs of matrices $(\bF, \bB)$ that correspond to a pair of bounded linear operators induced by $(F, B) \in \cF_o \times \cB_o$. Also, $\Omega$ is defined as 	
	\begin{equation*}
		\begin{split}
			\Omega(\bF, \bB)
			&\coloneqq
			\frac{1}{mn} \sum_{i, j} (\Psi_{y_j}^\top \bB \Lambda \Phi_{x_i} - \Phi_{x_i}^\top \bF \Gamma \Psi_{y_j})^2 \\
			&+ 
			\frac{1}{m^2} \sum_{i, i'} \Phi_{x_i}^\top \Lambda \Phi_{x_i'} + \frac{1}{n^2} \sum_{j, j'} \Psi_{y_j}^\top \bB \Lambda \bB^\top \Psi_{y_{j'}} - \frac{2}{m n} \sum_{i, j} \Psi_{y_j}^\top \bB \Lambda \Phi_{x_i} \\
			&+ 
			\frac{1}{m^2} \sum_{i, i'} \Phi_{x_i}^\top \bF \Gamma \bF^\top \Phi_{x_{i'}} + \frac{1}{n^2} \sum_{j, j'} \Psi_{y_j}^\top \Gamma \Psi_{y_{j'}} - \frac{2}{m n} \sum_{i, j} \Phi_{x_i}^\top \bF \Gamma \Psi_{y_j} \\
			&+ 
			\frac{1}{m^2} \sum_{i, i'} \Phi_{x_i}^\top \Lambda \Phi_{x_i'} \Phi_{x_i}^\top \bF \Gamma \bF^\top \Phi_{x_{i'}} + \frac{1}{n^2} \sum_{j, j'} \Psi_{y_j}^\top \Gamma \Psi_{y_{j'}} \Psi_{y_j}^\top \bB \Lambda \bB^\top \Psi_{y_{j'}} \\
			& - \frac{2}{m n} \sum_{i, j} \Psi_{y_j}^\top \bB \Lambda \Phi_{x_i} \Phi_{x_i}^\top \bF \Gamma \Psi_{y_j} \;.
		\end{split}
	\end{equation*}
\end{proposition}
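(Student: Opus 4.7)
The plan is to show that for every pair $(F, B) \in \cF_o \times \cB_o$, each of the four summands that compose the empirical objective in \eqref{eqn:1} (with $c_\cX = K_\cX$ and $c_\cY = K_\cY$) coincides term-by-term with the corresponding summand defining $\Omega(\bF, \bB)$, where $\bF$ and $\bB$ are the semi-infinite matrix representations of the bounded linear operators induced by $F$ and $B$. Once this identity is established, the equivalence of the two optimization problems follows immediately: the decision variables of \eqref{eqn:op} are by definition exactly the matrices lying in $\cC$, and optimizing $\Omega$ over $\cC$ is the same as optimizing the original RGM sample criterion over $\cF_o \times \cB_o$.

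I would first record that, under the stated assumptions, $\bF \colon L^2_\cY \to L^2_\cX$ and $\bB \colon L^2_\cX \to L^2_\cY$ are bounded linear operators (already established in the proposition preceding the statement), so the basis expansions $\bF[\psi_\ell] = \sum_k \bF_{k\ell}\phi_k$ and $\bB[\phi_k] = \sum_\ell \bB_{\ell k}\psi_\ell$ converge in $L^2$. In particular $\phi_k\circ B = \bB[\phi_k]$ in $L^2_\cY$ and $\psi_\ell\circ F = \bF[\psi_\ell]$ in $L^2_\cX$. The central identity is then
\[
K_\cX(x, B(y)) = \sum_{k} \lambda_k \phi_k(x)\phi_k(B(y)) = \sum_{k,\ell} \lambda_k \phi_k(x) \bB_{\ell k} \psi_\ell(y) = \Psi_y^\top \bB \Lambda \Phi_x,
\]
with the three analogous identities $K_\cY(F(x), y) = \Phi_x^\top \bF \Gamma \Psi_y$, $K_\cX(B(y), B(y')) = \Psi_y^\top \bB \Lambda \bB^\top \Psi_{y'}$, and $K_\cY(F(x), F(x')) = \Phi_x^\top \bF \Gamma \bF^\top \Phi_{x'}$ derived the same way.

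With these four substitutions, the RGM quadratic term rewrites as \eqref{eqn:rgm}, and the three squared-MMD terms rewrite as \eqref{eqn:mmd-x}, \eqref{eqn:mmd-y}, \eqref{eqn:mmd-xy}. Adding the four blocks and comparing to the definition of $\Omega$ gives $C(\widehat{\mu}_m, \widehat{\nu}_n, F, B) = \Omega(\bF, \bB)$ for every $(F, B) \in \cF_o \times \cB_o$ with associated operator matrices $(\bF, \bB)$, so the two minimization problems share the same value and minimizing sets (under the correspondence $(F,B)\leftrightarrow(\bF,\bB)$ cut out by $\cC$).

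The main obstacle is a convergence subtlety: Mercer's expansion is typically an $L^2$ identity (or a uniform identity under additional regularity), while the identities above are being evaluated at the specific sample points $x_i$ and $y_j$. I would handle this by expressing $K_\cX(x,B(y))$ as the $L^2_\cY$ pairing $\langle K_\cX(x,\cdot)\circ B,\,1\rangle$ evaluated pointwise: writing $K_\cX(x,\cdot) = \sum_k \lambda_k \phi_k(x)\phi_k$ as an element of $L^2_\cX$ and applying $\bB$ to it, absolute summability of $\sum_k \lambda_k \phi_k(x)^2$ (from $K_\cX(x,x)<\infty$) together with Cauchy--Schwarz against $\sum_\ell \bB_{\ell k}\psi_\ell(y)$ controls the double series and legitimizes the rearrangement into the matrix product $\Psi_y^\top \bB \Lambda \Phi_x$. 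If stronger pointwise control is desired, one may additionally assume uniform convergence of the Mercer series (e.g.\ continuous kernels on compact domains), under which every interchange of sums used above is unconditional, and the term-by-term matching with $\Omega$ goes through without further qualification.
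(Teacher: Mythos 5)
Your approach is essentially the paper's: derive the pointwise identities $K_\cX(x,B(y)) = \Psi_y^\top \bB \Lambda \Phi_x$, $K_\cY(F(x),y) = \Phi_x^\top \bF \Gamma \Psi_y$, and the companion identities for $K_\cX(B(y),B(y'))$ and $K_\cY(F(x),F(x'))$ from the Mercer expansions together with the operator equations $\phi_k\circ B=\bB[\phi_k]$ and $\psi_\ell\circ F=\bF[\psi_\ell]$, then substitute into the four blocks of \eqref{eqn:1} to reproduce $\Omega$ term by term, which is exactly the paper's computation leading to \eqref{eqn:rgm}--\eqref{eqn:mmd-xy}. The convergence subtlety you flag in the last paragraph is genuine and the paper simply does not discuss it; of your two remedies the assumption of uniform Mercer convergence is the clean one (the ``$L^2_\cY$ pairing'' reformulation as stated does not actually deliver a pointwise evaluation), but since the paper's own derivation is purely formal on this point you are if anything more careful than the source.
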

Based on this, we now prove Theorem \ref{thm:representer}.

\begin{proof}[Proof of Theorem \ref{thm:representer}]
	Let $\bx_i = \Lambda^{1 / 2} \Phi_{x_i}$ and $\by_j = \Gamma^{1 / 2} \Psi_{y_j}$. Notice that we can view them as elements of a Hilbert space $\ell^2_{\N}$, that is, the space of square-summable sequences: 
	\begin{equation*}
		\ell^2_{\N} = \{ (a_k)_{k \in \N} : \sum_{k} a_k^2 < \infty \} \;.
	\end{equation*}
	Also, define $\bar{\bF} = \Lambda^{- 1 / 2} \bF \Gamma^{1 / 2} $ and $\bar{\bB} = \Gamma^{- 1 / 2} \bB \Lambda^{1 / 2}$ where $\Lambda, \Gamma \succ 0$. By rewriting \eqref{eqn:rgm}-\eqref{eqn:mmd-xy} using $\bx_i$, $\by_j$, $\bar{\bF}$, and $\bar{\bB}$, we have
	\begin{align*}
		& \Omega(\bF, \bB) \\
		& = \frac{1}{mn} \sum_{i, j} (\by_j^\top \bar{\bB} \bx_i - \bx_i^\top \bar{\bF} \by_j)^2 \tag{i} \\
		& + \frac{1}{m^2} \sum_{i, i'} \bx_i^\top \bx_{i'} + \frac{1}{n^2} \sum_{j, j'} \by_{j}^\top \bar{\bB} \bar{\bB}^\top \by_{j'} - \frac{2}{m n} \sum_{i, j} \by_j^\top \bar{\bB} \bx_i \tag{ii} \\
		& + \frac{1}{m^2} \sum_{i, i'} \bx_i^\top \bar{\bF} \bar{\bF}^\top \bx_{i'} + \frac{1}{n^2} \sum_{j, j'} \by_j^\top \by_{j'} - \frac{2}{m n} \sum_{i, j} \bx_i^\top \bar{\bF} \by_j \tag{iii} \\
		& + \frac{1}{m^2} \sum_{i, i'} (\bx_i^\top \bx_{i'}) (\bx_i^\top \bar{\bF} \bar{\bF}^\top \bx_{i'}) + \frac{1}{n^2} \sum_{j, j'} (\by_j^\top \by_{j'}) (\by_{j}^\top \bar{\bB} \bar{\bB}^\top \by_{j'}) - \frac{2}{m n} \sum_{i, j} (\by_j^\top \bar{\bB} \bx_i) (\bx_i^\top \bar{\bF} \by_j) \tag{iv} \\
		& =: \bar{\Omega}(\bar{\bF}, \bar{\bB}) \;.
	\end{align*}
	As a result, \eqref{eqn:relaxed} reduces to $\min_{\bar{\bF}, \bar{\bB} \in \R^{\infty \times \infty}} \bar{\Omega}(\bar{\bF}, \bar{\bB})$.	Now, we define two finite-dimensional subspaces of $\ell^2_{\N}$ spanned by $(\bx_1, \ldots, \bx_m)$ and $(\by_1, \ldots, \by_n)$, respectively: 
	\begin{equation*}
		U_m \coloneqq \mathrm{span}\{\bx_1, \ldots, \bx_m\} \;, \quad 
		V_n \coloneqq \mathrm{span}\{\by_1, \ldots, \by_n\} \;.
	\end{equation*}
	Also, we define $P_{U_m}$ and $P_{V_n}$ to be matrices that correspond to the orthogonal projection operators from $\ell^2_{\N}$ to $U_m$ and to $V_n$, respectively. Recall that $P_{U_m}$ and $P_{V_n}$ are symmetric and idempotent by definition.

	Our goal is to prove
	\begin{equation*}
		\bar{\Omega}(\bar{\bF}, \bar{\bB})
		\ge
		\bar{\Omega}(P_{U_m} \bar{\bF} P_{V_n}, P_{V_n} \bar{\bB} P_{U_m})\;.
	\end{equation*}
	More precisely, we show that four terms (i)-(iv) decrease if we replace $\bar{\bF}$ and $\bar{\bB}$ with $P_{U_m} \bar{\bF} P_{V_n}$ and $P_{V_n} \bar{\bB} P_{U_m}$, respectively. First, observe that (i) remains the same. By definition, $P_{U_m} \bx_i = \bx_i$ and $P_{V_n} \by_j = \by_j$, thus $\by_j^\top \bar{\bB} \bx_i = \by_j^\top P_{V_n} \bar{\bB} P_{U_m} \bx_i$ and $\bx_i^\top \bar{\bF} \by_j = \bx_i^\top P_{U_m} \bar{\bF} P_{V_n} \by_j$. Hence, (i) does not change.

	To prove that (ii) decrease, it suffices to prove
	\begin{equation*}
		\sum_{j, j'} \by_{j}^\top \bar{\bB} \bar{\bB}^\top \by_{j'} 
		\ge
		\sum_{j, j'} \by_{j}^\top (P_{V_n} \bar{\bB} P_{U_m}) (P_{V_n} \bar{\bB} P_{U_m})^\top \by_{j'}
		=
		\sum_{j, j'} \by_{j}^\top \bar{\bB} P_{U_m} \bar{\bB}^\top \by_{j'} \;.
	\end{equation*}
	To this end, define $P_{U_m}^\perp$ to be a matrix that corresponds to the orthogonal projection from $\ell^2_{\N}$ to $U_m^\perp$, the orthogonal complement of $U_m$. By definition, $P_{U_m} + P_{U_m}^\perp$ is the identity matrix and $P_{U_m} P_{U_m}^\perp = 0$. Hence,
	\begin{equation*}
		\sum_{j, j'} \by_{j}^\top \bar{\bB} \bar{\bB}^\top \by_{j'} 
		=
		\left\|\bar{\bB}^\top \sum_{j} \by_{j}\right\|^2
		=
		\left\|P_{U_m} \bar{\bB}^\top \sum_{j} \by_{j}\right\|^2 + \left\|P_{U_m}^\perp \bar{\bB}^\top \sum_{j} \by_{j}\right\|^2
		\ge
		\sum_{j, j'} \by_{j}^\top \bar{\bB} P_{U_m} \bar{\bB}^\top \by_{j'}
	\end{equation*}
	Here, the second equality is the Pythagorean theorem. Therefore, we can see (ii) decreases if we replace $\bar{\bB}$ with $P_{V_n} \bar{\bB} P_{U_m}$. Similarly, (iii) decreases.

	For (iv), it suffices to prove
	\begin{equation*}
		\begin{split}
			\sum_{j, j'} (\by_j^\top \by_{j'}) (\by_{j}^\top \bar{\bB} \bar{\bB}^\top \by_{j'}) 
			&\ge
			\sum_{j, j'} (\by_j^\top \by_{j'}) \left(\by_{j}^\top (P_{V_n} \bar{\bB} P_{U_m}) (P_{V_n} \bar{\bB} P_{U_m})^\top \by_{j'}\right) \\
			&=
			\sum_{j, j'} (\by_j^\top \by_{j'}) (\by_{j}^\top \bar{\bB} P_{U_m} \bar{\bB}^\top \by_{j'}) \;.
		\end{split}
	\end{equation*}
	To see this, 
	\begin{equation*}
		\begin{split}
			\sum_{j, j'} (\by_j^\top \by_{j'}) (\by_{j}^\top \bar{\bB} \bar{\bB}^\top \by_{j'}) 
			& =
			\sum_{j, j'} (\by_j^\top \by_{j'}) (\by_{j}^\top \bar{\bB} P_{U_m} \bar{\bB}^\top \by_{j'}) + 
			\sum_{j, j'} (\by_j^\top \by_{j'}) (\by_{j}^\top \bar{\bB} P_{U_m}^\perp \bar{\bB}^\top \by_{j'}) \\
			& \ge \sum_{j, j'} (\by_j^\top \by_{j'}) (\by_{j}^\top \bar{\bB} P_{U_m} \bar{\bB}^\top \by_{j'}) \;,
		\end{split}
	\end{equation*}
	where the inequality holds since 
	\begin{equation*}
		\sum_{j, j'} (\by_j^\top \by_{j'}) (\by_{j}^\top \bar{\bB} P_{U_m}^\perp \bar{\bB}^\top \by_{j'}) = \mathrm{Tr} \left[\left(\sum_{j} P_{U_m}^\perp \bar{\bB}^\top \by_j \by_j^\top\right) \left(\sum_{j} P_{U_m}^\perp \bar{\bB}^\top \by_j \by_j^\top\right)^\top \right] \ge 0 \;.
	\end{equation*}
	Similarly, we can obtain
	\begin{equation*}
		\sum_{i, i'} (\bx_i^\top \bx_{i'}) (\bx_i^\top \bar{\bF} \bar{\bF}^\top \bx_{i'})
		\ge 
		\sum_{i, i'} (\bx_i^\top \bx_{i'}) (\bx_i^\top \bar{\bF} P_{V_n} \bar{\bF}^\top \bx_{i'}).
	\end{equation*}
	Hence, (iv) decreases.	
	
	Consequently, we have
	\begin{equation*}
		\eqref{eqn:relaxed}
		=
		\min_{\bar{\bF}, \bar{\bB} \in \R^{\infty \times \infty}} \bar{\Omega}(\bar{\bF}, \bar{\bB})
		=
		\min_{\bar{\bF}, \bar{\bB} \in \R^{\infty \times \infty}} \bar{\Omega}(P_{U_m} \bar{\bF} P_{V_n}, P_{V_n} \bar{\bB} P_{U_m}) \;.
	\end{equation*}
	By definition of a projection operator, we can find $\mathsf{U}_m \in \R^{m \times \infty}$ and $\mathsf{V}_n \in \R^{n \times \infty}$ such that 
	\begin{equation*}
		P_{U_m} = \Lambda^{1 / 2} \Phi_m \mathsf{U}_m \;, \quad 
		P_{V_n} = \Gamma^{1 / 2} \Psi_n \mathsf{V}_n.
	\end{equation*}
	By letting $\mathsf{U}_m \bar{\bF} \mathsf{V}_n^\top = \mathsf{F}_{m, n} \in \R^{m \times n}$ and $\mathsf{V}_n \bar{\bB} \mathsf{U}_m^\top = \mathsf{B}_{n, m} \in \R^{n \times m}$, we have
	\begin{align*}
		P_{U_m} \bar{\bF} P_{V_n} = \Lambda^{1/2} \Phi_m \mathsf{F}_{m,n} \Psi_n^\top \Gamma^{1/2} \;, \\
		P_{V_n} \bar{\bB} P_{U_m} = \Gamma^{1/2} \Psi_n \mathsf{B}_{n,m} \Phi_m^\top \Lambda^{1/2} \;.
	\end{align*}
	Hence, 
	\begin{equation*}
		\min_{\bar{\bF}, \bar{\bB} \in \R^{\infty \times \infty}} \bar{\Omega}(P_{U_m} \bar{\bF} P_{V_n}, P_{V_n} \bar{\bB} P_{U_m}) 
		=
		\min_{(\mathsf{F}_{m, n}, \mathsf{B}_{n, m}) \in \mathsf{C}} \omega(\mathsf{F}_{m, n}, \mathsf{B}_{n, m}) \;,
	\end{equation*}
	where	
	\begin{equation*}
		\omega(\mathsf{F}_{m, n}, \mathsf{B}_{n, m})
		\coloneqq \bar{\Omega}(\Lambda^{1 / 2} \Phi_m \mathsf{F}_{m, n} \Psi_n^\top \Gamma^{1 / 2}, \Gamma^{1 / 2} \Psi_n \mathsf{B}_{n, m} \Phi_m^\top \Lambda^{1 / 2}) \;.
	\end{equation*} 
	Here, $\mathsf{C}$ is a constraint set implying that $\mathsf{F}_{m, n}$ and $\mathsf{B}_{n, m}$ are associated with $\bar{\bF}$ and $\bar{\bB}$, respectively, namely,
	\begin{equation*}
		\mathsf{C} = \{(\mathsf{U}_m \bar{\bF} \mathsf{V}_n^\top,  \mathsf{V}_n \bar{\bB} \mathsf{U}_m^\top) : \bar{\bF}, \bar{\bB} \in \R^{\infty \times \infty}\} \subset \R^{m \times n} \times \R^{n \times m} \;.
	\end{equation*}
	Therefore,
	\begin{equation*}
		\eqref{eqn:relaxed}
		=
		\min_{(\mathsf{F}_{m, n}, \mathsf{B}_{n, m}) \in \mathsf{C}} \omega(\mathsf{F}_{m, n}, \mathsf{B}_{n, m})
		\ge
		\min_{\substack{\mathsf{F}_{m, n} \in \R^{m \times n} \\ \mathsf{B}_{n, m} \in \R^{n \times m}}} \omega(\mathsf{F}_{m, n}, \mathsf{B}_{n, m}) \;.
	\end{equation*}
	Finally, note that $\mathsf{C} = \R^{m \times n} \times \R^{n \times m}$ if $\mathsf{U}_m$ and $\mathsf{V}_n$ are full rank, that is, row spaces of $\mathsf{U}_m$ and $\mathsf{V}_n$ are rank-$m$ and rank-$n$, respectively. This is true if kernel matrices 
	\begin{equation*}
		\bK_{\cX} = (\Lambda^{1 / 2} \Phi_m)^\top (\Lambda^{1 / 2} \Phi_m) \;, \; \bK_{\cY} = (\Gamma^{1 / 2} \Psi_n)^\top (\Gamma^{1 / 2} \Psi_n)
	\end{equation*}
	are invertible. This is equivalent to say that they are positive definite. In this case, 
	\begin{equation*}
		\mathsf{U}_m = \bK_{\cX}^{- 1} (\Lambda^{1 / 2} \Phi_m)^\top \;, \;
		\mathsf{V}_n = \bK_{\cY}^{- 1} (\Gamma^{1 / 2} \Psi_n)^\top\;,
	\end{equation*}
	which are indeed full rank. Accordingly, we have 
	\begin{equation*}
		\eqref{eqn:relaxed}
		=
		\min_{(\mathsf{F}_{m, n}, \mathsf{B}_{n, m}) \in \mathsf{C}} \omega(\mathsf{F}_{m, n}, \mathsf{B}_{n, m})
		=
		\min_{\substack{\mathsf{F}_{m, n} \in \R^{m \times n} \\ \mathsf{B}_{n, m} \in \R^{n \times m}}} \omega(\mathsf{F}_{m, n}, \mathsf{B}_{n, m}) \;.
	\end{equation*}
	Finally, we prove $\omega$ is convex. To see this, verify
	\begin{align*}
		\omega(\mathsf{F}_{m, n}, \mathsf{B}_{n, m})
		& = \frac{1}{mn} \| \bK_{\cY} \mathsf{B}_{n, m} \bK_{\cX} - \bK_{\cY} \mathsf{F}_{m, n}^\top \bK_{\cX} \|^2 \\
		& + \left\|\bK_{\cX}^{1/2} \cdot \left(\frac{1}{m} \mathbf{1}_m - \mathsf{B}_{n, m}^\top \bK_{\cY} \frac{1}{n} \mathbf{1}_n \right) \right\|^2
		+ \left\|\bK_{\cY}^{1/2} \cdot \left(\frac{1}{n} \mathbf{1}_n - \mathsf{F}_{m, n}^\top \bK_{\cX} \frac{1}{m} \mathbf{1}_m \right) \right\|^2 \\
		& + \left\| \frac{1}{m} \bK_{\cX}^{3/2} \mathsf{F}_{m,n} \bK_{\cY}^{1/2} - \frac{1}{n} \bK_{\cX}^{1/2} \mathsf{B}_{n,m}^\top \bK_{\cY}^{3/2} \right\|^2 \;,
	\end{align*}
	where $\bK_{\cX}^{1 / 2}$ and $\bK_{\cY}^{1 / 2}$ are the square root matrices of $\bK_{\cX}$ and $\bK_{\cY}$, respectively, and $\mathbf{1}_m \in \R^m$ and $\mathbf{1}_n \in \R^n$ are all-ones vectors. 
\end{proof}

\section{Supporting Proofs of Section \ref{sec:statistical-theory}}
\label{sec:appendix}

\subsection{Proofs in Section~\ref{sec:statistical-theory}}

\begin{proof}[Proof of Proposition~\ref{prop:4}]
	Let $h_{F, B}(x, y) \coloneqq (c_{\cX}(x, B(y)) - c_{\cY}(F(x), y))^2$, then
	\begin{equation*}
		\begin{split}
			\widehat{C}_0(F, B) - C_0(F, B)
			& = \frac{1}{m n} \sum_{i=1}^{m} \sum_{j=1}^{n} h_{F, B}(x_i, y_j) - \E_{(x, y) \sim \mu \otimes \nu} h_{F, B}(x, y) \\
			& = \frac{1}{m} \sum_{i=1}^{m} \left(\frac{1}{n} \sum_{j=1}^{n} h_{F, B}(x_i, y_j) - \E_{y \sim \nu} h_{F, B}(x_i, y)\right) \\
			& \quad + \frac{1}{m} \sum_{i=1}^{m} \E_{y \sim \nu} h_{F, B}(x_i, y) - \E_{(x, y) \sim \mu \otimes \nu} h_{F, B}(x, y)\;.
		\end{split}
	\end{equation*}
	Assumption \ref{a:bounded1} implies that a function $x \mapsto \E_{y \sim \nu} h_{F, B}(x, y)$ is bounded in $[0, H]$. Thus, by the McDiarmid's inequality,
	\begin{equation*}
		\frac{1}{m} \sum_{i=1}^{m} \E_{y \sim \nu} h_{F, B}(x_i, y) - \E_{(x, y) \sim \mu \otimes \nu} h_{F, B}(x, y) \le \sqrt{\frac{H^2 \log(1 / \delta)}{2 m}}
	\end{equation*}
	holds with probability at least $1 - \delta$. By the same logic, for fixed $x_i$,
	\begin{equation*}
		\frac{1}{n} \sum_{j=1}^{n} h_{F, B}(x_i, y_j) - \E_{y \sim \nu} h_{F, B}(x_i, y) \le \sqrt{\frac{H^2 \log(1 / \delta)}{2 n}}
	\end{equation*}
	holds with probability at least $1 - \delta$, where the probability is the conditional probability of $y_1, \ldots, y_n$ given $x_1, \ldots, x_m$. Since this is true for all $x_i$, the union bound implies
	\begin{equation*}
		\frac{1}{m} \sum_{i=1}^{m} \left(\frac{1}{n} \sum_{j=1}^{n} h_{F, B}(x_i, y_j) - \E_{y \sim \nu} h_{F, B}(x_i, y)\right) \le \sqrt{\frac{H^2 \log(m / \delta)}{2 n}}
	\end{equation*}
	holds with probability at least $1 - \delta$. Hence,
	\begin{equation*}			
		\widehat{C}_0(F, B) - C_0(F, B)
		\precsim \sqrt{\frac{\log(m / \delta)}{m}} \le \sqrt{\frac{\log(\tfrac{m \vee n}{\delta})}{m \wedge n}}
	\end{equation*}
	holds with probability at least $1 - 2 \delta$. The same result holds for $C_0(F, B) - \widehat{C}_0(F, B)$, hence we complete the proof.
\end{proof}

\begin{proof}[Proof of Proposition~\ref{prop:mmd_fixed}]
	By the triangle inequality, $|\widehat{M}(F, B) - M(F, B)|$ is bounded above by the sum of the following three terms:
	\begin{align*}
		& |\mathrm{MMD}^2_{K_{\cY}}(F_{\#}\widehat{\mu}_m, \widehat{\nu}_n) - \mathrm{MMD}^2_{K_{\cY}}(F_{\#}\mu, \nu)| \;, \\
		& |\mathrm{MMD}^2_{K_{\cX}}(\widehat{\mu}_m, B_{\#}\widehat{\nu}_n) - \mathrm{MMD}^2_{K_{\cX}}(\mu, B_{\#} \nu)| \;, \\
		& |\mathrm{MMD}^2_{K_{\cX} \otimes K_{\cY}} ((\mathrm{Id}, F)_{\#}\widehat{\mu}_m, (B, \mathrm{Id})_{\#} \widehat{\nu}_n) - \mathrm{MMD}^2_{K_{\cX} \otimes K_{\cY}} ((\mathrm{Id}, F)_{\#}\mu, (B, \mathrm{Id})_{\#} \nu)| \;.
	\end{align*}
	First, we give an upper bound on the first term. Boundedness of kernels (Assumption \ref{a:bounded_kernels}) implies
	\begin{equation*}
		\mathrm{MMD}_{K_{\cY}}(F_{\#}\widehat{\mu}_m, \widehat{\nu}_n) \;, \; \mathrm{MMD}_{K_{\cY}}(F_{\#}\mu, \nu) \le 2 \sqrt{K} \;.
	\end{equation*}
	Hence,
	\begin{equation*}
		|\mathrm{MMD}^2_{K_{\cY}}(F_{\#}\widehat{\mu}_m, \widehat{\nu}_n) - \mathrm{MMD}^2_{K_{\cY}}(F_{\#}\mu, \nu)| \le 4 \sqrt{K} |\mathrm{MMD}_{K_{\cY}}(F_{\#}\widehat{\mu}_m, \widehat{\nu}_n) - \mathrm{MMD}_{K_{\cY}}(F_{\#}\mu, \nu)|\;.
	\end{equation*}
	Due to the triangle inequality of MMD, we have
	\begin{equation*}
		|\mathrm{MMD}_{K_{\cY}}(F_{\#}\widehat{\mu}_m, \widehat{\nu}_n) - \mathrm{MMD}_{K_{\cY}}(F_{\#}\mu, \nu)|
		\le \mathrm{MMD}_{K_{\cY}}(F_{\#}\widehat{\mu}_{m}, F_{\#}\mu) + \mathrm{MMD}_{K_{\cY}}(\widehat{\nu}_n, \nu)\;.
	\end{equation*}
	By Theorem 3.4 of \cite{muandet_fukumizu_sriperumbudur_scholkopf_2017},
	\begin{equation*}
		\mathrm{MMD}_{K_{\cY}}(\widehat{\nu}_n, \nu) \le \sqrt{\frac{K}{n}} + \sqrt{\frac{2 K \log(1/\delta)}{n}}
	\end{equation*}
	holds with probability at least $1 - \delta$. Next, note that $F_{\#}\widehat{\mu}_m = \frac{1}{m} \sum_{i} \delta_{F(x_i)}$ is the empirical measure constructed from $\{F(x_i)\}_{i = 1}^{m}$. Since they are $m$ many i.i.d.\ samples from $F_{\#}\mu$, by the same theorem, 
	\begin{equation*}
		\mathrm{MMD}_{K_{\cY}}(F_{\#}\widehat{\mu}_m, F_{\#}\mu) \le \sqrt{\frac{K}{m}} + \sqrt{\frac{2 K \log(1/\delta)}{m}}
	\end{equation*}
	holds with probability at least $1 - \delta$. Hence,
	\begin{equation*}
		|\mathrm{MMD}^2_{K_{\cY}}(F_{\#}\widehat{\mu}_m, \widehat{\nu}_n) - \mathrm{MMD}^2_{K_{\cY}}(F_{\#}\mu, \nu)| \precsim \sqrt{\frac{\log(1 / \delta)}{m}} + \sqrt{\frac{\log(1 / \delta)}{n}}
	\end{equation*}
	holds with probability at least $1 - 2 \delta$. Similarly, we have
	\begin{align*}
		& |\mathrm{MMD}^2_{K_{\cX}}(\widehat{\mu}_m, B_{\#}\widehat{\nu}_n) - \mathrm{MMD}^2_{K_{\cX}}(\mu, B_{\#} \nu)| \precsim \sqrt{\frac{\log(1 / \delta)}{m}} + \sqrt{\frac{\log(1 / \delta)}{n}}\;, \\
		& |\mathrm{MMD}^2_{K_{\cX} \otimes K_{\cY}} ((\mathrm{Id}, F)_{\#}\widehat{\mu}_m, (B, \mathrm{Id})_{\#} \widehat{\nu}_n) - \mathrm{MMD}^2_{K_{\cX} \otimes K_{\cY}} ((\mathrm{Id}, F)_{\#}\mu, (B, \mathrm{Id})_{\#} \nu)| \\ 
		& \precsim \sqrt{\frac{\log(1 / \delta)}{m}} + \sqrt{\frac{\log(1 / \delta)}{n}}\;,
	\end{align*}
	each of which holds with probability at least $1 - 2 \delta$. Combining these three probabilistic bounds, we obtain a bound for $|\widehat{M}(F, B) - M(F, B)|$.
\end{proof}

\begin{proof}[Proof of Proposition~\ref{prop:union-bound-GW-term}]
	Without loss of generality, assume $n \ge m$. From the proof of Proposition \ref{prop:4}, 
	\begin{equation*}
		\begin{split}
			\sup_{(F, B) \in \cF \times \cB} |\widehat{C}_0(F, B) - C_0(F, B)|
			& \le \frac{1}{m} \sum_{i=1}^{m} 
			\sup_{(F, B) \in \cF \times \cB} \left|\frac{1}{n} \sum_{j=1}^{n} h_{F, B}(x_i, y_j) - \E_{y \sim \nu} h_{F, B}(x_i, y)\right| \\
			& \quad + 
			\sup_{(F, B) \in \cF \times \cB} \left|\frac{1}{m} \sum_{i=1}^{m} \E_{y \sim \nu} h_{F, B}(x_i, y) - \E_{x \sim \mu} \E_{y \sim \nu} h_{F, B}(x, y)\right| \;.
		\end{split}
	\end{equation*}
	Since $x \mapsto \E_{y \sim \nu} h_{F, B}(x, y)$ is bounded in $[0, H]$, Lemma \ref{lem:bound} implies
	\begin{equation*}
		\begin{split}
			& \sup_{(F, B) \in \cF \times \cB} \left|\frac{1}{m} \sum_{i=1}^{m} \E_{y \sim \nu} h_{F, B}(x_i, y) - \E_{x \sim \mu} \E_{y \sim \nu} h_{F, B}(x, y)\right| \\
			& \precsim
			\sqrt{\frac{\log(1 / \delta)}{m}} + 
			\E_{x_i} \E_{\epsilon_i} \sup_{(F, B) \in \cF \times \cB} \left|\frac{1}{m} \sum_{i=1}^{m} \epsilon_i \E_{y \sim \nu} h_{F, B}(x_i, y) \right|
		\end{split}
	\end{equation*}
	holds with probability at least $1 - \delta$. Since $n \ge m$, 
	\begin{equation*}
		\begin{split}
			\E_{x_i} \E_{\epsilon_i} \sup_{(F, B) \in \cF \times \cB} \left|\frac{1}{m} \sum_{i=1}^{m} \epsilon_i \E_{y \sim \nu} h_{F, B}(x_i, y) \right| 
			& = 
			\E_{x_i} \E_{\epsilon_i} \sup_{(F, B) \in \cF \times \cB} \left|\E_{y_1, \ldots, y_m} \frac{1}{m} \sum_{i=1}^{m} \epsilon_i h_{F, B}(x_i, y_i) \right| \\
			& \le 
			\E_{x_i} \E_{\epsilon_i} \E_{y_1, \ldots, y_m} \sup_{(F, B) \in \cF \times \cB} \left|\frac{1}{m} \sum_{i=1}^{m} \epsilon_i h_{F, B}(x_i, y_i) \right| \\
			& =
			\E_{x_i} \E_{y_i} \E_{\epsilon_i} \sup_{(F, B) \in \cF \times \cB} \left|\frac{1}{m} \sum_{i=1}^{m} \epsilon_i h_{F, B}(x_i, y_i) \right| \;.
		\end{split}
	\end{equation*}
	We first give an upper bound on 
	\begin{equation*}
		\E_{\epsilon_i} \sup_{(F, B) \in \cF \times \cB} \underbrace{\left|\frac{1}{m} \sum_{i=1}^{m} \epsilon_i h_{F, B}(x_i, y_i) \right|}_{=: X_{F, B}} \;.
	\end{equation*}
	First, observe that Assumption \ref{a:bounded1} and Assumption \ref{a:lip_of_C} imply 
	\begin{equation*}
		\begin{split}
			|h_{F, B}(x, y) - h_{F', B'}(x, y)|
			& \le
			\left| \sqrt{h_{F, B}(x, y)} + \sqrt{h_{F', B'}(x, y)} \right| \left| \sqrt{h_{F, B}(x, y)} - \sqrt{h_{F', B'}(x, y)} \right| \\
			& \le 
			2 \sqrt{H} \left( |c_{\cX}(x, B(y)) - c_{\cX}(x, B'(y))| + |c_{\cY}(F(x), y) - c_{\cY}(F'(x), y)| \right) \\
			& \le
			2 \sqrt{H} L \left( \|F(x) - F'(x)\| + \|B(y) - B'(y)\| \right) \\
			\\
			& =
			2 \sqrt{H} L \left[ \sqrt{\sum_{k = 1}^{\mathrm{dim}(\cY)} |F_k(x) - F_k'(x)|^2} + \sqrt{\sum_{\ell = 1}^{\mathrm{dim}(\cX)} |B_\ell(y) - B_\ell'(y)|^2} \right]\\
			& \le 
			2 \sqrt{H} L \left(\sum_{k = 1}^{\mathrm{dim}(\cY)} |F_k(x) - F_k'(x)| + \sum_{\ell = 1}^{\mathrm{dim}(\cX)} |B_\ell(y) - B_\ell'(y)|\right) \;. 
		\end{split}
	\end{equation*}
	Therefore, 
	\begin{equation*}
		\begin{split}
			|X_{F, B} - X_{F', B'}| 
			& \le \frac{1}{m} \sum_{i = 1}^{m} |h_{F, B}(x_i, y_i) - h_{F', B'}(x_i, y_i)| \\
			& \le 
			2 \sqrt{H} L \left(\sum_{k = 1}^{\mathrm{dim}(\cY)} \frac{1}{m} \sum_{i = 1}^{m} |F_k(x_i) - F_k'(x_i)| + \sum_{\ell = 1}^{\mathrm{dim}(\cX)} \frac{1}{m} \sum_{i = 1}^{m} |B_\ell(y_i) - B_\ell'(y_i)|\right) \\
			& \le
			2 \sqrt{H} L \left(\sum_{k = 1}^{\mathrm{dim}(\cY)} \max_{i \in [m]} |F_k(x_i) - F_k'(x_i)| + \sum_{\ell = 1}^{\mathrm{dim}(\cX)} \max_{i \in [m]} |B_\ell(y_i) - B_\ell'(y_i)|\right) \\
			& =: \rho((F, B), (F', B')) \;.
		\end{split}
	\end{equation*}
	For $\epsilon > 0$, let $\cN_\infty(\epsilon, \cF_k, \{x_i\}_{i=1}^{m})$ be the minimal $\epsilon$-covering net of $\cF_k$ under the pseudometric $d$ induced by $x_1, \ldots, x_m$:
	\begin{equation*}
		d(F_k, F_k') \coloneqq \max_{i \in [m]} | F_k(x_i) - F_k(x_i') |\;.
	\end{equation*}
	In other words, for any $F_k \in \cF_k$, we can find $F_k' \in \cN_\infty(\epsilon, \cF_k, \{x_i\}_{i=1}^{m})$ such that $d(F_k, F_k') \le \epsilon$. Also, $|\cN_\infty(\epsilon, \cF_k, \{x_i\}_{i=1}^{m})| = N_\infty(\epsilon, \cF_k, \{x_i\}_{i=1}^{m})$. We define $\cN_\infty(\epsilon, \cB_\ell, \{y_i\}_{i=1}^{m})$ in a similar fashion.
	
	Given $\epsilon > 0$, let $T_{\epsilon} = \otimes_{k = 1}^{\mathrm{dim}(\cY)} \cN_\infty(\epsilon, \cF_k, \{x_i\}_{i=1}^{m}) \times \otimes_{\ell = 1}^{\mathrm{dim}(\cX)} \cN_\infty(\epsilon, \cB_\ell, \{y_i\}_{i=1}^{m})$. Then, for any $(F, B) \in \cF \times \cB$, we can find $(F', B') \in T_{\epsilon}$ such that 
	\begin{equation*}
		\rho((F, B), (F', B')) \le \eta \epsilon \;,
	\end{equation*}
	where $\eta = 2 \sqrt{H} L (\mathrm{dim}(\cX) + \mathrm{dim}(\cY))$. As a result, one can easily check
	\begin{equation*}
		\begin{split}
			\sup_{(F, B) \in \cF \times \cB} X_{F, B}
			& \le \sup_{\rho((F, B), (F', B')) \le \eta \epsilon} |X_{F, B} - X_{F', B'}| + \sup_{(F, B) \in T_\epsilon} X_{F, B} \\
			& \le \eta \epsilon + \sup_{(F, B) \in T_\epsilon} X_{F, B}\;.
		\end{split}
	\end{equation*}
	Note that $X_{F, B}$ is the absolute value of a sub-Gaussian random variable with parameter $H^2 / (4 m)$. Hence, the maximal inequality yields
	\begin{equation*}
		\begin{split}
			\E_{\epsilon_i} \sup_{(F, B) \in \cF \times \cB} X_{F, B}
			& \le \eta \epsilon + \E_{\epsilon_i} \sup_{(F, B) \in T_\epsilon} X_{F, B}
			\le \eta \epsilon + \sqrt{\frac{H^2 \log(|T_\epsilon|)}{m}}
			\;.
		\end{split}
	\end{equation*}
	Using $|T_\epsilon| = \prod_{k = 1}^{\mathrm{dim}(\cY)} N_{\infty}(\epsilon, \cF_k, \{x_i\}_{i=1}^{m}) \times \prod_{\ell = 1}^{\mathrm{dim}(\cX)} N_\infty(\epsilon, \cB_\ell, \{y_i\}_{i=1}^{m})$, we have
	\begin{equation*}
		\begin{split}
			& \E_{\epsilon_i} \sup_{(F, B) \in \cF \times \cB} \left|\frac{1}{m} \sum_{i=1}^{m} \epsilon_i h_{F, B}(x_i, y_i) \right| \\
			& \le 
			\eta \epsilon +  H \sqrt{\frac{\sum_{k = 1}^{\mathrm{dim}(\cY)} \log N_{\infty}(\epsilon, \cF_k, \{x_i\}_{i=1}^{m}) + \sum_{\ell = 1}^{\mathrm{dim}(\cX)} \log N_{\infty}(\epsilon, \cB_\ell, \{y_i\}_{i=1}^{m})}{m}} \\
			& \le \eta \epsilon +  H \sqrt{\frac{\sum_{k = 1}^{\mathrm{dim}(\cY)} \log N_{\infty}(\epsilon, \cF_k, m) + \sum_{\ell = 1}^{\mathrm{dim}(\cX)} \log N_{\infty}(\epsilon, \cB_\ell, m)}{m}} \;.
		\end{split}
	\end{equation*}
	The second inequality is obvious from the definition of the uniform covering number. Since the last equation is independent of $x_i$ and $y_i$, we have 
	\begin{equation*}
		\begin{split}
			& \E_{x_i} \E_{y_i} \E_{\epsilon_i} \sup_{(F, B) \in \cF \times \cB} \left|\frac{1}{m} \sum_{i=1}^{m} \epsilon_i h_{F, B}(x_i, y_i) \right| \\
			& \le 
			\eta \epsilon +  H \sqrt{\frac{\sum_{k = 1}^{\mathrm{dim}(\cY)} \log N_{\infty}(\epsilon, \cF_k, m) + \sum_{\ell = 1}^{\mathrm{dim}(\cX)} \log N_{\infty}(\epsilon, \cB_\ell, m)}{m}} \;.
		\end{split}
	\end{equation*}
	As a result,
	\begin{equation}
		\label{eqn:b1}
		\begin{split}
			& \sup_{(F, B) \in \cF \times \cB} \left|\frac{1}{m} \sum_{i=1}^{m} \E_{y \sim \nu} h_{F, B}(x_i, y) - \E_{x \sim \mu} \E_{y \sim \nu} h_{F, B}(x, y)\right| \\
			& \precsim
			\sqrt{\frac{\log(1 / \delta)}{m}} + \epsilon +  \sqrt{\frac{\sum_{k = 1}^{\mathrm{dim}(\cY)} \log N_{\infty}(\epsilon, \cF_k, m) + \sum_{\ell = 1}^{\mathrm{dim}(\cX)} \log N_{\infty}(\epsilon, \cB_\ell, m)}{m}} \\
			& \le
			\sqrt{\frac{\log(1 / \delta)}{m}} + \epsilon +  \sqrt{\frac{\sum_{k = 1}^{\mathrm{dim}(\cY)} \log N_{\infty}(\epsilon, \cF_k, m) + \sum_{\ell = 1}^{\mathrm{dim}(\cX)} \log N_{\infty}(\epsilon, \cB_\ell, n)}{m}}
		\end{split}
	\end{equation}
	holds with probability at least $1 - \delta$. Here, $\log N_{\infty}(\epsilon, \cB_\ell, m) \le \log N_{\infty}(\epsilon, \cB_\ell, n)$ holds since $n \ge m$, which is obvious from the definition of the uniform covering number. 

	Next, we give a bound on 
	\begin{equation*}
		\frac{1}{m} \sum_{i=1}^{m} \sup_{(F, B) \in \cF \times \cB} \left|\frac{1}{n} \sum_{j=1}^{n} h_{F, B}(x_i, y_j) - \E_{y \sim \nu} h_{F, B}(x_i, y)\right| \;.
	\end{equation*}
	Considering $x_1, \ldots, x_m$ are fixed, Lemma \ref{lem:bound} implies
	\begin{equation*}
		\begin{split}
			& \sup_{(F, B) \in \cF \times \cB} \left|\frac{1}{n} \sum_{j=1}^{n} h_{F, B}(x_i, y_j) - \E_{y \sim \nu} h_{F, B}(x_i, y)\right| \\
			& \precsim \sqrt{\frac{\log(1 / \delta)}{n}} + \E_{y_j} \E_{\epsilon_j} \sup_{(F, B) \in \cF \times \cB} \underbrace{\left|\frac{1}{n} \sum_{j=1}^{n} \epsilon_j h_{F, B}(x_i, y_j) \right|}_{Y_{F, B}} \\
		\end{split}
	\end{equation*}
	holds with probability at least $1 - \delta$. Here, the probability should be understood as a conditional probability of $y_1, \ldots, y_n$ given $x_1, \ldots, x_m$. Again, we have
	\begin{equation*}
		\begin{split}
			|Y_{F, B} - Y_{F', B'}| 
			& \le
			2 \sqrt{H} L \left(\sum_{k = 1}^{\mathrm{dim}(\cY)} |F_k(x_i) - F_k'(x_i)| + \sum_{\ell = 1}^{\mathrm{dim}(\cX)} \frac{1}{n} \sum_{j = 1}^{n} |B_\ell(y_j) - B_\ell'(y_j)|\right) \\
			& \le
			2 \sqrt{H} L \left(\sum_{k = 1}^{\mathrm{dim}(\cY)} \max_{i \in [m]} |F_k(x_i) - F_k'(x_i)| + \sum_{\ell = 1}^{\mathrm{dim}(\cX)} \max_{j \in [n]} |B_\ell(y_j) - B_\ell'(y_j)|\right) \;.
		\end{split}
	\end{equation*}
	Also, $Y_{F, B}$ is the absolute value of a sub-Gaussian random variable with parameter $H^2 / (4 n)$. By the same argument as before, 
	\begin{equation*}
		\begin{split}
			& \E_{y_j} \E_{\epsilon_j} \sup_{(F, B) \in \cF \times \cB} \left|\frac{1}{n} \sum_{j=1}^{n} \epsilon_j h_{F, B}(x_i, y_j) \right| \\
			& \le \eta \epsilon +  H \sqrt{\frac{\sum_{k = 1}^{\mathrm{dim}(\cY)} \log N_{\infty}(\epsilon, \cF_k, m) + \sum_{\ell = 1}^{\mathrm{dim}(\cX)} \log N_{\infty}(\epsilon, \cB_\ell, n)}{n}} \;.
		\end{split}
	\end{equation*}
	Hence, 
	\begin{equation*}
		\begin{split}
			& \sup_{(F, B) \in \cF \times \cB} \left|\frac{1}{n} \sum_{j=1}^{n} h_{F, B}(x_i, y_j) - \E_{y \sim \nu} h_{F, B}(x_i, y)\right| \\
			& \precsim \sqrt{\frac{\log(1 / \delta)}{n}} + \epsilon + \sqrt{\frac{\sum_{k = 1}^{\mathrm{dim}(\cY)} \log N_{\infty}(\epsilon, \cF_k, m) + \sum_{\ell = 1}^{\mathrm{dim}(\cX)} \log N_{\infty}(\epsilon, \cB_\ell, n)}{n}} 
		\end{split}
	\end{equation*}
	holds with probability (conditional probability as explained earlier) at least $1 - \delta$. Since this holds for all $x_i$, the union bound implies
	\begin{equation}
		\label{eqn:b2}
		\begin{split}
			& \frac{1}{m} \sum_{i=1}^{m} \sup_{(F, B) \in \cF \times \cB} \left|\frac{1}{n} \sum_{j=1}^{n} h_{F, B}(x_i, y_j) - \E_{y \sim \nu} h_{F, B}(x_i, y)\right| \\
			& \precsim 
			\sqrt{\frac{\log(m / \delta)}{n}} + \epsilon + \sqrt{\frac{\sum_{k = 1}^{\mathrm{dim}(\cY)} \log N_{\infty}(\epsilon, \cF_k, m) + \sum_{\ell = 1}^{\mathrm{dim}(\cX)} \log N_{\infty}(\epsilon, \cB_\ell, n)}{n}} \\
			& \le \sqrt{\frac{\log(m / \delta)}{m}} + \epsilon + \sqrt{\frac{\sum_{k = 1}^{\mathrm{dim}(\cY)} \log N_{\infty}(\epsilon, \cF_k, m) + \sum_{\ell = 1}^{\mathrm{dim}(\cX)} \log N_{\infty}(\epsilon, \cB_\ell, n)}{m}} 
		\end{split}
	\end{equation}
	holds with probability at least $1 - \delta$. Combining \eqref{eqn:b1} and \eqref{eqn:b2}, for any $\epsilon > 0$, we have
	\begin{equation*}
		\begin{split}
			& \sup_{(F, B) \in \cF \times \cB} |\widehat{C}_0(F, B) - C_0(F, B)| \\
			\precsim
			& \sqrt{\frac{\log(\tfrac{m \vee n}{\delta})}{m \wedge n}} + \epsilon + \sqrt{\frac{\sum_{k = 1}^{\mathrm{dim}(\cY)} \log N_{\infty}(\epsilon, \cF_k, m) + \sum_{\ell = 1}^{\mathrm{dim}(\cX)} \log N_{\infty}(\epsilon, \cB_\ell, n)}{m \wedge n}}
		\end{split}
	\end{equation*}
	holds with probability at least $1 - 2 \delta$.
\end{proof}
\begin{proof}[Proof of Corollary~\ref{cor:union-bound-GW-term-pdim}]
	Combining Assumption \ref{a:uniform_boundedness} and Lemma \ref{lem:uniform_covering_to_pdim}, we have
	\begin{equation*}
		N_\infty(\epsilon, \cF_k, m) \le \left( \frac{2 e m b}{\epsilon \cdot {\rm Pdim}(\cF_{k})} \right)^{{\rm Pdim}(\cF_{k})} \;.
	\end{equation*}
	Hence,
	\begin{equation*}
		\begin{split}
			& \sup_{(F, B) \in \cF \times \cB} |\widehat{C}_0(F, B) - C_0(F, B)| \\
			\precsim
			& \sqrt{\frac{\log(\tfrac{m \vee n}{\delta})}{m \wedge n}} + \epsilon + \sqrt{\frac{\sum_{k = 1}^{\mathrm{dim}(\cY)} \log N_{\infty}(\epsilon, \cF_k, m) + \sum_{\ell = 1}^{\mathrm{dim}(\cX)} \log N_{\infty}(\epsilon, \cB_\ell, n)}{m \wedge n}} \\
			& \le \sqrt{\frac{\log(\tfrac{m \vee n}{\delta})}{m \wedge n}} + \epsilon + \sqrt{\frac{\log\left(\tfrac{2e b (m \vee n)}{\epsilon}\right)}{m \wedge n} \left( \sum_{k =1}^{{\rm dim}(\cY)} {\rm Pdim}(\cF_{k}) + \sum_{\ell = 1}^{{\rm dim}(\cX)} {\rm Pdim}(\cB_{\ell})\right)} \\
			& \precsim \sqrt{\frac{\log(\tfrac{m \vee n}{\delta})}{m \wedge n}} + \sqrt{\frac{\log(m \vee n)}{m\wedge n} \left( \sum_{k =1}^{{\rm dim}(\cY)} {\rm Pdim}(\cF_{k}) + \sum_{\ell = 1}^{{\rm dim}(\cX)} {\rm Pdim}(\cB_{\ell}) \right)} 
		\end{split}
	\end{equation*}
	holds with probability at least $1 - 2 \delta$, where the last bound comes from choosing $\epsilon = (m \wedge n)^{- 1 / 2}$.
\end{proof}

\begin{proof}[Proof of Proposition~\ref{prop:uniform-deviation-mmd-term}]
	Using the triangle inequality, we bound $\sup_{(F,B) \in \cF \times \cB} |\widehat{M}(F, B) - M(F, B)|$ by the sum of the following three terms:
	\begin{align*}
		& \sup_{F \in \cF} |\mathrm{MMD}^2_{K_{\cY}}(F_{\#}\widehat{\mu}_m, \widehat{\nu}_n) - \mathrm{MMD}^2_{K_{\cY}}(F_{\#}\mu, \nu)| \;, \\
		& \sup_{B \in \cB} |\mathrm{MMD}^2_{K_{\cX}}(\widehat{\mu}_m, B_{\#}\widehat{\nu}_n) - \mathrm{MMD}^2_{K_{\cX}}(\mu, B_{\#} \nu)| \;, \\
		& \sup_{(F,B) \in \cF \times \cB} |\mathrm{MMD}^2_{K_{\cX} \otimes K_{\cY}} ((\mathrm{Id}, F)_{\#}\widehat{\mu}_m, (B, \mathrm{Id})_{\#} \widehat{\nu}_n) - \mathrm{MMD}^2_{K_{\cX} \otimes K_{\cY}} ((\mathrm{Id}, F)_{\#}\mu, (B, \mathrm{Id})_{\#} \nu)| \;.
	\end{align*}
	As in the proof of Proposition \ref{prop:mmd_fixed}, we have
	\begin{equation*}
		\begin{split}
			& \sup_{F \in \cF} |\mathrm{MMD}^2_{K_{\cY}}(F_{\#}\widehat{\mu}_m, \widehat{\nu}_n) - \mathrm{MMD}^2_{K_{\cY}}(F_{\#}\mu, \nu)| \\
			& \le 
			4 \sqrt{K} \left[\sup_{F \in \cF} \mathrm{MMD}_{K_{\cY}}(F_{\#}\widehat{\mu}_{m}, F_{\#}\mu)  + \mathrm{MMD}_{K_{\cY}}(\widehat{\nu}_n, \nu) \right] \;.
		\end{split}
	\end{equation*}
	$\mathrm{MMD}_{K_{\cY}}(\widehat{\nu}_n, \nu)$ has already been bounded in Proposition \ref{prop:mmd_fixed}. For the first term on the RHS, observe that
	\begin{equation*}
		\begin{split}
			\sup_{F \in \cF} \mathrm{MMD}_{K_{\cY}}(F_{\#}\widehat{\mu}_{m}, F_{\#}\mu)
			& = \sup_{F \in \cF} \sup_{f \in \cH_{\cY}(1)} \left|\int f \dd{F_{\#} \widehat{\mu}_{m}} - \int f \dd{F_{\#} \mu} \right| \\
			& = \sup_{F \in \cF} \sup_{f \in \cH_{\cY}(1)} \left|\int f \circ F \dd{\widehat{\mu}_{m}} - \int f \circ F \dd{\mu} \right| \\
			& = \sup_{f \in \cH_{\cY}(1) \circ \cF} \left|\int f \dd{\widehat{\mu}_{m}} - \int f \dd{\mu} \right|\;,
		\end{split}
	\end{equation*}
	where the second equality follows from change-of-variables.

	First, we show $\cH_{\cY}(1)$ consists of $\sqrt{K}$-uniformly bounded functions. Let $\|\cdot\|_{\cH_{\cY}}$ be the norm of $\cH_{\cY}$ so that $f \in \cH_{\cY}(1)$ is equivalent to $\|f\|_{\cH_{\cY}} \le 1$. Then, the reproducing property implies
	\begin{equation*}
		|f(y)| \le \|f\|_{\cH_{\cY}} \sqrt{K_{\cY}(y, y)} \le \sqrt{K} 
	\end{equation*}
	for any $f \in \cH_{\cY}(1)$. Accordingly, $\cH_{\cY}(1) \circ \cF$ also consists of $\sqrt{K}$-uniformly bounded functions. Hence, Lemma \ref{lem:bound} implies that
	\begin{equation*}
		\begin{split}
			\sup_{F \in \cF} \mathrm{MMD}_{K_{\cY}}(F_{\#}\widehat{\mu}_{m}, F_{\#}\mu)
			&= \sup_{f \in \cH_{\cY}(1) \circ \cF} \left|\int f \dd{\widehat{\mu}_{m}} - \int f \dd{\mu} \right| \\
			&\le 2 R_m(\cH_{\cY}(1) \circ \cF, \mu) + \sqrt{\frac{2 K \log(1/\delta)}{m}}
		\end{split}
	\end{equation*}
	holds with probability at least $1 - \delta$. Therefore, combining this with the upper bound on $\mathrm{MMD}_{K_{\cY}}(\widehat{\nu}_n, \nu)$ derived in Proposition \ref{prop:mmd_fixed},
	\begin{equation}
		\label{eqn:m1}
		\sup_{F \in \cF} |\mathrm{MMD}^2_{K_{\cY}}(F_{\#}\widehat{\mu}_m, \widehat{\nu}_n) - \mathrm{MMD}^2_{K_{\cY}}(F_{\#}\mu, \nu)|
		\precsim 
		R_m(\cH_{\cY}(1) \circ \cF, \mu) + \sqrt{\frac{\log(1/\delta)}{m}} + \sqrt{\frac{\log(1/\delta)}{n}}
	\end{equation}
	holds with probability at least $1 - 2 \delta$. Similarly, we can prove that
	\begin{equation}
		\label{eqn:m2} 
		\sup_{B \in \cB} |\mathrm{MMD}^2_{K_{\cX}}(\widehat{\mu}_m, B_{\#}\widehat{\nu}_n) - \mathrm{MMD}^2_{K_{\cX}}(\mu, B_{\#} \nu)| 
		\precsim 
		R_n(\cH_{\cX}(1) \circ \cB, \nu) + \sqrt{\frac{\log(1/\delta)}{m}} + \sqrt{\frac{\log(1/\delta)}{n}}
	\end{equation}
	holds with probability at least $1 - 2 \delta$. 
	
	Lastly, since $K_{\cX} \otimes K_{\cY}$ is bounded by $K^2$, that is, 
	\begin{equation*}
		\sup_{(x, y), (x', y') \in \cX \times \cY} K_{\cX} \otimes K_{\cY}((x, y), (x', y')) \le K^2,
	\end{equation*}
	by the same argument, we have
	\begin{equation*}
		\begin{split}
			& \sup_{(F,B) \in \cF \times \cB} |\mathrm{MMD}^2_{K_{\cX} \otimes K_{\cY}} ((\mathrm{Id}, F)_{\#}\widehat{\mu}_m, (B, \mathrm{Id})_{\#} \widehat{\nu}_n) - \mathrm{MMD}^2_{K_{\cX} \otimes K_{\cY}} ((\mathrm{Id}, F)_{\#}\mu, (B, \mathrm{Id})_{\#} \nu)| \\
			&\quad \le 4 K \left[\sup_{F \in \cF} \mathrm{MMD}_{K_{\cX} \otimes K_{\cY}} ((\mathrm{Id}, F)_{\#}\widehat{\mu}_m, (\mathrm{Id}, F)_{\#}\mu) + \sup_{B \in \cB} \mathrm{MMD}_{K_{\cX} \otimes K_{\cY}} ((B, \mathrm{Id})_{\#} \widehat{\nu}_n, (B, \mathrm{Id})_{\#} \nu)\right] \;.
		\end{split}
	\end{equation*}
	Analogously, $\cH_{\cX \times \cY}(1)$ consists of $K$-uniformly bounded functions, hence
	\begin{align*}
		\sup_{F \in \cF} \mathrm{MMD}_{K_{\cX} \otimes K_{\cY}} ((\mathrm{Id}, F)_{\#}\widehat{\mu}_m, (\mathrm{Id}, F)_{\#}\mu)
		& = \sup_{f \in \cH_{\cX \times \cY}(1) \circ (\mathrm{Id}, \cF)} \left|\int f \dd{\widehat{\mu}_{m}} - \int f \dd{\mu} \right| \\
		& \le 2 R_m(\cH_{\cX \times \cY}(1) \circ (\mathrm{Id}, \cF), \mu) + \sqrt{\frac{2 K^2 \log(1/\delta)}{m}} \;, \\
		\sup_{B \in \cB} \mathrm{MMD}_{K_{\cX} \otimes K_{\cY}} ((B, \mathrm{Id})_{\#} \widehat{\nu}_n, (B, \mathrm{Id})_{\#} \nu)
		& = \sup_{f \in \cH_{\cX \times \cY}(1) \circ (\cB, \mathrm{Id})} \left|\int f \dd{\widehat{\nu}_{n}} - \int f \dd{\nu} \right| \\
		& \le 2 R_n(\cH_{\cX \times \cY}(1) \circ (\cB, \mathrm{Id}), \nu) + \sqrt{\frac{2 K^2 \log(1/\delta)}{n}} \;,
	\end{align*}	
	each of which holds with probability at least $1 - \delta$. Therefore,
	\begin{equation}
		\begin{split}		
			\label{eqn:m3}
			& \sup_{(F,B) \in \cF \times \cB} |\mathrm{MMD}^2_{K_{\cX} \otimes K_{\cY}} ((\mathrm{Id}, F)_{\#}\widehat{\mu}_m, (B, \mathrm{Id})_{\#} \widehat{\nu}_n) - \mathrm{MMD}^2_{K_{\cX} \otimes K_{\cY}} ((\mathrm{Id}, F)_{\#}\mu, (B, \mathrm{Id})_{\#} \nu)| \\
			& \precsim R_m(\cH_{\cX \times \cY}(1) \circ (\mathrm{Id}, \cF), \mu) + R_n(\cH_{\cX \times \cY}(1) \circ (\cB, \mathrm{Id}), \nu) + \sqrt{\frac{\log(1/\delta)}{m}} + \sqrt{\frac{\log(1/\delta)}{n}}
		\end{split}
	\end{equation}
	holds with probability at least $1 - 2 \delta$. We complete the proof by combining \eqref{eqn:m1}, \eqref{eqn:m2}, \eqref{eqn:m3}.
\end{proof}

\begin{proof}[Proof of Lemma~\ref{lem:chaining}]
	For any integer $j \in \N \cup \{0\}$, define $\delta_j = 2^{-j} \Delta$, and let $\cT_{j} \subset \mathbb{S}_0^{m \times m}$ be a minimal $\delta_j$-covering net of $\cT$; clearly, $|\cT_j| = N(\delta_j, \cT)$. For each $j$, the covering set induces a mapping $\Pi_j : \cT \rightarrow \cT_j$ such that
	\begin{align*}
		\sup_{A \in \cT} d(A, \Pi_j(A)) \leq \delta_j \;.
	\end{align*}
	By definition of $\Delta$, we may assume $\cT_0 = \{A_0\}$ so that $\Pi_0(A) = A_0$ for all $A \in \cT$.

	Note that $\E g^\top A_0 g = 0$ by definition. Using this, we write $\E \sup_{A \in \cT} g^\top A g$ as a chaining sum:
	\begin{align*}
		\E \sup_{A \in \cT} g^\top A g
		& = \E_{g} \sup_{A \in \cT} g^\top A g - \E g^\top A_0 g \\ 
		& = \E \sup_{A \in \cT} \left(g^\top A g  - g^\top A_0 g  \right) \\
		& = \E \sup_{A \in \cT} \left( g^\top A g  - g^\top \Pi_J(A) g + \sum_{j=0}^{J-1} g^\top \Pi_{j+1}(A) g -  g^\top \Pi_{j}(A) g \right) \\
		& \leq \E \sup_{A \in \cT}  \left( g^\top A g  - g^\top \Pi_J(A) g \right) + \sum_{j=0}^{J-1} \E \sup_{A \in \cT} \left( g^\top \Pi_{j+1}(A) g -  g^\top \Pi_{j}(A) g \right)\;.
	\end{align*}
	For the first term on RHS, using the Cauchy-Schwarz inequality and Jensen's inequality, we have
	\begin{align*}
		\E \sup_{A \in \cT} \left( g^\top A g  - g^\top \Pi_J(A) g \right) \leq \E \left[  \big(\sum_{i\neq j} g_i^2 g_j^2 \big)^{1/2} \cdot \delta_J \right]  \leq m \delta_J \;.
	\end{align*}
	For each summand in the second term on RHS, use Lemma~\ref{lem:maximal-ineq}. Note that for any $j$, the maximal cardinality of
	\begin{equation*}
		| \{ (\Pi_{j+1}(A), \Pi_{j}(A)) ~:~ A \in \cT \} | \leq N(\delta_{j+1}, \cT) \times N(\delta_{j}, \cT) \leq N(\delta_{j+1}, \cT)^2
	\end{equation*}
	and that
	\begin{equation*}
		d(\Pi_{j+1}(A), \Pi_{j}(A)) \leq  d(\Pi_{j+1}(A), A) + d(A, \Pi_{j}(A)) \leq 3 \delta_{j+1} \;.
	\end{equation*}
	Since $\Pi_{j + 1}(A) - \Pi_j(A) \in \mathbb{S}_0^{m \times m}$ and $\|\Pi_{j + 1}(A) - \Pi_j(A)\| \le 3 \delta_{j + 1}$, Lemma~\ref{lem:maximal-ineq} asserts that for any $j$ 
	\begin{align*}
		\E \sup_{A \in \cT}  \left( g^\top \Pi_{j+1}(A) g -  g^\top \Pi_{j}(A) g \right) \leq 6\delta_{j+1} \sqrt{2\log N(\delta_{j+1}, \cT)} + 12 \delta_{j+1} \log N(\delta_{j+1}, \cT) \;.
	\end{align*}
	Summing over $j$, we have for any $J$, the following inequality,
	\begin{align*}
		\E \sup_{A \in \cT} \left( g^\top A g  - g^\top A_0 g  \right) 
		\leq m \delta_J +  12\int_{\delta_J / 2}^{\Delta/2} \sqrt{2 \log N(\delta, \cT)} \dd{\delta} + 24 \int_{\delta_J / 2}^{\Delta/2} \log N(\delta, \cT) \dd{\delta} \;.
	\end{align*}
\end{proof}

\begin{proof}[Proof of Proposition~\ref{prop:Rademacher-upper-bound-by-chaining}]
	To make use of the chaining inequality, we are only left to bound the covering number $N(\delta, \cT)$ with
	\begin{equation*}
		\cT \coloneqq \{ A_F : F \in \cF \} \subset \mathbb{S}_0^{m \times m}\;.
	\end{equation*}
	Lipschitzness of $K_{\cY}$ (Assumption \ref{a:lip_kernels}) implies 
	\begin{equation*}
		\left|K_{\cY}(F(x_i), F(x_j)) - K_{\cY}(F'(x_i), F'(x_j))\right| \le \frac{L}{2} \|F(x_i) - F'(x_i)\| + \frac{L}{2} \|F(x_j) - F'(x_j)\|\;.
	\end{equation*}
	Hence, 
	\begin{equation*}
		\begin{split}
			d(A_F, A_{F'})
			& \leq m \max_{i\neq j} \left|K_{\cY}(F(x_i), F(x_j)) - K_{\cY}(F'(x_i), F'(x_j))\right| \\
			& \le \frac{m L}{2} \left( \max_{i \in [m]} \|F(x_i) - F'(x_i)\| + \max_{j \in [m]} \|F(x_j) - F'(x_j)\| \right)\\
			& = m L \max_{i \in [m]} \| F(x_i) - F'(x_i) \| \\
			& \le m L \max_{i \in [m]} \left(\sum_{k = 1}^{\mathrm{dim}(\cY)}  |F_k(x_i) - F_k'(x_i)|^2 \right)^{1/2} \\
			& \le m L \left[\sum_{k = 1}^{\mathrm{dim}(\cY)} \left(\max_{i \in [m]} |F_k(x_i) - F_k'(x_i)|\right)^2 \right]^{1/2} \;.	
		\end{split}
	\end{equation*}
	As in the proof of Proposition \ref{prop:union-bound-GW-term}, for $\epsilon > 0$, let $\cN_\infty(\epsilon, \cF_k, \{x_i\}_{i=1}^{m})$ be a minimal $\epsilon$-covering net of $\cF_k$. Then, one can easily see that 
	\begin{equation*}
		\left\{ A_F : F \in \otimes_{k = 1}^{\mathrm{dim}(\cY)}\cN_\infty \left( \tfrac{\delta}{m L \sqrt{\mathrm{dim}(\cY)}}, \cF_{k}, \{x_i\}_{i = 1}^{m}) \right) \right\}
	\end{equation*}
	is a $\delta$-covering of $\cT$. Therefore, we conclude
	\begin{equation*}
		N(\delta, \cT) 
		\le
		\prod_{k=1}^{{\rm dim}(\cY)} N_\infty \left(\tfrac{\delta}{m L \sqrt{\mathrm{dim}(\cY)}}, \cF_{k}, \{x_i\}_{i = 1}^{m} \right) 
		\le \prod_{k=1}^{{\rm dim}(\cY)} N_\infty \left(\tfrac{\delta}{m L \sqrt{\mathrm{dim}(\cY)}}, \cF_{k}, m \right) \;.
	\end{equation*}
	Lastly, we bound $N_\infty(\delta, \cF_k, m)$ by the pseudo-dimension of $\cF_k$ via Lemma \ref{lem:uniform_covering_to_pdim}. Combining Assumption \ref{a:uniform_boundedness} and Lemma \ref{lem:uniform_covering_to_pdim}, we have 
	\begin{align*}
		N(\delta, \cT) 
		\leq \prod_{k=1}^{{\rm dim}(\cY)} \left( \frac{2 e m b}{ \tfrac{\delta}{m L \sqrt{\mathrm{dim}(\cY)}} \cdot {\rm Pdim}(\cF_{k})} \right)^{{\rm Pdim}(\cF_{k})} \;.
	\end{align*}

	Now, to apply Lemma~\ref{lem:chaining}, fix $F_0 \in \cF$ and let $A_0 = A_{F_0}$. Then, 
	\begin{equation*}
		\int_{\delta_J / 2}^{\Delta/2} \log N(\delta, \cT) \dd{\delta} 
		\leq \Delta / 2 \cdot \sum_{k=1}^{{\rm dim}(\cY)} {\rm Pdim}(\cF_{k}) \cdot \log \left( \frac{2 e m b}{ \tfrac{1}{m L \sqrt{\mathrm{dim}(\cY)}} \Delta 2^{-J} / 2 \cdot {\rm Pdim}(\cF_{k})} \right) \;.
	\end{equation*}
	First, we obtain an upper bound on $\Delta$:
	\begin{equation*}
		\Delta = \sup_{F \in \cF} \|A_F - A_0\| \le m \max_{i\neq j} \left|K_{\cY}(F(x_i), F(x_j)) - K_{\cY}(F_0(x_i), F_0(x_j))\right| \le 2 m K \;.
	\end{equation*}
	Next, we claim that $\Delta$ is bounded below by a universal constant; this is to upper bound $\Delta$ in the denominator. Consider $y_0$ and $y_0'$ given in Assumption \ref{a:separation}. We may assume that $F_0$ is the constant map explained in Assumption \ref{a:separation}: $F_0(x) = y_0$ for all $x \in \cX$. Without loss of generality, we assume $x_1 \neq x_2$. Then, we can find $F \in \cF$ such that $F(x_1) = y_0$ and $F(x_2) = y_0'$ according to Assumption \ref{a:separation}. Hence, 
	\begin{equation*}
		\Delta \ge |K_{\cY}(F(x_1), F(x_2)) - K_{\cY}(F_0(x_1), F_0(x_2))| = |K_{\cY}(y_0, y_0') - K_{\cY}(y_0, y_0)| > 0 \;. 
	\end{equation*}
	Therefore, with the choice of $J$ such that $m 2^{-J} \asymp \sum_k {\rm Pdim}(\cF_{k})$, 
	\begin{equation*}
		\begin{split}
			\int_{\delta_J / 2}^{\Delta/2} \log N(\delta, \cT) \dd{\delta} 
			& \precsim m \left[\sum_{k=1}^{{\rm dim}(\cY)} {\rm Pdim}(\cF_{k}) \right] \cdot \log \left(\frac{m}{\min_{k \in [\mathrm{dim}(\cY)]} {\rm Pdim}(\cF_{k}) } \right) \\
			& \le m \log(m) \left[\sum_{k=1}^{{\rm dim}(\cY)} {\rm Pdim}(\cF_{k}) \right] \;.
		\end{split} 
	\end{equation*}
	Analogously, 
	\begin{equation*}
		\int_{\delta_J / 2}^{\Delta/2} \sqrt{2 \log N(\delta, \cT)} \dd{\delta} 
		\precsim m \log(m) \left[\sum_{k=1}^{{\rm dim}(\cY)} {\rm Pdim}(\cF_{k}) \right] \;.
	\end{equation*}
	Thus,
	\begin{align*}
		R_m(\cH_y(1)\circ \cF, \{x_i\}_{i=1}^{m}) \precsim \frac{1}{m} \left[ m K + \E_{g} \sup_{F \in \cF}  g^\top A_F g \right]^{1/2}
		\precsim \sqrt{\frac{\log m}{m} \sum_{k =1}^{{\rm dim}(\cY)} {\rm Pdim}(\cF_{k}) }
	\end{align*}
	
	The same argument can be applied to the other three Rademacher complexities. Hence, we have proved the proposition.
\end{proof}

\subsection{Auxiliary Lemmas}
\begin{lemma}[Theorem 4.10 of \cite{wainwright_2019}]
	\label{lem:bound}
	Let $(\cZ, \rho)$ be a probability space and $\cG$ be a class of $b$-uniformly bounded measurable functions defined on $\cZ$, that is, $\sup_{g \in \cG} \|g\|_\infty \le b$. Let $z_1, \ldots, z_m$ are i.i.d.\ samples from $\rho$ and let $\widehat{\rho}_m$ be the empirical measure constructed from them. Then, for any $\delta > 0$,
	\begin{equation*}
		\sup_{g \in \cG} \left|\int g \dd{\widehat{\rho}_m} - \int g \dd{\rho} \right| \le 2 R_m(\cG, \rho) + \sqrt{\frac{2 b^2 \log(1 / \delta)}{m}}
	\end{equation*}
	holds with probability at least $1 - \delta$.
\end{lemma}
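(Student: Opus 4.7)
The plan is to combine a concentration step via the bounded differences inequality with a symmetrization step that introduces the Rademacher complexity. Write $Z := \sup_{g \in \cG} |\int g \dd{\widehat{\rho}_m} - \int g \dd{\rho}|$, viewed as a function of the i.i.d.\ sample $(z_1,\ldots,z_m)$. First I would verify the bounded differences property: if we replace a single coordinate $z_i$ by an arbitrary $z_i' \in \cZ$ to form $Z'$, then inside the supremum only the single summand $\frac{1}{m}(g(z_i) - g(z_i'))$ changes, so by the triangle inequality and $\|g\|_\infty \le b$ we have $|Z - Z'| \le 2b/m$. McDiarmid's inequality then gives, with probability at least $1 - \delta$,
\begin{equation*}
    Z \le \E[Z] + \sqrt{\frac{2 b^2 \log(1/\delta)}{m}}.
\end{equation*}

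Next I would bound $\E[Z]$ by symmetrization. Introduce an independent ghost sample $z_1',\ldots,z_m' \stackrel{\text{iid}}{\sim} \rho$, so that $\int g \dd{\rho} = \E\bigl[\frac{1}{m}\sum_i g(z_i')\bigr]$. Pulling the expectation outside the supremum by Jensen's inequality,
\begin{equation*}
    \E[Z] \le \E \sup_{g \in \cG} \Bigl|\frac{1}{m}\sum_{i=1}^m \bigl(g(z_i) - g(z_i')\bigr)\Bigr|.
\end{equation*}
Because the pairs $(z_i, z_i')$ are exchangeable, the distribution of the summands is unchanged under multiplication by i.i.d.\ Rademacher signs $\epsilon_i$; inserting them and splitting the two terms with the triangle inequality yields $\E[Z] \le 2 R_m(\cG, \rho)$.

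Combining the two displays gives the claimed bound. The only subtle check is the bounded differences constant (step one), where one must use that the supremum of a collection of functions is $1$-Lipschitz in each coordinate with respect to the coordinatewise variation; everything else is routine. Since this is precisely Theorem 4.10 of Wainwright (2019), I would simply cite that source for the two sub-inequalities and present the combination as above.
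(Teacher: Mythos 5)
Your proof is correct and is precisely the standard argument found in the cited source (Wainwright 2019, Theorem 4.10); the paper itself invokes this lemma by citation and does not reprove it. The bounded-differences constant $2b/m$, McDiarmid's inequality, and the symmetrization step introducing Rademacher signs all check out, including the resulting deviation term $\sqrt{2b^2\log(1/\delta)/m}$.
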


\begin{lemma}[Theorem 12.2 of \cite{anthony_bartlett_1999}]
	\label{lem:uniform_covering_to_pdim}
	Let $\cG$ be a collection of real-valued functions defined on a set $\cZ$. Suppose $\sup_{g \in \cG} ||g||_\infty = b < \infty$. For $\epsilon > 0$ and $m \ge \mathrm{Pdim}(\cG)$,  
	\begin{equation*}
		N_\infty(\epsilon, \cG, m) \le \left( \frac{2 e m b}{\epsilon \cdot {\rm Pdim}(\cG)} \right)^{{\rm Pdim}(\cG)} \;.
	\end{equation*}
\end{lemma}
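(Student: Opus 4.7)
My plan is to prove the bound via a discretization argument combined with the Sauer--Shelah lemma applied to the natural binary function class associated with the pseudo-dimension of $\cG$. Write $d = \mathrm{Pdim}(\cG)$ and fix arbitrary points $z_1,\ldots,z_m \in \cZ$; by definition of the uniform covering number it suffices to show that the pseudometric induced by these specific points admits an $\epsilon$-net of size at most $(2emb/(\epsilon d))^d$, since the bound is independent of the choice of points.

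The first step is to discretize the range. Since every $g \in \cG$ is bounded in $[-b,b]$, introduce the thresholds $t_j := -b + j\epsilon$ for $j=0,1,\ldots,K$ with $K := \lceil 2b/\epsilon\rceil$, so that every value $g(z_i)$ lies in exactly one half-open slab $[t_j,t_{j+1})$. Associate to each $g$ the binary pattern $\pi(g) \in \{0,1\}^{m\times K}$ defined by $\pi(g)_{i,j} = \mathbf{1}[g(z_i) \ge t_j]$. Observe that if $\pi(g) = \pi(g')$, then $g(z_i)$ and $g'(z_i)$ fall in the same slab for every $i$, whence $\max_i |g(z_i) - g'(z_i)| < \epsilon$. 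Thus choosing one representative $g$ from each equivalence class of $\pi$ produces an $\epsilon$-cover of $\cG$ in the pseudometric $d(g,g') = \max_{i\in[m]} |g(z_i) - g'(z_i)|$, and the covering number is bounded by the number of distinct patterns $|\{\pi(g) : g\in\cG\}|$.

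The second step is to bound that number of distinct patterns via Sauer--Shelah. Consider the binary function class
\begin{equation*}
\cH := \bigl\{\, (z,t) \mapsto \mathbf{1}[g(z) \ge t] \;:\; g \in \cG \,\bigr\}
\end{equation*}
on the product domain $\cZ\times\R$. By the very definition of pseudo-shattering (Definition~\ref{def:pdim}), the VC-dimension of $\cH$ equals $\mathrm{Pdim}(\cG)=d$: a set of points $\{(z_i,r_i)\}$ is shattered by $\cH$ in the VC sense if and only if $\{z_i\}$ is pseudo-shattered by $\cG$ with witnesses $\{r_i\}$. Applied to the $N := mK$ pairs $\{(z_i,t_j)\}_{i\in[m],\,j\in[K]}$, the Sauer--Shelah lemma yields
\begin{equation*}
|\{\pi(g) : g\in\cG\}| \;\le\; \sum_{i=0}^{d}\binom{N}{i} \;\le\; \left(\frac{eN}{d}\right)^{d},
\end{equation*}
valid since $m\ge d$ and therefore $N\ge d$. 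Plugging $N = mK \le 2mb/\epsilon$ gives the claimed bound $(2emb/(\epsilon d))^d$, and taking the supremum over the choice of $z_1,\ldots,z_m$ finishes the proof.

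The main obstacle is purely bookkeeping: verifying that the one-sided threshold class $\cH$ has VC-dimension exactly $\mathrm{Pdim}(\cG)$ (as opposed to something slightly off by $\pm 1$ or requiring strict versus weak inequalities) and confirming that the discretization factor $K = \lceil 2b/\epsilon\rceil$ yields the constant $2e$ rather than, say, $4e$. Once those two small matchings are in place, the skeleton above delivers the stated inequality verbatim.
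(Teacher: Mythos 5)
The paper does not prove this lemma; it cites it directly as Theorem~12.2 of Anthony and Bartlett (1999), so there is no internal argument to compare against. Your proof follows the canonical route from that reference---pass to the one-sided threshold class $\cH$ on $\cZ\times\R$, note that pseudo-shattering of $\cG$ is precisely VC-shattering of $\cH$, discretize the range into $\epsilon$-slabs, and apply Sauer--Shelah to the $mK$ augmented points---and the skeleton is correct. One plug-in does go the wrong way, though: with $K=\lceil 2b/\epsilon\rceil$ you have $N=mK\ge 2mb/\epsilon$, not $\le$, so that choice of $K$ does not deliver the constant $2e$. The fix is to use one fewer threshold, $K'=\lceil 2b/\epsilon\rceil-1\le 2b/\epsilon$, placed at $-b+\epsilon,\ldots,-b+K'\epsilon$; these still cut $[-b,b]$ into slabs of width at most $\epsilon$ (the top slab $[-b+K'\epsilon,\,b]$ has width $2b-K'\epsilon\le\epsilon$), and then $N=mK'\le 2mb/\epsilon$ gives $(eN/d)^d\le(2emb/(\epsilon d))^d$ exactly. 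That adjustment, together with the strict-versus-nonstrict inequality convention in the definition of $\mathrm{sign}$ for pseudo-shattering (which you can remove by perturbing the witnesses $r_i$), are precisely the two bookkeeping items you flagged, and both close cleanly.
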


\begin{lemma}[Example 2.12 of \cite{boucheron_lugosi_massart_2013}]
	 For any $A \in \mathbb{S}_0^{m \times m}$ and $0 \leq \lambda < 1/(2\|A\|_{\mathrm{op}})$,
	\begin{align}
		\log \E_g e^{\lambda g^\top A g} \leq \frac{\lambda^2 \|A \|}{1- 2\lambda \| A \|_{\mathrm{op}}} \;,
	\end{align}
	where $g \sim N(0, I_m)$. Here, $\| \cdot \|_{\mathrm{op}}$ denotes the operator norm of $A$.
\end{lemma}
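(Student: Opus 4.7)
The plan is to use orthogonal diagonalization together with rotational invariance of the standard Gaussian to decouple the quadratic form into independent chi-squared contributions, and then bound each factor's log-MGF by a careful Taylor expansion that exploits the zero-diagonal hypothesis to kill the linear term. Write $A = U^\top \Lambda U$ with $U$ orthogonal and $\Lambda = \mathrm{diag}(\mu_1,\dots,\mu_m)$, and note crucially that the zero-diagonal hypothesis yields $\sum_i \mu_i = \mathrm{tr}(A)=0$.

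Next, because $g \sim N(0,I_m)$ is rotationally invariant, $\tilde g := Ug \sim N(0,I_m)$, so $g^\top A g = \sum_i \mu_i \tilde g_i^2$ is a sum of independent weighted chi-square variables. Independence converts the log-MGF into a sum:
\begin{equation*}
\log \E_g e^{\lambda g^\top A g} \;=\; \sum_{i=1}^m \log \E\bigl[e^{\lambda \mu_i \tilde g_i^2}\bigr] \;=\; -\tfrac12 \sum_{i=1}^m \log(1 - 2\lambda \mu_i),
\end{equation*}
using the standard fact $\E e^{t Z^2} = (1-2t)^{-1/2}$ for $Z \sim N(0,1)$ and $t < 1/2$. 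The constraint $\lambda < 1/(2\|A\|_{\mathrm{op}})$ ensures $2\lambda |\mu_i| < 1$ for every $i$, since $|\mu_i| \le \|A\|_{\mathrm{op}}$, so each logarithm is well-defined.

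The key step is a Taylor expansion $-\tfrac12 \log(1-x) = \sum_{k\ge 1} x^k/(2k)$ applied with $x = 2\lambda \mu_i$. Summing over $i$, the $k=1$ term contributes $\lambda \sum_i \mu_i = 0$ by the zero-trace property — this is where the $\mathbb{S}_0^{m\times m}$ hypothesis is essential. For $k \ge 2$, use $|\mu_i|^k \le \mu_i^2 \|A\|_{\mathrm{op}}^{k-2}$ and $1/(2k) \le 1/4$, then sum the resulting geometric series in $k$ (which converges by the constraint on $\lambda$):
\begin{equation*}
\sum_{i} \sum_{k\ge 2} \frac{(2\lambda \mu_i)^k}{2k} \;\le\; \sum_i \mu_i^2 \cdot 4\lambda^2 \sum_{k\ge 2} \frac{(2\lambda \|A\|_{\mathrm{op}})^{k-2}}{4} \;=\; \frac{\lambda^2 \|A\|^2}{1 - 2\lambda \|A\|_{\mathrm{op}}},
\end{equation*}
where $\|A\|^2 = \sum_i \mu_i^2$ is the squared Frobenius norm (matching the paper's notational convention $\|A\|$ for Frobenius norm, so the claimed bound is understood with $\|A\|^2$ in the numerator).

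The main obstacle is essentially bookkeeping rather than deep: one must (i) recognize that the zero-diagonal assumption is precisely what eliminates the otherwise-linear-in-$\lambda$ term, and (ii) choose the correct inequality $|\mu_i|^k \le \mu_i^2 \|A\|_{\mathrm{op}}^{k-2}$ to separate the two natural norms — Frobenius for the summable quadratic cost and operator for the radius of convergence. Once both ingredients are in place, the geometric series closes immediately.
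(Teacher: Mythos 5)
Your proof is correct, and it is essentially the standard argument from Boucheron--Lugosi--Massart (Example 2.12), which the paper simply cites without reproducing a proof: diagonalize $A$, use rotational invariance and the chi-squared MGF to decouple, exploit $\mathrm{tr}(A)=0$ to cancel the linear term, and close the geometric series with the split $|\mu_i|^k \le \mu_i^2\|A\|_{\mathrm{op}}^{k-2}$. Your observation that the bound should read $\lambda^2\|A\|^2$ rather than $\lambda^2\|A\|$ in the numerator is also right: the paper defines $\|A\|$ as the Frobenius norm earlier, and its own sub-Gamma interpretation immediately after the lemma (variance factor $2\|A\|^2$, scale $2\|A\|_{\mathrm{op}}$) corresponds to $\log\E e^{\lambda g^\top A g}\le \lambda^2\|A\|^2 / (1-2\lambda\|A\|_{\mathrm{op}})$, confirming the missing exponent is a typo in the displayed inequality.
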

This lemma tells that $g^\top A g$ is a sub-Gamma random variable with variance factor $2 \|A\|^2$ and scale parameter $2 \|A\|_{\mathrm{op}}$ (see Chapter 2.4 of \cite{boucheron_lugosi_massart_2013} for the definition). Using Corollary 2.6 of the same text, we can derive the following maximal inequality.
\begin{lemma}[Maximal inequality]
	\label{lem:maximal-ineq}
	For $A_1, \ldots, A_N \in \mathbb{S}_0^{m \times m}$, suppose $\max_{i = 1, \ldots, N} \| A_i \| \leq \delta$. Then,
	\begin{align}
		\E_{g} \max_{i = 1, \ldots, N} g^\top A_i g \leq 2 \delta \left( \sqrt{\log N} + \log N \right) \;,
	\end{align}
	where $g \sim N(0, I_m)$.
\end{lemma}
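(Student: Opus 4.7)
The plan is to apply the sub-Gamma maximal inequality (Corollary 2.6 of Boucheron-Lugosi-Massart) directly to the random variables $X_i := g^\top A_i g$, after establishing two facts from the preceding lemma and elementary matrix inequalities.

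First, I would verify that each $X_i$ is centered: since $A_i \in \mathbb{S}_0^{m \times m}$ has zero diagonal and $g \sim N(0, I_m)$, we get $\E X_i = \sum_{k} (A_i)_{kk} = 0$. Second, the preceding lemma shows that $X_i$ is sub-Gamma with variance factor $v_i = 2\|A_i\|^2$ and scale parameter $c_i = 2\|A_i\|_{\mathrm{op}}$. Using $\|A_i\| \le \delta$ and the standard bound $\|A_i\|_{\mathrm{op}} \le \|A_i\| \le \delta$, we obtain a uniform sub-Gamma control: each $X_i$ is sub-Gamma with variance factor at most $v := 2\delta^2$ and scale at most $c := 2\delta$.

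Next, invoking Corollary 2.6 of Boucheron-Lugosi-Massart, for $N$ centered sub-Gamma random variables (not assumed independent) sharing common parameters $v$ and $c$, one has
\begin{equation*}
\E \max_{i=1,\ldots,N} X_i \;\le\; \sqrt{2 v \log N} \;+\; c \log N \;.
\end{equation*}
Substituting $v = 2\delta^2$ and $c = 2\delta$ yields
\begin{equation*}
\E_g \max_{i=1,\ldots,N} g^\top A_i g \;\le\; \sqrt{4 \delta^2 \log N} + 2\delta \log N \;=\; 2\delta \bigl( \sqrt{\log N} + \log N \bigr),
\end{equation*}
which is exactly the claimed bound.

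There is no real obstacle here; the only subtle points are (i) remembering that $\|\cdot\|_{\mathrm{op}} \le \|\cdot\|$ so a single $\delta$ controls both sub-Gamma parameters, and (ii) checking that the centering hypothesis required by Corollary 2.6 is free for us because $A_i$ has vanishing diagonal. Everything else is a direct plug-in into an off-the-shelf maximal inequality.
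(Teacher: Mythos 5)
Your proposal is correct and follows precisely the route the paper intends: the preceding lemma (BLM Example~2.12) gives the sub-Gamma parameters, $\|A_i\|_{\mathrm{op}}\le\|A_i\|\le\delta$ makes both parameters uniform, centering is automatic from the vanishing diagonal, and BLM Corollary~2.6 then yields $\sqrt{2\cdot 2\delta^2\log N}+2\delta\log N = 2\delta(\sqrt{\log N}+\log N)$. No gaps.
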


\section{Computational Aspects of the RGM Distance} 
\label{sec:comparison-with-GW}

Recall from Section \ref{sec:RGM-sampler} that we have utilized the Lagrangian form instead of the constrained form \eqref{eq:RGM} of the RGM distance. Such a practical computation allows us to easily implement the RGM sampler and we have observed its good empirical performance in Section \ref{sec:numerical}. This section shifts our focus to the exact computation of the RGM distance; we discuss conditions under which minimizing the Lagrangian form leads to a close approximation to the RGM distance. Another important computational aspect is the comparison of the RGM distance and the GW distance. $\mathrm{GW}(\mu, \nu) \le \mathrm{RGM}(\mu, \nu)$ holds in theory by Proposition \ref{prop:1}; using a concrete example, we approximate both quantities numerically and see how large the gap between them is. Lastly, we examine the numerical performance of the convex formulation discussed in \ref{subsec:cvx-representer} and compare with the results from the Lagrangian form.

\subsection{Approximation with the Lagrangian Form}
\label{subsec:approx-math}
First, we derive connections between the RGM distance and its Lagrangian formulation. As in Section \ref{sec:statistical-theory}, let 
\begin{equation*}
	C_0(F, B) = \int (c_{\cX}(x, B(y)) - c_{\cY}(F(x), y))^2 \dd{\mu \otimes \nu} \;
\end{equation*}
for any $F \colon \cX \to \cY$ and $B \colon \cY \to \cX$ so that $\mathrm{RGM}(\mu, \nu)^2 = \inf_{(F, B) \in \cI(\mu, \nu)} C_0(F, B)$. Define the Lagrangian form as 
\begin{equation*}
	L_{\lambda_1, \lambda_2, \lambda_3}(F, B) = C_0(F, B) + \lambda_1 \cdot \underbrace{\cL_{\cX \times \cY}((\mathrm{Id}, F)_{\#} \mu, (B, \mathrm{Id})_{\#} \nu)}_{\ell_1(F, B)} + \lambda_2 \cdot \underbrace{\cL_{\cX}(\mu, B_{\#} \nu)}_{\ell_2(F, B)} + \lambda_3 \cdot \underbrace{\cL_{\cY}(F_{\#} \mu, \nu)}_{\ell_3(F, B)} \;,
\end{equation*}
where $\cL_{\cX \times \cY}$, $\cL_{\cX}$, and $\cL_{\cY}$ are suitable nonnegative discrepancy measures as in Section \ref{sec:RGM-sampler}; in particular, we assume $\ell_1(F, B) = \ell_2(F, B) = \ell_3(F, B) = 0$ for $(F, B) \in \cI(\mu, \nu)$.\footnote{Though we may define the Lagrangian form without $\cL_{\cX}$ and $\cL_{\cY}$, we include them for a seamless connection with the experiment results in Section \ref{subsec:comparion-experiments}.} 

Suppose $\lambda_1, \lambda_2, \lambda_3 \ge 0$, then
\begin{equation*}
	\inf_{\substack{F \colon \cX \to \cY \\ B \colon \cY \to \cX}} L_{\lambda_1, \lambda_2, \lambda_3}(F, B) 
	\leq \inf_{(F, B) \in \cI(\mu, \nu)} L_{\lambda_1, \lambda_2, \lambda_3}(F, B) 
	= \mathrm{RGM}(\mu, \nu)^2
	= \inf_{(F, B) \in \cI(\mu, \nu)} C_0(F, B) \;.
\end{equation*}
We seek a minimizer of $L_{\lambda_1, \lambda_2, \lambda_3}$ over $\cF \times \cB$, namely, the product of suitable function classes as discussed in Section \ref{subsec:stat_rate}. Let 
\begin{equation}
	\label{eq:minimizer-lag}
	(F^\star, B^\star) \in \argmin_{(F, B) \in \cF \times \cB} L_{\lambda_1, \lambda_2, \lambda_3}(F, B) \;.
\end{equation}
If $\cI(\mu, \nu) \subseteq \cF \times \cB$, 
\begin{align*}
	C_0(F^\star, B^\star) 
	& \le L_{\lambda_1, \lambda_2, \lambda_3}(F^\star, B^\star) \quad (\because \lambda_1, \lambda_2, \lambda_3 \ge 0) \\
	& = \inf_{(F, B) \in \cF \times \cB} L_{\lambda_1, \lambda_2, \lambda_3}(F, B) \quad (\because \eqref{eq:minimizer-lag}) \\
	& \leq \inf_{(F, B) \in \cI(\mu, \nu)} L_{\lambda_1, \lambda_2, \lambda_3}(F, B) \quad (\because \cI(\mu, \nu) \subseteq \cF \times \cB) \\
	& = \mathrm{RGM}(\mu, \nu)^2 \;.
\end{align*}
Roughly speaking, if the function classes are rich enough to ensure $\cI(\mu, \nu) \subseteq \cF \times \cB$, the minimizer $(F^\star, B^\star)$ produces a lower bound $C_0(F^\star, B^\star)$ on $\mathrm{RGM}(\mu, \nu)^2$. On the other hand, if the minimizer satisfies the constraint $(F^\star, B^\star) \in \cI(\mu, \nu)$, then $C_0(F^\star, B^\star)$ is an upper bound on $\mathrm{RGM}(\mu, \nu)^2$ by definition.

Therefore, a sufficient condition for $C_0(F^\star, B^\star) = \mathrm{RGM}(\mu, \nu)^2$ is that the following two hold: $\cI(\mu, \nu) \subseteq \cF \times \cB$ and $(F^\star, B^\star) \in \cI(\mu, \nu)$.

\subsection{Numerical Experiments}
\label{subsec:comparion-experiments}
Using a concrete example, we compute the aforementioned quantities related to the RGM distance, approximate the GW distance, and compare them; we will also discuss the results from the convex formulation in Section~\ref{subsec:cvx-representer}. Throughout, we consider two point clouds on $\R^2$ as in Figure \ref{fig:Comparison}(a), that is, $\mu$ and $\nu$ are uniform distributions supported on 30 grid points of a segment and a circle, respectively. We fix the cost functions: $c_{\cX} = c_{\cY}$ is the RBF kernel that maps $(x, y)$ to $\exp(-\|x - y\|^2)$.
\begin{figure}[ht]
	\centering\includegraphics[width=8cm]{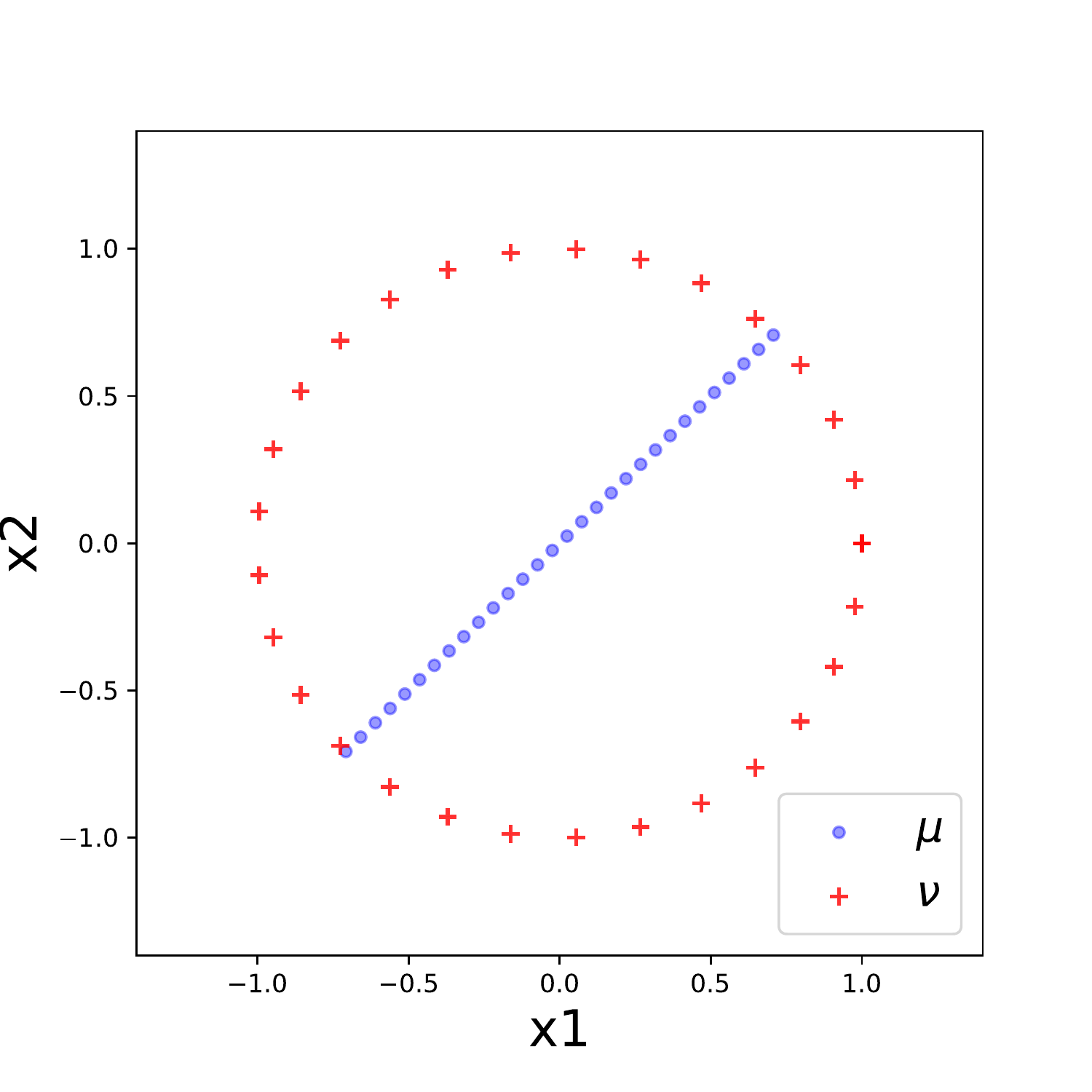}
    \caption{The supports of $\mu$ and $\nu$ are the grid points of a segment connecting $(-1, -1)$ and $(1, 1)$ and a circle $x^2 + y^2 = 1$, respectively.}
    \label{fig:Comparison}
\end{figure}

First, we aim to compute the quantities discussed in Section \ref{subsec:approx-math}. To this end, we specify the discrepancy measures and the function classes as follows.
\begin{itemize}
	\item $\cL_{\cX \times \cY} = \mathrm{MMD}_{K_{\cX} \otimes K_{\cY}}^2$, $\cL_{\cX} = \mathrm{MMD}_{K_{\cX}}^2$, $\cL_{\cY} = \mathrm{MMD}_{K_{\cY}}^2$, where $K_{\cX} = K_{\cY} = c_{\cX}$.
	\item $\cF = \cB$ is the class of neural networks with two hidden layers as follows:
	\begin{equation*}
		\{x \mapsto \tanh(W_2 \tanh(W_1 x + b_1)) + b_2) : W_1 \in \R^{2 \times 30}, b_1 \in \R^{30}, W_2 \in \R^{30 \times 2}, b_2 \in \R^{2}\} \;,
	\end{equation*}
	where tanh is the tangent hyperbolic function applied elementwise, that is, $\tanh(x) = (\tanh(x_1), \ldots, \tanh(x_k)) \in \R^k$ for $x = (x_1, \ldots, x_k) \in \R^k$. 
\end{itemize}
\begin{figure}[ht]
	\centering
    \subfloat[$\lambda_1 = \lambda_2 = \lambda_3 = 1$]{{\includegraphics[width=8cm]{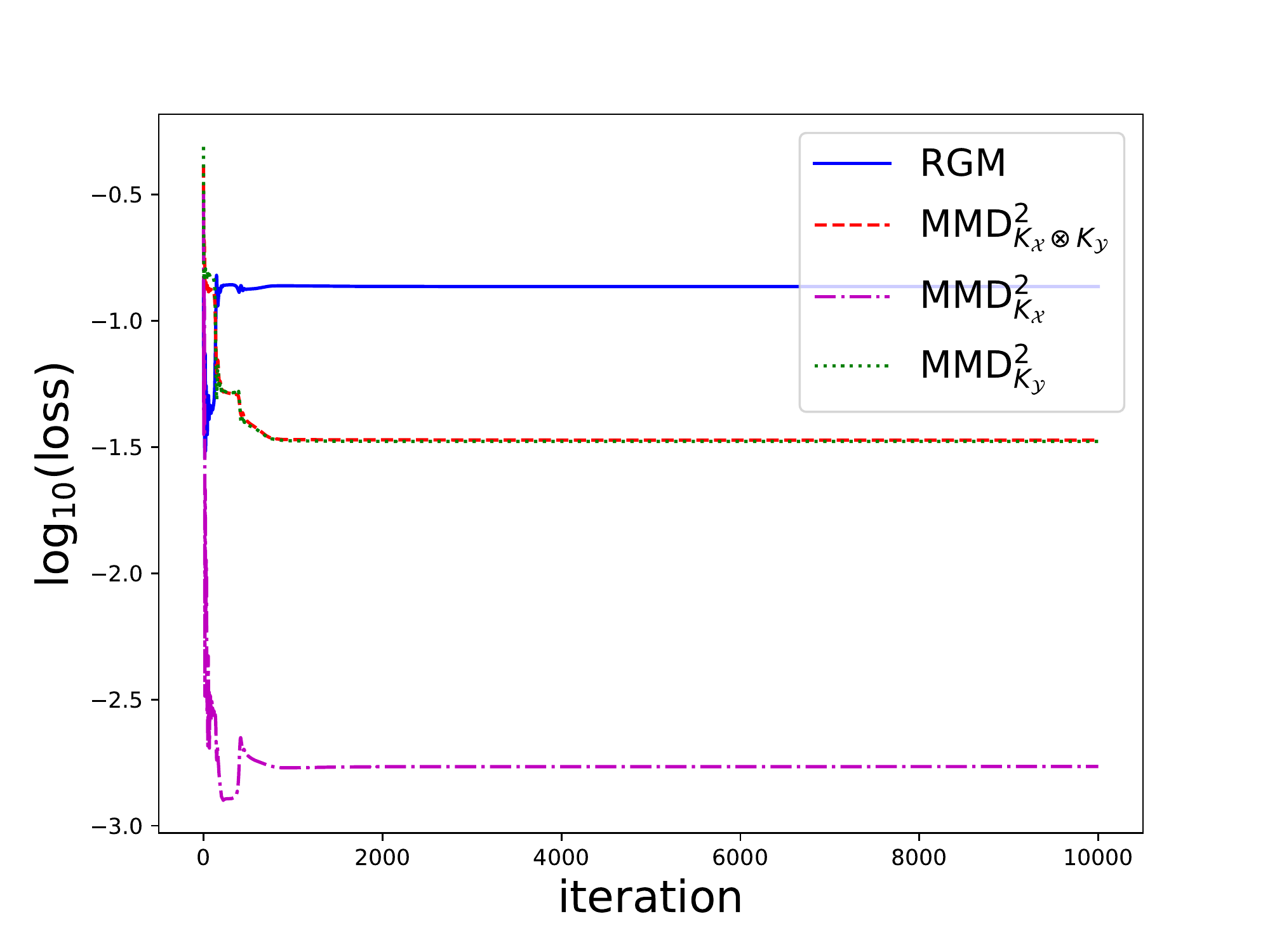}}}    
	\subfloat[$\lambda_1 = \lambda_2 = \lambda_3 = 10^2$]{{\includegraphics[width=8cm]{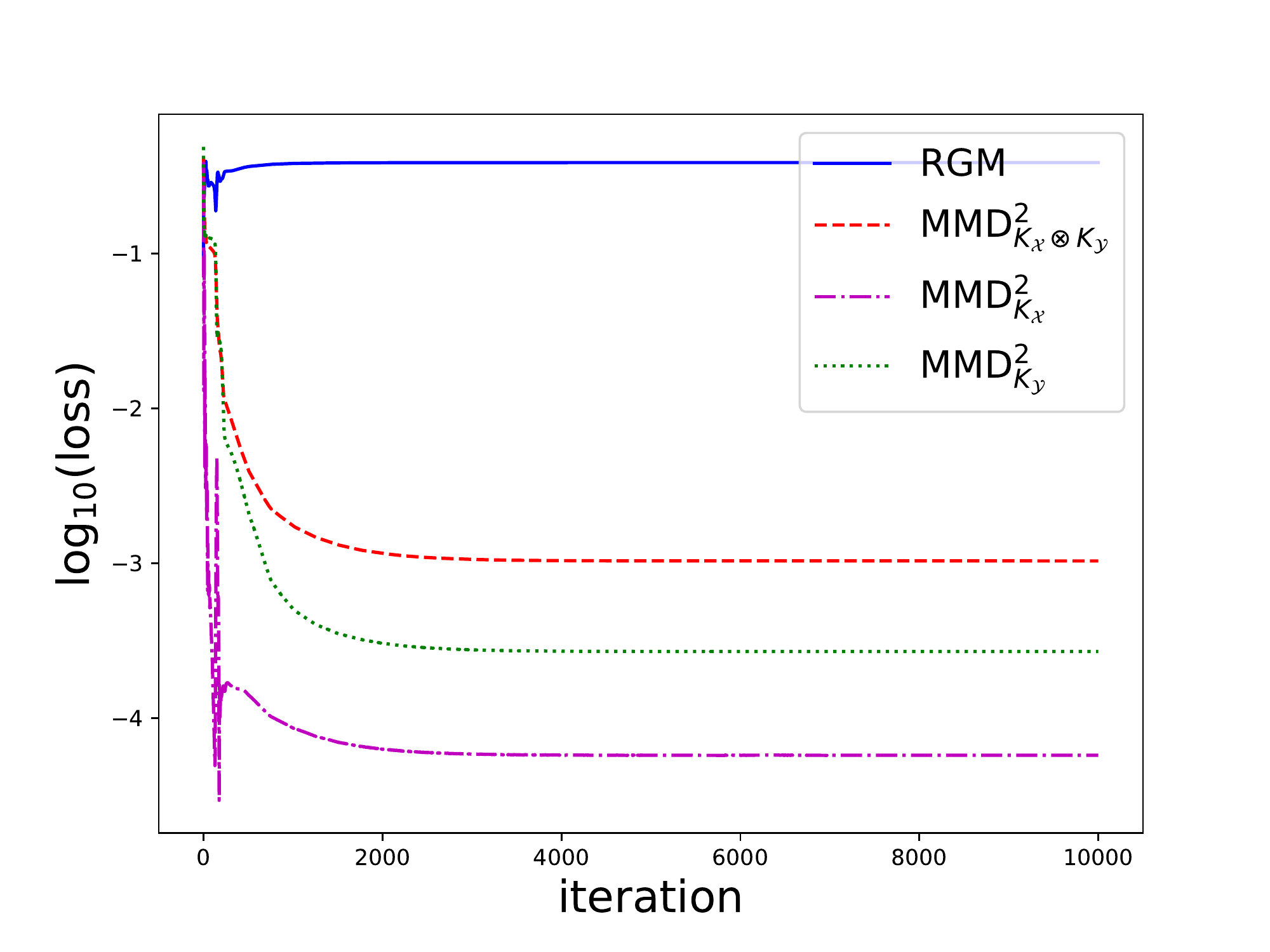}}} \\
	\subfloat[$\lambda_1 = \lambda_2 = \lambda_3 = 10^4$]{{\includegraphics[width=8cm]{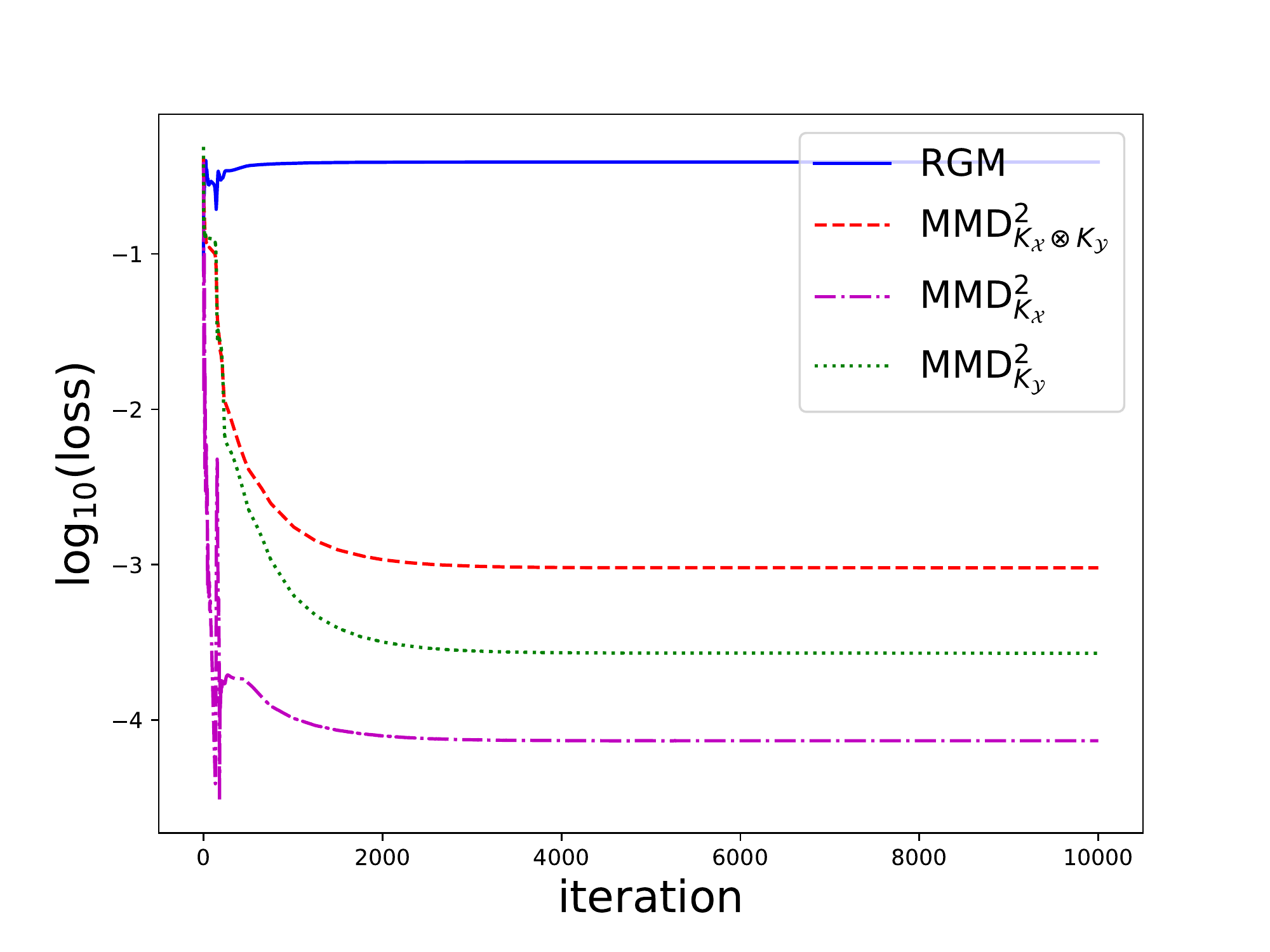}}}
    \caption{Training curves (10000 iterations). }
    \label{fig:training-nn}
\end{figure}
We use Adam \citep{kingma2017}, a variant of stochastic gradient descent, to find a minimizer of $L_{\lambda_1, \lambda_2, \lambda_3}$ over $\cF \times \cB$ as in \eqref{eq:minimizer-lag}. After 10000 iterations, we can see that the loss converges as in Figure \ref{fig:training-nn}, indicating we have a local minimizer $(\widehat{F}^\star, \widehat{B}^\star) \in \cF \times \cB$. Since this optimization problem may be nonconvex, there is no guarantee that this is a global minimizer, hence 
\begin{equation*}
	C_0(F^\star, B^\star) \le C_0(\widehat{F}^\star, \widehat{B}^\star) \;.
\end{equation*}
To examine the experiment results in light of Section \ref{subsec:approx-math}, we will assume
\begin{itemize}
	\item $\cF$ and $\cB$ are rich enough to ensure $\cI(\mu, \nu) \subset \cF \times \cB$,
	\item $(F, B) \in \cI(\mu, \nu)$ if and only if $\ell_1(F, B) = \ell_2(F, B) = \ell_3(F, B) = 0$,\footnote{This is true since we are using the RBF kernel as mentioned in Section \ref{sec:numerical}.}
	\item $(\widehat{F}^\star, \widehat{B}^\star)$ is indeed a global minimizer: $(\widehat{F}^\star, \widehat{B}^\star) \in \argmin_{(F, B) \in \cF \times \cB} L_{\lambda_1, \lambda_2, \lambda_3}(F, B)$.
\end{itemize}
Under these assumptions, $(\widehat{F}^\star, \widehat{B}^\star) \in \cI(\mu, \nu)$ implies $C_0(\widehat{F}^\star, \widehat{B}^\star) = \mathrm{RGM}(\mu, \nu)^2$ as discussed in Section \ref{subsec:approx-math}. To verify $(\widehat{F}^\star, \widehat{B}^\star) \in \cI(\mu, \nu)$, we check the values of $\ell_1(F, B)$, $\ell_2(F, B)$, and $\ell_3(F, B)$ in Table \ref{tab:computation-results}. We observe that they get smaller as we increase the values of the Lagrangian multipliers. For the cases where $\lambda_1 = \lambda_2 = \lambda_3 = 10^2$ or $\lambda_1 = \lambda_2 = \lambda_3 = 10^4$, the values $\ell_1, \ell_2, \ell_3$ are sufficiently small to conclude $(\widehat{F}^\star, \widehat{B}^\star) \in \cI(\mu, \nu)$; hence, we can roughly estimate $\mathrm{RGM}(\mu, \nu)^2 \approx C_0(\widehat{F}^\star, \widehat{B}^\star) \approx 0.39$.

\begin{table}[ht]
	\centering
	\begin{tabular}{| c | c | c | c |} 
	\hline
	& $\lambda_1 = \lambda_2 = \lambda_3 = 1$ & $\lambda_1 = \lambda_2 = \lambda_3 = 10^2$ & $\lambda_1 = \lambda_2 = \lambda_3 = 10^4$ \\
	\hline
	$C_0(\widehat{F}^\star, \widehat{B}^\star)$ & 0.136 & 0.386 & 0.390 \\
	$\ell_1(\widehat{F}^\star, \widehat{B}^\star)$ & $3.366 \times 10^{-2}$ & $1.034 \times 10^{-3}$ & $9.550 \times 10^{-4}$ \\
	$\ell_2(\widehat{F}^\star, \widehat{B}^\star)$ & $1.716 \times 10^{-3}$ & $5.758 \times 10^{-5}$ & $7.379 \times 10^{-5}$ \\
	$\ell_3(\widehat{F}^\star, \widehat{B}^\star)$ & $3.327 \times 10^{-2}$ & $2.689 \times 10^{-4}$ & $2.698 \times 10^{-4}$ \\
	\hline
	$L_{\lambda_1, \lambda_2, \lambda_3}(\widehat{F}^\star, \widehat{B}^\star)$ & 0.205 & 0.522 & 13.377 \\
	\hline
	\end{tabular}
	\caption{Minimum values of the Lagrangian form.}
	\label{tab:computation-results}
\end{table}

\paragraph{Comparison with GW}
As discussed earlier, exact computation of the GW distance (Definition \ref{def:mms}) is impossible in general. Here, we estimate it using an off-the-shelf computational tool called Python Optimal Transport (POT) \citep{flamary2021pot} widely used in literature, which yields $\mathrm{GW}^2(\mu, \nu) \approx 0.171$.\footnote{Technically, this should be an upper bound on the exact value of $\mathrm{GW}^2(\mu, \nu)$ because the result of POT ought to be a local minimizer.} Combined with the previous computation, we can say
\begin{equation*}
	1 \le \frac{\mathrm{RGM}(\mu, \nu)}{\mathrm{GW}(\mu, \nu)} \approx \sqrt{\frac{0.390}{0.171}} = 1.515 \;,
\end{equation*}
indicating that the RGM distance is approximately the GW distance times 1.5. This rough computation is based on the aforementioned assumptions regarding the RGM computation and the accuracy of POT in computing the GW distance. 

Instead, we may give an upper bound on the ratio of the two distances using the well-known lower bounds on the GW distance: the First Lower Bound (FLB) and the Second Lower Bound (SLB) on GW \citep{memoli_2011}. Letting $\mu = \frac{1}{m} \sum_{i = 1}^{m} \delta_{x_i}$ and $\nu = \frac{1}{n} \sum_{j = 1}^{n} \delta_{y_j}$ with $m = n = 30$, these bounds are computed as follows:
\begin{align*}
	\mathrm{FLB}^2(\mu, \nu) & = W_2^2\left(\frac{1}{m} \sum_{i = 1}^{m} \delta_{e_{\cX}(x_i)}, \frac{1}{n} \sum_{j = 1}^{n} \delta_{e_{\cY}(y_j)}\right) \approx 0.061 \;, \\
	\mathrm{SLB}^2(\mu, \nu) & = W_2^2\left(\frac{1}{m^2} \sum_{i, i' = 1}^{m} \delta_{c_{\cX}(x_i, x_{i'})}, \frac{1}{n^2} \sum_{j, j' = 1}^n\delta_{c_{\cY}(y_j, y_{j'})}\right) \approx 0.135 \;,
\end{align*}
where $e_{\cX}(x_i) = \sqrt{\frac{1}{m} \sum_{i' = 1}^{m} c_{\cX}^2(x_i, x_{i'})}$ and $e_{\cY}(y_j) = \sqrt{\frac{1}{n} \sum_{j' = 1}^{n} c_{\cY}^2(y_j, y_{j'})}$ are called the eccentricity; see \citep{memoli_2011} for details. These quantities, computed by using POT as well, are known to be lower bounds on $\mathrm{GW}(\mu, \nu)^2$, hence 
\begin{equation*}
	1 \le \frac{\mathrm{RGM}(\mu, \nu)}{\mathrm{GW}(\mu, \nu)} \le \sqrt{\frac{0.390}{0.135}} = 1.700 \;.
\end{equation*}
Therefore, we can conclude that the ratio of the two distances is bounded by 1.7.

\paragraph{Convex formulation}
Next, we estimate the RGM distance based on the convex formulation; as in Theorem \ref{thm:representer}, we solve the convex optimization problem:
\begin{equation}
	\label{eq:convex-relax}
	\min_{\substack{\mathsf{F}_{m, n} \in \R^{m \times n} \\ \mathsf{B}_{n, m} \in \R^{n \times m}}} \omega(\mathsf{F}_{m, n}, \mathsf{B}_{n, m}) \; ,
\end{equation}
where
\begin{align*}
	\omega(\mathsf{F}_{m, n}, \mathsf{B}_{n, m})
	& = \overbrace{\frac{1}{mn} \| \bK_{\cY} \mathsf{B}_{n, m} \bK_{\cX} - \bK_{\cY} \mathsf{F}_{m, n}^\top \bK_{\cX} \|^2}^{c_0(\mathsf{F}_{m, n}, \mathsf{B}_{n, m})} + \lambda_1 \cdot \overbrace{\left\| \frac{1}{m} \bK_{\cX}^{3/2} \mathsf{F}_{m,n} \bK_{\cY}^{1/2} - \frac{1}{n} \bK_{\cX}^{1/2} \mathsf{B}_{n,m}^\top \bK_{\cY}^{3/2} \right\|^2}^{m_1(\mathsf{F}_{m, n}, \mathsf{B}_{n, m})} \\
	& + \lambda_2 \cdot \underbrace{\left\|\bK_{\cX}^{1/2} \cdot \left(\frac{1}{m} \mathbf{1}_m - \mathsf{B}_{n, m}^\top \bK_{\cY} \frac{1}{n} \mathbf{1}_n \right) \right\|^2}_{m_2(\mathsf{F}_{m, n}, \mathsf{B}_{n, m})} + \lambda_3 \cdot \underbrace{\left\|\bK_{\cY}^{1/2} \cdot \left(\frac{1}{n} \mathbf{1}_n - \mathsf{F}_{m, n}^\top \bK_{\cX} \frac{1}{m} \mathbf{1}_m \right) \right\|^2}_{m_3(\mathsf{F}_{m, n}, \mathsf{B}_{n, m})}
\end{align*}
as derived in Section \ref{sec:representation}. It should be noted that the minimum of \eqref{eq:convex-relax} is a lower bound on the minimum of the Lagrangian, that is, 
\begin{equation*}
	\min_{\substack{\mathsf{F}_{m, n} \in \R^{m \times n} \\ \mathsf{B}_{n, m} \in \R^{n \times m}}} \omega(\mathsf{F}_{m, n}, \mathsf{B}_{n, m}) 
	\le 
	\min_{(F, B) \in \cF \times \cB} L_{\lambda_1, \lambda_2, \lambda_3}(F, B) \;.
\end{equation*}
To see this, first note that the RHS is exactly \eqref{eqn:1}.\footnote{Since $\mu$ and $\nu$ are discrete, the RHS is the same as its empirical estimate \eqref{eqn:1}.} Then, recall from Section \ref{subsec:cvx-representer} that \eqref{eqn:1} = \eqref{eqn:op} is relaxed to \eqref{eqn:relaxed} and is further relaxed to the convex problem \eqref{eq:convex-relax} due to Theorem \ref{thm:representer}. 
\begin{table}[ht]
	\centering
	\begin{tabular}{| c | c | c | c |} 
	\hline
	& $\lambda_1 = \lambda_2 = \lambda_3 = 1$ & $\lambda_1 = \lambda_2 = \lambda_3 = 10^2$ & $\lambda_1 = \lambda_2 = \lambda_3 = 10^4$ \\
	\hline
	$c_0(\mathsf{F}_{m, n}^\star, \mathsf{B}_{n, m}^\star)$ & 0.044 & 0.106 & 0.108 \\
	$m_1(\mathsf{F}_{m, n}^\star, \mathsf{B}_{n, m}^\star)$ & 0.011 & $4.521 \times 10^{-6}$ & $6.369 \times 10^{-10}$ \\
	$m_2(\mathsf{F}_{m, n}^\star, \mathsf{B}_{n, m}^\star)$ & 0.012 & $2.592 \times 10^{-6}$ & $2.626 \times 10^{-12}$ \\
	$m_3(\mathsf{F}_{m, n}^\star, \mathsf{B}_{n, m}^\star)$ & 0.001 & $1.050 \times 10^{-7}$ & $2.461 \times 10^{-10}$ \\
	\hline
	$\omega(\mathsf{F}_{m, n}^\star, \mathsf{B}_{n, m}^\star)$ & 0.068 & 0.107 & 0.108 \\
	\hline
	\end{tabular}
	\caption{Mimimum values of the convex problem \eqref{eq:convex-relax} obtained by CVXPY.}
	\label{tab:convex-results}
\end{table}

Table \ref{tab:convex-results} shows the results obtained by a convex optimization tool called CVXPY \cite{diamond2016cvxpy}. The mimimum $\omega(\mathsf{F}_{m, n}^\star, \mathsf{B}_{n, m}^\star)$ of \eqref{eq:convex-relax} is always smaller than $\mathrm{GW}^2(\mu, \nu) \approx 0.171$ and is between the two lower bounds: $\mathrm{FLB}^2(\mu, \nu) \approx 0.061$ and $\mathrm{SLB}^2(\mu, \nu) \approx 0.135$. Also, the MMD terms vanish if we use the large Lagrangian multipliers, indicating that the constraints (represented via the MMD terms) are met. That said, we can see that the gap between the minimum and the RGM distance can be large. Therefore, finding conditions under which this gap vanishes would be interesting future work.

\section{Details of the Experiments in Section \ref{sec:numerical}}
\label{sec:appendix2}
Here, we provide implementation details of the experiments in Section \ref{sec:numerical}.

\subsection{Gaussian}
In the Gaussian experiment in Section \ref{sec:numerical}, we minimize \eqref{eqn:1} using Adam for $3000$ iterations. The learning rate at the initial iteration is $0.1$ and we halve it after every $500$ iterations. Figure \ref{fig:training-curves}(a) shows the training curve.

\begin{figure}[ht]
    \centering    
	\subfloat[Gaussian]{{\includegraphics[width=8cm]{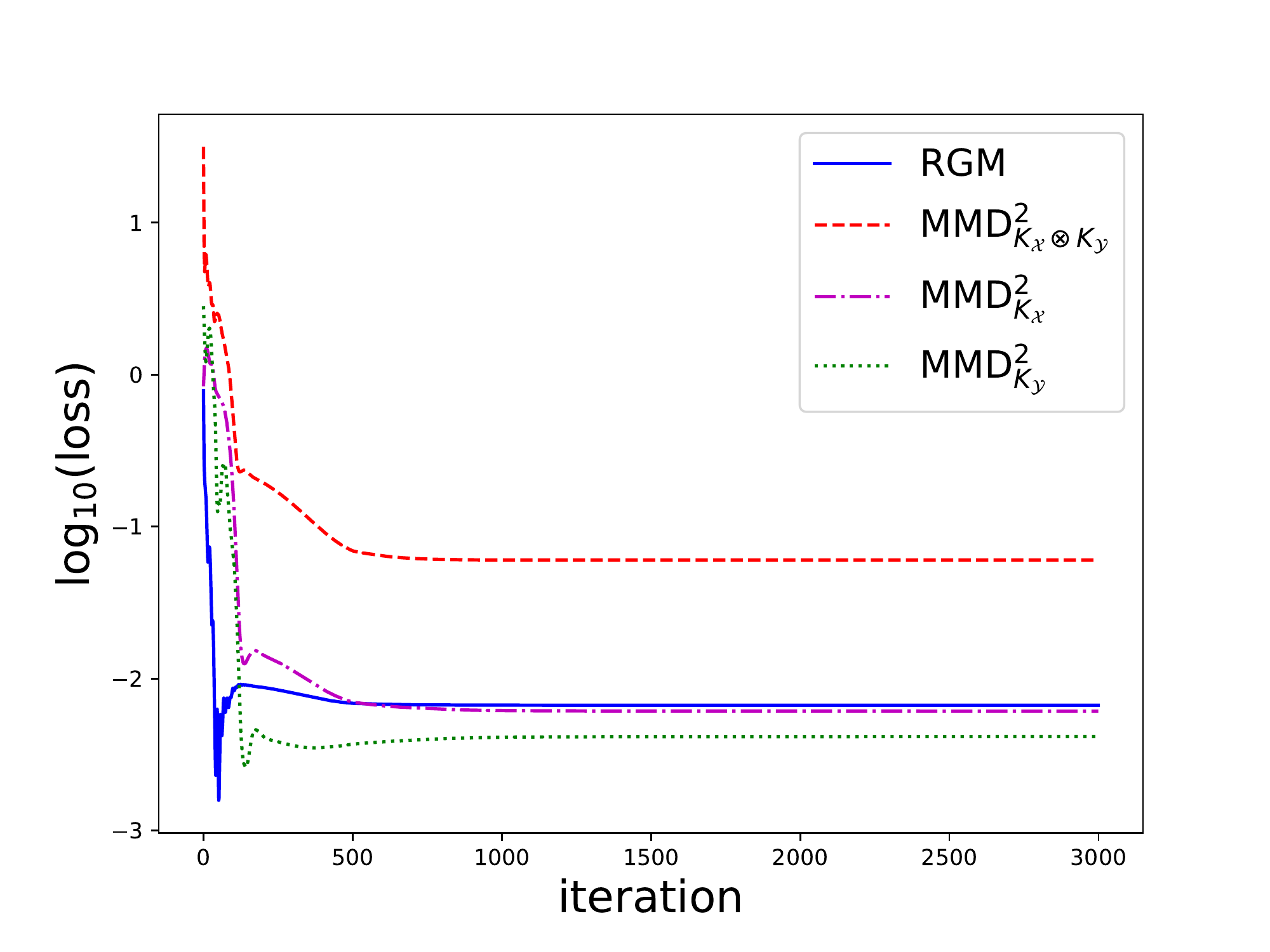}}}
	\subfloat[MNIST]{{\includegraphics[width=8cm]{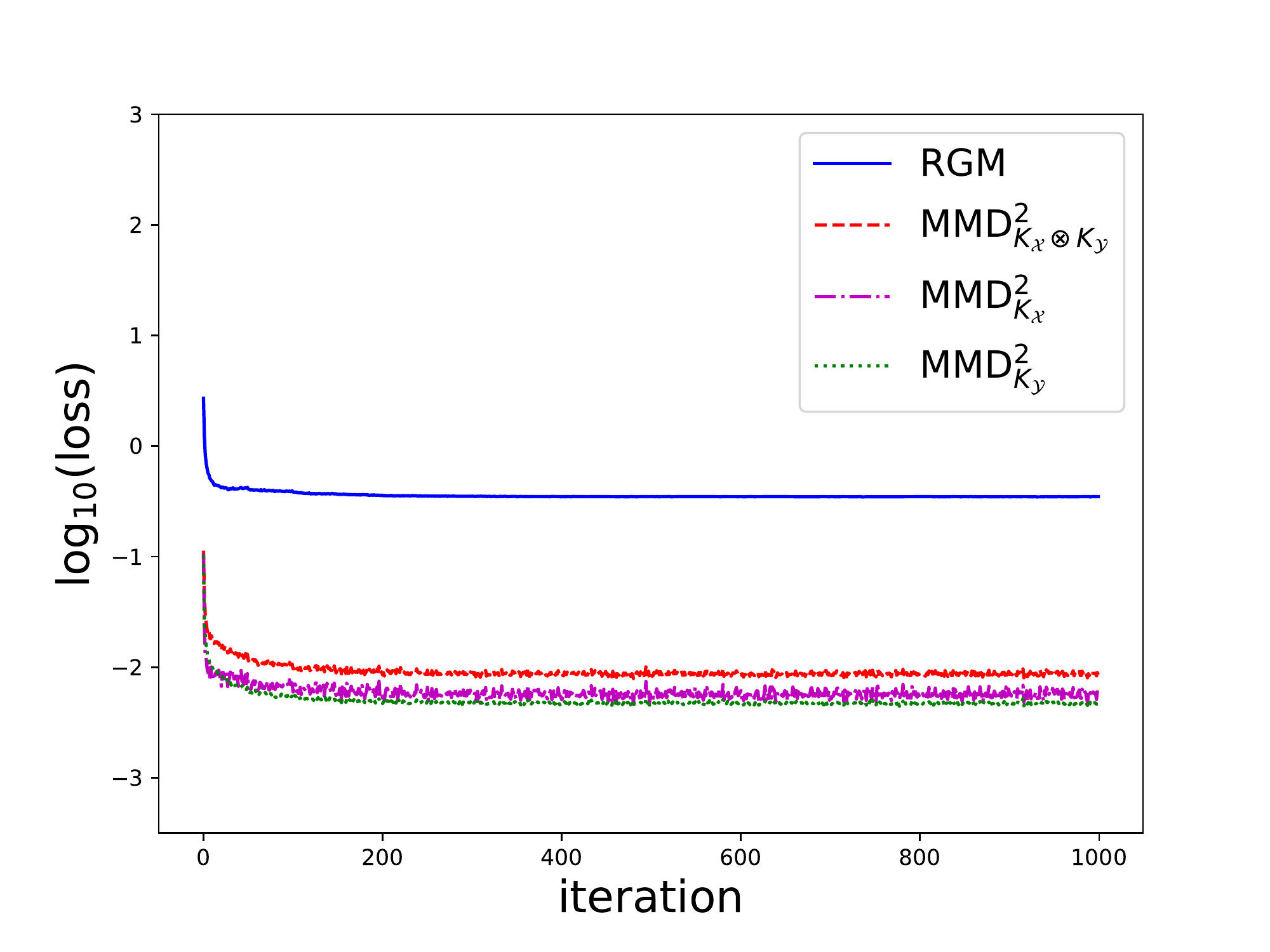}}}
    \caption{Training curves for the experiments: (a) the Gaussian experiment, (b) the MNIST experiment with $\cX = \R^2$.}
    \label{fig:training-curves}
\end{figure}

\subsection{MNIST}
\paragraph{$\R^2$ and MMDs} 
For numerical stability, we encode a variant of empirical estimate \eqref{eqn:1} as our loss: 
\begin{align}
	&\lambda_1\cdot\frac{1}{m n} \sum_{i = 1}^{m} \sum_{j = 1}^{n} (c_{\cX}(x_i, B(y_j)) - c_{\cY}(F(x_i), y_j))^2  + \mathrm{MMD}_{K_{\cX} \otimes K_{\cY}}^2((\mathrm{Id}, F)_{\#} \widehat{\mu}_m, (B, \mathrm{Id})_{\#} \widehat{\nu}_n)\nonumber\\
	& + \lambda_2 \cdot \mathrm{MMD}_{K_{\cX}}^2(\widehat{\mu}_m, B_{\#} \widehat{\nu}_n) + \lambda_3 \cdot \mathrm{MMD}_{K_{\cY}}^2(F_{\#} \widehat{\mu}_m, \widehat{\nu}_n) \;. \label{eqn:mnist}
\end{align}
We choose tuning parameters $(\lambda_1,\lambda_2,\lambda_3) = (0.01, 1,1)$. 
For fully connected neural networks $F$ and $B$, we apply the rectified linear unit (ReLU) activation function $\sigma(x) = \max(x, 0)$ to all three hidden layers of $F$ and $B$, and an additional tangent hyperbolic (tanh) function $\tanh(x) = (e^x - e^{-x})/(e^x + e^{-x})$ to the output layer of $F$, both of which are elementwise activation functions. To put it explicitly, $y = F(x)$ is defined as
\begin{align*}
    h_0 &= x, \quad x\in\R^2\\
    h_l &= \sigma(W_l h_{l-1}+b_l), \quad l = 1,2\\
    y &= \tanh(W_3h_2 + b_3)
\end{align*}
with
$W_1 \in \R^{50\times 2}, W_2\in \R^{50\times 50}, W_3\in\R^{784\times 50},
b_1, b_2 \in\R^{50\times1}, b_3\in \R^{784\times 1}$.
Similarly, $\widetilde{x} = B(\widetilde{y})$ is defined as
\begin{align*}
    \widetilde{h}_0 &= \widetilde{y}, \quad \widetilde{y}\in\R^{784}\\
    \widetilde{h}_l &= \sigma(\widetilde{W}_l\widetilde{h}_{l-1}+\widetilde{b}_l), \quad l = 1,2\\
    \widetilde{x} &= \widetilde{W}_3\widetilde{h}_2 + \widetilde{b}_3 
\end{align*}
with $\widetilde{W}_1 \in \R^{50\times 784}, \widetilde{W}_2\in \R^{50\times 50}, \widetilde{W}_3\in\R^{2\times 50},
\widetilde{b}_1, \widetilde{b}_2 \in\R^{50\times1}, \widetilde{b}_3\in \R^{2\times 1}$.
The training set is randomly devided into minibatches of size $256$, for which we run Adam again for 1000 iterations. The learning rate at the initial iteration is $0.005$ and we halve it after every $500$ iterations.
Figure~\ref{fig:snapshot} shows the generated images during the training process. 

\begin{figure}[ht]
    \centering
    \subfloat[Before training]{\includegraphics[width=5cm]{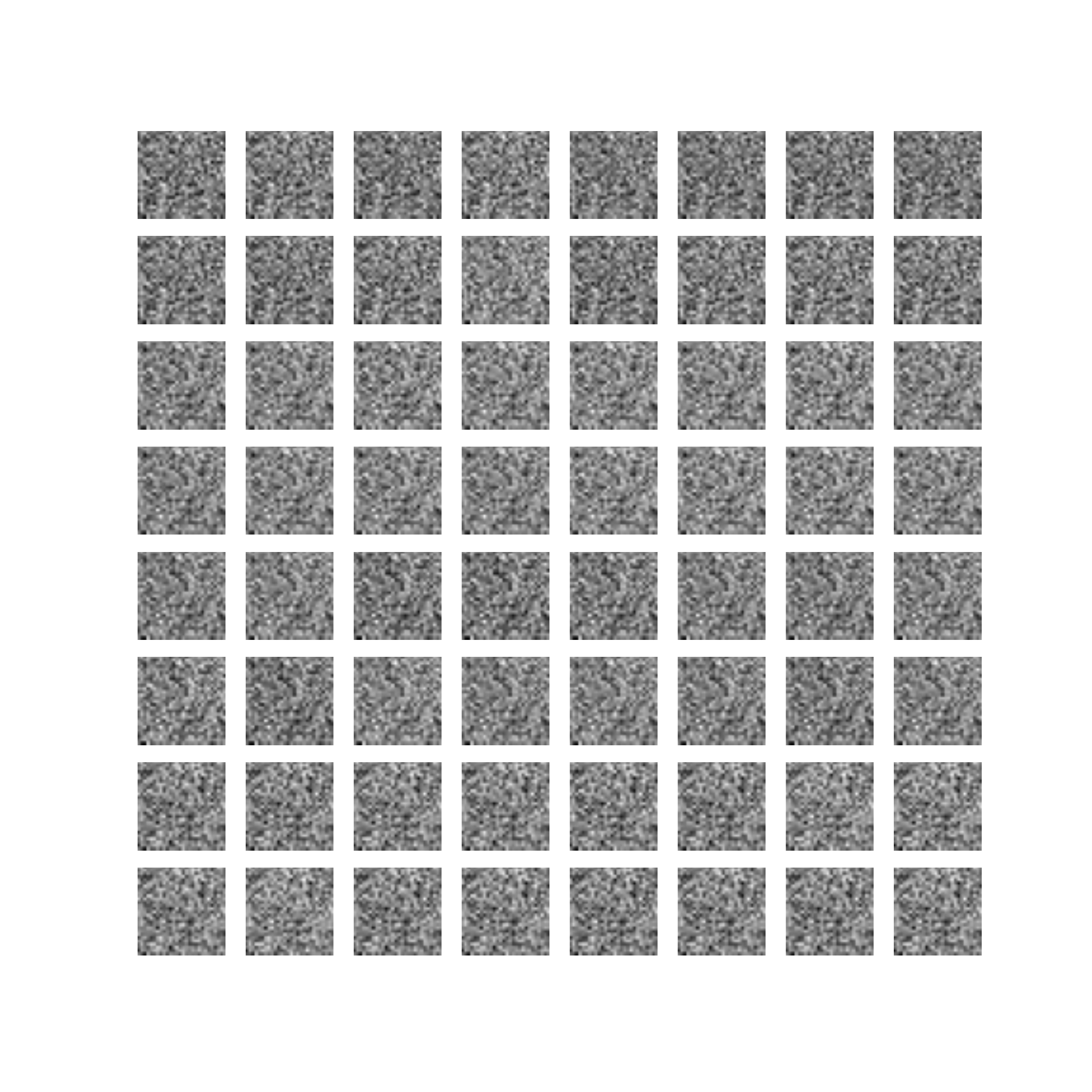}\hspace{-2em} }%
    \subfloat[After 20 iterations]{\includegraphics[width=5cm]{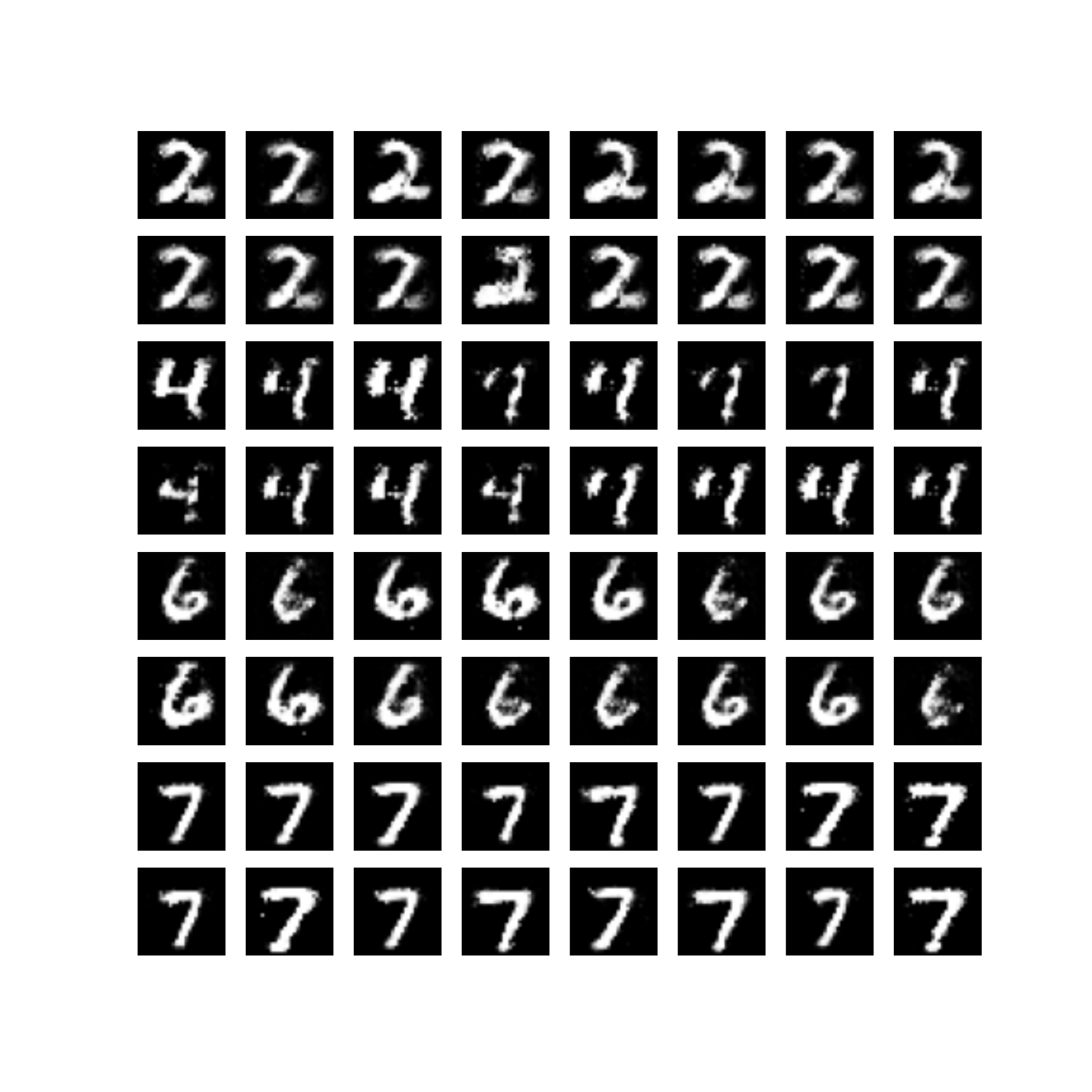}\hspace{-2em} }\\%
    \subfloat[After 50 iterations]{\includegraphics[width=5cm]{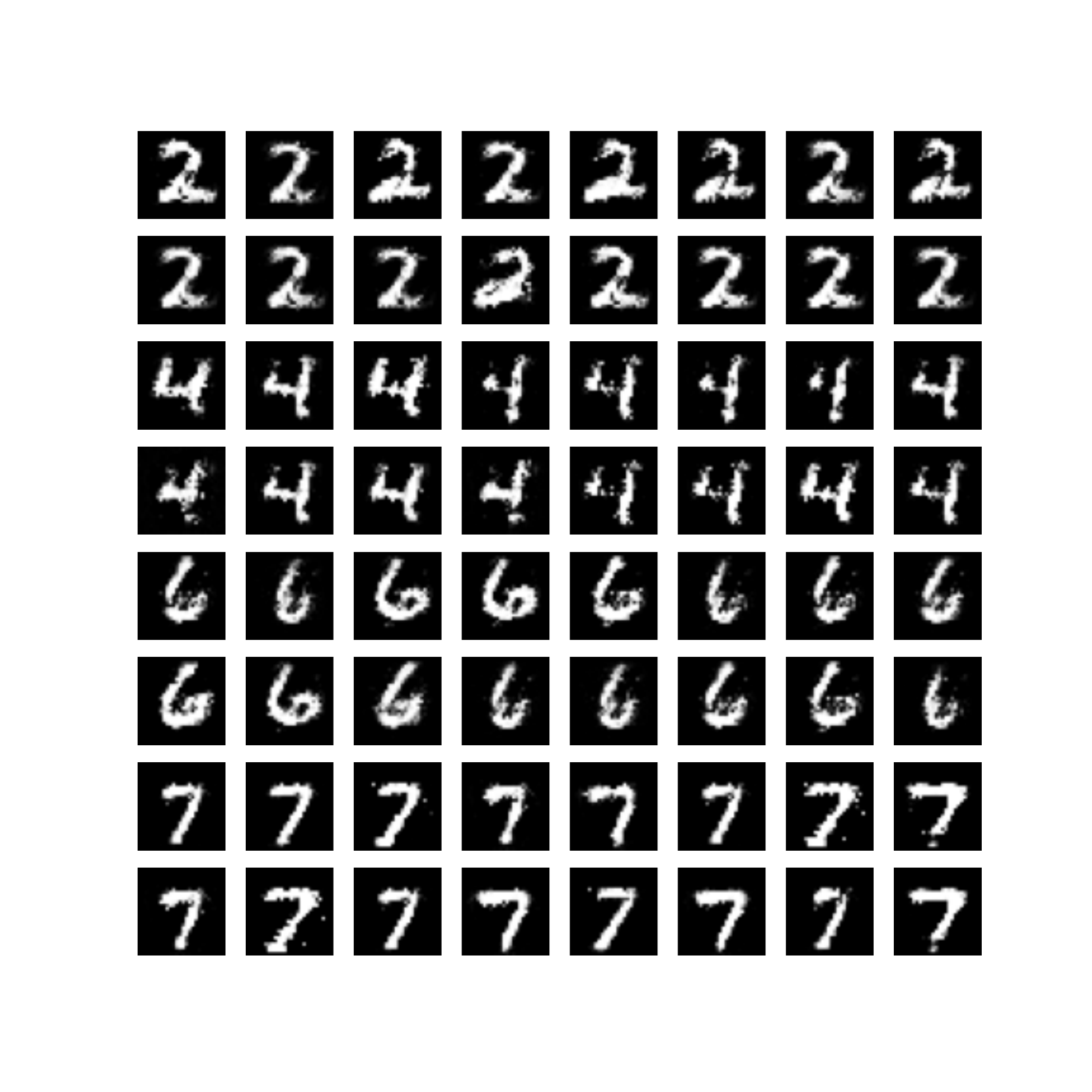}\hspace{-2em} }%
    \subfloat[After 1000 iterations]{\includegraphics[width=5cm]{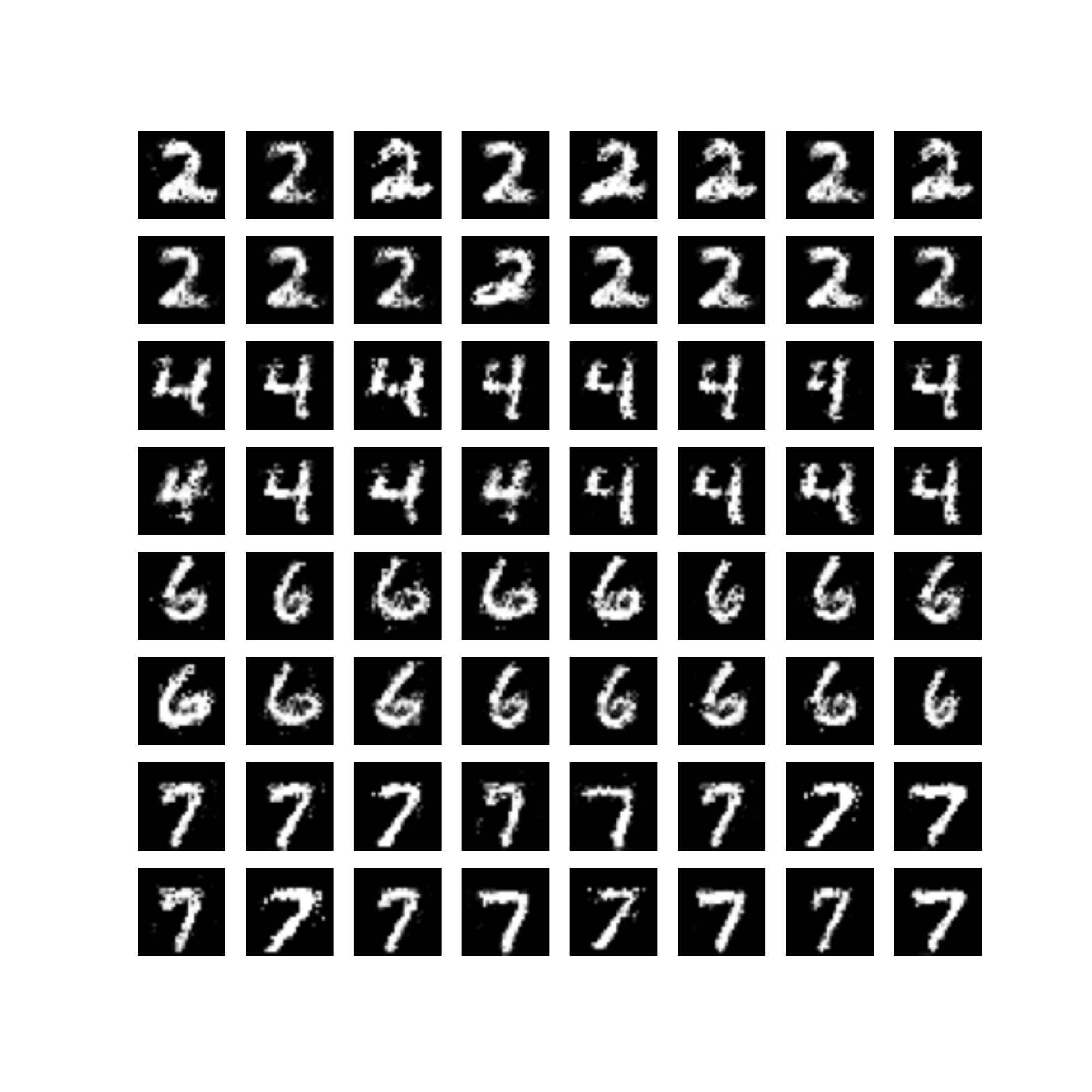}\hspace{-2em} }%
    \caption{Generated images on MNIST data for digits $2, 4, 6, 7$ during the training process, under the experimental setup with $\R^2$ and MMD discrepancy. }
    \label{fig:snapshot}
\end{figure}

\paragraph{$\R^4$ and Sinkhorn divergences}

We implement the Sinkhorn divergence with squared Euclidean cost by using GeomLoss \citep{feydy2019interpolating}. 
Concretely, we first define the entropic regularized Kantorovich problem between $\widehat{\mu}_m$ and $B_{\#} \widehat{\nu}_n$ on some Euclidean space $\cX$
\begin{equation*}
	W_{2, \epsilon}^2(\widehat{\mu}_m,B_{\#} \widehat{\nu}_n) = \min_{\gamma \in \Pi(\widehat{\mu}_m, B_{\#} \widehat{\nu}_n)} \sum_{i = 1}^m \sum_{j = 1}^n \gamma_{ij}\left(\|x_i - B(y_j)\|^2+ \epsilon\log(\gamma_{ij})\right)
\end{equation*}
where $\gamma$ is a coupling matrix and $\gamma_{ij}$ denotes its $(i,j)$ element. 
Then the Sinkhorn divergence between two empirical measures is defined by $s_{\epsilon, \cX}(\widehat{\mu}_m,B_{\#} \widehat{\nu}_n) = W_{2, \epsilon}^2(\widehat{\mu}_m,B_{\#} \widehat{\nu}_n) - \frac{1}{2}W_{2, \epsilon}^2(\widehat{\mu}_m,\widehat{\mu}_m) - \frac{1}{2}W_{2, \epsilon}^2(B_{\#} \widehat{\nu}_n,B_{\#} \widehat{\nu}_n)$. 
We encode our loss by replacing MMDs in \eqref{eqn:mnist} with Sinkhorn divergences
\begin{align*}
	&\lambda_1\cdot\frac{1}{m n} \sum_{i = 1}^{m} \sum_{j = 1}^{n} (c_{\cX}(x_i, B(y_j)) - c_{\cY}(F(x_i), y_j))^2  + s_{\epsilon, \cX\times\cY}((\mathrm{Id}, F)_{\#} \widehat{\mu}_m, (B, \mathrm{Id})_{\#} \widehat{\nu}_n)\\
	& + \lambda_2 \cdot s_{\epsilon,\cX}(\widehat{\mu}_m, B_{\#} \widehat{\nu}_n) + \lambda_3 \cdot s_{\epsilon, \cY}(F_{\#} \widehat{\mu}_m, \widehat{\nu}_n) \;, 
\end{align*}
and choose tuning parameters $(\lambda_1,\lambda_2,\lambda_3) = (1,1,1)$. 
The Sinkhorn parameter $\epsilon$ is set to be $0.0001$ for all three discrepancy measures. 
Again, $F\colon \R^4\to\R^{784}$ and $B\colon \R^{784}\to\R^4$ are parametrized by fully connected neural networks with three hidden layers, whose activation functions are same as the MMD case. 
The rest of the setups, including the choice of optimizer, number of iterations, and batchsize, are same as the MMD case above.

\paragraph{Comparison between RGM and GW}
Lastly, let us estimate the gap between the RGM distance and the GW distance in the MNIST example ($\R^2$ and MMDs) based on the discussions in Section \ref{sec:comparison-with-GW}. Recall that we have obtained a minimizer $(\widehat{F}, \widehat{B})$ of \eqref{eqn:1} over $\cF \times \cB$ using samples $\{x_i\}_{i = 1}^{20000}$ and $\{y_j\}_{j = 1}^{20000}$ from $\mu = N(0, I_2)$ and $\nu =$ the distribution of the four digits, respectively; though this is a local minimizer as the optimization problem may be nonconvex, we will assume that this is indeed a global minimizer as in Section \ref{sec:comparison-with-GW}. Letting $\widehat{\mu}_m$ and $\widehat{\nu}_n$ be the empirical measures constructed by $\{x_i\}_{i = 1}^{20000}$ and $\{y_j\}_{j = 1}^{20000}$, respectively ($m = n = 20000$), we obtain the following quantities:
\begin{align*}
	\widehat{C}_0(\widehat{F}, \widehat{B}) \overset{\text{Section \ref{sec:statistical-theory}}}{=} \int (c_{\cX}(x, \widehat{B}(y)) - c_{\cY}(\widehat{F}(x), y))^2 \dd{\widehat{\mu}_m \otimes \widehat{\nu}_n} & \approx 0.348 \;, \\
	m_1 \coloneqq \mathrm{MMD}_{K_{\cX} \otimes K_{\cY}}^2((\mathrm{Id}, \widehat{F})_{\#} \widehat{\mu}_m, (\widehat{B}, \mathrm{Id})_{\#} \widehat{\nu}_n) &\approx 2.192 \times 10^{-3} \;,\\
	m_2 \coloneqq \mathrm{MMD}_{K_{\cX}}^2(\widehat{\mu}_m, \widehat{B}_{\#} \widehat{\nu}_n) &\approx 8.261 \times 10^{-5} \;,\\ 
	m_3 \coloneqq \mathrm{MMD}_{K_{\cY}}^2(\widehat{F}_{\#} \widehat{\mu}_m, \widehat{\nu}_n) &\approx 1.437 \times 10^{-3} \;.
\end{align*}

Hence, $C(\widehat{\mu}_m, \widehat{\nu}_n, \widehat{F}, \widehat{B}) = \widehat{C}_0(\widehat{F}, \widehat{B}) + \sum_{k = 1}^{3} \lambda_k m_k = 0.719$, where $C$ is defined in \eqref{eqn:cost-def}. Recall from Section \ref{sec:statistical-theory} that
\begin{equation*}
	\begin{split}
		|C(\widehat{\mu}_m, \widehat{\nu}_n, \widehat{F}, \widehat{B}) - C(\mu, \nu, \widehat{F}, \widehat{B})|
		& \le \sup_{(F, B) \in \cF \times \cB} |C(\widehat{\mu}_m, \widehat{\nu}_n, F, B) - C(\mu, \nu, F, B)| \\
		& \precsim \cM(\cF, \cB, m, n, \delta)
	\end{split}
\end{equation*}
holds with probability at least $1 - \delta$, where $\cM(\cF, \cB, m, n, \delta)$ is defined in Theorem \ref{thm:stat}. Assuming that this complexity measure is sufficiently small for $m = n = 20000$, we may roughly say $C(\mu, \nu, \widehat{F}, \widehat{B}) \approx C(\widehat{\mu}_m, \widehat{\nu}_n, \widehat{F}, \widehat{B}) = 0.719$. In the same vein, Theorem \ref{thm:stat} indicates
\begin{equation*}
	\inf_{(F, B) \in \cF \times \cB} C(\mu, \nu, F, B) \approx C(\mu, \nu, \widehat{F}, \widehat{B}) \approx 0.719\;. 
\end{equation*}
Now, we combine this result with the discussion in Section \ref{sec:comparison-with-GW}. First, by definition, 
\begin{equation*}
	\inf_{(F, B) \in \cF \times \cB} C(\mu, \nu, F, B) = \inf_{(F, B) \in \cF \times \cB} L_{\lambda_1, \lambda_2, \lambda_3}(F, B) \;. 
\end{equation*}
We have derived in Section \ref{sec:comparison-with-GW} that 
\begin{equation*}
	\mathrm{RGM}(\mu, \nu)^2 = C_0(F^\star, B^\star) = L_{\lambda_1, \lambda_2, \lambda_3}(F^\star, B^\star) = \inf_{(F, B) \in \cF \times \cB} L_{\lambda_1, \lambda_2, \lambda_3}(F, B) 
\end{equation*}
if $\cI(\mu, \nu) \subseteq \cF \times \cB$ and the minimizer $(F^\star, B^\star)$ defined in \eqref{eq:minimizer-lag} satisfies $(F^\star, B^\star) \in \cI(\mu, \nu)$. Therefore, under these assumptions, we can roughly estimate
\begin{equation*}
	\mathrm{RGM}^2(\mu, \nu) \approx 0.719.
\end{equation*}

Lastly, let us estimate the lower bounds, FLB and SLB, on the GW distance as in Section \ref{subsec:comparion-experiments}. Due to the computational complexity, we will use subsets of the training data, say $\{x_i\}_{i = 1}^{2000}$ and $\{y_j\}_{j = 1}^{2000}$, to construct plug-in estimators $\mathrm{FLB}(\widehat{\mu}_{2000}, \widehat{\nu}_{2000})$ and $\mathrm{SLB}(\widehat{\mu}_{2000}, \widehat{\nu}_{2000})$; using POT again, we have
\begin{align*}
	\mathrm{FLB}^2(\mu, \nu) \approx \mathrm{FLB}^2(\widehat{\mu}_{2000}, \widehat{\nu}_{2000}) \approx 0.006 \;, \\
	\mathrm{SLB}^2(\mu, \nu) \approx \mathrm{SLB}(\widehat{\mu}_{2000}, \widehat{\nu}_{2000}) \approx 0.148 \;.
\end{align*}

Therefore, we can roughly conclude
\begin{equation*}
	1 \le \frac{\mathrm{RGM}(\mu, \nu)}{\mathrm{GW}(\mu, \nu)} \le \sqrt{\frac{0.719}{0.148}} = 2.204 \;.
\end{equation*}

\end{document}